\newcommand{\bzero}{\boldsymbol{0}}
\newcommand{\bone}{\boldsymbol{1}}
\newcommand{\bmu}{\boldsymbol{\mu}}
\newcommand{\bSigma}{\boldsymbol{\Sigma}}
\newcommand{\blambda}{\boldsymbol{\lambda}}
\newcommand{\bI}{\bf{I}}
\newcommand{\bx}{\bold{x}}
\newcommand{\bz}{\bold{z}}
\newcommand {\ctn}{\cite}
\newtheorem{corollary}{Corollary}[section]
\newtheorem{theorem}{Theorem}[section]
\begin{document}

\title{\textbf{A Brief Tutorial on Transformation Based Markov Chain Monte Carlo and
Optimal Scaling of the Additive Transformation}}
\author{ Kushal Kr. Dey$^{\dag}$ , Sourabh Bhattacharya$^{\ddag +}$ }
\date{}
\maketitle
\begin{center}
$^{\dag}$   Department of Statistics, University of Chicago, USA \\
$^{\ddag}$   Interdisciplinary Statistical Research Unit, Indian Statistical Institute, Kolkata \\
$+$ Corresponding author:  \href{mailto: bhsourabh@gmail.com}{bhsourabh@gmail.com}
\end{center}

\begin{abstract}

We consider the recently introduced Transformation-based Markov Chain
Monte Carlo (TMCMC) (\ctn{Dutta11}), a methodology that is designed to update all the parameters 
simultaneously using some simple deterministic transformation of a one-dimensional random variable
drawn from some arbitrary distribution on a relevant support. The additive transformation based 
TMCMC is similar in spirit to random walk Metropolis, except the fact that unlike the latter, additive
TMCMC uses a single draw from a one-dimensional proposal distribution to update the high-dimensional
parameter.
In this paper, we first provide a brief tutorial on TMCMC, exploring its connections and contrasts
with various available MCMC methods. 

Then we study the diffusion limits of additive TMCMC under 
various set-ups ranging from the product structure of the target density
to the case where the target is absolutely continuous with respect to a Gaussian measure; we also consider
the additive TMCMC within Gibbs approach for all the above set-ups.  
These investigations lead to appropriate scaling of the 
one-dimensional proposal density. We also show that 
the optimal acceptance rate of additive TMCMC is 0.439 under all the aforementioned set-ups, 
in contrast with the well-established 0.234 
acceptance rate associated with optimal random walk Metropolis algorithms under the same set-ups.
We also elucidate the ramifications of our results and clear advantages of additive TMCMC over 
random walk Metropolis with ample simulation studies and Bayesian analysis of a real, spatial dataset with which $160$
unknowns are associated. 
\\[2mm]
{\bf Keywords:} {\it Additive Transformation: Diffusion Limit; High Dimension; Optimal Scaling; 
Random Walk; Transformation-based Markov Chain Monte Carlo.}

\end{abstract}

\section{Introduction}

Markov Chain Monte Carlo (MCMC), particularly, the Metropolis-Hastings (MH) methods, 
have revolutionized Bayesian computation -- this pleasing truth, however, is often hard
to appreciate in the face of the challenges posed by computational complexities and convergence issues of traditional MCMC.
Indeed, exploration of very high-dimensional posterior distributions using MCMC can be both computationally very expensive
and troublesome convergence-wise. 
Thus, there seems to be trade-off between the great flexibility of MCMC algorithms (see, for example,
\ctn{Storvik11}, \ctn{Martino13}) and choice of the right MCMC algorithm that ensures good convergence properties
and reasonable computational complexity. Investigation of connections between varieties of available MCMC methods,
as provided in the aforementioned papers, seems to be important to decide upon a suitable MCMC algorithm,
given any paricular problem at hand.

The random walk Metropolis (RWM) algorithm is a popular MH algorithm because
of its simplicity and ease in implementation, but unless great care is taken to properly scale the proposal distribution
the algorithm can have poor convergence properties. For instance, if the variance of the proposal density is small, then the
jumps will be small in magnitude, implying that the Markov chain will require a large number of iterations to explore the entire
state-space. On the other hand, large variance of the proposal density causes too many rejections of the proposed moves, again
considerably slowing down convergence of the underlying Markov chain.
The need for an optimal choice of the proposal variance is thus inherent in the RWM algorithms.
The pioneering approach towards obtaining an optimal scaling of the RWM proposal is due to \ctn{Roberts97a} in the case
of target densities associated with independent and identical ($i.i.d.$) random variables; generalization of
this work to more general set-ups are provided by \ctn{Bedard2007} (target density associated with independent but
non-identical random variables) and \ctn{Pillai2011} (target density absolutely continuous with respect to a Gaussian measure).
The approach used in all these works is to study the diffusion approximation of the high-dimensional RWM algorithm, and
maximization of the speed of convergence of the limiting diffusion. The optimal scaling, the optimal
acceptance rate and the optimal speed of convergence of the limiting diffusion, along with the complexity of the algorithm
are all obtained from this powerful approach.

In practice, a serious drawback of the RWM algorithm in high dimensions is that there is always a positive 
probability that a particular
coordinate of the high-dimensional random variable is ill-proposed; in that case the acceptance ratio will tend to be
extremely small, prompting rejection of the entire high-dimensional move. In general, unless the high-dimensional 
proposal distribution, which need not necessarily be a random walk proposal distribution, is designed with extreme care, 
such problem usually persists. Unfortunately, such carefully designed proposal density is rare in high dimensions.
To combat these difficulties \ctn{Dutta11} proposed an approach where the entire block of parameters can be updated 
simultaneously
using some simple deterministic transformation of a scalar random variable sampled from some arbitrary distribution
defined on some suitable support. The strategy effectively reduces the high-dimensional
proposal distribution to a one-dimensional proposal, greatly improving the acceptance rate 
and computational speed in the process. 
This methodology is no longer Metropolis-Hastings 
for dimensions greater than one; 
the proposal density in more than one dimension becomes singular because it is induced by a 
one-dimensional random variable. However,
in one-dimensional cases this coincides with Metropolis-Hastings with a specialized mixture proposal density;
in particular, the additive transformation based TMCMC coincides with RWM in one-dimensional situations. 
\ctn{Dutta11} refer to this new general methodology as Transformation-based MCMC (TMCMC). 
In their work the authors point out several advantages of the additive transformation in comparison with
the other valid transformations. For instance, they show that 
additive TMCMC requires less number of `move-types' compared to other valid transformations; 
moreover, the acceptance rate has
a simple form for additive transformations since the Jacobian of additive transformations is 1. 

The contribution of this paper is two-fold. First, we provide a brief tutorial on TMCMC,
attempting to convey the key ideas and the properties in simple terms and from several perspectives.
We explore various connections and contrasts with existing MCMC algorithms.

Second, we investigate the diffusion limits of additive TMCMC in 
high-dimensional situations under various forms 
of the target density when the one-dimensional random variable used for the additive transformation is drawn
from a left truncated zero-mean normal density. In particular, we consider situations when the target density 
corresponds to $i.i.d.$ random variables, 
independent but non-identically distributed random variables; we also study the diffusion limit 
of additive TMCMC when
the target is absolutely continuous with respect to a Gaussian measure. Since all these forms are considered 
in the MCMC literature
related to diffusion limits and optimal scaling of RWM, comparisons of our additive TMCMC-based approaches 
can be made with the
respective RWM-based approaches. Furthermore, in each of the aforementioned set-ups, 
we also consider additive TMCMC within Gibbs approach, where one or multiple 
components of the high-dimensional random variable are updated by additive TMCMC, conditioning 
on the remaining components.
This we compare with the corresponding RWM within Gibbs approach 
under the same settings of the target densities. 

Briefly, our scaling investigations show that the optimal additive TMCMC acceptance rate in all the set-ups is 
0.439, as opposed to
0.234 associated with RWM. Moreover, we point out that even though the optimal diffusion speed of RWM 
is slightly greater
than that of additive TMCMC, the diffusion speed associated with additive TMCMC is more robust with respect to 
the choice of the scaling constant. In other words, if the optimal scaling constant for RWM is somewhat altered, 
this triggers
a sharp fall in the diffusion speed, but in the case of additive TMCMC the rate of decrease of diffusion speed is much slower.
Investigation of the consequences of this phenomenon with simulation studies reveal severe decline in the 
performance 
of RWM in comparison with additive TMCMC.

This non-robustness of RWM with respect to scale choices other than
the optimal, presents quite important consequences for applied MCMC practitioners, which we elaborate
with a real, spatial data analysis problem. In a nutshell, in the context of the spatial problem,
we have provided a method, which appears to be generally applicable, for approximately
achieving 44\% and 23\% acceptance rates for additive TMCMC and RWM;
however, achieving the desired acceptance rates in general problems where optimal scaling theories are yet
lacking, does not guarantee that the achieved acceptance rates 
correspond to optimal scales, as there are usually very many scale choices corresponding to the same
acceptance rate. Because of such sub-optimality, in the real spatial problem, 
RWM faces very serious performance problems. On the other hand, additive TMCMC, because of its robustness
with respect to the scales, performs quite reasonably.

Our paper is structured as follows.
In Section \ref{sec:overview_tmcmc},
we provide a brief tutorial on TMCMC.  
We develop the theory for optimal additive TMCMC scaling in the $i.i.d.$ set-up in
Section \ref{sec:iid_prod}; in the same section
(Section \ref{subsec:tmcmc_within_gibbs_iid})
we also develop the corresponding theory for
additive TMCMC within Gibbs in the $i.i.d.$ situation. 
In Section \ref{sec:non_iid}, we extend the
additive TMCMC-based optimal scaling theory to the independent but non-identical set-up;
in Section \ref{subsec:tmcmc_within_gibbs_non_iid} we outline the corresponding TMCMC within Gibbs case. 
We then further extend our additive TMCMC based optimal scaling theory 
to the aforementioned dependent set-up in Section \ref{sec:generalization}, presenting
the formal result in Section \ref{subsec:formal_statement}; the
corresponding TMCMC within Gibbs case is considered in Section \ref{subsec:tmcmc_within_gibbs_dependent}.
In Section \ref{sec:comparison}, we provide numerical comparisons between additive TMCMC and RWM 
in terms of optimal acceptance rates and
diffusion speeds; in Section \ref{sec:simulation}, we illustrate our theoretical results and compare the
performances of additive TMCMC and RWM
using simulation studies, illustrating that the former is a far more effective algorithm in comparison
with the latter. 
In Section \ref{sec:comparison_real}, we compare additive TMCMC with RWM with respect to a $160$-dimensional 
posterior density associated with a real, spatial dataset, vividly demonstrating the clear superiority
of additive TMCMC over RWM.
Finally, we make concluding remarks in Section \ref{sec:conclusion}.

Apart from the main developments provided in this article, we provide additional details in our 
supplementary material \ctn{Dey15b}, whose sections and figures have the prefix
``S-" when referred to in this article.
Briefly, in Section S-1, we provide details on computational efficiency of TMCMC. Specifically, we demonstrate 
with an experiment the superior computational speed of additive TMCMC in comparison with RWM, particularly
in high dimensions. In Section S-2 we discuss, with appropriate experiments, the necessity of optimal scaling 
in additive TMCMC, while
in Sections S-3 and S-4 we delve into the robustness issues associated with the scale choices of additive TMCMC and RWM.
In Section S-5, we include brief discussions of adaptive versions of RWM and TMCMC. 
Moreover, the proofs of all our technical results are provided in Sections S-6 and S-7 of the supplement.

\section{A brief overview of TMCMC}
\label{sec:overview_tmcmc}

Suppose that we are simulating from a $d$ dimensional space (usually $\mathbb{R}^{d}$, where $\mathbb R$ is the
real line), and suppose we are currently at a point 
$x= (x_{1}, \ldots, x_{d})$.
Let us define the $d$-dimensional random vector $b=(b_{1}, \ldots, b_{d})$, such that, for $i=1,\ldots,d$, 
\begin{equation}
b_{i} =\left\{\begin{array}{ccc} +1 & \mbox{with probability} & p_i; \\
  0 & \mbox{with probability} & 1-p_i-q_i;\\
 -1 & \mbox{with probability} & q_i,
 \end{array}\right.
 \label{eq:b}
\end{equation}
where, for each $i$, $0<p_i,q_i<1$ such that $p_i+q_i\leq 1$.
Let $\epsilon\sim \varrho(\epsilon)=\tilde\varrho(\epsilon)I_{\mathbb S}(\epsilon)$, 
where $\tilde\varrho (\cdot)$ is any arbitrary density
supported on some suitable space $\mathbb S$; here $I_{\mathbb S}(\cdot)$ denotes the indicator function of $\mathbb S$.

TMCMC uses moves of the following type:
\begin{equation}
(x_{1}, \ldots, x_{d}) \rightarrow (T^{b_1}(x_{1},\epsilon), \ldots, T^{b_d}(x_{d},\epsilon)), 
\label{eq:tmcmc_move}
\end{equation}
where $T^{+1}(x_i,\epsilon)$, the forward transformation to coordinate $x_i$, and $T^{-1}(x_i,\epsilon)$, the backward
transformation to $x_i$, are bijective for fixed $\epsilon$ and injective  
for fixed $x_i$, satisfying 
\begin{equation}
T^{+1}(T^{-1}(x_i,\epsilon),\epsilon)=T^{-1}(T^{+1}(x_i,\epsilon),\epsilon)=x_i.
\label{eq:transformation1}
\end{equation}
The transformation  
\begin{equation}
T^{0}(x_i,\epsilon)\equiv x_i,~\forall\epsilon\in\mathbb S, 
\label{eq:transformation2}
\end{equation}
indicates no change to 
the coordinate $x_i$ while updating the vector $x=(x_{1}, \ldots, x_{d})$ to $x^*=\mathcal T_{b}(x,\epsilon)$,
where $\mathcal T_{b}(x,\epsilon)$ denotes the updated vector $(T^{b_1}(x_{1},\epsilon), \ldots, T^{b_d}(x_{d},\epsilon))$.
Assuming for simplicity of illustration that $p_i=q_i$ for $i=1,\ldots,d$, 
move (\ref{eq:tmcmc_move}) is to be accepted with probability
\begin{equation}
\alpha=\min\left\{1,\frac{\pi(x^*)}{\pi(x)}J^b(x,\epsilon)\right\},
\label{eq:acc_tmcmc_general}
\end{equation}
where $J^b(x,\epsilon)=\left|\frac{\partial(\mathcal T^b(x,\epsilon),\epsilon)}{\partial(x,\epsilon)}\right|$
is the Jacobian of the transformation associated with $\mathcal T^b$.
For general $(p_1,\ldots,p_d)$ and $(q_1,\ldots,q_d)$, the acceptance ratio depends upon 
these probabilities; see \ctn{Dutta11}.

\subsection{Detailed balance}
\label{subsec:detailed_balance}
In the supplement to \ctn{Dutta11} the proof of detailed balance has been provided, but here we refurnish the
proof with more details and with more intuitive discussion. 
For the purpose of detailed balance, we need the following definition of ``conjugate" $b^c=(b^c_1,\ldots,b^c_d)$
of the random vector $d$: 
\begin{equation}
b^c_{i} =\left\{\begin{array}{ccc} +1 & \mbox{with probability} & q_i; \\
  0 & \mbox{with probability} & 1-p_i-q_i;\\
 -1 & \mbox{with probability} & p_i,
 \end{array}\right.
 \label{eq:b_conjugate}
\end{equation}
This definition is needed for returning to $x$ from $x^*$, so that moving from $x$ to $x^*$ using the transformation
$\mathcal T^b$ has, in essence, the same probability as returning from $x^*$ to $x$ using the transformation $\mathcal T^{b^c}$.
Details are provided below; at this point we note that for the $i$-th coordinate $x_i$,
the probability of making a forward move to $x^*_i$ using $T^{+1}(x_i,\epsilon)$ is $p_i$, which is also the probability
of returning from $x^*_i$ to $x_i$ using the backward move $T^{-1}(x^*_i,\epsilon)=T^{(+1)^c}(x^*_i,\epsilon)$.

Letting $K$ denote the Markov transition kernel associated with TMCMC, note that
for moving from $x$ to $x^*$, the kernel satisfies 
\begin{align}
\pi(x)K(x\rightarrow x^*)&=\pi(x)P(b)\varrho(\epsilon)\min\left\{1,\frac{\pi(x^*)}{\pi(x)}J^b(x,\epsilon)\right\}\notag\\
&=\min\left\{\pi(x)P(b)\varrho(\epsilon),P(b)\varrho(\epsilon)\pi(x^*)J^b(x,\epsilon)\right\},
\label{eq:detailed_balance1}
\end{align}
where $P(b)$ is the probability of $b$ responsible for the movement of the underlying Markov chain from $x$ to $x^*$.
For returning from $x^*$ to $x$, the kernel satisfies
\begin{align}
\pi(x^*)K(x^*\rightarrow x)&=\pi(x^*)P(b^c)\varrho(\epsilon)J^{b}(x,\epsilon)
\min\left\{1,\frac{\pi(x)}{\pi(x^*)}J^{b^c}(x^*,\epsilon)\right\}\notag\\
&=\min\left\{\pi(x^*)P(b^c)\varrho(\epsilon)J^{b}(x,\epsilon),\pi(x)P(b^c)\varrho(\epsilon)
J^{b}(x,\epsilon)\times J^{b^c}(x^*,\epsilon)\right\},\notag\\
&=\min\left\{\pi(x^*)P(b^c)\varrho(\epsilon)J^{b}(x,\epsilon),\pi(x)P(b^c)\varrho(\epsilon)
J^{b}(x,\epsilon)\times J^{b^c}(x^*,\epsilon)\right\}.\notag\\
\label{eq:detailed_balance2}
\end{align}
It follows from (\ref{eq:transformation1}) and (\ref{eq:transformation2}) that
$\mathcal T^{b^c}(\mathcal T^b(x,\epsilon))=x$, so that
\begin{align}
&J^{b}(x,\epsilon)\times J^{b^c}(x^*,\epsilon)\notag\\
&=\left|\frac{\partial(\mathcal T^{b}(x,\epsilon),\epsilon)}{\partial (x,\epsilon)}\right|
\times\left|\frac{\partial(\mathcal T^{b^c}(\mathcal T^b(x,\epsilon)),\epsilon)}
{\partial (\mathcal T^b(x,\epsilon),\epsilon)}\right|\notag\\
&=\left|\frac{\partial(\mathcal T^{b}(x,\epsilon),\epsilon)}{\partial (x,\epsilon)}\right|
\times \left|\frac{\partial (x,\epsilon)}{\partial(\mathcal T^{b}(x,\epsilon),\epsilon)}\right|\notag\\
&=1.
\label{eq:jacobians}
\end{align}
Substituting (\ref{eq:jacobians}) in (\ref{eq:detailed_balance2}) we obtain
\begin{equation}
\pi(x^*)K(x^*\rightarrow x)=
\min\left\{\pi(x^*)P(b^c)g(\epsilon)J^{b}(x,\epsilon),\pi(x)P(b^c)g(\epsilon)\right\}.
\label{eq:detailed_balance3}
\end{equation}
If, for simplicity of illustration we assume $p_i=q_i$ for $i=1,\ldots,d$, it follows that $P(b)=P(b^c)$, so that
(\ref{eq:detailed_balance3}) is equal to (\ref{eq:detailed_balance1}), proving detailed balance.
Detailed balance of course holds for general probabilities $(p_1,\ldots,p_d)$
and $(q_1,\ldots,q_d)$; see the supplement of \ctn{Dutta11}.

\subsection{Discussion on the independence of the acceptance probability of the proposal density $\varrho$}
\label{subsec:proposal_independence}
An important feature of the TMCMC acceptance probability distinguishing it from the acceptance
probability of the Metropolis-Hastings algorithms is its independence of the proposal density $\varrho$,
irrespective of whether or not it is symmetric, and for all valid transformations
$T^b$. The reason for this is implicit in the above detailed balance arguments -- the same $\epsilon$ is generated
from the proposal density $\varrho$ while moving forward from $x$ to $x^*$ as well as while returning
to $x$ from $x^*$. This is exactly the reason why $\varrho(\epsilon)$ features in both (\ref{eq:detailed_balance1})
and (\ref{eq:detailed_balance2}). Consequently, detailed balance is satisfied only if the acceptance ratio
of TMCMC is independent of $\varrho$.

\subsection{Relationship of TMCMC with the MH methodology}
\label{subsec:mh_relation}
Note that, although in Section \ref{sec:overview_tmcmc} we indicated
a single random variable $\epsilon$ to be used in the transformations, it is permissible to use
$k$ random variables $\left\{\epsilon_1,\ldots,\epsilon_k\right\}$, where $k\in\left\{1,\ldots,d\right\}$.
Here it is also important to remark that general TMCMC with
$k=d$ does not reduce to general MH method for the very reason that the acceptance
ratio of TMCMC is always independent of the proposal density, while in MH it is not.
For dimension $d=1$, however, TMCMC boils down to an MH algorithm with a specialized two-component 
mixture proposal density, where the mixture components correspond to the two available move types, forward
and backward. See \ctn{Dutta11} for the complete technical details.

As pointed out by a reviewer, TMCMC can be viewed as a MH within Gibbs methodology,
which updates the variables one at a time using general MH (\ctn{Geyer11} criticizes this terminology
since Gibbs is a special case of MH). 
This can be seen as follows. 
Suppose that we assign positive probabilities to the sets 
$\tilde b_i=(b_1=0,\ldots,b_{i-1}=0,b_i,b_{i+1}=0,\ldots,b_d=0)$, for $i=1,\ldots,d$,
where $b_i\in\{-1,0,1\}$. For TMCMC, at each iteration, we can select one of the sets $\tilde b_{i^*}$ 
by choosing $i^*\in\{1,\ldots,d\}$ at random, and 
generate $b_{i^*}\in\{-1,0,1\}$ 
with probabilities $q_{i^*}$, $1-p_{i^*}-q_{i^*}$ and $p_{i^*}$.  
%
%
The corresponding $x_{i^*}$, corresponding to $b_{i^*}$, is then updated according to the TMCMC mechanism.
This being a single-dimensional TMCMC step, coincides with an MH within Gibbs procedure
with a specialized proposal mechanism.


\subsection{Additive and multiplicative transformations in TMCMC}
\label{subsec:add_mult}
\subsubsection{Additive TMCMC}
\label{subsubsec:additive_tmcmc}
The additive TMCMC, which is of our interest in this work, uses moves of the following type: 
\begin{equation*}
(x_{1}, \ldots, x_{d}) \rightarrow (x_{1}+ b_{1}\epsilon, \ldots, x_{d}+b_{d}\epsilon), 
\end{equation*}
where $\epsilon\sim \varrho(\epsilon)$. Here $\varrho(\cdot)$ is an arbitrary density with support $\mathbb R_+$, the
positive part of the real line.
In other words, we assume that $T^{b_i}(x_i,\epsilon)=x_i+b_i\epsilon$. 
In this work,
we shall assume that $p_i=1/2$ and $q_i=1/2$ for $i=1,\ldots,d$, 
so that the probabilities of choosing $b_i=0$ and $b^c_i=0$ are zero. Indeed, as proved in \ctn{Dutta11}, this is a completely
valid and efficient choice for additive transformations.
For this work we set
$\varrho(\epsilon)I_{\{\epsilon>0\}}\equiv N(0,\frac{\ell^2}{d})I_{\{\epsilon>0\}}$.

Note that, for each $i$, $b_i\epsilon\sim N(0,\frac{\ell^2}{d})$, but even though $b_i\epsilon$ are pairwise uncorrelated
($E(b_i\epsilon\times b_j\epsilon)=0$ for $i\neq j$), they are not independent since all of them involve the same $\epsilon$. 
Also observe that
$b_i\epsilon+b_j\epsilon=0$ with probability $1/2$ for $i\neq j$, showing that the linear combinations of $b_i\epsilon$ need
not be normal. In other words, the joint distribution of $(b_1\epsilon,\ldots,b_d\epsilon)$ is not normal, even though the marginal
distributions are normal and the components are pairwise uncorrelated. This also shows that $b_i\epsilon$ are not independent,
because independence would imply joint normality of the components.

Thus, a single $\epsilon$ is simulated from a truncated normal distribution, which is then either added to, or subtracted
from each of the $d$ coordinates of $x$ with probability $1/2$. Assuming that the target distribution is proportional to 
$\pi$, the new move $x^*=(x_1+b_1\epsilon,\ldots,x_d+b_d\epsilon)$ is accepted
with probability
\begin{equation}
\alpha=\min\left\{1,\frac{\pi(x^*)}{\pi(x)}\right\}.
\label{eq:acc_tmcmc}
\end{equation}


The RWM algorithm, unlike additive TMCMC, proceeds by simulating $\epsilon_1,\ldots,\epsilon_d$ independently from
$N(0,\frac{\ell^2}{d})$, and then adding $\epsilon_i$ to the coordinate $x_i$, for each $i$. 
The new move is accepted
with probability having the same form as (\ref{eq:acc_tmcmc}). 
As discussed in Section \ref{subsec:mh_relation} 
for TMCMC it is permissible to use
$k$ random variables $\left\{\epsilon_1,\ldots,\epsilon_k\right\}$, where $k\in\left\{1,\ldots,d\right\}$.
Thus, with such a proposal, if $k=d$, additive TMCMC reduces to RWM, showing that the latter is a special case of the former.

Discussion of computational efficiency of TMCMC is already provided in \ctn{Dutta11}. In this article, we supplement  
the discussion by demonstrating with an experiment the substantial 
computational gains of additive TMCMC over RWM, particularly in high dimensions; see Section S-1.

\subsubsection{Multiplicative TMCMC}
\label{subsubsec:multTMCMC}
In contrast with additive TMCMC, multiplicative TMCMC proceeds by generating $\epsilon$ from some relevant   
distribution $\varrho(\epsilon)$ supported on $[-1,1]\backslash\{0\}$ and then either multiplying or dividing
the current states by $\epsilon$. In this case, it is necessary to assign positive probabilities to $b_i=0$ and $b^c_i=0$
to ensure irreducibility. For $i=1,\ldots,d$, the general multiplicative TMCMC algorithm applies the transformation
$x_i\epsilon^{b_i}$ to $x_i$, which necessitates multiplication of the Jacobian $\epsilon^{\sum_{i=1}^db_i}$
to the ratio of the target distribution evaluated at the new and current states, for computation of the acceptance
ratio; see \ctn{Dutta12}, \ctn{Dutta11} and \ctn{Dey15}. 

There is an alternative version of multiplicative TMCMC that proceeds by generating $\epsilon$ from $\varrho(\epsilon)$ 
that is supported on $(0,1]$, then making the transformation $x_i\mapsto c_ix_i\epsilon^{b_i}$, where
$c_i=1$ with probability $r_i$ and $-1$ with probability $1-r_i$, where $0<r_i<1$. Here, it is again permissible
to set the probabilities of $b_i=0$ and $b^c_i=0$ to zero. The Jacobian remains the same as in the previous 
multiplicative version. 

Note that the multiplicative proposal allows relatively large jumps, while  
the additive moves are local in nature. Hence, judicious combination of the two proposals 
may allow local moves as well as large jumps, which would result in a more efficient
algorithm compared to the individual proposals. Indeed,
\ctn{Dey15} demonstrate that a mixture of the additive and multiplicative transformations outperforms
individual additive TMCMC and multiplicative TMCMC.

\subsubsection{Additive-multiplicative TMCMC}
\label{subsubsec:addmultTMCMC}
Apart from these TMCMC mechanisms where all the coordinates are given either additive transformation or multiplicative
transformation, it is possible to give additive transformation to some coordinates and multiplicative to the rest;
this TMCMC mechanism has been referred to as additive-multiplicative TMCMC in \ctn{Dutta11} and \ctn{Dey15}.

\subsection{A manifold interpretation of the TMCMC proposal mechanism}
\label{subsec:manifold_interpretation}

A reviewer observed that for a $d$-dimensional target distribution, the TMCMC proposal can be viewed as generating
points in a manifold $\mathbb M$ such that $\mathbb M\subseteq\mathbb R^d$. For instance, 
for additive TMCMC this manifold $\mathbb M$ is formed by hyperplanes within $\mathbb R^d$. 
\begin{figure}
\centering
\includegraphics[width=8cm,height=8cm]{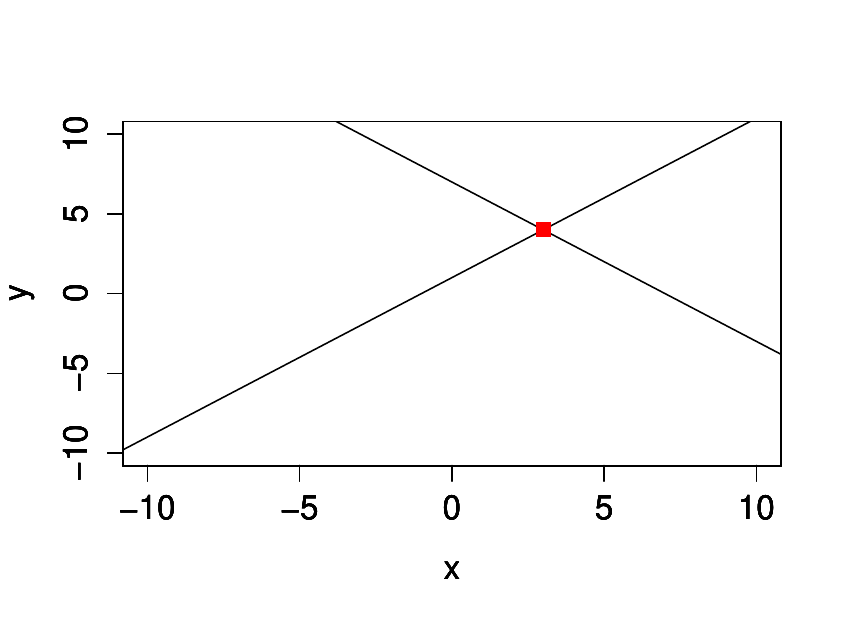}
\caption{Relevant manifold representing the one-step proposal mechanism for additive TMCMC for $d=2$ 
given the current state, denoted by the red patch.
}
\label{fig:addTMCMC}
\end{figure}
When $d=2$, Figure \ref{fig:addTMCMC}  depicts
the manifold as intersecting straight lines with the current state $(x_1,x_2)$ being the point of intersection.
For the two versions of multiplicative TMCMC, the manifolds are shown in Figures \ref{fig:multTMCMC1} and
\ref{fig:multTMCMC2} respectively.
\begin{figure}
\centering
\includegraphics[width=8cm,height=8cm]{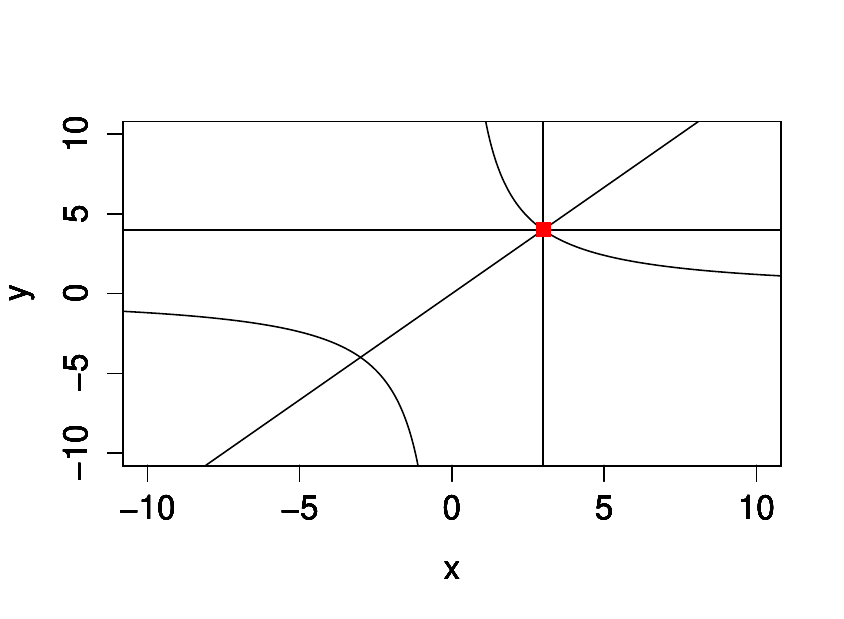}
\caption{Relevant manifold representing the one-step proposal mechanism for the first version of
multiplicative TMCMC for $d=2$ 
given the current state, denoted by the red patch.
}
\label{fig:multTMCMC1}
\end{figure}
\begin{figure}
\centering
\includegraphics[width=8cm,height=8cm]{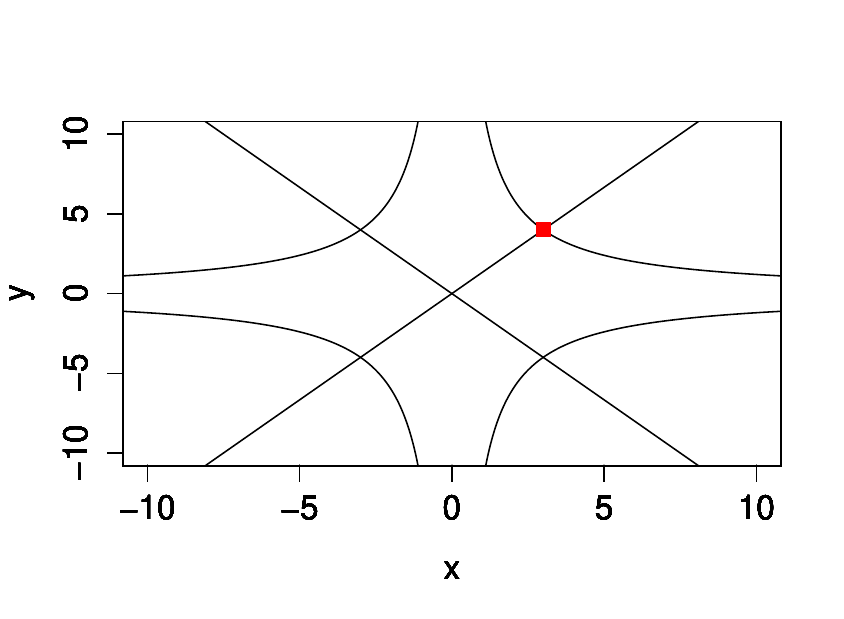}
\caption{Relevant manifold representing the one-step proposal mechanism for the second version of
multiplicative TMCMC for $d=2$ 
given the current state, denoted by the red patch.
}
\label{fig:multTMCMC2}
\end{figure}

The TMCMC proposal implicitly describes the manifold parametrically, where $\epsilon\in\mathbb R^k$ plays the role
of the parameter. If $k=1$, the proposed values must belong to appropriate curves embedded in $\mathbb R^d$.
It is important to clarify, as the reviewer also observed, that $\varrho(\epsilon)$ does not completely define
the TMCMC proposal; the complete TMCMC proposal consists of first generating $\epsilon$ from $\varrho$ and then 
transforming the current state deterministically with the help of $\epsilon$ in a way that ensures the transformed
state falls within the appropriate manifold.

Since, for $d>1$, $\mathbb M\subset\mathbb R^d$ even for $k=d$, the above manifold viewpoint 
also succeeds in explaining why, 
even for $k=d$, the TMCMC proposal does not reduce, in general, to the standard MH methodology. This perspective
also clearly shows the singularity of the proposal, which also explains the association of the Jacobian with the acceptance
probability of TMCMC. Note that for $d=1$ (so that $k=1$), however, we must have $\mathbb M=\mathbb R$.

\subsection{Contrast of TMCMC with deterministic transformation based generalized Gibbs/MH approaches}
\label{subsec:contrast_generalized_mcmc}
TMCMC uses deterministic transformations to update the variables; however, deterministic transformations
have also been considered by \ctn{Liu99}, \ctn{Liu00}, \ctn{Kou05} in an attempt to improve mixing behaviour 
of the underlying usual Gibbs or MH algorithm. In a nutshell, these authors apply suitable deterministic 
transformations, usually additive, as $(x_1,\ldots,x_d)\rightarrow (x_1+\epsilon,\ldots,x_d+\epsilon)$,
or multiplicative, $(x_1,\ldots,x_d)\rightarrow (\eta x_1,\ldots,\eta x_d)$, to the samples generated at each iteration of 
Gibbs or MH in a way that the stationarity
of the target distribution is preserved, while mixing properties of chain after the transformation
may perhaps be improved. To maintain stationarity, $\epsilon$ and $\eta$ must be simulated from some appropriate
distribution which is often impossible to generate from (see \ctn{Liu99}). To alleviate the problem
\ctn{Liu00} (see also \ctn{Kou05}) suggest MH methods.

As pointed out in the supplement of \ctn{Dutta11}, these generalized Gibbs/MH methods are mere attempts towards 
improving mixing of the underlying MCMC. 
These are not stand-alone methodologies which can converge to the target by themselves, unlike TMCMC.
In fact, as shown in the supplement of \ctn{Dutta11} the aforementioned additive and multiplicative transformations 
of generalized Gibbs/MH approaches are themselves not even irreducible, and are hence generally non-convergent.
See the supplement of \ctn{Dutta11} for detailed discussions regarding various issues pertaining to these approaches.

It is important to point out that the aforementioned generalized Gibbs/MH approaches are not the first to consider
deterministic transformations, the earlier methods being the hit-and-run algorithm (see, for example, 
\ctn{Berbee87}, \ctn{Belisle93}, \ctn{Romeijn94a}, \ctn{Smith96}) 
and the adaptive direction sampling algorithm (\ctn{Gilks94a}, \ctn{Gilks94b}). The hit-and-run algorithm proceeds by 
generating random directions from an available set of directions and then considers addition
of a scalar times the sampled direction to the current state as the updated state, where the scalar is drawn from an appropriate,
but non-trivial distribution associated with the target density. The adaptive direction sampler generalizes the
hit-and-run algorithm by selecting the directions and the current state adaptively, based on all the previous iterations.
Since sampling the scalar directly is usually difficult or impossible, this step can be replaced with an
appropriate MH step as in the case of generalized Gibbs/MH methods discussed above (see, for example, \ctn{Romeijn94a}).

The adaptive version of additive TMCMC (\ctn{Dey13}), where the scales are chosen adaptively, comes close
to the Metropolized versions of hit-and-run and adaptive direction sampling methods. The differences are
that, in the latter algorithms, the directions (and also the current state in adaptive direction method) 
are chosen randomly and require a Jacobian of transformations to be evaluated at both the numerator
and denominator of the acceptance ratio apart from the proposal distribution (if not symmetric), 
while in adaptive additive TMCMC the directions (and the current state) are chosen deterministically, 
and the acceptance ratio depends neither on any Jacobian, nor on proposal densities.

\subsection{Contrast of the TMCMC idea with reversible jump Markov chain Monte Carlo (RJMCMC)}
\label{subsec:contrast_rjmcmc}
Interestingly, although both TMCMC and RJMCMC are based on deterministic transformations, 
the philosophy of TMCMC is in sharp contrast with that of RJMCMC. First, TMCMC is designed
for simulation from fixed-dimensional distributions while RJMCMC is meant for generating from
variable-dimensional distributions. Second, the deterministic transformations in RJMCMC are not required for move-types 
that are not meant for changing dimensions, while in TMCMC none of the moves change dimensions, yet all of them
are based on deterministic transformations. Third, the acceptance rates of dimension changing moves of RJMCMC depend upon
the proposal density because for moving from lower to higher dimension simulation of
random variables is required but for returning to lower dimension from higher dimension no simulation is necesary.
For instance, for moving from $x_1\in\mathbb R$ to $(x^*_1,x^*_2)\in\mathbb R^2$, 
one may simulate $u\sim\varphi$, where $\varphi$ is some density supported on $\mathbb R$, and then make the
transformation $x^*_1=x_1-u$ and $x^*_2=x_1+u$. However, for returning from $(x^*_1,x^*_2)$ to $x_1$,
one simply sets $x_1=\frac{x^*_1+x^*_2}{2}$, without simulating any random variable. 
On the other hand, as discussed in Section \ref{subsec:proposal_independence}, since $\epsilon$ is simulated
from $\varrho$ while moving forward and also for moving back, the detailed balance condition dictates that
the acceptance ratio of TMCMC must always be independent of $\varrho$.

\subsection{Discussion of extension of TMCMC to variable dimensional problems} 
\label{subsec:ttmcmc}
\ctn{Das14} extend TMCMC to accommodate variable-dimensional problems; they refer to the new variable
dimensional methodology as Transdimensional Transformation based Markov chain Monte Carlo (TTMCMC).
TTMCMC is designed to update the entire set of parameters, both fixed and variable dimensional, 
as well as the number of parameters, in a single block using simple deterministic
transformations of some low-dimensional (often one-dimensional) random variable drawn from some fixed, but 
arbitrary distribution defined on some relevant support. Again, the acceptance probability is independent
of the proposal density. The advantages of TMCMC over Metropolis-Hastings algorithms are clearly
carried over to the advantages of TTMCMC over RJMCMC. In fact, since it is well-known that efficient implementation of
RJMCMC is generally infeasible, TTMCMC offers huge advantages in this regard; see \ctn{Das14} for details.

\section{Optimal scaling of additive TMCMC when the target density is a product based on $i.i.d.$ random variables}
\label{sec:iid_prod}

It must be emphasized that the proposal density for $\epsilon$ in TMCMC can be any distribution on 
the positive support. Similarly, the RWM algorithm also does not require the proposal to be normal. 
However, the optimal scaling 
results for RWM inherently assume normality, and for the sake of comparison, we have also restricted 
our focus on $\epsilon \sim N(0,\frac{\ell^2}{d})I_{\{\epsilon>0\}}$ in the subsequent sections. 

In this paper, we are primarily interested in choosing the parameters of the process judiciously so as 
to enhance the performance of the chain. Our method as stated above involves only a single parameter 
$\--$ the proposal variance, or to be more precise, the scaling factor $\ell$. 
Details on the need for optimal scaling of additive TMCMC with regard to the scaling factor $\ell$
are provided in Section S-2 of the supplement.

In this section we consider the problem of optimal scaling in the simplest case where the target 
density $\pi$ is a product of $i.i.d.$ marginals, given by
\begin{equation}
\pi(x) = \prod_{i=1}^{d}{f(x_{i})}. 
\end{equation}

Assuming that the TMCMC chain is started at stationarity,
we shall show that for each component of $X$, 
the corresponding one-dimensional process converges to a diffusion 
process which is analytically tractable and whose diffusion and drift speeds may be numerically evaluated. 
It is important to remark that it is possible to relax the assumption of stationarity; 
see \ctn{Jour12} in the context of RWM, although we do not pursue this in our current work. 

Let $X^d_t=(X_{t,1},\ldots,X_{t,d})$.
We define ${U_{t}}^{d} = {X_{[dt],1}}$ ($[\cdot]$ denotes the integer part), 
the sped up first component of the actual additive TMCMC-induced Markov chain. 
Note that this process proposes a jump every $\frac{1}{d}$ time units. As $d \rightarrow \infty$, 
that is, as the dimension grows to $\infty$, the process essentially becomes a continuous time diffusion process. 


Before proceeding first let us introduce the notion of Skorohod topology (\cite{Skorohod}). It is a topology generated by a class of functions 
from $[0,1] \rightarrow \mathbb{R}$ for which the right-hand side and the left-hand side limits are well defined at each point (even though they 
may not be the same). 
It is an important tool for formulating Poisson process, Levy process and other stochastic point processes. As considered in
\ctn{Roberts97a} here we also consider the metric 
separable topology 
on the above class of functions as defined in \cite{Skorohod}. In other words, whenever we mention convergence of 
discrete time stochastic processes to diffusion process in this paper, we mean convergence with respect to this 
topology.


In what follows, we assume the following:
\begin{align}
&E_f\left(\frac{f'(X)}{f(X)}\right)^4 <\infty,
\label{eq:assump1}\\
&E_f\left(\frac{f''(X)}{f(X)}\right)^4  <\infty,
\label{eq:assump2}\\
&E_f\left(\frac{f'''(X)}{f(X)}\right)^4  <\infty.
\label{eq:assump3}\\
&E_f\left|\frac{f''''(X)}{f(X)}\right|  <\infty.
\label{eq:assump4}
\end{align}
These assumptions can also be somewhat relaxed, depending upon the order of the Taylor's series
expansions used in the proofs.
Following \ctn{Roberts97a} let us denote weak convergence of processes in the Skorohod topology by $\Rightarrow$. 

We next present our formal result in the $i.i.d.$ situation, the proof of which is presented
in Section S-6.1. Our proof differs from the previous
approaches associated with RWM particularly because as already shown in Section \ref{sec:overview_tmcmc}, 
in additive TMCMC, the terms $b_i\epsilon$ are not
jointly normally distributed unlike the RWM-based approaches. 
Thus, unlike the RWM-based approaches, in our case obtaining appropriate normal approximation to
relevant quantities are not assured. To handle the difficulty, we
had to apply Lyapunov's central limit theorem on sums associated with the discrete random variables 
$\{b_i;i=2,\ldots,d\}$, {\it conditional
on} $\epsilon$ (and $b_1$).
This required us to verify Lyapunov's condition (see, for example, \ctn{Koralov07}) before
applying the central limit theorem. We then integrated
over $\epsilon$ and $b_1$. These issues make our proof substantially different from the previous 
approaches associated
with RWM. It is important to remark that, not only in this $i.i.d.$ scenario, but in all the set-ups that we
consider in this paper, application of Lyapunov's central limit theorem, conditionally on $\epsilon$ 
(and often $b_1$), is crucial, before finally integrating over the conditioned variables to obtain our
results.

\begin{theorem}
\label{theorem:theorem1}
Assume that $f$ is positive with at least three continuous derivatives and that the fourth derivative exists almost everywhere. 
Also assume that $(\log f)'$ is Lipschitz continuous, 
and that (\ref{eq:assump1}) -- (\ref{eq:assump4}) hold. 
Let $X^d_0\sim\pi$, that is, the $d$-dimensional additive TMCMC chain is started at stationarity, and
let the transition be given by $(x_1,\ldots,x_d)\rightarrow (x_1+b_1\epsilon,\ldots,b_d\epsilon)$, where
for $i=1,\ldots,d$, $b_i=\pm 1$ with equal probability and 
$\epsilon\equiv\frac{\ell}{\sqrt{d}}\epsilon^*$, where $\epsilon^*\sim N(0,1)I_{\{\epsilon^*>0\}}$.
We then have
\[
\{U^d_t;~t\geq 0\}\Rightarrow \{U_t;~t\geq 0\},
\]
where $U_0\sim f$ and $\{U_t;~t\geq 0\}$ satisfies the Langevin stochastic differential equation (SDE)
\begin{equation}
dU_t=g(\ell)^{1/2}dB_t+\frac{1}{2}g(\ell)\left(\log f(U_t)\right)'dt,
\label{eq:sde1}
\end{equation}
with $B_t$ denoting standard Brownian motion at time $t$,
\begin{equation}
g(\ell)=4\ell^2\int_0^{\infty} u^2\Phi\left (- \frac{u\ell\sqrt{\mathbb{I}}}{2}\right)\phi(u)du;
\label{eq:diff_speed_iid}
\end{equation}
$\Phi(\cdot)$ and $\phi(\cdot)$ being the standard normal cumulative distribution function (cdf) and density, respectively, and
\begin{equation}
\mathbb I=E_f\left(\frac{f'(X)}{f(X)}\right)^2.
\label{eq:information}
\end{equation}
\end{theorem}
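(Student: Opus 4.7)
The plan is to follow the generator-convergence strategy of Roberts, Gelman and Gilks, adapted to the TMCMC setting. Concretely, for a smooth compactly supported test function $V$, I would show that the infinitesimal generator of the sped-up chain,
$G_dV(x_1) = d\cdot E[(V(x_1+b_1\epsilon)-V(x_1))\,\alpha(X,X^*)]$,
converges pointwise to the Langevin generator $\mathcal{L}V(x) = \tfrac{1}{2}g(\ell)V''(x) + \tfrac{1}{2}g(\ell)(\log f)'(x)V'(x)$ associated with the SDE~(\ref{eq:sde1}). Combined with tightness (supplied by the stationary starting law and the moment bounds (\ref{eq:assump1})--(\ref{eq:assump4})), pointwise generator convergence lifts to Skorohod weak convergence by standard arguments. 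The two workhorses of the computation are the Taylor expansion of $V$ using $b_1^2=1$, namely $V(x_1+b_1\epsilon)-V(x_1) = b_1\epsilon V'(x_1) + \tfrac{1}{2}\epsilon^2 V''(x_1) + O(\epsilon^3)$, and an asymptotic description of the log-acceptance ratio $\Lambda_d = \sum_{i=1}^d[\log f(X_i+b_i\epsilon)-\log f(X_i)]$.

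The crux is identifying the conditional limit of $\Lambda_d$ given $\epsilon^*$. Peeling off the $i=1$ contribution, which is $O(d^{-1/2})$ and thus vanishes, leaves a remainder $R_d$ over $i\geq 2$. Here lies the fundamental departure from the RWM analysis: the $b_i\epsilon$ do not admit a joint-normality argument. Instead, I would condition on $\epsilon$ and apply Lyapunov's central limit theorem to $\sum_{i\geq 2}b_i(\log f)'(X_i)$, whose summands are, conditionally on $\epsilon$, iid with mean zero and variance $\mathbb{I}$; assumption (\ref{eq:assump1}) furnishes the fourth moment that verifies Lyapunov's condition. Combined with the weak law of large numbers applied to $\sum_{i\geq 2}(\log f)''(X_i)$ and the Fisher identity $E_f[(\log f)''(X)] = -\mathbb{I}$, with Taylor remainders controlled by (\ref{eq:assump3})--(\ref{eq:assump4}), this yields
\[
\Lambda_d \mid \epsilon^* \Rightarrow N\!\left(-\tfrac{1}{2}\ell^2(\epsilon^*)^2\mathbb{I},\ \ell^2(\epsilon^*)^2\mathbb{I}\right).
\]
The standard identity $E[\min(1,e^Y)] = 2\Phi(-\sigma/2)$ for $Y\sim N(-\sigma^2/2,\sigma^2)$ then gives $E[\alpha\mid\epsilon^*] \to 2\Phi(-\tfrac{1}{2}\ell\epsilon^*\sqrt{\mathbb{I}})$.

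The two remaining pieces of $G_dV(x_1)$ separate cleanly. For the second-order Taylor term, multiplying by $d$ and by the limiting conditional acceptance probability and then integrating against the truncated-normal density $2\phi(u)$ of $\epsilon^*$ on $(0,\infty)$ produces $\tfrac{1}{2}V''(x_1)\cdot 4\ell^2\int_0^\infty u^2\Phi(-u\ell\sqrt{\mathbb{I}}/2)\phi(u)\,du = \tfrac{1}{2}g(\ell)V''(x_1)$, using $d\epsilon^2=\ell^2(\epsilon^*)^2$. For the first-order Taylor term I would write $\Lambda_d = M_d + b_1\eta$ with $\eta = \epsilon(\log f)'(x_1)$ and $M_d$ independent of $b_1$; a first-order expansion of $x\mapsto\min(1,e^x)$ then gives $E[b_1\alpha\mid M_d,\epsilon] \approx \eta\, e^{M_d}I(M_d<0)$, and a direct Gaussian computation shows $E[e^{M_d}I(M_d<0)\mid\epsilon^*]\to\Phi(-\tfrac{1}{2}\ell\epsilon^*\sqrt{\mathbb{I}})$ in the same Gaussian limit, so the drift assembles into $\tfrac{1}{2}g(\ell)(\log f)'(x_1)V'(x_1)$.

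The main obstacle, in my view, is not any individual computation but the uniformity required to turn the conditional CLT and the first-order acceptance expansion into honest limits of $G_dV$: one needs uniform integrability of the quantities multiplied by $\alpha$ and a Lyapunov bound uniform in the conditioning variable $\epsilon^*$. Assumptions (\ref{eq:assump1})--(\ref{eq:assump4}) together with the Lipschitz hypothesis on $(\log f)'$ are calibrated precisely to supply these bounds and to control the Taylor remainders; once that is done, the generator-convergence machinery delivers the Skorohod weak convergence to~(\ref{eq:sde1}).
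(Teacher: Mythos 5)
Your proposal is correct and follows essentially the same route as the paper: generator convergence for a test function of the first coordinate, Lyapunov's CLT applied to the $b_i$-sum conditionally on $\epsilon$ (with the fourth-moment assumptions verifying the Lyapunov condition), the law of large numbers plus the Fisher identity to pin down the limiting Gaussian $N(-\tfrac{1}{2}\ell^2\epsilon^{*2}\mathbb I,\ \ell^2\epsilon^{*2}\mathbb I)$, the Roberts--Gelman--Gilks expectation identity, and integration over the truncated-normal law of $\epsilon^*$ to assemble $g(\ell)$. Your treatment of the drift term (splitting off $b_1\eta$ and using $E[e^{Y}I(Y<0)]=\Phi(-\sigma/2)$) is a mild reorganization of the paper's Taylor expansion of $\mathbb{W}$ to order $d^{-1/2}$, but it yields the identical coefficient $\tfrac{1}{2}g(\ell)(\log f)'V'$.
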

%
%
In connection with our diffusion equation (see Equation (18) 
in connection with the proof of
Theorem \ref{theorem:theorem1} in Section S-6.1), we note that our SDE is also 
Langevin like the usual RWM approach. But, we have a different \emph{speed} and it is interesting to compare how the 
two \emph{speed} functions of our method is related to that of RWM and also, how it alters the optimal 
expected acceptance rate of the process. In what follows, we use the terms \emph{speed} and 
\emph{diffusion speed} of the process, given by  $g(\ell)$ as in (\ref{eq:diff_speed_iid}) interchangeably.

\begin{corollary}
\label{cor:cor1}
The diffusion speed $g(\ell)$ is maximized by
\begin{equation}
\ell_{opt}=\frac{2.426}{\sqrt{\mathbb I}},
\label{eq:l1}
\end{equation}
and the optimal acceptance rate is given by
\begin{equation}
\alpha_{opt}=4\int_0^{\infty} \Phi\left (- \frac{u\ell_{opt}\sqrt{\mathbb{I}}}{2}\right)\phi(u)du
=0.439\ \ \mbox{(up to three decimal places)}.
\label{eq:tmcmc_opt_acc}
\end{equation}
\end{corollary}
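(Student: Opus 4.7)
The plan is to reduce both claims to a one-parameter optimization problem by absorbing the factor $\sqrt{\mathbb{I}}$ into the scaling constant. First I would set $k = \ell\sqrt{\mathbb{I}}$, so that the diffusion speed in \eqref{eq:diff_speed_iid} becomes
\begin{equation*}
g(\ell)=\frac{4k^{2}}{\mathbb{I}}\int_{0}^{\infty}u^{2}\,\Phi\!\left(-\tfrac{uk}{2}\right)\phi(u)\,du
=\frac{1}{\mathbb{I}}\,\tilde g(k),
\end{equation*}
where $\tilde g(k)=4k^{2}\int_{0}^{\infty}u^{2}\Phi(-uk/2)\phi(u)du$ depends only on $k$. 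Thus maximising $g$ in $\ell$ is equivalent to maximising $\tilde g$ in $k$, and the optimiser $\ell_{opt}$ must have the form $k_{opt}/\sqrt{\mathbb{I}}$ for a universal constant $k_{opt}$.

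Next I would carry out the scalar optimisation. Differentiating $\tilde g$ under the integral sign (justified since the integrand is uniformly dominated in $k$ on compact sets by integrable functions involving $u^{2}\phi(u)$), I obtain
\begin{equation*}
\tilde g'(k)=8k\int_{0}^{\infty}u^{2}\Phi\!\left(-\tfrac{uk}{2}\right)\phi(u)du
\;-\;2k^{2}\int_{0}^{\infty}u^{3}\phi\!\left(\tfrac{uk}{2}\right)\phi(u)\,du.
\end{equation*}
Setting $\tilde g'(k)=0$ yields a transcendental equation in $k$ alone. A qualitative check shows $\tilde g(k)\to 0$ as $k\downarrow 0$ (the $k^{2}$ factor dominates) and $\tilde g(k)\to 0$ as $k\to\infty$ (because $\Phi(-uk/2)$ decays super-polynomially in $k$ uniformly over $u$ bounded away from zero, and the contribution near $u=0$ is negligible), so a global maximiser exists on $(0,\infty)$; monotonicity of the two integrands in $k$ gives uniqueness. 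Solving numerically (e.g.\ by bisection on $\tilde g'$) yields $k_{opt}\doteq 2.426$, which gives $\ell_{opt}=2.426/\sqrt{\mathbb{I}}$ as in \eqref{eq:l1}.

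For the optimal acceptance rate, I would return to the limiting acceptance probability derived inside the proof of Theorem \ref{theorem:theorem1}, namely $\mathbb{W}(x_{1},z^{*}_{1},d)\to 2\Phi(-|z^{*}_{1}|\ell\sqrt{\mathbb{I}}/2)$ as $d\to\infty$, where $z^{*}_{1}\sim N(0,1)$. The stationary expected acceptance probability is then
\begin{equation*}
\alpha(\ell)=E_{z^{*}_{1}}\!\left[2\Phi\!\left(-\tfrac{|z^{*}_{1}|\ell\sqrt{\mathbb{I}}}{2}\right)\right]
=4\int_{0}^{\infty}\Phi\!\left(-\tfrac{u\ell\sqrt{\mathbb{I}}}{2}\right)\phi(u)\,du,
\end{equation*}
using the symmetry of $\phi$ to fold the integral to $(0,\infty)$. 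Substituting $\ell=\ell_{opt}$, i.e.\ $u\ell_{opt}\sqrt{\mathbb{I}}/2 = 1.213\,u$, and computing the one-dimensional integral numerically yields $\alpha_{opt}\doteq 0.439$.

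The only non-routine step is verifying uniqueness of the maximiser of $\tilde g$: closed-form analysis of $\tilde g'(k)=0$ is not available, so the argument rests on the endpoint behaviour together with a monotonicity/sign-change argument for $\tilde g'$. Everything else is a straightforward rescaling together with a numerical evaluation of two explicit one-dimensional integrals.
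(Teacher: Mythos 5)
Your proposal is correct and follows essentially the same route the paper takes implicitly: the paper states the corollary without proof, relying on the substitution $k=\ell\sqrt{\mathbb{I}}$ to reduce $g(\ell)$ to a universal function of one variable, numerical maximization to obtain $k_{opt}\doteq 2.426$, and the identification of the limiting acceptance probability $E_{z^*_1}\bigl[2\Phi\bigl(-|z^*_1|\ell\sqrt{\mathbb{I}}/2\bigr)\bigr]$ from the quantity $\mathcal{H}$ in the proof of Theorem \ref{theorem:theorem1}. Your added justification (differentiation under the integral sign, endpoint behaviour for existence of the maximizer) is sound, and you correctly flag that uniqueness of the critical point is the only step not fully closed analytically.
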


\subsection{TMCMC within Gibbs for $i.i.d.$ product densities}
\label{subsec:tmcmc_within_gibbs_iid}

The main notion of Gibbs sampling is to update one or multiple components of a 
multidimensional random vector conditional on the remaining components. 
In TMCMC within Gibbs, we update only a fixed proportion $c_{d}$ of the $d$ coordinates, 
where $c_{d}$ is a function of $d$ and we assume that as $d \rightarrow \infty$, 
then $c_{d} \rightarrow c$, for some $0<c\leq 1$. In order to explain the transitions in this process analytically, 
we define an indicator function $\chi_{i}$ for $ i= 1,\ldots,d$. For fixed $d$, 
\begin{eqnarray}
\mathbb{\chi}_{i} &=& 1 \hspace{0.5 cm} \mbox{if transition takes place in the} \ \  i^{th} \ \ \mbox{coordinate} 
\nonumber \\
&=& 0 \hspace{0.5 cm} \mbox{if no transition takes place in the}\ \  i^{th} \ \ \mbox{coordinate}. 
\label{eq:chi_definition}
\end{eqnarray}
Our assumptions imply that
\begin{equation}
P (\mathbb{\chi}_{i}= 1) = c_{d}; \ \ i =1,\ldots,d.
\label{eq:chi_distribution}
\end{equation}
Then a feasible transition with respect to additive TMCMC can be analytically expressed as 
\begin{equation}
(x_{1},\ldots,x_{d}) \rightarrow (x_{1}+\mathbb{\chi}_{1}b_{1}\epsilon,\ldots,x_{d}+ \mathbb{\chi}_{d}b_{d}\epsilon),
\label{eq:transition}
\end{equation}
where $\epsilon\equiv\frac{\ell}{\sqrt{d}}\epsilon^*$, where $\epsilon^*\sim N(0,1)I_{\{\epsilon^*>0\}}$.
We then have the following theorem, the proof of which is presented in Section S-6.2 of the supplement.

\begin{theorem}
\label{theorem:theorem2}
Assume that $f$ is positive with at least three continuous derivatives and that the fourth derivative exists almost everywhere. 
Also assume that $(\log f)'$ is Lipschitz continuous, 
and that (\ref{eq:assump1}) -- (\ref{eq:assump4}) hold. 
Suppose also that the transition is given by (\ref{eq:transition}) and that 
as $d \rightarrow \infty$, $c_{d} \rightarrow c$, for some $0<c\leq 1$.
Let $X^d_0\sim\pi$, that is, the $d$-dimensional additive TMCMC chain is started at stationarity. 
We then have
\[
\{U^d_t;~t\geq 0\}\Rightarrow \{U_t;~t\geq 0\},
\]
where $U_0\sim f$ and $\{U_t;~t\geq 0\}$ satisfies the Langevin SDE
\begin{equation}
dU_t=g_c(\ell)^{1/2}dB_t+\frac{1}{2}g_c(\ell)\left(\log f(U_t)\right)'dt,
\label{eq:sde2}
\end{equation}
where
\begin{equation}
g_c(\ell)=4c\ell^2\int_0^{\infty} u^2\Phi\left (- \frac{u\ell\sqrt{c\mathbb{I}}}{2}\right)\phi(u)du,
\label{eq:diff_speed_iid2}
\end{equation}
and
$\mathbb I$ is given by (\ref{eq:information}).
\end{theorem}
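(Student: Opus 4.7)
The plan is to follow the template of the proof of Theorem \ref{theorem:theorem1} almost verbatim, incorporating the additional randomness introduced by the indicators $\chi_1,\ldots,\chi_d$. As before, I would restrict $V$ to depend only on the first coordinate, write the discrete time generator $G_dV(x)$ as
\[
G_dV(x)=d\, E\!\left[\bigl(V(x_1+\chi_1 b_1\epsilon)-V(x_1)\bigr)\,
E_{\{b_j,\chi_j,\,j\geq 2\}}\min\!\left\{1,\frac{\pi(x^*)}{\pi(x)}\right\}\right],
\]
and expand the log-acceptance ratio by Taylor series around each $x_j$, keeping only the components with $\chi_j=1$ (the others contribute $0$). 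Conditioning on $(\epsilon,b_1,\chi_1)$, the inner conditional expectation should again be handled by Lyapunov's CLT, but now applied to the triangular array $\{\chi_j b_j [\epsilon(\log f(x_j))'+\tfrac{\epsilon^3}{6}(\log f(x_j))''']\}_{j=2}^d$.

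The main new ingredient is the bookkeeping of the factor $c$. Conditional on $\epsilon$ and the $x_j$'s, each summand now has mean $0$ and variance equal to $c_d$ times the variance used in Theorem \ref{theorem:theorem1}, while the fourth moment also picks up the factor $c_d$ (since $\chi_j^4=\chi_j$). Thus, Lyapunov's ratio is still $O(d^{-1})$ almost surely under assumptions (\ref{eq:assump1})--(\ref{eq:assump4}), and the normal limit holds with variance of order $c_d(d-1)\epsilon^2\mathbb I$. By the SLLN applied to $\{\chi_j\}$ and the assumption $c_d\to c$, the $c_d$'s can be replaced by $c$ in the limit; together with the almost sure convergences used in Theorem \ref{theorem:theorem1} (now restricted to the indices with $\chi_j=1$, and normalised by $c_d(d-1)$), this shows that
\[
E_{\{b_j,\chi_j,\,j\geq 2\}}\min\!\left\{1,\frac{\pi(x^*)}{\pi(x)}\right\}\longrightarrow
E\bigl[\min\{1,e^X\}\bigr],
\quad X\sim N\!\Bigl(\eta(x_1,\chi_1 b_1,\epsilon)-\tfrac{c(d-1)\epsilon^2}{2}\mathbb I,\;c(d-1)\epsilon^2\mathbb I\Bigr),
\]
almost surely (outside a suitable null set), where $\eta$ is the one-dimensional increment $\log f(x_1+\chi_1 b_1\epsilon)-\log f(x_1)$.

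The rest of the argument is essentially Theorem \ref{theorem:theorem1} with $\mathbb I$ replaced by $c\mathbb I$ inside the $\Phi$ terms. Applying Proposition 2.4 of \ctn{Roberts97a} and expanding $\Phi$ and $e^{\eta}$ to the requisite order produces a function $\mathbb W$ analogous to (\ref{eq:W}) but with the argument of $\Phi$ equal to $-|z_1^*|\ell\sqrt{c\mathbb I}/2$. When computing the outer expectation with respect to $\chi_1$ and $z_1^*=b_1\epsilon\sqrt{d}/\ell$, the contribution from $\chi_1=0$ vanishes because $V(x_1+\chi_1 b_1\epsilon)-V(x_1)=0$, while the contribution from $\chi_1=1$ has probability $c_d\to c$. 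This produces the overall multiplicative factor $c$ in front, giving
\[
GV(x)=\frac{1}{2}g_c(\ell)(\log f(x_1))'V'(x_1)+\frac{g_c(\ell)}{2}V''(x_1),
\]
with $g_c(\ell)$ as in (\ref{eq:diff_speed_iid2}). Boundedness of $G_dV(x)$ together with dominated convergence then upgrades the pointwise convergence to $L^1$ convergence of generators, yielding the claimed weak convergence in Skorohod topology.

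The main technical obstacle, and the only place where genuine new work is required relative to Theorem \ref{theorem:theorem1}, is verifying Lyapunov's condition for the array $\{\chi_j b_j(\cdots)\}$ after conditioning on $(\epsilon,\chi_1,b_1)$, because the effective sample size is now random. I would handle this by noting that $\tfrac{1}{d-1}\sum_{j=2}^d\chi_j\to c$ almost surely by SLLN, so that both the denominator $\sum_{j=2}^d\mathrm{Var}_{b_j}(\chi_j\zeta_j)$ and the numerator $\sum_{j=2}^d E_{b_j}|\chi_j\zeta_j|^4$ are $\Theta(d)$ almost surely under assumptions (\ref{eq:assump1})--(\ref{eq:assump3}), reproducing the $O(d^{-1})$ Lyapunov bound. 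Once this is in place, every remaining step is a direct translation of the calculations in Theorem \ref{theorem:theorem1}.
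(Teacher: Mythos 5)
Your proposal is correct and follows essentially the same route as the paper: restrict $V$ to the first coordinate, pull out the factor $P(\chi_1=1)=c_d$ (the $\chi_1=0$ branch contributing nothing), apply Lyapunov's CLT to the products $\chi_j b_j(\cdots)$ conditionally on $(\epsilon,b_1)$ to get a normal with mean and variance scaled by $c_d(d-1)$ in place of $(d-1)$, invoke Proposition 2.4 of \ctn{Roberts97a}, and pass to the limit via dominated convergence. The only difference is that you spell out the Lyapunov verification for the thinned array explicitly (via the SLLN for $\{\chi_j\}$), where the paper simply asserts that the condition again holds for $\delta=4$; this is a welcome elaboration rather than a departure.
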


\begin{corollary}
\label{cor:cor2}
The diffusion speed $g_c(\ell)$ is maximized by
\begin{equation}
\ell_{opt}=\frac{2.426}{\sqrt{c\mathbb I}},
\label{eq:l2}
\end{equation}
and the optimal acceptance rate is given by
\begin{equation}
\alpha_{opt}=4\int_0^{\infty} \Phi\left (- \frac{u\ell_{opt}\sqrt{c\mathbb{I}}}{2}\right)\phi(u)du
=0.439\ \ \mbox{(up to three decimal places)}.
\label{eq:acc_tmcmc_gibbs}
\end{equation}
\end{corollary}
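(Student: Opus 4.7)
The plan is to reduce this optimization problem to the one already solved in Corollary \ref{cor:cor1} via a one-parameter rescaling. In the expression
\[
g_c(\ell)=4c\ell^2\int_0^{\infty} u^2\Phi\!\left(-\frac{u\ell\sqrt{c\mathbb{I}}}{2}\right)\phi(u)\,du,
\]
I would substitute $v=\ell\sqrt{c\mathbb I}$, so that $c\ell^{2}=v^{2}/\mathbb I$ and the integrand depends on $\ell$ and $c$ only through $v$. This gives
\[
g_c(\ell)=\frac{4}{\mathbb I}\,v^{2}\int_0^{\infty} u^{2}\Phi\!\left(-\frac{uv}{2}\right)\phi(u)\,du,
\]
which is, up to the strictly positive constant $4/\mathbb I$ independent of $\ell$ and $c$, exactly the same function of $v$ that was maximized in the proof of Corollary \ref{cor:cor1}. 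Hence maximizing $g_c$ over $\ell>0$ is equivalent to maximizing that one-variable objective, and the resulting optimizer must satisfy $v_{opt}=\ell_{opt}\sqrt{c\mathbb I}\approx 2.426$, from which
\[
\ell_{opt}=\frac{2.426}{\sqrt{c\mathbb I}}
\]
follows immediately.

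For the acceptance rate, I would apply exactly the same change of variable inside the formula for $\alpha_{opt}$:
\[
\alpha_{opt}=4\int_0^{\infty}\Phi\!\left(-\frac{u\ell_{opt}\sqrt{c\mathbb I}}{2}\right)\phi(u)\,du
=4\int_0^{\infty}\Phi\!\left(-\frac{u v_{opt}}{2}\right)\phi(u)\,du.
\]
The rightmost integral is the identical numerical integral evaluated in Corollary \ref{cor:cor1}, so it returns the same value, $0.439$ to three decimal places.

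There is no substantive obstacle here: both the numerical root $v_{opt}\approx 2.426$ and the evaluation of the acceptance-rate integral are inherited directly from Corollary \ref{cor:cor1} by rescaling. The real content of the corollary is the invariance statement it encodes, namely that the partial-update fraction $c$ and the Fisher-information-like quantity $\mathbb I$ enter the optimal scaling only through the multiplicative factor $1/\sqrt{c\mathbb I}$, while the optimal acceptance rate is completely unaffected. In particular, practitioners tuning additive TMCMC-within-Gibbs can still aim for roughly $44\%$ acceptance regardless of the fraction of coordinates updated per sweep.
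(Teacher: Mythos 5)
Your reduction is correct: substituting $v=\ell\sqrt{c\mathbb I}$ turns $g_c(\ell)$ into $\frac{4}{\mathbb I}v^2\int_0^\infty u^2\Phi\left(-\frac{uv}{2}\right)\phi(u)\,du$, the same canonical objective as in Corollary \ref{cor:cor1}, so $v_{opt}=2.426$ and the acceptance-rate integral $4\int_0^\infty\Phi\left(-\frac{uv_{opt}}{2}\right)\phi(u)\,du=0.439$ carry over unchanged. This is essentially the paper's (implicit) argument --- the corollary is stated without proof precisely because the optimization reduces by rescaling to the single numerical problem already solved in the $iid$ case.
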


\section{Diffusion approximation for independent but non-identical random variables}
\label{sec:non_iid}

So far we have considered only those target densities $\pi$ which correspond to $i.i.d.$ components of $x$. 
Now, we extend our investigation to those target densities that are associated with independent 
but not identically distributed random variables.  
That is, we now consider

\begin{equation}
\pi(x) = \prod_{i=1}^{d}{f_{i}(x_{i})}.
\end{equation}

We concentrate on a particular form of the target density involving some scaling constant parameters, 
as considered in \cite{Bedard2008}, \cite{BedRose}.
\begin{equation}
\pi(x) = \prod_{j=1}^{d} \theta_{j}(d) f(\theta_{j}(d)x_{j}). 
\label{eq:pi_non_iid}
\end{equation}
As before, we assume that $f$ is twice continuously differentiable
with existence of third derivative almost everywhere, and that $(\log f)'$ is Lipschitz continuous.
We define $\Theta(d) = \left \{\theta_{1}(d), \theta_{2}(d), \ldots, \theta_{d}(d) 
\right\}$ and we shall focus on the case where $d \rightarrow \infty$. Some of the scaling terms are 
allowed to appear multiple times. We assume that the first $k$ terms of the parameter vector may or may not 
be identical, but the remaining $d-k$ terms can be split into $m$ subgroups of independent scaling terms.
In other words, 
\begin{eqnarray}
 \Theta(d) &=& \left (\vphantom{\underbrace{\theta_{k+2}(d), \ldots, \theta_{k+2}(d)}_{r(2,d)-1}}\theta_{1}(d), 
 \theta_{2}(d), \ldots, \theta_{k}(d), \theta_{k+1}(d),\ldots, \theta_{k+m}(d),\nonumber \right.\\
&& \qquad \left.  \underbrace{\theta_{k+1}(d), \ldots, \theta_{k+1}(d)}_{r(1,d)-1},  
\underbrace{\theta_{k+2}(d), \ldots, \theta_{k+2}(d)}_{r(2,d)-1}, \ldots, 
\underbrace{ \theta_{k+m}(d), \ldots, \theta_{k+m}(d)}_{r(m,d)-1}\right),\label{eq:bs1}
\end{eqnarray}
where $ r(1,d), r(2,d), \ldots, r(m,d)$ are the number of occurrences of the parameters in each of the 
$m$ distinct classes. We assume that for any $i$, 
\begin{equation}
\underset{d\rightarrow \infty}{\lim} r(i,d) = \infty.
\label{eq:bs2}
\end{equation}
Also, we assume a particular form of each scaling parameter $ \theta_{i}(d)$:
\begin{equation}
\frac{1}{\{\theta_{i}(d)\}^{2}} = \frac{K_{i}}{d^{\lambda_{i}}};\ \ i=1,\ldots,k, \ \ \ \ 
\mbox{and}\ \ \ \ \frac{1}{\{\theta_{i}(d)\}^{2}} = \frac{K_{i}}{d^{\gamma_{i}}};\ \  i=k+1,\ldots,k+m. 
\label{eq:bs3}
\end{equation}

Assume that ${\theta_{i}}^{-2}(d)$ are so arranged that $\gamma_{i}$ are in a decreasing sequence 
for $i=k+1,\ldots,k+m$ and also let $\lambda_{i}$ form a decreasing sequence from $i=1,\ldots,k$. 
According to \cite{Bedard2007}, the optimal form of the scaling variance $\sigma^{2}(d)$ 
should be of the form $ \sigma^{2}(d) =\frac{\ell^{2}}{d^{\alpha}}$, where $\ell^{2}$ is some 
constant and $\alpha$ satisfies

\begin{equation}
\underset{d\rightarrow \infty}{\lim}\frac{d^{\lambda_{1}}}{d^{\alpha}} < \infty, \ \ \ \ \mbox{and} 
\ \ \ \  \underset{d\rightarrow \infty}{\lim}\frac{d^{\gamma_{i}}r(i,d)}{d^{\alpha}} < \infty; 
\ \  i= 1,\ldots,m.
\label{eq:Bedardassumptions}
\end{equation}
Here, let $U^d_{t}$ be the process at time $t$ sped up by a factor of $d^{\alpha}$. That is,
$U^d_{t} = \left ( X_{1}([d^{\alpha}t]), \ldots, X_{d}([d^{\alpha}t]) \right)$. 
We then have the following theorem, the proof of which is provided in Section S-6.3 of the supplement.

\begin{theorem}
\label{theorem:theorem4}
Assume that the target distribution is of the form (\ref{eq:pi_non_iid}), where 
$f$ is positive with at least three continuous derivatives and that the fourth derivative exists almost everywhere. 
Also assume that $(\log f)'$ is Lipschitz continuous, 
and that (\ref{eq:assump1}) -- (\ref{eq:assump4}), (\ref{eq:bs1}), (\ref{eq:bs2}),
(\ref{eq:bs3}) and (\ref{eq:Bedardassumptions}) hold. 
Let $X^d_0\sim\pi$, that is, the $d$-dimensional additive TMCMC chain is started at stationarity. 
Let the transition be given by $(x_1,\ldots,x_d)\rightarrow (x_1+b_1\epsilon,\ldots,x_d+b_d\epsilon)$,
where, for $i=1,\ldots,d$, $b_i=\pm 1$ with equal probability and $\epsilon\equiv\frac{\ell}{d^{\frac{\alpha}{2}}}\epsilon^*$,
with $\epsilon^*\sim N(0,1)I_{\{\epsilon^*>0\}}$.
We then have
\[
\{U^d_t;~t\geq 0\}\Rightarrow \{U_t;~t\geq 0\},
\]
where $U_0\sim f$ and $\{U_t;~t\geq 0\}$ satisfies the Langevin SDE
\begin{equation}
dU_t=g_{\xi}(\ell)^{1/2}dB_t+\frac{1}{2}g_{\xi}(\ell)\left(\log f(U_t)\right)'dt,
\label{eq:sde4}
\end{equation}
where
\begin{equation}
g_{\xi}(\ell)=4\ell^2\int_0^{\infty} u^2\Phi\left (- \frac{u\ell\xi\sqrt{\mathbb{I}}}{2}\right)\phi(u)du.
\label{eq:diff_speed_bedard}
\end{equation}
\end{theorem}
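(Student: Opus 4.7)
The plan is to closely mirror the proof of Theorem~\ref{theorem:theorem1}, adapting each step to accommodate the non-identical coordinates and B\'edard-type scaling of $\pi$. I would begin by writing the sped-up generator $G_d V(x)$ and, since $V$ depends only on $x_1$, conditioning the acceptance-probability expectation on $b_1$ and $\epsilon \equiv \ell\epsilon^*/d^{\alpha/2}$. The acceptance ratio factorises across coordinates, and a Taylor expansion of $\log f(\theta_j(d)(x_j + b_j \epsilon))$ about $\theta_j(d) x_j$ now produces a linear-in-$b_j$ term with coefficient $\epsilon \theta_j(d) \{\log f\}'(\theta_j(d) x_j)$, a quadratic mean-type term with coefficient $\tfrac{\epsilon^2}{2}\theta_j^2(d)\{\log f\}''(\theta_j(d) x_j)$, and higher-order remainders carrying factors of $\theta_j^3(d)$ and $\theta_j^4(d)$ that must eventually be shown to be asymptotically negligible.

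I would then apply Lyapunov's CLT with $\delta=4$ to the sum $\sum_{j=2}^d b_j\epsilon\,\theta_j(d)\{\log f\}'(\theta_j(d)x_j)$ (plus its $O(\epsilon^3)$ correction), conditional on $\epsilon, b_1$ and $x_2,\ldots,x_d$. The summands have heterogeneous variances, so the key step is to check
\[
\frac{\sum_{j=2}^d \epsilon^4 \theta_j^4(d) [\{\log f\}'(\theta_j(d) x_j)]^4}{\left(\sum_{j=2}^d \epsilon^2 \theta_j^2(d) [\{\log f\}'(\theta_j(d) x_j)]^2\right)^2} \longrightarrow 0
\]
almost surely, which I would establish by splitting both sums over the $m+1$ B\'edard blocks of (\ref{eq:bs1})--(\ref{eq:bs2}), applying the SLLN within each block (since $\theta_j(d)X_j$ has density $f$ under stationarity and $r(i,d)\to\infty$), and controlling the group sums through the scaling constraints (\ref{eq:bs3})--(\ref{eq:Bedardassumptions}).

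Under stationarity, the within-block SLLN gives $\frac{1}{r(i,d)}\sum_{j}[\{\log f\}'(\theta_{k+i}(d)X_j)]^2 \to \mathbb{I}$ a.s., and combining this with (\ref{eq:Bedardassumptions}) and the representation $\epsilon=\ell\epsilon^*/d^{\alpha/2}$ yields almost-sure limits of the form $\epsilon^2\sum_{j=2}^d\theta_j^2(d)[\{\log f\}'(\theta_j(d)x_j)]^2 \to \ell^2{\epsilon^*}^2\xi^2\mathbb{I}$, with the analogous identity for the quadratic mean-term, where $\xi$ aggregates the B\'edard parameters $(K_i,\gamma_i,r(i,d))$ through the limit $\xi^2 = \lim_d d^{-\alpha}\sum_{i=1}^m r(i,d) d^{\gamma_i}/K_i$ (the first $k$ unblocked terms being negligible by the first half of (\ref{eq:Bedardassumptions})). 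The acceptance-probability expectation then reduces to $E[\min\{1,e^X\}]$ with $X$ asymptotically $N\!\left(\eta(x_1,b_1,\epsilon)-\tfrac{\ell^2{\epsilon^*}^2\xi^2\mathbb{I}}{2},\ \ell^2{\epsilon^*}^2\xi^2\mathbb{I}\right)$, so Proposition 2.4 of \ctn{Roberts97a} applies exactly as in the $iid$ case. Expanding the resulting $\mathbb{W}$-function together with $V(x_1+b_1\epsilon)-V(x_1)$ in powers of $\ell/d^{\alpha/2}$, precisely as in (\ref{eq:eta_expand})--(\ref{eq:h}), produces the Langevin generator with speed $g_\xi(\ell)$ in (\ref{eq:diff_speed_bedard}); a dominated-convergence step identical to the one at the end of Theorem~\ref{theorem:theorem1} closes the generator convergence and hence the weak convergence in Skorohod topology.

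I expect the principal obstacle to be bookkeeping rather than a genuinely new idea: the B\'edard decomposition forces one to treat sums block-by-block, and the joint $(K_i,\gamma_i,r(i,d))$ structure must be tracked carefully through both the Lyapunov verification and the SLLN-based identification of the drift and diffusion limits. A related subtlety is verifying that the $\theta_j^3(d)$ and $\theta_j^4(d)$ Taylor remainders are asymptotically negligible under (\ref{eq:Bedardassumptions}); this again relies on the group-wise decomposition, and it is at this point that the delicate choice of the index $\alpha$, tuned to balance the contribution of the dominant B\'edard group, enters crucially.
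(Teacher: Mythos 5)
Your proposal follows essentially the same route as the paper's own proof: the block decomposition of the B\'edard sum, the SLLN within each of the $m$ groups using $r(i,d)\to\infty$, the identification of $\xi^2=\lim_{d\rightarrow\infty}\sum_{i=1}^m d^{\gamma_i}r(i,d)/(K_{k+i}d^{\alpha})$, the negligibility of the first $k$ terms, the conditional Lyapunov CLT with $\delta=4$, Proposition 2.4 of \ctn{Roberts97a}, and the final Taylor-expansion and dominated-convergence steps all match the paper's argument. No substantive differences to report.
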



\begin{corollary}
\label{cor:cor3}
The diffusion speed $g_c(\ell)$ is maximized by
\begin{equation}
\ell_{opt}=\frac{2.426}{\xi\sqrt{\mathbb I}},
\label{eq:l3}
\end{equation}
and the optimal acceptance rate is given by
\begin{equation}
\alpha_{opt}=4\int_0^{\infty} \Phi\left (- \frac{u\ell_{opt}\xi\sqrt{\mathbb{I}}}{2}\right)\phi(u)du
=0.439\ \ \mbox{(up to three decimal places)}.
\label{eq:opt_acc_non_iid}
\end{equation}
\end{corollary}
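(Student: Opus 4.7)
The plan is to reduce the optimization problem to the same one-dimensional numerical problem already solved in Corollaries \ref{cor:cor1} and \ref{cor:cor2}, by observing that $\xi$, $\mathbb I$ and $\ell$ enter the diffusion speed $g_\xi(\ell)$ in (\ref{eq:diff_speed_bedard}) only through the single combination $v := \ell\,\xi\,\sqrt{\mathbb I}$. This will make the asserted value $v_{opt}=2.426$ and acceptance rate $0.439$ immediate consequences of the earlier corollaries, without any fresh numerical work.

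Concretely, I would set $v=\ell\,\xi\,\sqrt{\mathbb I}$ and rewrite
\begin{equation*}
g_\xi(\ell)\;=\;\frac{4}{\xi^2\,\mathbb I}\,v^2\int_0^\infty u^2\,\Phi\!\left(-\tfrac{uv}{2}\right)\phi(u)\,du
\;=\;\frac{4}{\xi^2\,\mathbb I}\,h(v),
\end{equation*}
where $h(v):=v^2\int_0^\infty u^2\,\Phi(-uv/2)\,\phi(u)\,du$ is exactly the function already maximized in Corollary \ref{cor:cor1} (modulo an irrelevant multiplicative constant). Since $v\mapsto\ell$ is a positive monotone bijection for fixed $\xi,\mathbb I>0$, maximizing $g_\xi(\ell)$ over $\ell>0$ is equivalent to maximizing $h(v)$ over $v>0$. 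Differentiating under the integral sign (justified by dominated convergence, since $u^2\phi(u)$ and $u^3\phi(u)$ are integrable and $\Phi,\phi$ are bounded) and setting $h'(v)=0$ produces a transcendental equation already known to have unique positive root $v_{opt}\approx 2.426$; hence $\ell_{opt}=v_{opt}/(\xi\sqrt{\mathbb I})=2.426/(\xi\sqrt{\mathbb I})$, establishing (\ref{eq:l3}).

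For the acceptance rate (\ref{eq:opt_acc_non_iid}), I would appeal to the computation in the proof of Theorem \ref{theorem:theorem4}: the limiting expected acceptance probability at stationarity is $E_{z^*_1}\!\left[\mathcal{H}(z^*_1,x_1,\xi)\right]=2\,E_{z^*_1}\!\left[\Phi\!\left(-|z^*_1|\ell\xi\sqrt{\mathbb I}/2\right)\right]$ with $z^*_1\sim N(0,1)$. Using the symmetry $|z^*_1|\stackrel{d}{=}z^*_1\,\mathbf{1}_{\{z^*_1>0\}}$ together with the fact that the density of $|z^*_1|$ on $(0,\infty)$ is $2\phi(u)$ gives $\alpha=4\int_0^\infty\Phi(-u\ell\xi\sqrt{\mathbb I}/2)\,\phi(u)\,du$. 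Evaluating this at $\ell=\ell_{opt}$ sends $\ell_{opt}\xi\sqrt{\mathbb I}\mapsto v_{opt}$, and thus $\alpha_{opt}=4\int_0^\infty\Phi(-uv_{opt}/2)\phi(u)\,du$, which is a universal constant independent of $\xi$ and $\mathbb I$ and numerically equals $0.439$ to three decimal places, matching Corollaries \ref{cor:cor1} and \ref{cor:cor2}.

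The only non-routine step is verifying that the reduction to $h(v)$ really does carry over unchanged; once that is in hand the result is not analytic but numerical. The mild obstacle is justifying differentiation under the integral in the definition of $h'(v)$, but uniform integrable majorants are readily obtained because $u^k\phi(u)$ integrates for every $k$ and $\Phi$ is bounded; so the reduction to the universal one-dimensional optimization is rigorous, and the corollary follows by inheriting the numerical values $v_{opt}=2.426$ and $\alpha_{opt}=0.439$ from the previous two corollaries.
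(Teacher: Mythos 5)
Your proposal is correct and follows essentially the same route the paper takes: the paper states this corollary without proof precisely because, as you observe, $\ell$, $\xi$ and $\mathbb I$ enter $g_{\xi}(\ell)$ only through $v=\ell\xi\sqrt{\mathbb I}$, so the optimization and the acceptance-rate integral reduce to the universal one-dimensional problem already settled in Corollary \ref{cor:cor1}. Your identification of $E_{z^*_1}\left[\mathcal{H}(z^*_1,x_1,\xi)\right]=4\int_0^{\infty}\Phi\left(-u\ell\xi\sqrt{\mathbb I}/2\right)\phi(u)\,du$ as the limiting acceptance probability likewise matches the paper's computation in the proof of Theorem \ref{theorem:theorem4}.
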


\subsection{TMCMC within Gibbs for independent but non-identical random variables}
\label{subsec:tmcmc_within_gibbs_non_iid}

As in Section \ref{subsec:tmcmc_within_gibbs_iid}, here also we define
transitions of the form (\ref{eq:transition}), where $\chi_i$, having the same definitions
as (\ref{eq:chi_definition}) and (\ref{eq:chi_distribution}), indicates whether or not the $i$-th
coordinate $x_i$ will be updated.

The rest of the proof is a simple modification of the proof for independent but non-identical
random variables provided in Section S-6.3. There we must replace 
\begin{equation*}
\frac{1}{r(i,d)}\sum_{j=1}^{r(i,d)}\left(\frac{f'(u_j)}{f(u_j)}\right)^2 
\rightarrow E\left [\left\{\frac{f^{'}(U)}{f(U)}\right \}^{2} \right]=\mathbb{I}.
\label{eq:convergence_information}
\end{equation*}
with
\begin{equation}
\frac{c_d}{c_dr(i,d)}\sum_{j=1}^{c_dr(i,d)}\left(\frac{f'(u_j)}{f(u_j)}\right)^2 
\rightarrow cE\left [\left\{\frac{f^{'}(U)}{f(U)}\right \}^{2} \right]=c\mathbb{I}.
\label{eq:convergence_information2}
\end{equation}
With the above modification the diffusion speed can be calculated as 
\begin{equation}
g_{c,\xi}(\ell)= 4c\ell^{2}\int_0^{\infty} \left \{{u}^{2}\Phi 
\left (- \frac{u\ell\xi\sqrt{c\mathbb{I}}}{2}\right) \right \}\phi(u)du.
\label{eq:diff_speed_non_iid2}
\end{equation}

Formally, we have the following theorem:
\begin{theorem}
\label{theorem:theorem5}
Assume that the target distribution $\pi$ is of the form (\ref{eq:pi_non_iid}), where
$f$ is positive with at least three continuous derivatives and that the fourth derivative exists almost everywhere. 
Also assume that $(\log f)'$ is Lipschitz continuous, 
and that (\ref{eq:assump1}) -- (\ref{eq:assump4}), (\ref{eq:bs1}), (\ref{eq:bs2}),
(\ref{eq:bs3}) and (\ref{eq:Bedardassumptions}) hold. 
Let $X^d_0\sim\pi$, that is, the $d$-dimensional additive TMCMC chain is started at stationarity. 
Let the transition be $(x_1,\ldots,x_d)\rightarrow (x_1+\chi_1b_1\epsilon,\ldots,\chi_db_d\epsilon)$, where
for $i=1,\ldots,d$, $P(\chi_i=1)=c_d$, $b_i=\pm 1$ with equal probability, and $\epsilon\equiv\frac{\ell}{d^{\frac{\alpha}{2}}}\epsilon^*$,
with $\epsilon^*\sim N(0,1)I_{\{\epsilon^*>0\}}$.
We then have
\[
\{U^d_t;~t\geq 0\}\Rightarrow \{U_t;~t\geq 0\},
\]
where $U_0\sim f$ and $\{U_t;~t\geq 0\}$ satisfies the Langevin SDE
\begin{equation}
dU_t=g_{c,\xi}(\ell)^{1/2}dB_t+\frac{1}{2}g_{c,\xi}(\ell)\left(\log f(U_t)\right)'dt,
\label{eq:sde5}
\end{equation}
where $g_{x,\xi}(\ell)$ is given by (\ref{eq:diff_speed_non_iid2}).
\end{theorem}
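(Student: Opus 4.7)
The plan is to mimic the proof of Theorem~\ref{theorem:theorem4}, but incorporate the partial updating modifications from the proof of Theorem~\ref{theorem:theorem2}. First I would write down the generator as in (\ref{eq:genbedard}), except that the expectation inside the min is now taken jointly over $(b_2,\chi_2),\ldots,(b_d,\chi_d)$, and the transition in the $j$-th coordinate for $j\geq 2$ is $x_j\mapsto x_j+\chi_j b_j\epsilon$. Because $V$ depends only on $x_1$, the term $V(x_1+\chi_1 b_1\epsilon)-V(x_1)$ vanishes when $\chi_1=0$; hence an overall factor $P(\chi_1=1)=c_d$ appears, exactly as in the proof of Theorem~\ref{theorem:theorem2}, and the products $b_j\chi_j$ may be treated as a single $iid$ sequence on $\{-1,0,+1\}$ with $P(b_j\chi_j=\pm 1)=c_d/2$ and $P(b_j\chi_j=0)=1-c_d$.

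Next I would Taylor expand $\log\pi(x^*)-\log\pi(x)$ as in (\ref{eq:mineqBedard}), then fix a typical $\omega$ in the complement of an appropriate null set $\mathbb N$, condition on $\epsilon$ and $b_1$, and apply Lyapunov's central limit theorem to the sum
\[
\sum_{j=k+1}^{d} b_j\chi_j\left[\epsilon\{\log f(\theta_j(d)x_j)\}'+\tfrac{\epsilon^3}{6}\{\log f(\theta_j(d)x_j)\}'''\right].
\]
The Lyapunov condition with $\delta=4$ is verified exactly as in the proof of Theorem~\ref{theorem:theorem1}, using that $E_{b_j\chi_j}|b_j\chi_j|^p=c_d$ for all $p\geq 1$, which produces an extra factor of $c_d$ in both numerator and denominator that cancels out. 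The contributions from the first $k$ (finite) terms and from the higher order remainder $\Delta$ vanish almost surely, by the same argument as in Theorem~\ref{theorem:theorem4}.

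The main computation is the limiting variance. Following (\ref{eq:simplify_sum}) but with the extra factor $c_d$ coming from $E_{b_j\chi_j}(b_j\chi_j)^2=c_d$, and using the SLLN as in (\ref{eq:convergence_information2}) applied over the $c_d r(i,d)$ (random but asymptotically of that order) active indices in each subgroup, the conditional variance of the exponent converges almost surely to ${z_1^*}^2\ell^2\, c\,\xi^2\,\mathbb I$, where $\xi^2=\lim_d\sum_i d^{\gamma_i}r(i,d)/(K_{k+i}d^\alpha)$. Applying Proposition~2.4 of \ctn{Roberts97a} to $X\sim N\!\left(\eta(x_1,b_1,\epsilon)-\tfrac{(d-1)\epsilon^2}{2}c\,\xi^2\mathbb I,\, (d-1)\epsilon^2 c\,\xi^2\mathbb I\right)$ and expanding in $d$ yields
\[
G_dV(x)=\tfrac{1}{2}c_d\ell^2 V'(x_1)(\log f(x_1))' E_{z_1^*}[{z_1^*}^2\mathcal H(z_1^*)]+\tfrac{1}{2}c_d\ell^2 V''(x_1)E_{z_1^*}[{z_1^*}^2\mathcal H(z_1^*)]+O(d^{-1/2}),
\]
with $\mathcal H(z_1^*)=2\Phi(-|z_1^*|\ell\xi\sqrt{c\mathbb I}/2)$. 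Taking $d\to\infty$ produces the Langevin generator with speed $g_{c,\xi}(\ell)$ given by (\ref{eq:diff_speed_non_iid2}); the Dominated Convergence Theorem closes the argument as before, and weak convergence in Skorohod topology follows from the standard generator convergence argument.

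The only mildly delicate point will be the Lyapunov verification in the presence of the $\chi_j$'s: one must check that the extra $c_d$ factors in the moments of $b_j\chi_j$ do not upset the balance between numerator and denominator, and that the almost sure limits of empirical averages over the $c_d r(i,d)$ active indices in each subgroup still give $\mathbb I$ (rather than some quantity depending on the random selection). Both points are handled by noting that the selection is independent of the $x_j$'s and that $c_d r(i,d)\to\infty$ under our assumptions, so the empirical averages over any subsample of that size still converge by the SLLN.
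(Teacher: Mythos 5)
Your proposal is correct and follows essentially the same route as the paper, which simply declares the proof to be a minor modification of the Theorem~\ref{theorem:theorem4} argument with (\ref{eq:convergence_information}) replaced by (\ref{eq:convergence_information2}) to inject the factor $c$ into the limiting variance; your treatment of $b_j\chi_j$ as an $iid$ $\{-1,0,+1\}$ sequence and the resulting $c_d$ factors reproduces exactly that replacement, and in fact supplies more detail (the Lyapunov verification and the handling of the random active index set) than the paper itself records.
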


\begin{corollary}
\label{cor:cor4}
The diffusion speed $g_{c,\xi}(\ell)$ is maximized by
\begin{equation}
\ell_{opt}=\frac{2.426}{\xi\sqrt{c\mathbb I}},
\label{eq:l4}
\end{equation}
and the optimal acceptance rate is given by
\begin{equation}
\alpha_{opt}=4\int_0^{\infty} \Phi\left (- \frac{u\ell_{opt}\xi\sqrt{c\mathbb{I}}}{2}\right)\phi(u)du
=0.439\ \ \mbox{(up to three decimal places)}.
\label{eq:opt_acc_non_iid2}
\end{equation}
\end{corollary}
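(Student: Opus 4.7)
The plan is to exploit the same scale-invariant structure that drove Corollaries 3.1, 3.2, and 4.1 and reduce the optimization to a universal one-dimensional problem whose numerical solution produces the constants $2.426$ and $0.439$.

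First I would introduce the change of variables $v = \ell\xi\sqrt{c\mathbb{I}}$ and rewrite the diffusion speed from \eqref{eq:diff_speed_non_iid2} as
\[
g_{c,\xi}(\ell) \;=\; \frac{4}{\xi^2\mathbb{I}}\,v^{2}\,h(v), \qquad h(v) := \int_0^{\infty} u^{2}\,\Phi\!\left(-\tfrac{uv}{2}\right)\phi(u)\,du.
\]
Since the prefactor $4/(\xi^{2}\mathbb{I})$ is a positive constant not depending on $\ell$, maximizing $g_{c,\xi}(\ell)$ over $\ell>0$ is equivalent to maximizing $F(v):=v^{2}h(v)$ over $v>0$. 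This is the same one-dimensional problem that appeared implicitly in each of the earlier corollaries, and the absence of $c$, $\xi$, $\mathbb{I}$ from $F$ is exactly why the optimal acceptance rate is universal.

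Next I would justify differentiation under the integral sign (the integrand $u^{2}\Phi(-uv/2)\phi(u)$ together with its $v$-derivative $-\tfrac{u^{3}}{2}\phi(uv/2)\phi(u)$ is dominated, uniformly over $v$ in any compact subset of $(0,\infty)$, by an integrable function of $u$) to obtain the first-order condition
\[
F'(v) \;=\; 2v\,h(v) \;-\; \frac{v^{2}}{2}\int_0^{\infty}\!\! u^{3}\phi\!\left(\tfrac{uv}{2}\right)\phi(u)\,du \;=\; 0,
\]
which depends only on $v$. A numerical root-finder gives the unique positive solution $v_{\mathrm{opt}}\approx 2.426$, matching the constant obtained in Corollary 3.1. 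To verify this stationary point is the global maximum, I would note that $F(v)\to 0$ as $v\to 0^{+}$ (because $F(v)=v^{2}h(v)$ and $h$ is bounded on any neighborhood of $0$) and $F(v)\to 0$ as $v\to\infty$ (since $\Phi(-uv/2)\to 0$ pointwise for $u>0$ and the integrand is dominated by $u^{2}\phi(u)$, so dominated convergence gives $h(v)\to 0$ faster than $v^{-2}$); continuity and positivity of $F$ on $(0,\infty)$ then force the interior critical point to be the maximizer. Inverting $v = \ell\xi\sqrt{c\mathbb{I}}$ yields $\ell_{\mathrm{opt}} = 2.426/(\xi\sqrt{c\mathbb{I}})$.

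For the acceptance rate, I would substitute $\ell_{\mathrm{opt}}$ into the formula for the limiting expected acceptance probability derived in the proof of Theorem 5.1 (that is, $E_{z_{1}^{*}}[\mathcal{H}(z_{1}^{*},x_{1},\xi)]$ under the modification $\mathbb{I}\mapsto c\mathbb{I}$ from \eqref{eq:convergence_information2}), yielding
\[
\alpha_{\mathrm{opt}} \;=\; 4\int_0^{\infty}\!\Phi\!\left(-\tfrac{u\ell_{\mathrm{opt}}\xi\sqrt{c\mathbb{I}}}{2}\right)\!\phi(u)\,du \;=\; 4\int_0^{\infty}\!\Phi\!\left(-\tfrac{u\,v_{\mathrm{opt}}}{2}\right)\!\phi(u)\,du,
\]
which is once again universal in $c$, $\xi$, and $\mathbb{I}$; numerical evaluation at $v_{\mathrm{opt}}=2.426$ gives $0.439$ to three decimal places. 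The main obstacle is essentially conceptual rather than analytic: recognizing that the Gibbs parameter $c$ and the non-identical scaling $\xi$ enter $g_{c,\xi}(\ell)$ only through the single product $\xi\sqrt{c\mathbb{I}}$, so that all four optimization problems in this paper collapse to the same transcendental equation. Once this reduction is in place, the remaining work is routine calculus plus numerical root-finding identical to that carried out for Corollaries 3.1, 3.2, and 4.1.
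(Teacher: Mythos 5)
Your proposal is correct and follows exactly the route the paper (implicitly) relies on: the paper states this corollary without proof, and the intended argument is precisely your reduction via $v=\ell\xi\sqrt{c\mathbb{I}}$ (under which the prefactor $4c\ell^{2}$ becomes $4v^{2}/(\xi^{2}\mathbb{I})$, so $c$, $\xi$ and $\mathbb{I}$ drop out) followed by numerical maximization of the universal function $v^{2}\int_0^{\infty}u^{2}\Phi(-uv/2)\phi(u)\,du$, giving $v_{\mathrm{opt}}\approx 2.426$ and $\alpha_{\mathrm{opt}}\approx 0.439$ as in Corollaries \ref{cor:cor1}--\ref{cor:cor3}. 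The only cosmetic gap is your claim that dominated convergence alone shows $h(v)=o(v^{-2})$ as $v\to\infty$; it gives only $h(v)\to 0$, but the needed rate follows immediately from the Mills-ratio bound $\Phi(-x)\leq\phi(x)/x$.
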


\section{Diffusion approximation for a more general dependent family of distributions}
\label{sec:generalization}

So far, we assumed that the target density $\pi$ is associated with either $i.i.d.$ or mutually 
independent random variables, with a special structure. Now, we extend our notion to a much wider 
class of distributions where there is a particular form of dependence structure between the 
components of the distribution. In determining these non-product measures, we adopted the framework 
of \ctn{Pillai2011}, \cite{Beskos2009}, \ctn{Beskos2008}, \cite{Bedard2009}. 
For clarity, we first discuss this in the case of finite dimension $d$, and then discuss
the generalization in infinite dimensions. 

Let $x^d\in \mathbb R^d$ denote the first $d$ coordinates of 
$x\in\mathbb R^{\infty}$.
Let us assume that the $d$-dimensional target density $\pi^d$ satisfies
\begin{equation}
\frac{d\pi^d}{d\pi^d_{0}}(x^d) = M_{\Psi^d} \exp ( -\Psi^d(x^d)),
\label{eq:dep1}
\end{equation}
where $\Psi^d$ is measurable with respect to the Borel $\sigma$-field on $\mathbb R^d$, 
$M_{\Psi^d}$ is an appropriate normalizing constant depending upon $\Psi^d$,
and
$\pi^d$ has the density
\begin{equation}
\pi^d_0(x^d) = \prod_{j=1}^d\frac{1}{\lambda_j}\phi\left(\frac{x_j}{\lambda_j}\right)
\label{eq:dep2}
\end{equation}
with respect to the Lebesgue measure. In other words, under $\pi^d_0$,
$x_j\sim N(0,\lambda^2_j);~j=1,2,\ldots,d$.

Then, with respect to Lebesgue measure, $\pi^d$ has the following density:
\begin{equation}
\pi^d(x^d)=M_{\Psi^d} \exp ( -\Psi^d(x^d))\prod_{i=1}^d\frac{1}{\lambda_i}\phi\left(\frac{x_i}{\lambda_i}\right).
\label{eq:dep3}
\end{equation}

The above finite dimensional structure can be represented in terms of projection
onto the first $d$ eigenfunctions of an appropriate covariance operator associated
with a Hilbert space. Indeed, let 
$(\mathbb H,\langle\cdot\rangle,\|\cdot\|)$ denote a real, separable Hilbert space.
Consider a covariance operator $\Sigma:\mathbb H\rightarrow\mathbb H$, which is self-adjoint,
positive, and trace class operator 
on $\mathbb H$ with a complete orthonormal eigenbasis
$\{\lambda^2_j,\phi_j\}_{j=1}^{\infty}$ such that
\begin{equation}
\centering
\Sigma\phi_{j}= \lambda^2_{j}\phi_{j}; \ \ j = 1,2,\ldots.
\label{eq:eigen}
\end{equation}
As in \ctn{Pillai2011} we assume that the eigenvalues are arranged in decreasing order and $\lambda_j>0$.

Now note that any function $x$ in $\mathbb{R}^{\infty}$ can be uniquely represented as 
\begin{equation}
x= \sum_{j=1}^{\infty} { x_{j}\phi_{j}}, \ \ \mbox{where} \ \  x_{j}= \langle x, \phi_{j} \rangle.
\label{eq:fourier_expansion}
\end{equation}
The function $x$ can be identified with its coordinates $\{x_j\}_{j=1}^{\infty}$ which belongs 
to the space of square-summable sequences.
Note that $\Sigma$ is diagonal with respect to the coordinates of this eigenbasis, and 
if $x_j\sim N(0,\lambda^2_j)$; $j=1,2,\ldots$ independently, then by the
Karhunen-Lo\'{e}ve expansion (see, for example, \ctn{Prato92}), $x$ follows the Gaussian measure $\pi_0$, 
which is an infinite dimensional generalization of (\ref{eq:dep2}).
In particular, we assume that $\pi_{0}$ is a 
Gaussian measure with mean 0 and covariance $\Sigma$.

Now, let $\Psi^d(\cdot)=\Psi(P^d\cdot)$, where $P^d$ denotes projection (in $\mathbb H$) onto the first
$d$ eigenfunctions of $\Sigma$, 
and $\Psi$ is a real $\pi_{0}$-measurable function on 
$\mathbb{R}^{\infty}$.  
Then $\pi^d(x^d)$ given by (\ref{eq:dep3}) can be represented as
\begin{equation}
\pi^{d}(x) =  M_{\Psi^{d}}\exp \left ( \vphantom{\frac{1}{2}}-\Psi^{d}(x) -\frac{1}{2} 
\langle x, (\Sigma^{d})^{-1}x \rangle \right ),  
\label{eq:piN}
\end{equation}
where $\Sigma^{d} = P^{d} \Sigma P^{d}$. 
As $d\rightarrow\infty$, (\ref{eq:piN}) approximates the target density $\pi(x)$, where
the Radon Nikodym derivative of the target 
$\pi$ with respect to the Gaussian measure $\pi_{0}$ is given by 
\begin{equation}
\frac{d\pi}{d\pi_{0}}(x) = M_{\Psi} \exp ( -\Psi(x)).
\label{eq:target_pi}
\end{equation}
Hence, for our purpose we shall work with the finite-dimensional approximation (\ref{eq:piN});
as $d\rightarrow\infty$, the appropriate piecewise linear, continuous interpolant (to be defined
subsequently in Section \ref{subsec:formal_statement}) 
that is described by our additive TMCMC algorithm and associated with $\pi^d$ will converge
to the correct diffusion equation associated with the infinite dimensional 
distribution $\pi$ represented by (\ref{eq:target_pi}).

\subsection{Representation of the additive TMCMC algorithm in the dependent set-up}
\label{subsec:tmcmc_algo_dependent}

Under the TMCMC set up, the move at the 
$(k+1)$-th time point can be explicitly stated in terms of the position at $k$-th time point as follows 

\begin{equation}\label{eq:move}
x^{k+1} = { \gamma^{k+1}{y}^{k+1}} + \left ( 1- { \gamma^{k+1}} \right) x^{k}, 
\end{equation}
where 
$$ {\gamma}^{k+1} \sim \mbox{Bernoulli} \left ( \min \left \{1, \frac{\pi^d({y}^{k+1})}{\pi^d(x^{k})}\right \}\right ). $$

We define the move ${y}^{k+1}$ as 
\begin{equation}\label{eq:ynew}
{y}^{k+1} = x^{k} + \sqrt{\frac{2\ell^{2}}{d}}\Sigma^{\frac{1}{2}}\xi^{k+1},
\end{equation}
 where $ \xi^{k+1} = \left ({b_{1}}^{k+1}{\epsilon}^{k+1}, \ldots, {b_{d}}^{k+1}{\epsilon}^{k+1} \right )$ 
 with $b_{i}=\pm 1$ with probability 1/2 each, and $\epsilon\sim N(0,1)I_{\{\epsilon>0\}}$. From (\ref{eq:piN}) it follows that 
 $ \min \left \{1, \frac{\pi^d({y}^{k+1})}{\pi^d(x^{k})}\right \}$ can be written as 
 $ \min \left \{1, e^{\mathbb{Q}(x^k,\xi^{k+1})} \right \}$ where $\mathbb{Q}(x,\xi)$ is given by 
\begin{equation}
\mathbb{Q}(x,\xi)= \frac{1}{2}\left \| \Sigma^{-\frac{1}{2}} \left ( P^{d}x \right) \right \|^{2} -  
\frac{1}{2}\left \| \Sigma^{-\frac{1}{2}} \left ( P^{d}{y} \right) \right \|^{2}+  
\Psi^{d}(x) - \Psi^{d}({y}).
\end{equation}
\newcommand{\y}{{y}^{k+1}}
\newcommand{\x}{x^{k}}
Using (\ref{eq:ynew}), one obtains 
\begin{equation}
\mathbb{Q}(x,\xi) = -\sqrt{\frac{2\ell^{2}}{d}} \langle \eta, \xi \rangle - \frac{\ell^{2}}{d}\|\xi\|^{2}-r(x,\xi),
\end{equation}
where 
\begin{equation}\label{eq:defineeta}
\eta = \Sigma^{-\frac{1}{2}}(P^{d}x) + \Sigma^{\frac{1}{2}}\nabla\Psi^{d}(x),
\end{equation}
and
\begin{equation}
r(x,\xi)= \Psi^{d}(y) - \Psi^{d}(x) - \langle \nabla \Psi^{d}(x), P^{d}y-P^{d}x \rangle.
\end{equation}

We further define

\begin{equation}\label{eq:R}
R(x, \xi) = -\sqrt{\frac{2\ell^{2}}{d}} \sum_{j=1}^{d} \eta_{j}\xi_{j} 
-\frac{\ell^{2}}{d}\sum_{j=1}^{d}{\xi_{j}}^{2},
\end{equation}
and
\begin{equation}
R_{i}(x, \xi) = -\sqrt{\frac{2\ell^{2}}{d}} \sum_{j=1, j\neq i}^{d} \eta_{j}\xi_{j} 
-\frac{\ell^{2}}{d}\sum_{j=1, j \neq i}^{d}{\xi_{j}}^{2}
\end{equation}

Using Lemma 5.5 of \cite{Pillai2011}, for large $d$ one can show that 
\begin{equation}\label{eq:Q}
\mathbb{Q}(x,\xi) \hspace{0.1 cm} = \hspace{0.1 cm} R(x,\xi) -r(x,\xi) 
\hspace{0.1 cm} \approx \hspace{0.1 cm} R_{i}(x,\xi) -\sqrt{\frac{2\ell^{2}}{d}}\eta_{i}\xi_{i}.
\end{equation}

Using (\ref{eq:R}) and (\ref{eq:Q}) it can be seen that $\mathbb{Q}(x,\xi)$ is approximately equal to 
$R(x,\xi)$ as $d$ goes to $\infty$, where $R(x,\xi)$ in our case is given by
\begin{equation}
R(x, \xi) = -\epsilon\sqrt{\frac{2\ell^{2}}{d}} \sum_{j=1}^{d} \eta_{j} b_{j} -\ell^{2}\epsilon^{2}.
\end{equation}
Note that in the case of \ctn{Pillai2011}, conditional on $x$, $R_i(x,\xi)$ was independent
of $\xi_i$, which enabled them to compute 
$E_0\left(\min\left\{1,e^{\mathbb Q(x,\xi)}\right\}\xi_i\right)$
by first computing it over $\xi_i$ and then over $\xi\backslash\xi_i$. However, such independence
does not hold in our case since all the components of $\xi$ involve $\epsilon$.

To obtain  $E_0\left(\min\left\{1,e^{\mathbb Q(x,\xi)}\right\}\xi_i\right)$ in our case,
we need to obtain the asymptotic distribution of $\mathbb Q(x,\xi)$ for large $d$. 
Since our TMCMC based proposal is not $i.i.d.$, we verify Lyapunov's central limit theorem;
see Section S-7.1. 
For obtaining the diffusion approximation in this dependent set-up we need to obtain the expected
drift and the expected diffusion coefficient.
In 
Section S-7.2 we calculate the expected drift
and in Section S-7.3, 
we obtain the expected diffusion coefficient.

\subsection{Formal statement of our main result in the general dependent set-up}
\label{subsec:formal_statement}

Before formally stating our result in the dependent set-up, we need to provide the explicit
form of a continuous interpolant which converges to the solution of the appropriate SDE.

Note that we can construct, following \ctn{Pillai2011}, the following continuous interpolant
\begin{equation}
z^d(t)=(dt-k)x^{k+1}+(k+1-dt)x^k,\hspace{2mm} k\leq dt<k+1.
\label{eq:continuous_interpolant}
\end{equation}
Observe that $z^d(t)$ admits the following representation 
\begin{equation}
z^d(t)=z^0+\int_0^t\vartheta^d({\bar z}^d(s))ds+\sqrt{2g(\ell)}W^d(t),
\label{eq:continuous_interpolant2}
\end{equation}
where $z^0\sim\pi$, $g(\ell)=\ell^2\beta$,
$\vartheta^d(x)=dE_0(x^1-x)$, ${\bar z}^{d}(t) = x^{k};\ \ t \in [t^{k}, t^{k+1} ]$ is a piecewise
constant interpolant of $x^k$,
where 
\begin{align}
&\quad t^k=k\Delta t, \quad \eta^{k,d}=\sqrt{\Delta t}\sum_{j=1}^k\Gamma^{j,d},\label{eq:def1}\\
& W^d(t) = \eta^{[dt],d}+\frac{dt-[dt]}{\sqrt{d}}\Gamma^{[dt]+1,d};\ \ t\in [0,T],\label{eq:def2}
\end{align}
where $T>0$ is fixed.

In fact as $d\rightarrow\infty$, there exists ${\widehat W}^d\Rightarrow W$ such that 
$z^d(t)$ admits the following representation: 
\begin{equation}
z^d(t)=z^0-g(\ell)\int_0^t\left(z^d(s)+\Sigma\nabla\Psi\left(z^d(s)\right)\right)ds+
\sqrt{2g(\ell)}\widehat W^d(t).
\label{eq:approx4}
\end{equation}

It can be shown, proceeding in the same way, and using the same assumptions on the covariance operator 
and $\Psi$ as \ctn{Pillai2011}, that $z^d(t)$ converges weakly to $z$ (see \ctn{Pillai2011} for the rigorous definition), 
where $z$ satisfies the SDE given by 
\begin{equation}
\frac{dz}{dt} = - g(\ell) \left (z + \Sigma\nabla \Psi(z) \right ) + \sqrt{2g(\ell)}\frac{dW}{dt}, \ \ z(0)=z^0,
\label{eq:diffusion_equation}
\end{equation}
where $z^0\sim\pi$, $W$ is a Brownian motion in a relevant Hilbert space with covariance operator $\Sigma$, and 
\begin{equation}
g(\ell)=\ell^2\beta,
\label{eq:general_hl}
\end{equation}
is the diffusion speed. 


Our result, which we state as Theorem \ref{theorem:theorem6}, requires the same assumptions 
on the decay of eigenvalues $\lambda^2_j$ of $\Sigma$ and properties of $\Psi$ that
were also required by \ctn{Pillai2011}. For the sake of completeness we present these
assumptions below. But before that we need to define some new notation, as follows.

Using the expansion (\ref{eq:fourier_expansion}), following \ctn{Pillai2011} 
we define the Sobolev spaces $\mathbb H^r$; $r\in\mathbb R$, where the inner products
and norms are defined by
\[
\langle x,y\rangle_r=\sum_{j=1}^{\infty}j^{2r}x_jy_j,\quad\quad \|x\|^2_r=\sum_{j=1}^{\infty}j^{2r}x^2_j.
\]
For an operator $L:\mathbb H^r\rightarrow\mathbb H^l$, we denote, following \ctn{Pillai2011},
the operator norm on $\mathbb H$ by $\|L\|_{\mathcal L(\mathbb H^r,\mathbb H^l)}$ defined by
\[
\|L\|_{\mathcal L(\mathbb H^r,\mathbb H^l)} = \underset{\|x\|_r=1}{\sup}\|L x\|_l.
\]

\subsubsection{Assumptions}
\label{subsubsec:assumptions}

(1) {\it Decay of eigenvalues $\lambda^2_j$ of $\Sigma$}:
There exist $M_{-},M_{+}\in (0,\infty)$ and $\kappa>\frac{1}{2}$ such that
\begin{equation}
M_{-}\leq j^{\kappa}\lambda_j\leq M_{+},\quad\forall~ j\in\mathbb Z_{+}=\{1,2,3,\ldots\}.
\label{eq:as1}
\end{equation}
(2) {\it Assumptions on $\Psi$}: There exist constants $M_i\in\mathbb R,~i\leq 4$ and 
$s\in [0,\kappa-\frac{1}{2})$ such that
\begin{align}
M_1\leq\Psi(x) & \leq M_2\left(1+\|x\|^2_s\right)~\forall x\in\mathbb H^s \label{eq:as2}\\
\|\nabla\Psi(x)\|_{-s} &\leq M_3\left(1+\|x\|_s\right)~\forall x\in\mathbb H^s \label{eq:as3}\\
\|\partial^2\Psi(x)\|_{\mathcal L(\mathbb H^r,\mathbb H^l)}&\leq M_4\quad\forall x\in\mathbb H^s.
\label{eq:as4}
\end{align}
(3) {\it Assumptions on $\Psi^d$}:
The functions $\Psi^d$ satisfy the same conditions imposed on $\Psi$ given 
by (\ref{eq:as2}), (\ref{eq:as3}) and (\ref{eq:as4}) with the same constants
uniformly across $d$.

\begin{theorem}
\label{theorem:theorem6}
Let assumptions (1) -- (3), as stated in Section \ref{subsubsec:assumptions}, hold. 
Let $x^0\sim\pi^d$, where $\pi^d$ is given by
(\ref{eq:piN}) and let $z^d(t)$ be given by 
(\ref{eq:continuous_interpolant}). Then $z^d$ converges weakly to the diffusion process $z$ given by
(\ref{eq:diffusion_equation}) with $z(0)\sim\pi$.
\end{theorem}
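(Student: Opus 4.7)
My plan is to follow the martingale/drift-diffusion decomposition strategy of \ctn{Pillai2011}, leveraging the fact that all of the coordinate-wise drift and diffusion calculations have already been carried out in Sections \ref{subsec:expected_drift} and \ref{subsec:expected_diffusion_coefficient}. The representation (\ref{eq:continuous_interpolant2}) expresses $z^d(t)$ as the sum of a deterministic drift part driven by $\vartheta^d(\bar z^d(s))$ and a noise part $\sqrt{2g(\ell)}\,W^d(t)$. The limiting SDE (\ref{eq:diffusion_equation}) has drift $-g(\ell)(z+\Sigma\nabla\Psi(z))$ and diffusion coefficient $\sqrt{2g(\ell)}\Sigma^{1/2}$, so the three things to verify are: (i) $\vartheta^d(x) + g(\ell)\,m^d(x) \to 0$ in an appropriate sense, with $m^d$ as in (\ref{eq:m1}), and $m^d \to m:=x+\Sigma\nabla\Psi$; (ii) the noise process $W^d$ converges weakly in $C([0,T],\mathbb H^s)$ to a $\Sigma$-Brownian motion $W$; (iii) tightness of $\{z^d\}$ in $C([0,T],\mathbb H^s)$, combined with the continuity of the Itô map sending $(z^0,W)$ to the solution of the SDE, yielding the weak limit identification.

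For step (i), the pointwise approximations (\ref{eq:drift3}) already show that each coordinate of $\vartheta^d(x)$ is close to $-\ell^2\beta\,m^d(x)_i$, with remainder of order $d^{-1/2}$ arising from the Taylor residuals in (\ref{eq:taylor_series}) and from the Lyapunov-CLT approximation error in (\ref{eq:Q_CLT}). To turn these coordinatewise estimates into a bound in the $\mathbb H^{-s}$ norm that is integrable against $\pi^d$, I would invoke the decay (\ref{eq:as1}) of $\lambda_j$ together with the moment bounds on $\eta$ and $\nabla\Psi^d$ provided by (\ref{eq:as3}) and (\ref{eq:as4}) uniformly in $d$; the remainder terms contain either higher powers of $\epsilon = O(d^{-1/2})$ or higher-order Taylor coefficients bounded by (\ref{eq:as4}). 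The convergence $m^d \to m$ in $\mathbb H^{-s}$ on the support of $\pi^d$ is the content of Lemma 5.1 of \ctn{Pillai2011}, which transfers to our setting verbatim since the proposal-specific analysis enters only through the constant $\beta$.

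For step (ii), I would establish weak convergence of $W^d$ to a $\Sigma$-Wiener process via a martingale-invariance-principle argument. Here the TMCMC structure really differs from RWM: the increments $\Gamma^{k+1,d}$ involve the shared scalar $\epsilon$, so their coordinates are pairwise uncorrelated but not jointly Gaussian. Nevertheless, (\ref{eq:approx2}) furnishes the correct quadratic covariation $\Sigma^d$, and a Lyapunov-type CLT applied to the projections $\langle W^d(t),\phi_j\rangle$---conditioning on the sequence $\{\epsilon^k\}$ as we did in Section \ref{subsubsec:dep_clt}---delivers finite-dimensional convergence. Tightness of $W^d$ in $C([0,T],\mathbb H^s)$ follows from a standard Kolmogorov-type estimate on fourth moments of increments, using that $\sum_j \lambda_j^2\,j^{2s}<\infty$ by (\ref{eq:as1}) with $s<\kappa-1/2$, together with the uniform integrability of $\|\xi\|^2$ over the truncated normal for $\epsilon$.

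With (i) and (ii) in hand, step (iii) is the assembly: tightness of $z^d$ follows from tightness of $W^d$ and an a priori bound on $z^d$ obtained by Gronwall from the decomposition (\ref{eq:approx4}) and the linear-in-$\|x\|_s$ growth of $m^d$; then the continuous mapping theorem applied to the Itô solution map identifies the weak limit as the unique solution $z$ of (\ref{eq:diffusion_equation}). The main technical obstacle I anticipate is controlling the noise process $W^d$: unlike the RWM case where $\Gamma^{k+1,d}$ has exactly Gaussian coordinates in all directions, our $\Gamma^{k+1,d}$ inherits the dependence structure of $(b_1\epsilon,\ldots,b_d\epsilon)$. The conditional-Lyapunov device used throughout the paper, combined with the fact that $\epsilon^*\sim N(0,1)I_{\{\epsilon^*>0\}}$ has all moments finite, should be enough to carry out both the CLT and the tightness estimates, but making the $\mathbb H^s$-norm estimates quantitative uniformly in $d$ is where the bulk of the work sits.
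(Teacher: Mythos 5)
Your plan follows essentially the same route as the paper: the coordinate-wise drift and diffusion computations of Sections \ref{subsec:expected_drift} and \ref{subsec:expected_diffusion_coefficient} (with the conditional Lyapunov CLT replacing the Gaussian calculations of \ctn{Pillai2011}), followed by an appeal to the Pillai--Stuart--Thi\'ery machinery for the invariance principle, tightness and identification of the limit under assumptions (\ref{eq:as1})--(\ref{eq:as4}). If anything, you are more explicit than the paper, which delegates your steps (ii) and (iii) entirely to ``proceeding in the same way as \ctn{Pillai2011}'' without separately addressing the non-Gaussian joint structure of $\Gamma^{k+1,d}$ that you correctly flag as the main TMCMC-specific obstacle.
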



\begin{corollary}
\label{cor:cor5}
The diffusion speed $g(\ell)$ is maximized by
\begin{equation}
\ell_{opt}=\frac{2.426}{\sqrt{2}}=1.715,
\label{eq:l5}
\end{equation}
and the optimal acceptance rate is given by
\begin{equation}
\alpha_{opt}=4\int_0^{\infty}\Phi\left(-\frac{\ell_{opt}u}{\sqrt{2}}\right)\phi(u)du
=0.439\ \ \mbox{(up to three decimal places)}.
\label{eq:alpha_opt_general}
\end{equation}
\end{corollary}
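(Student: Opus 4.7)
The plan is to reduce the maximization of $g(\ell)=\ell^{2}\beta$ to the same universal scalar optimization already handled in Corollary \ref{cor:cor1}. Substituting (\ref{eq:beta_star}) into (\ref{eq:general_hl}) gives
\begin{equation*}
g(\ell) = 4\ell^{2}\int_{0}^{\infty} u^{2}\,\Phi\!\left(-\tfrac{\ell u}{\sqrt{2}}\right)\phi(u)\,du,
\end{equation*}
and the change of variable $c=\ell/\sqrt{2}$ converts this to $g = 8c^{2}h(c)$, where $h(c) := \int_{0}^{\infty} u^{2}\Phi(-cu)\phi(u)\,du$. This is (up to a positive multiplicative constant) the same functional form as the diffusion speed in Theorem \ref{theorem:theorem1} under its substitution $c=\ell\sqrt{\mathbb I}/2$, so the optimal value $c_{\ast}$ is universal and coincides with the one implicit in Corollary \ref{cor:cor1}.

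Next, I would identify $c_{\ast}$ via the first-order condition $2h(c)+ch'(c)=0$ with $h'(c)=-\int_{0}^{\infty}u^{3}\phi(cu)\phi(u)\,du<0$. Since $c^{2}h(c)$ vanishes as $c\to 0^{+}$ (because $h$ is bounded) and as $c\to\infty$ (by dominated convergence, as $\Phi(-cu)\to 0$), a unique interior maximizer exists, located numerically at $c_{\ast}\approx 1.213$ — exactly the value for which Corollary \ref{cor:cor1} yields $\ell_{opt}=2c_{\ast}/\sqrt{\mathbb I}=2.426/\sqrt{\mathbb I}$. Inverting $c=\ell/\sqrt{2}$ then gives $\ell_{opt}=\sqrt{2}\,c_{\ast}=2.426/\sqrt{2}\approx 1.715$, which is (\ref{eq:l5}).

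For the optimal acceptance rate, substituting $\ell_{opt}/\sqrt{2}=c_{\ast}$ into (\ref{eq:alpha_opt_general}) yields $\alpha_{opt}=4\int_{0}^{\infty}\Phi(-c_{\ast}u)\phi(u)\,du$. Letting $U,Z$ be independent standard normals, this integral equals $P(Z+c_{\ast}U<0,\,U>0)$; applying the bivariate-normal orthant identity $P(X>0,Y>0)=\tfrac{1}{4}+\tfrac{1}{2\pi}\arcsin\rho$ to $(-V,U)$ with $V=Z+c_{\ast}U$ and correlation $\rho=-c_{\ast}/\sqrt{1+c_{\ast}^{2}}$, together with $\arcsin(c/\sqrt{1+c^{2}})=\arctan c$, gives the closed form $\alpha_{opt}=1-\tfrac{2}{\pi}\arctan(c_{\ast})\approx 0.439$, establishing (\ref{eq:alpha_opt_general}). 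The only non-routine ingredient is the scalar numerical root-finding for $c_{\ast}$; there is no structural obstacle, since the problem reduces exactly to the universal one-parameter optimization already settled in the $iid$ case, and the acceptance-rate integral admits an elementary closed form via the orthant identity.
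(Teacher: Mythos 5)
Your proposal is correct. The paper in fact states Corollary \ref{cor:cor5} without any proof, so there is no argument of record to compare against; but your reduction is exactly the route the paper implicitly relies on, since every corollary in the paper reports the same universal constants $2.426$ and $0.439$ obtained from the single scalar optimization of $c^2\int_0^\infty u^2\Phi(-cu)\phi(u)\,du$. Your substitution $c=\ell/\sqrt{2}$ correctly maps $g(\ell)=\ell^2\beta$ onto that universal problem (the dependent case is the $iid$ case with $\mathbb I$ formally replaced by $2$), and $\ell_{opt}=\sqrt{2}\,c_\ast=2.426/\sqrt{2}$ follows. The one genuinely new ingredient you add is the closed form $\alpha_{opt}=1-\tfrac{2}{\pi}\arctan(c_\ast)$ via the bivariate normal orthant identity; this is not in the paper (which evaluates the integral only numerically) and it is a worthwhile addition, since it makes the value $0.439$ verifiable without quadrature and makes transparent why the optimal acceptance rate is the same constant in every set-up. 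The only soft spot is that you assert uniqueness of the interior maximizer from the first-order condition without fully establishing it (the boundary behaviour gives existence, not uniqueness), but the paper itself treats this numerically, so this is consistent with the level of rigour of the original.
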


\subsection{TMCMC within Gibbs for the dependent family of distributions}
\label{subsec:tmcmc_within_gibbs_dependent}

As before, here we define
transitions of the form (\ref{eq:transition}), where the random variable $\chi_i;~i=1,\ldots,d$
indicates whether or not the $i$-th
coordinate of $x$ will be updated. 
Formally,
\begin{equation}\label{eq:move2}
x^{k+1} = { \gamma^{k+1}{y}^{k+1}} + \left ( 1- { \gamma^{k+1}} \right) x^{k}, 
\end{equation}
where 
$$ {\gamma}^{k+1} \sim \mbox{Bernoulli} \left ( \min \left \{1, \frac{\pi^d({y}^{k+1})}{\pi^d(x^{k})}\right \}\right ). $$

We define the new move ${y}^{k+1}$ of the same form as (\ref{eq:ynew}), but with the indicator variables
$\chi_i$ incorporated appropriately. In other words,
\begin{equation}\label{eq:ynew2}
{y}^{k+1} = x^{k} + \sqrt{\frac{2\ell^{2}}{d}}\Sigma^{\frac{1}{2}}\xi^{k+1},
\end{equation}
 where $ \xi^{k+1} = \left ({\chi^{k+1}_1b_{1}}^{k+1}{\epsilon}^{k+1}, \ldots, \chi^{k+1}_d{b_{d}}^{k+1}{\epsilon}^{k+1} \right )$; 
$b_{i}=\pm 1$ with probability 1/2 each, $\epsilon\sim N(0,1)I_{\{\epsilon>0\}}$, and
 for any $k>0$ and for $i=1,\ldots,d$, $P(\chi^{k+1}_i=1)=c_d$. As before, we assume that $c_d\rightarrow c$ as $d\rightarrow\infty$,
 where $0<c\leq 1$. 

The proof again required only minor modification to the above proof provided
in the case of this dependent family of distributions. Here, additionally, we only need to 
take expectations with respect to $\chi^{k+1}_i;i=1,\ldots,d$,
so that we now have
\[\left[\mathbb Q(x,\xi)\vert b_i,\epsilon\right]\approx 
d\left(-\ell^2\epsilon^2-c\epsilon\sqrt{\frac{2\ell^2}{d}}\eta_ib_i,2\ell^2\epsilon^2c^2\right).\]
Proceeding in the same manner as in the above proof, we obtain a stochastic differential equation of the same
form as (\ref{eq:diffusion_equation}), but with $g(\ell)$ replaced with
\begin{equation}
g_c(\ell)=c\ell^2\beta_c,
\label{eq:gibbs_general_hl2}
\end{equation}
where
\[\beta_c=4\int_0^{\infty}u^2\Phi\left(-\frac{\ell u}{c\sqrt{2}}\right)\phi(u)du.\]

The result can be stated formally as follows:
\begin{theorem}
\label{theorem:theorem7}
Let assumptions (1) -- (3), as stated in Section \ref{subsubsec:assumptions}, hold. 
Let $x^0\sim\pi^d$, where $\pi^d$ is given by
(\ref{eq:piN}) and let $z^d(t)$ be given by 
(\ref{eq:continuous_interpolant}), where $z^d(t)$ depends upon $x^k$ and $x^{k+1}$ through 
$ \xi^{k+1} = \left ({\chi^{k+1}_1b_{1}}^{k+1}{\epsilon}^{k+1}, \ldots, \chi^{k+1}_d{b_{d}}^{k+1}{\epsilon}^{k+1} \right )$,
where for any $k>0$ and for $i=1,\ldots,d$, $P(\chi^{k+1}_i=1)=c_d$, other definitions remaining the same as before.
Then $z^d$ converges weakly to the diffusion process $z$ having the same form as
(\ref{eq:diffusion_equation}), but $g(\ell)$ replaced with $g_c(\ell)$ given by (\ref{eq:gibbs_general_hl2}), and as before,
$z(0)\sim\pi$.
\end{theorem}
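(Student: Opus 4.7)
The plan is to follow the argument of Theorem \ref{theorem:theorem6} line by line, threading in the Bernoulli indicators $\chi^{k+1}_i$ wherever the original proof used $b^{k+1}_i$ alone. Since $\chi^{k+1}_i$ is independent of $b^{k+1}_i$ with $P(\chi^{k+1}_i=1)=c_d\to c$, each product $\chi_j b_j$ is a zero-mean random variable with variance $c_d$, independent across $j$ conditionally on $\epsilon$. This extra randomness does not alter the architecture of the proof; it only modifies the constants appearing in the drift and diffusion.

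First, I would re-derive the decomposition $\mathbb Q(x,\xi)\approx R(x,\xi)$ via Lemma 5.5 of \ctn{Pillai2011} applied to the modified proposal (\ref{eq:ynew2}), obtaining
\begin{equation*}
R(x,\xi)=-\epsilon\sqrt{\frac{2\ell^2}{d}}\sum_{j=1}^{d}\eta_j\chi_jb_j-\frac{\ell^2\epsilon^2}{d}\sum_{j=1}^{d}\chi_j^2,
\end{equation*}
using $\chi_j^2=\chi_j$. Next I would verify Lyapunov's condition for $\sum_{j\neq i}\eta_j\chi_jb_j/\sqrt{d}$ conditionally on $\epsilon$ and $b_i$: precisely as in Section \ref{subsubsec:dep_clt}, the ratio in (\ref{eq:Lyapunov_general}) remains $O(d^{-1})$ $\pi$-almost surely since $\|\eta\|^2/d\to 1$ by Lemma 5.2 of \ctn{Pillai2011}, and the bounded indicators $\chi_j$ contribute only a constant multiplicative factor $c_d\to c$. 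This yields the asymptotic conditional distribution of $\mathbb Q$ displayed immediately after (\ref{eq:gibbs_general_hl2}).

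Second, I would redo the expected-drift and expected-diffusion calculations of Sections \ref{subsec:expected_drift} and \ref{subsec:expected_diffusion_coefficient}. Applying Proposition 2.4 of \ctn{Roberts97a} to the new conditional mean and standard deviation (now scaled by $c$), and using the same Taylor expansions as in (\ref{eq:taylor_series}) but with $\sqrt{2}\ell\epsilon$ replaced by its rescaled analogue, the factor $c$ propagates into both the leading-order drift and the leading-order covariance. The analogue of (\ref{eq:drift3}) becomes $-c\ell^{2}\beta_{c}\bigl(P^{d}x+\Sigma^{d}\nabla\Psi^{d}(x)\bigr)_{i}$, and the analogue of (\ref{eq:cross3}) becomes $2c\ell^{2}\beta_{c}\Sigma^{d}$, with $\beta_c=4\int_0^{\infty}u^2\Phi\bigl(-\ell u/(c\sqrt{2})\bigr)\phi(u)du$. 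The off-diagonal terms continue to vanish because $E(\chi_i\chi_j b_i b_j)=0$ for $i\neq j$ by independence and the zero mean of $b_i,b_j$.

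Finally, the Euler-scheme representation (\ref{eq:euler}) and its continuous interpolant (\ref{eq:continuous_interpolant}) carry over verbatim, with $g(\ell)$ simply replaced by $g_c(\ell)=c\ell^{2}\beta_c$. Weak convergence of $z^d$ to the SDE of the form (\ref{eq:diffusion_equation}) with speed $g_c(\ell)$ then follows from the continuity and uniqueness estimates of \ctn{Pillai2011} under assumptions (1)--(3), which remain valid unchanged. The main obstacle is purely one of bookkeeping: one must carefully track whether each expectation over the $\chi$'s produces a factor $c$ (from $E(\chi_j)$, e.g.\ in the drift term) or $c^2$ (from products $E(\chi_i)E(\chi_j)$ in the variance of $\mathbb Q$), since an error here would shift the maximiser $\ell_{\mathrm{opt}}$ and hence the optimal acceptance rate. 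Once these constants are correctly accounted for, every other step is a cosmetic modification of the proof of Theorem \ref{theorem:theorem6}.
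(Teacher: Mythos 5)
Your overall strategy coincides with the paper's: the paper likewise treats Theorem \ref{theorem:theorem7} as a minor modification of Theorem \ref{theorem:theorem6}, obtained by inserting the indicators $\chi^{k+1}_i$ into the proposal, re-deriving the asymptotic conditional law of $\mathbb{Q}(x,\xi)$ via Lyapunov's central limit theorem, and rerunning the drift and diffusion-coefficient computations of Sections \ref{subsec:expected_drift} and \ref{subsec:expected_diffusion_coefficient}. So there is no difference in approach, and the structural parts of your argument (vanishing of the off-diagonal terms, the Euler interpolant, the weak-convergence machinery borrowed from \ctn{Pillai2011}) match the paper.

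There is, however, a concrete inconsistency at exactly the ``bookkeeping'' step you yourself single out as critical, and you do not resolve it. From your (correct) observations that $\chi_j^2=\chi_j$ and that $\chi_jb_j$ has conditional variance $c_d$, one gets
\[
\frac{\ell^2\epsilon^2}{d}\sum_{j=1}^d\chi_j\rightarrow c\ell^2\epsilon^2
\quad\mbox{and}\quad
Var\left(\epsilon\sqrt{\tfrac{2\ell^2}{d}}\sum_{j\neq i}\eta_j\chi_jb_j\,\Big|\,\epsilon\right)
=\frac{2\ell^2\epsilon^2c_d}{d}\sum_{j\neq i}\eta_j^2\rightarrow 2c\ell^2\epsilon^2,
\]
so your own setup yields $\left[\mathbb{Q}\mid b_i,\chi_i,\epsilon\right]\approx AN\bigl(-c\ell^2\epsilon^2-\epsilon\sqrt{2\ell^2/d}\,\eta_i\chi_ib_i,\;2c\ell^2\epsilon^2\bigr)$, whence $\mu/\sigma\rightarrow-\ell\epsilon\sqrt{c}/\sqrt{2}$ and Proposition 2.4 of \ctn{Roberts97a} gives the acceptance factor $2\Phi(-\ell\epsilon\sqrt{c}/\sqrt{2})$, i.e.\ $\beta_c=4\int_0^\infty u^2\Phi\bigl(-\ell u\sqrt{c}/\sqrt{2}\bigr)\phi(u)du$ --- which is, incidentally, what the iid-within-Gibbs speed (\ref{eq:diff_speed_iid2}) predicts under the correspondence $\mathbb{I}\leftrightarrow 2$. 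The $\beta_c$ you actually write down, with $\Phi\bigl(-\ell u/(c\sqrt{2})\bigr)$, instead requires the conditional law the paper displays (mean $-\ell^2\epsilon^2-c\epsilon\sqrt{2\ell^2/d}\,\eta_ib_i$, variance $2\ell^2\epsilon^2c^2$), and that law does not follow from the variance computation you set up. So either your intermediate constants or your final formula must change; since, as you note, this constant determines $\ell_{opt}$ and hence the claimed optimal acceptance rate, the step cannot be dismissed as cosmetic and needs to be carried out explicitly rather than asserted.
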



\begin{corollary}
\label{cor:cor6}
The diffusion speed $g_c(\ell)$ is maximized by
\begin{equation}
\ell_{opt}=\frac{2.426c}{\sqrt{2}}=1.715c,
\label{eq:l6}
\end{equation}
and the optimal acceptance rate is given by
\begin{equation}
\alpha_{opt}=4\int_0^{\infty}\Phi\left(-\frac{\ell_{opt}u}{c\sqrt{2}}\right)\phi(u)du
=0.439\ \ \mbox{(up to three decimal places)}.
\label{eq:gibbs_alpha_opt_general2}
\end{equation}
\end{corollary}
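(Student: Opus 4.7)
The plan is to reduce the optimization of $g_c(\ell)$ to the $c$-independent problem already solved in Corollary~\ref{cor:cor5}. Make the substitution $s = \ell/(c\sqrt{2})$, so $\ell = sc\sqrt{2}$ and $\ell^2 = 2c^2s^2$. Plugging into $g_c(\ell) = 4c\ell^2\int_0^{\infty}u^2\Phi(-\ell u/(c\sqrt{2}))\phi(u)\,du$ gives
\[
g_c(\ell) \;=\; 8c^3\,s^2\,G(s),\qquad G(s):=\int_0^{\infty}u^2\Phi(-su)\phi(u)\,du.
\]
Since $8c^3$ is a fixed positive constant (not depending on $\ell$), maximizing $g_c$ over $\ell>0$ is equivalent to maximizing $s\mapsto s^2 G(s)$ over $s>0$.

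This one-variable optimization is exactly the one already solved in Corollary~\ref{cor:cor5} (which, via the analogous substitution $s=\ell/\sqrt{2}$ applied to $g(\ell)=\ell^2\beta$ from Theorem~\ref{theorem:theorem6}, also reduces to $\max_{s>0}\, 8s^2 G(s)$). Differentiating $s^2 G(s)$ and using the elementary identity
\[
G'(s) \;=\; -\int_0^{\infty}u^3\phi(su)\phi(u)\,du \;=\; -\frac{1}{\pi(1+s^2)^2},
\]
the first-order condition $2G(s)+sG'(s)=0$ becomes the $c$-free transcendental equation
\[
2G(s) \;=\; \frac{s}{\pi(1+s^2)^2},
\]
whose unique positive root $s_{opt}\approx 1.213$ was already identified numerically in Corollary~\ref{cor:cor5}. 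Inverting the substitution then yields $\ell_{opt}=s_{opt}\,c\sqrt{2}=2.426c/\sqrt{2}=1.715c$. That this critical point is the global maximum follows from $g_c(\ell)>0$ on $(0,\infty)$ together with $g_c(\ell)\to 0$ both as $\ell\downarrow 0$ (by the $\ell^2$ factor) and as $\ell\to\infty$ (since $\Phi(-\ell u/(c\sqrt{2}))\to 0$ for each $u>0$, with integrable majorant $u^2\phi(u)$ justifying dominated convergence).

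Evaluating the acceptance rate at this optimizer, the same substitution produces
\[
\alpha_{opt} \;=\; 4\int_0^{\infty}\Phi\!\left(-\frac{\ell_{opt}u}{c\sqrt{2}}\right)\phi(u)\,du
\;=\; 4\int_0^{\infty}\Phi(-s_{opt}u)\phi(u)\,du,
\]
which is numerically identical to the integral already evaluated in Corollary~\ref{cor:cor5} and therefore equals $0.439$ to three decimal places. The only real work is the numerical determination of the root $s_{opt}$, but this is inherited wholesale from Corollary~\ref{cor:cor5} and needs no fresh analytic argument here. The entire content of the present corollary is the observation that the Gibbs proportion $c$ enters $g_c$ only through the harmless prefactor $c^3$ and the rescaling factor $c\sqrt{2}$ inside $\Phi$, leaving the shape of the underlying optimization --- and hence the optimal acceptance rate --- invariant.
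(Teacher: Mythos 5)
Your proof is correct and follows exactly the route the paper implicitly intends (the corollary is stated without proof): the substitution $s=\ell/(c\sqrt{2})$ shows $c$ enters only through the positive prefactor $8c^3$, reducing the problem to the same universal optimization $\max_{s>0}s^2G(s)$ with root $s_{opt}\approx 1.213$ already used in Corollary~\ref{cor:cor5}, whence $\ell_{opt}=1.715c$ and the acceptance integral is literally the same as before, giving $0.439$. One small point: your dominated-convergence argument at $\ell\to\infty$ only gives $G(s)\to 0$, whereas you need $s^2G(s)\to 0$; this follows immediately from the Mills-ratio bound $\Phi(-su)\leq \phi(su)/(su)$, which yields $G(s)=O(s^{-3})$.
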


\section{Comparison with RWM}
\label{sec:comparison}

\subsection{Comparison in the $i.i.d.$ set-up}
\label{subsec:compare_iid}

Note that for both the standard RWM algorithm and our additive TMCMC algorithm, 
the diffusion process reduces to the Langevin diffusion having the same form, but different
diffusion speeds.
For the RWM algorithm, the diffusion speed $h(\ell)$ is given by
$h(\ell) = 2 \ell^{2} \Phi \left (-\frac{\ell \sqrt{\mathbb I}}{2}\right )$,
and the optimal acceptance rate is $2 \Phi \left (-\frac{\ell_{opt} \sqrt{\mathbb I}}{2}\right )$,
where $\ell_{opt}$ maximizes $h(\ell)$.
A comparison between (\ref{eq:diff_speed_iid}) and 
and the above diffusion speed reveals that if, instead of the standard
normal distribution, $z^*_1$ associated with equation (13) of the supplement, 
corresponding to the proof of Theorem 3.1, 
had a distribution that assigned probability $1/2$ to each of $+1$ and $-1$, then
the additive TMCMC-based diffusion speed would reduce to the RWM-based diffusion speed.

Note 
that the optimum value of $\ell$ in RWM 
is $\ell_{opt}= \frac{2.381}{\sqrt{\mathbb{I}}}$ and the corresponding expected acceptance rate is $0.234$. 
However, in TMCMC it is observed on maximizing (\ref{eq:diff_speed_iid}) 
that $\ell_{opt}= \frac{2.426}{\sqrt{\mathbb{I}}}$ and the corresponding expected acceptance rate is 
$0.439$; see Corollary \ref{cor:cor1}. Hence, although the values of the optimizer $\ell_{opt}$ are close
for RWMH and additive TMCMC, the optimal acceptance rate of the latter is significantly higher.
This much higher acceptance rate for TMCMC is to be expected because effectively just a one-dimensional
proposal distribution is used to update the entire high-dimensional random vector $x$. 

Figure~\ref{fig:fig1} compares the diffusion speeds of TMCMC and RWM in the $i.i.d.$ case.
\begin{figure}
\centering
\includegraphics[height=8cm,width=10cm]{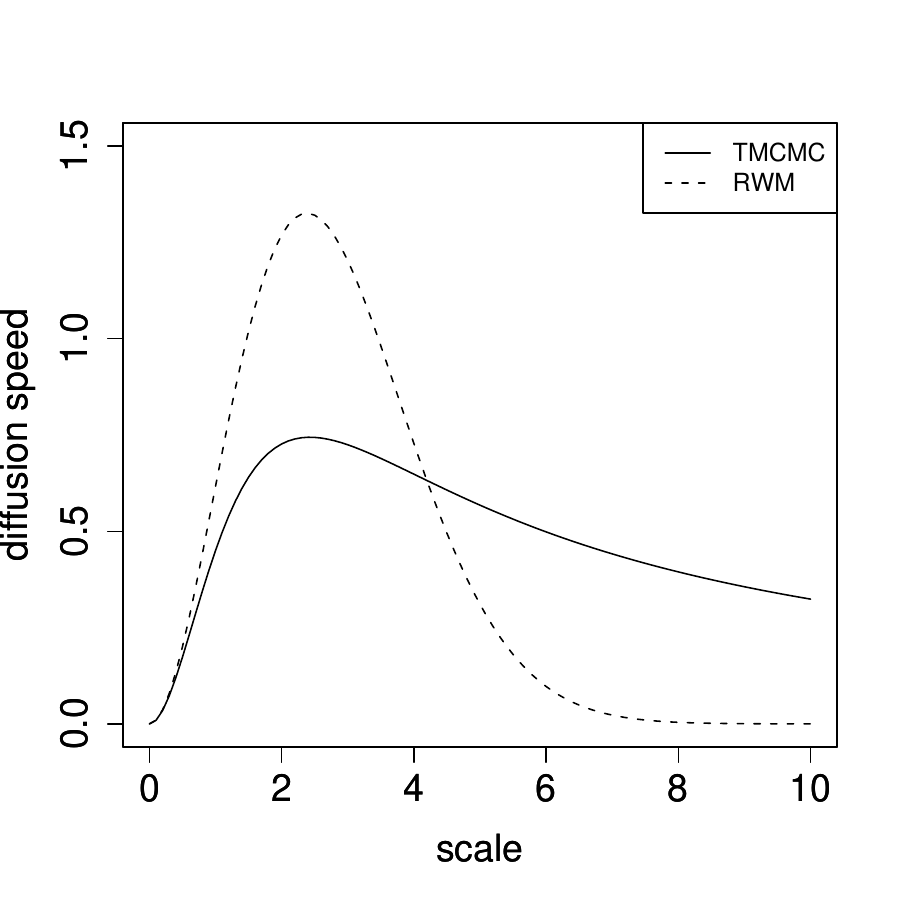}
\caption{Comparison of diffusion speeds 
of TMCMC and RWM in the $i.i.d.$ case.}
\label{fig:fig1}
\end{figure}
Observe that the maximum diffusion speed for RWM is greater than that of TMCMC.
However, the graph for RWM falls much more steeply compared to TMCMC for large $\ell$, showing that
the diffusion speed is quite sensitive towards misspecification of the scaling constant, and that
scaling constants other than the maximizer can substantially decrease the diffusion speed.
On the other hand, the graph for TMCMC is much more flat, indicating relatively more robustness
with respect to the choice of $\ell$. 

As we will see, the same phenomenon holds for all the other
set-ups, such as the target distributions with non-identical and dependent components. 
This is an important issue in practice for general high-dimensional target distributions, particularly
with non-identical and dependent components since, as discussed in Sections S-3 
and S-4, 
in practice, tuning the
scaling constants of the proposal distributions to approximately achieve the optimal acceptance rate is
generally infeasible in high dimensions, which in turn makes the maximum diffusion speed infeasible to achieve.  
For the RWM algorithm any such misspecification entails a sharp fall in the diffusion speed.
Since in high dimensions misspecifications are very much likely, RWM is quite generally prone to sub-optimal
performances. 
From the discussion presented in Section S-5 
it can be anticipated that 
in very high dimensions, it may not be practically feasible to achieve the optimal acceptance rate using 
adaptive algorithms based on RWM. 
On the other hand, additive TMCMC remains far more robust even in the face of
such misspecifications, thus significantly cutting down the risk of poor performance in high dimensions.
Adaptive algorithms based on additive TMCMC are also demonstrated by \ctn{Dey13} to be much more 
efficient compared to the adaptive RWM algorithms.

\subsubsection{Within Gibbs comparison in the $i.i.d.$ set-up}
\label{subsubsec:within_gibbs_iid}

Now we compare TMCMC within Gibbs based diffusion speed and optimal acceptance rate given by 
\begin{equation}
g_c(\ell)=4c\ell^{2}\int_0^{\infty} u^{2}\Phi 
\left (- \frac{u\ell\sqrt{c\mathbb{I}}}{2}\right) \phi(u)du.
\label{eq:diff_speed_tmcmc_gibbs}
\end{equation}
(see (26) of the supplement, Section S-6.2)
and (\ref{eq:acc_tmcmc_gibbs})  
with those of RWM within Gibbs.
The diffusion speed for the RWM within Gibbs algorithm is
$h_{c}(\ell)= 2c\ell^{2} \Phi \left (- \frac{\ell\sqrt{c\mathbb{I}}}{2}\right)$, 
and the optimal acceptance rate is 
$2\Phi \left (- \frac{\ell_{opt}\sqrt{c\mathbb{I}}}{2}\right)$,
where $\ell_{opt}$ maximizes $h_c(\ell)$;
see \cite{NealRoberts}. 
It turns out that $\ell_{opt}$ for RWM within Gibbs is given by $ \frac{2.381}{\sqrt{c\mathbb{I}}}$,
and the optimal acceptance rate is 0.234, as before.
Figure~\ref{fig:fig2} compares the diffusion speeds associated with TMCMC within Gibbs and RWM within Gibbs,
with $c=0.3$. Once again, we observe that the diffusion speed of TMCMC within Gibbs is more robust
with respect to misspecification of the scale.
\begin{figure}
\centering
\includegraphics[height=8cm,width=10cm]{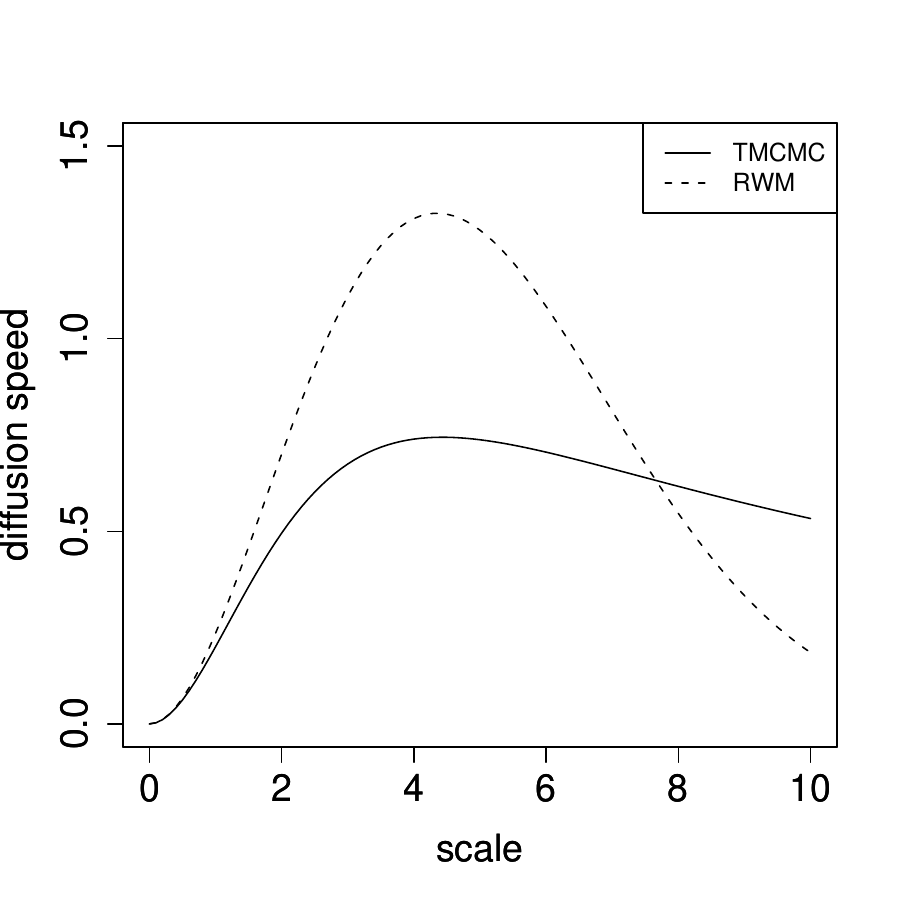}
\caption{Comparison of diffusion speeds 
of TMCMC within Gibbs and RWM
within Gibbs in the $i.i.d.$ case, with $c=0.3$.}
\label{fig:fig2}
\end{figure}

\subsection{Comparison in the independent but non-identical set-up}
\label{subsec:compare_non_iid}

The equations 
\begin{equation}
g_{\xi}(\ell)= 4\ell^{2}\int_0^{\infty} \left \{{u}^{2}\Phi 
\left (- \frac{u\ell\xi\sqrt{\mathbb{I}}}{2}\right) \right \}\phi(u)du.
\label{eq:diff_speed_non_iid}
\end{equation}
(see also (39) of the supplement) 
and (\ref{eq:opt_acc_non_iid}) 
provide the diffusion speed and the optimal acceptance rate for TMCMC in the independent
but non-identical set-up. The corresponding quantities for RWM are given by
$2 \ell^{2} \Phi \left (-\frac{\ell\xi \sqrt{\mathbb I}}{2}\right )$
and $2 \Phi \left (-\frac{\ell_{opt}\xi \sqrt{\mathbb I}}{2}\right )$, respectively.
As before, the optimal acceptance rates remain 0.234 and 0.439 for RWM and TMCMC, respectively.
Figure~\ref{fig:fig3} compares the diffusion speeds associated with TMCMC and RWM,
with $\xi=10$. 
Here both the graphs are steep, but that for RWM is much more steeper, leading to the
same observations regarding robustness with respect to misspecification of scale.
\begin{figure}
\centering
\includegraphics[height=8cm,width=10cm]{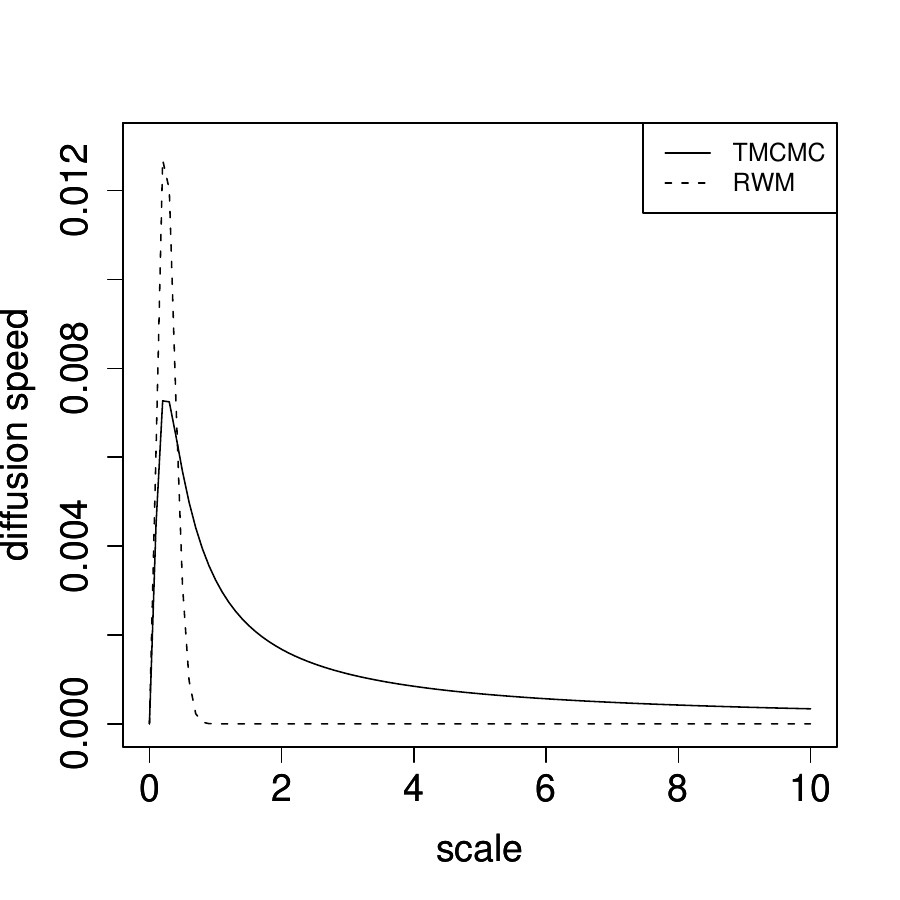}
\caption{Comparison of diffusion speeds 
of TMCMC within Gibbs and RWM
within Gibbs in the independent but non-identical case, with $\xi=10$.}
\label{fig:fig3}
\end{figure}

\subsubsection{Within Gibbs comparison in the independent but non-identical set-up}
\label{subsubsec:compare_gibbs_non_iid}

It can be easily shown that the RWM-based diffusion speed and the acceptance rate
in the independent but non-identical set-up
are
$2c \ell^{2} \Phi \left (-\frac{\ell\xi \sqrt{c\mathbb I}}{2}\right )$
and $2 \Phi \left (-\frac{\ell_{opt}\xi \sqrt{c\mathbb I}}{2}\right )$, respectively.
These are to be compared with the TMCMC-based quantities given by
(\ref{eq:diff_speed_non_iid2}) and (\ref{eq:opt_acc_non_iid2}).
The optimal acceptance rates for TMCMC and RWM, as before, are 0.234 and 0.439.
\begin{figure}
\centering
\includegraphics[height=8cm,width=10cm]{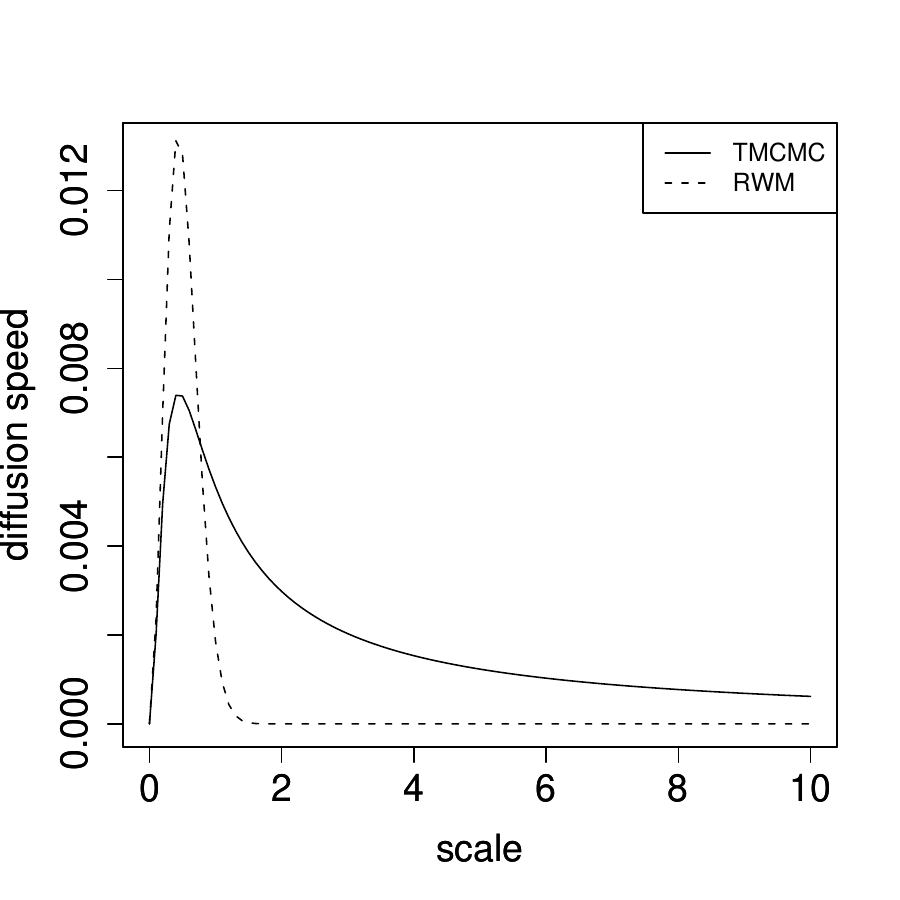}
\caption{Comparison of diffusion speeds 
of TMCMC within Gibbs and RWM
within Gibbs in the independent but non-identical case, with $\xi=10$, $c=0.3$.}
\label{fig:fig4}
\end{figure}
Conclusions similar as before are reached on observing Figure~\ref{fig:fig4} that compares
the diffusion speeds of TMCMC and RWM in this case.

\subsection{Dependent case}
\label{subsec:compare_dependent}

In the dependent case, the diffusion speed and the optimal acceptance rate of additive TMCMC are of the forms 
(\ref{eq:general_hl}) and (\ref{eq:alpha_opt_general}), respectively. As usual, the TMCMC-based
optimal acceptance rate turns out to be 0.439. The corresponding RWM-based optimal acceptance rate,
having the form
$2\Phi\left(-\frac{\ell_{opt}}{\sqrt{2}}\right)$, turns out to be 0.234 as before, where
$\ell_{opt}$ maximizes the corresponding diffusion speed 
$2\ell^2\Phi\left(-\frac{\ell}{\sqrt{2}}\right)$. 
\begin{figure}
\centering
\includegraphics[height=8cm,width=10cm]{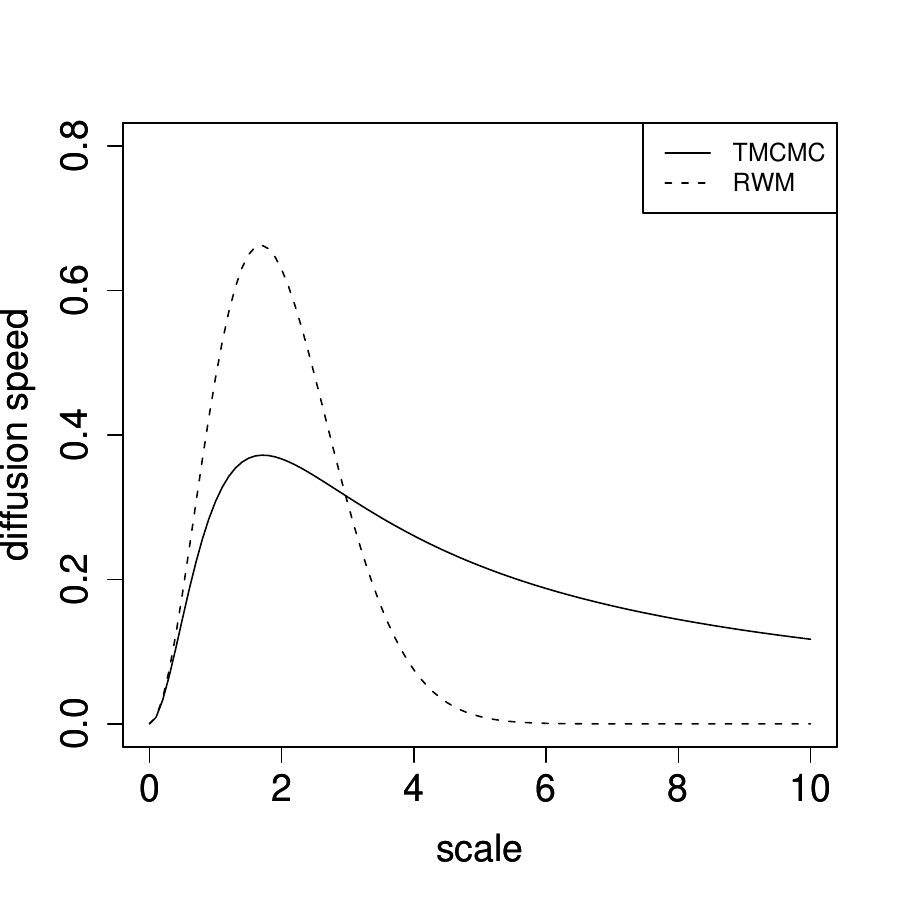}
\caption{Comparison of diffusion speeds 
of TMCMC and RWM
in the dependent case.}
\label{fig:fig5}
\end{figure}
Similar information as before are provided by Figure~\ref{fig:fig5}.

\subsubsection{Within Gibbs comparison in the dependent set-up}
\label{subsubsec:compare_gibbs_dependent}

In the dependent case, it is easily shown that the RWM-based diffusion speed and the acceptance
rate are, respectively, $2c \ell^{2} \Phi \left (-\frac{\ell}{c\sqrt{2}}\right )$
and $2 \Phi \left (-\frac{\ell_{opt}}{c\sqrt{2}}\right )$.
The corresponding TMCMC-based quantities are 
(\ref{eq:gibbs_general_hl2}) and (\ref{eq:gibbs_alpha_opt_general2}).
The optimal acceptance rates remain 0.234 and 0.439 for RWM and TMCMC.
\begin{figure}
\centering
\includegraphics[height=8cm,width=10cm]{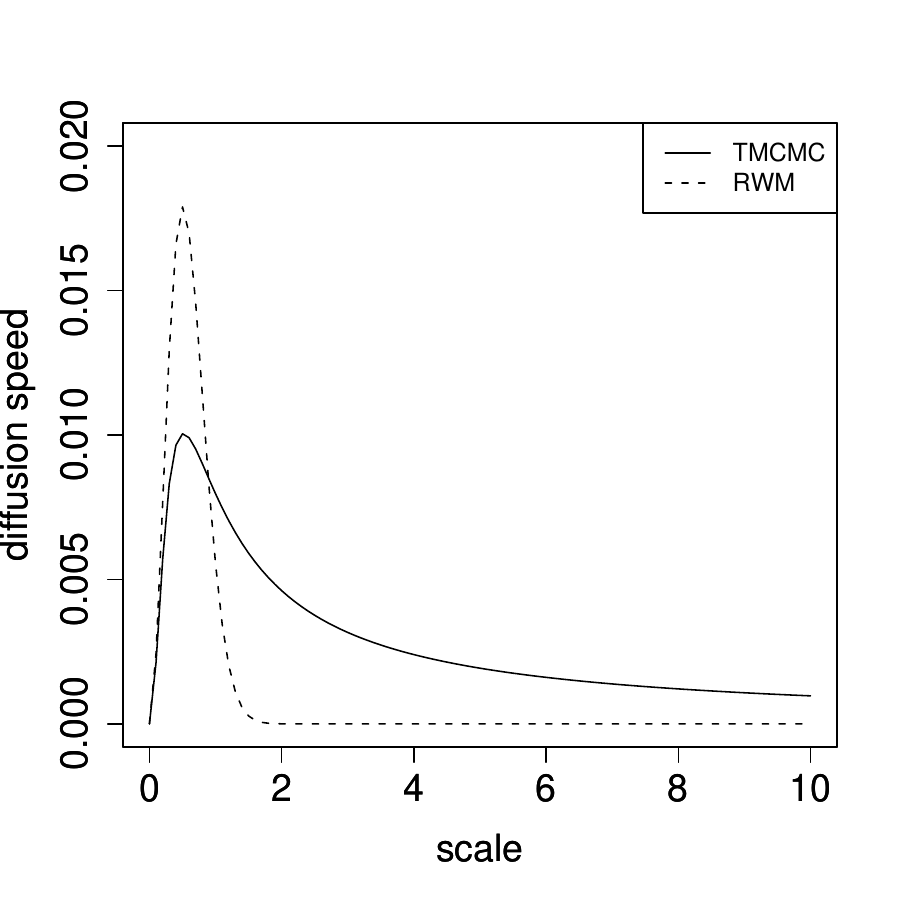}
\caption{Comparison of diffusion speeds 
of TMCMC within Gibbs and RWM
within Gibbs in the dependent case, with $c=0.3$.}
\label{fig:fig6}
\end{figure}
Figure~\ref{fig:fig6}, comparing
the diffusion speeds of TMCMC within Gibbs and RWM within Gibbs in the dependent set-up,
lead to similar observations as before.

\section{Simulation Experiments}
\label{sec:simulation}

So far, we have invested most of our efforts in the theoretical development of optimal scaling mechanism in 
the additive TMCMC case. Now, we shall consider some simulation experiments to illustrate the performance of 
our method with respect to the standard RWM methodology, under the $i.i.d.$, independent but non-identical,
and dependent set-ups.

\subsection{Comparison of additive TMCMC and RWM in the $i.i.d.$ case}
\label{subsec:iid_comparison}
We compare the performance of RWM and TMCMC 
corresponding to three different choices of the proposal variance, with scalings $\ell$ being  
2.4 (approximately optimal for both RWM and additive TMCMC) and 6 (sub-optimal for both RWM
and additive TMCMC) respectively. We consider target densities 
of dimensions ranging from 2 to 200. 
For our purpose we consider 
the target density $\pi$ to be the multivariate normal distribution with mean vector zero and
covariance matrix $I$, the identity matrix. 
The starting point $x_{0}$ is randomly generated from $U(-2,2)$, the uniform distribution
on $(-2,2)$. The univariate density of $\epsilon$ for TMCMC was taken to be a left-truncated normal 
having mean 0 and 
variance $\frac{\ell^{2}}{d}$ for each coordinate, where $\ell$ is the value of the scaling constant. 
For RWM, each coordinate of the $d$ dimensional proposal density was assumed to have the above distribution,
but without the truncation. 

In each run, the chain was observed up to 100,000 trials (including the rejected moves). 
The choice of burn-in was made somewhat subjectively, removing one fourth of the total number of 
iterates initially. This choice was actually a bit conservative as both RWM and TMCMC were found 
to be sufficiently close to the target density well ahead of the chosen point. We measured the efficiency 
of the TMCMC chain with respect to the RWM chain using certain performance evaluation 
measures -- \emph{Acceptance rate}, \emph{Average Jump Size} (AJS), \emph{Integrated Auto-Correlation Time} (IACT) 
and \emph{Integrated Partial Auto-Correlation Time} (IPACT) (see \cite{Roberts09}). All calculations of 
AJS, IACT, IPACT were done corresponding to the process after burn-in in order to ensure stationarity. 
In calculating the integrated autocorrelation time, we considered $25$ lags of ACF. IPACT was similarly 
computed. The first eight columns of Table~\ref{tab:table1} compare the performances of 
TMCMC and RWM with respect to these measures. 

\subsubsection{Average Kolmogorov-Smirnov distance for comparing convergence of TMCMC and RWM}
\label{subsec:ks_distance}

The measures acceptance rate, IACT, IPACT and AJS do not explicitly measure how close the
MCMC-based empirical distribution is to the target distribution. For this
we also considered the Kolmogorov-Smirnov (K-S) distance to evaluate the performances of the 
MCMC algorithms. 
We ran 100 copies of the RWM and TMCMC chains starting 
from the same initial point and with the same target density $\pi$ and observed how well the empirical 
distribution corresponding to these 100 copies, after discarding the burn-in period, fits the true density 
by evaluating the 
K-S distance (\cite{Smirnov1948}) at each time point for both the chains. As an overall measure
we take the average of the K-S distances over all the time points. 
This averaging over the time points makes sense since the chains are assumed to be in stationarity
after the burn-in period, and hence every time point must yield the same (stationary) distribution.
Our average K-S distance can be viewed as quantifying how well the MCMC algorithm explores
the stationary distribution after convergence is attained.
The average K-S distances for RWM and TMCMC are shown in the
last two columns of Table \ref{tab:table1}.

\begin{sidewaystable}[h]
\centering
\caption{The performance evaluation of RWM and TMCMC chains for different dimensions. 
It is assumed that proposal has independent normal components for RWM with same proposal variance 
along all coordinates. The proposal scales are 2.4 (optimal) and 6 (sub-optimal). 
All calculations done after burn in.}
\vspace{2cm}

\begin{tabular}{|c|c|c|c|c|c|c|c|c|c|c|c|}
\hline
\multirow{2}{*}{Dimension} & \multirow{2}{*}{\backslashbox{Scaling}{Test}} 
& \multicolumn{2}{|c|}{$\begin{array}{c} \mbox{Acceptance} \\ \mbox{Rate} ($\%$) \end{array} $} & \multicolumn{2}{|c|}{\mbox{IACT}} & \multicolumn{2}{|c|}{\mbox{IPACT}} & \multicolumn{2}{|c|}{\mbox{AJS}} 
& \multicolumn{2}{|c|}{$\begin{array}{c}\mbox{Average}\\ \mbox{K-S distance}\end{array}$}\\ \cline{3-12}
& & RWM & TMCMC & RWM & TMCMC & RWM & TMCMC & RWM & TMCMC & RWM & TMCMC\\ \hline

\multirow{3}{*}{2} & 2.4 (opt) & 34.9 & 44.6 & 6.08 & 7.04 & 2.46 & 2.55 & 0.93 & 0.74 & 0.1651 & 0.1657 \\ 
 & 6 (sub-opt) & 18.66 & 29.15 & 7.08 & 8.08 & 2.52 & 2.56 & 0.79 & 0.62 & 0.1659 & 0.1655 \\ 
 \hline
\multirow{3}{*}{5} & 2.4 (opt) & 28.6 & 44.12 & 9.98 & 12.45  & 2.67 & 2.77 & 1.15 & 0.79 & 0.1659 & 0.1664 \\ 
& 6 (sub-opt) & 2.77 & 20.20 & 15.6 & 14.11 & 2.77 & 2.81 & 0.39 & 0.48 & 0.1693 & 0.1674 \\ 
\hline
\multirow{3}{*}{10} & 2.4 (opt) & 25.6 & 44.18 & 15.16 & 18.26 & 2.77 & 2.88 & 1.22 & 0.73 & 0.1667 & 0.1677\\ 
& 6 (sub-opt) & 1.37 & 20.34 & 17.55 & 16.31 & 2.91 & 2.86 & 0.25 & 0.49 & 0.1800 & 0.1688\\ 
\hline
\multirow{2}{*}{100} & 2.4 (opt) & 23.3 & 44.1 & 18.14 & 18.46 & 2.88 & 2.89 & 1.34 & 0.73 & 0.1794 & 0.1671\\ 
& 6 (sub-opt) & 0.32 & 20.6 & 18.62 & 18.25 & 2.89 & 2.88 & 0.26 & 0.69 & 0.1787 & 0.1684\\ \hline
\multirow{2}{*}{200} & 2.4 (opt) & 23.4 & 44.2 & 18.4 & 18.67 & 2.88 & 2.89 & 1.3 & 0.92 & 0.1813 & 0.1735 \\ 
& 6 (sub-opt) & 0.33 & 20.7 & 18.86 & 18.74 & 2.89 & 2.89 & 0.09 & 0.54 & 0.1832 & 0.1755\\ \hline
\end{tabular}
\label{tab:table1}
\end{sidewaystable}

\subsubsection{Observations regarding the results presented in Table \ref{tab:table1}}
As evident from Table~\ref{tab:table1}, TMCMC seems to have a uniformly better 
acceptance rate than RWM for all dimensions and all choices of proposal variances. 
There is sufficient gain in acceptance rate over RWM even for 2 dimensions and the difference 
increases once we move to higher dimensions or consider larger proposal variances. 
That large proposal variance would affect the performance of RWM is intuitively clear, because 
in this case getting an outlying observation in any of the $d$ coordinates becomes more likely.

An interesting observation from Table~\ref{tab:table1} is that even for 2 dimensions, 
our acceptance ratio corresponding to the optimal scaling of $2.4$ is very close to 0.44 and 
it remains close to the optimal value for all the dimensions considered. 
It is interesting to note that 0.44 is also the (non-asymptotic) optimal acceptance rate of RWM 
for one-dimensional proposals in certain settings obtained by minimizing the first order
auto-correlation of the RWM algorithm; see \ctn{Roberts01}, \ctn{Roberts09}.
Since in one dimension additive TMCMC is equivalent to RWM and because the former
is effectively a one-dimensional algorithm irrespective of the actual dimensionality, 
this perhaps intuitively suggests that for TMCMC, the optimal acceptance rate will remain
very close to 0.44 irrespective of dimensionality. 
For RWM however, 
the optimal acceptance rate is quite far from 0.234 for smaller dimensions. From the asymptotics
perspective (setting aside the above argument regarding TMCMC being effectively one-dimensional for
any actual dimension), this demonstrates 
that convergence to the diffusion equation occurs at a much faster rate in TMCMC as compared
to RWM. Hence, even in smaller dimensions a TMCMC user can tune the proposal to achieve
approximately 44\% acceptance rate. Indeed, in low dimensions the tuning exercise is far more easier
than in higher dimensions.

When the scale is changed from the optimum value $2.4$ to the sub-optimal value 6,
we witness very significant drop in the acceptance rates of RWM. Particularly for dimensions 
$d=100$ and $d=200$ the acceptance rate of RWM 
falls off very sharply and becomes almost negligible. 
In keeping with the discussion presented in 
Sections S-3 
and S-4 of the supplement
this indicates how difficult
it can be in the case of general, high-dimensional target distributions, to adjust the RWM proposal
to achieve the acceptance rates between 15\% and 50\%, as suggested by \ctn{Roberts01}.
On the other hand, for any dimension, the 
acceptance rate of TMCMC remains more than 20\%, indicating it is a lot more easier and safer to
tune the TMCMC proposal. 

The measure IACT is uniformly higher for TMCMC for all dimensions 
when the optimal scale is considered. This is to be
expected since the maximum diffusion speed is higher for RWM, and IACT decreases as diffusion
speed increases. However, when the scale is sub-optimal, IACT of TMCMC is uniformly lower than
that of RWM in all dimensions. This is in accordance with the discussion on the lack of robustness 
of RWM and the relative robust behaviour 
of the diffusion speed of TMCMC with respect to scale changes, 
presented in Sections S-3 
and S-4 of the supplement. 
Indeed, the sub-optimal scale choice causes the diffusion speed of RWM to drop sharply, 
increasing the integrated autocorrelation in the process. On the other hand, the diffusion
speed of TMCMC remains relatively more stable, thus not allowing IACT to increase significantly.

Although in the lower dimensions IPACT is slightly higher for TMCMC than for RWM, 
in dimensions 10, 100 and 200, it is slightly lesser for TMCMC when the scale is suboptimal
(for $d=200$ IPACT is almost the same for both the algorithms in the sub-optimal case).

The average jump size, AJS, is uniformly somewhat larger for RWM compared to TMCMC when the scale is optimally
chosen. However, for the sub-optimal scaling, AJS for TMCMC is significantly larger
than those for RWM for dimensions $d=5,10,100,200$. Since in general sub-optimal scaling is to be
expected, as per the discussions in Sections S-3 
and S-4 of the supplement, 
one can expect better exploration (in terms of AJS) of the general, high-dimensional target density, 
by additive TMCMC.

For dimensions $d=100$ and $d=200$, the average K-S distance is smaller for TMCMC with respect to both
optimal and sub-optimal scales. Moreover, for the sub-optimal scale, the K-S distance is uniformly smaller
for TMCMC for all the dimensions considered. Furthermore, note that for the sub-optimal scale, 
as the dimension increases, the difference between the average K-S distances of RWM and TMCMC also increases.
This suggests that at least when the scale is sub-optimal, TMCMC performs increasingly better than 
RWM in terms of better exploration of the target density, as dimension increases. 


\subsubsection{Visualizing the rate of convergence of TMCMC and RWM to the stationary distribution
using Kolmogorov-Smirnov distance}
\label{subsec:ks_distance_conv}

Apart from measuring the performance of the chains after stationarity, 
one might be interested in visualizing how fast the chains converge to the target density 
starting from an initial value. In other words, it is of interest to know 
which of these chains 
have a steeper downward trend with respect to the other, when the respective optimal scales are used for
both the algorithms. 
To investigate this empirically, we again use the K-S distance, plotting the distances with respect
to the iteration number (time). Thus, while the average K-S distance, calculated after the burn-in,
provides an overall measure of how well an MCMC algorithm explores the stationary distribution
after convergence,
a simple plot of the K-S distances with respect to time can help visualize the
rate of convergence of the MCMC algorithm to stationarity.

For smaller dimensions like 2 and 10, 
we did not perceive much difference between the two chains in terms of the plots of the K-S distance. 
But for higher dimensions, we observed a significant improvement in convergence for our TMCMC method 
in comparison with that of RWM. Two instances, for dimensions $d=30$ and $d=100$,
are presented in Figures \ref{fig:fig11} and \ref{fig:fig12}, respectively.
\begin{figure}
\centering
\subfigure [$d=30,~\ell=2.4.$]{ \label{fig:KS1_30}
\includegraphics[trim= 0cm 8cm 0cm 8cm, clip=true, width=17cm,height=8cm]{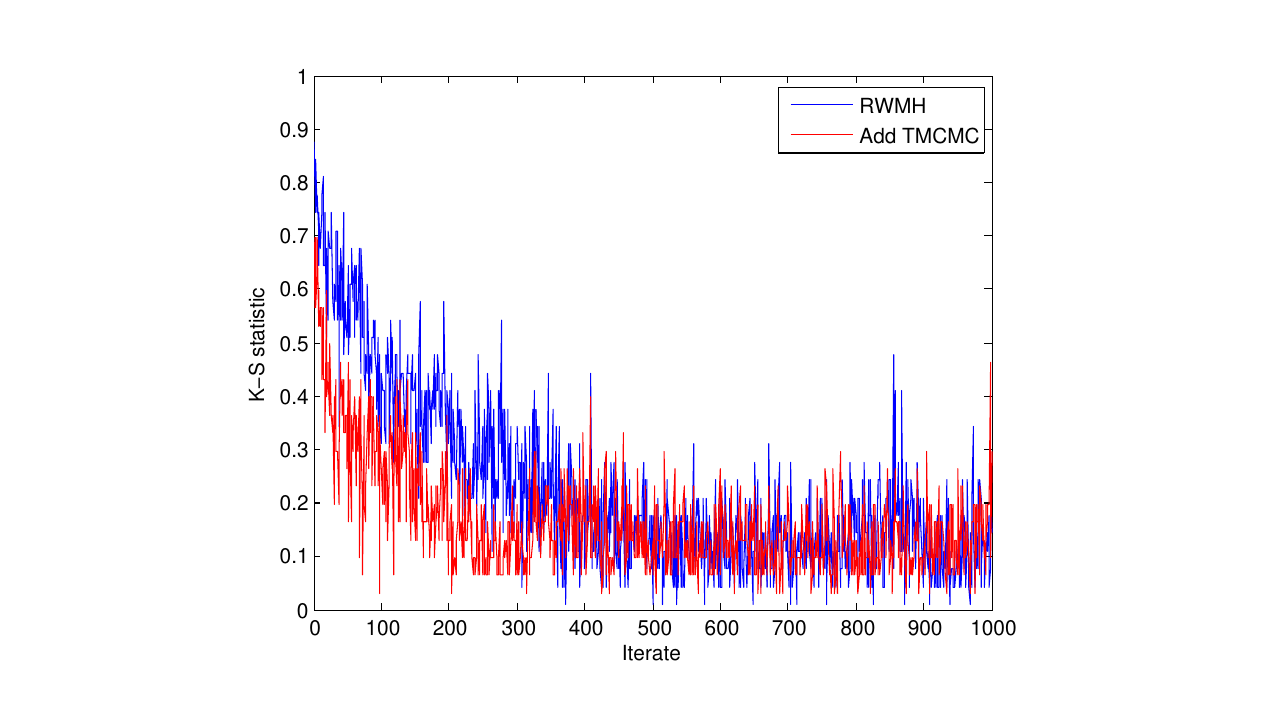}}\\
\subfigure [$d=30,~\ell=6.$]{ \label{fig:KS2_30}
\includegraphics[trim= 0cm 8cm 0cm 8cm, clip=true, width=17cm,height=8cm]{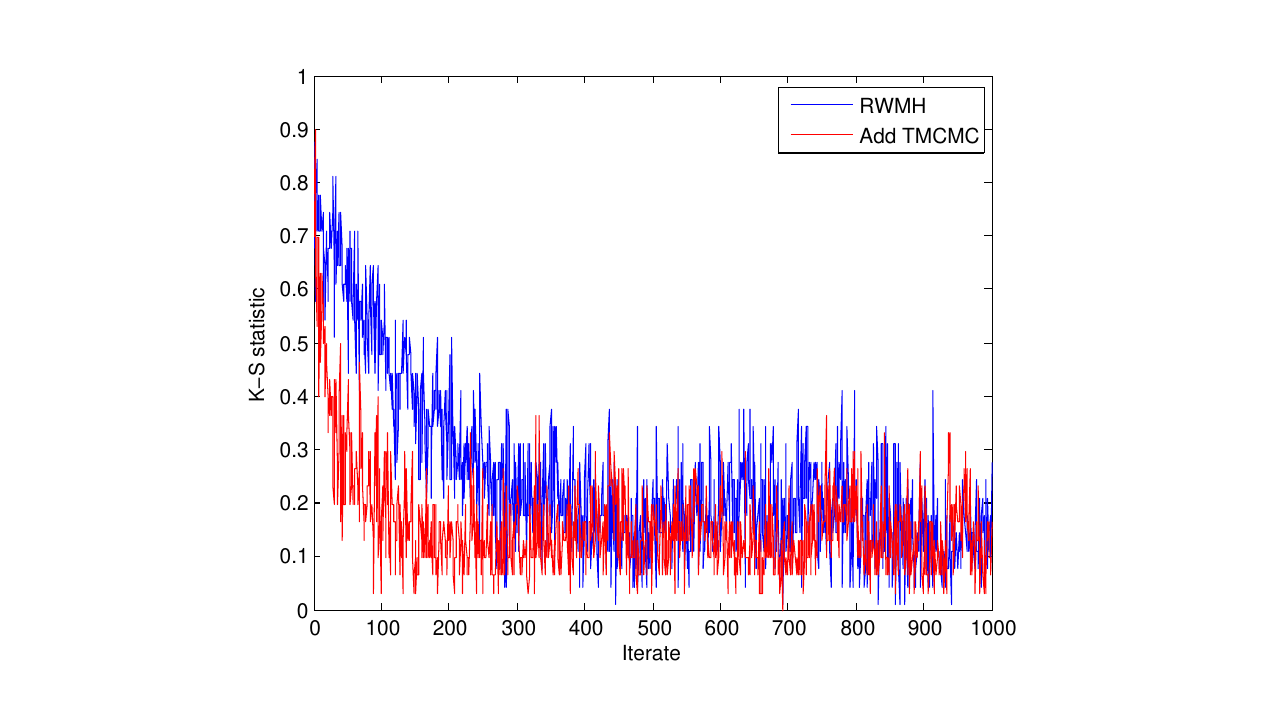}}
\caption{K-S distance comparison before burn-in between the RWM and the TMCMC chains for dimension
$d=30$.}
\label{fig:fig11}
\end{figure}

\begin{figure}
\centering
\subfigure [$d=100,~\ell=2.4.$]{ \label{fig:KS1_100}
\includegraphics[trim= 0cm 8cm 0cm 8cm, clip=true, width=17cm,height=8cm]{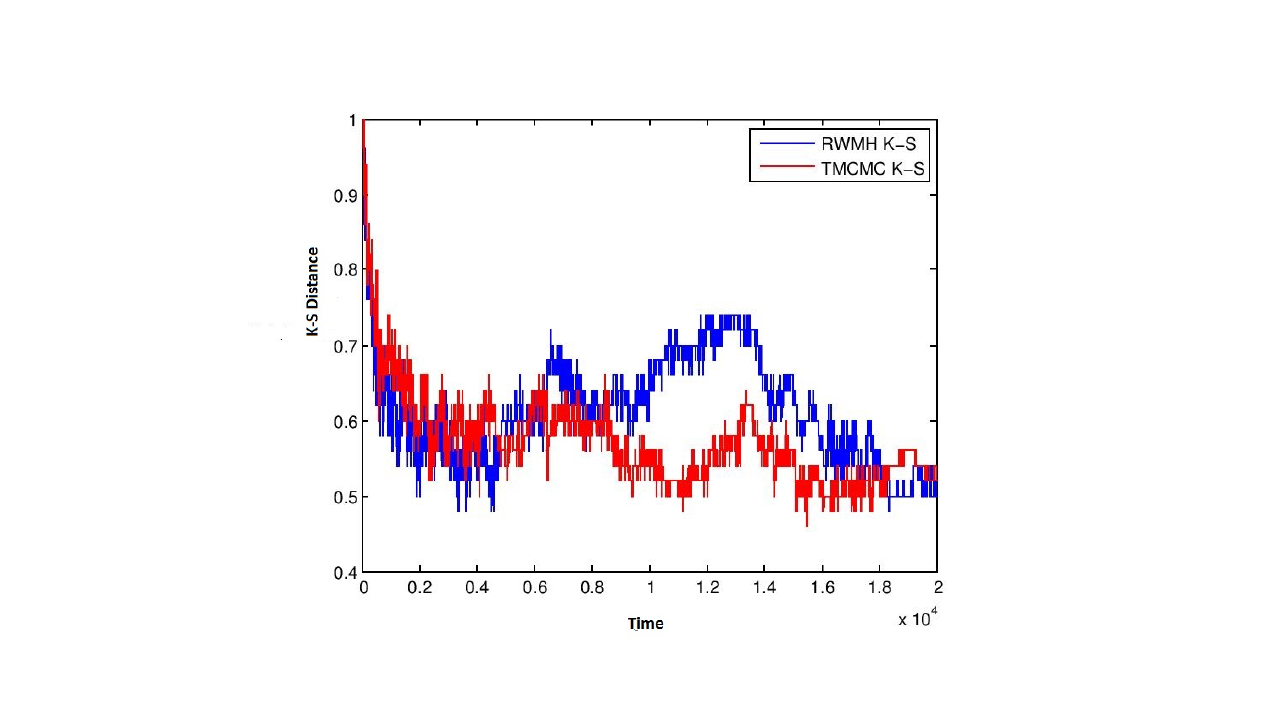}}\\
\subfigure [$d=100,~\ell=6.$]{ \label{fig:KS2_100}
\includegraphics[trim= 0cm 8cm 0cm 8cm, clip=true, width=17cm,height=8cm]{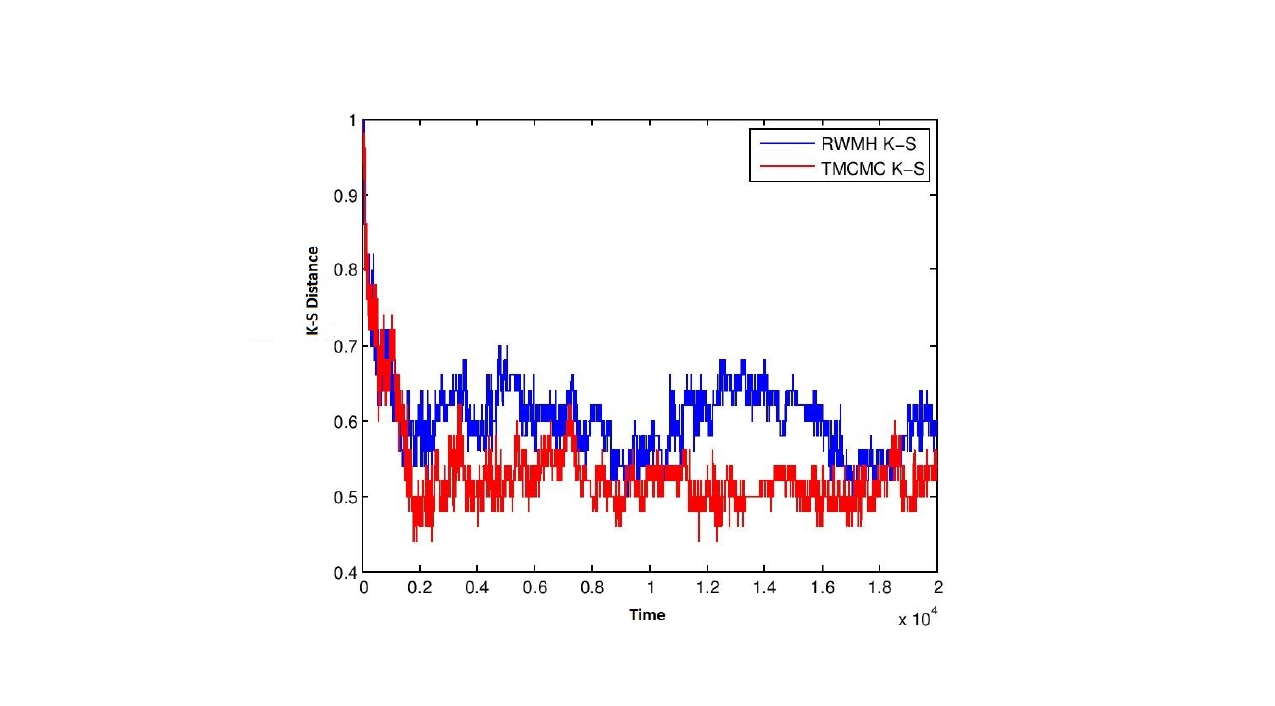}}
\caption{K-S distance comparison before burn-in between the RWM and the TMCMC chains for dimension
$d=100$.}
\label{fig:fig12}
\end{figure}

\subsection{Comparisons between additive TMCMC and RWM in the independent, but non-identical set-up}
\label{subsec:non_iid}

We now compare additive TMCMC with RWM under an instance of independent, but non-identical
situation provided in \ctn{Bedard08b}.
In particular, we assume the target distribution to have independent normal components with all the means zero,
and variances given by 
$\theta^{-2}(d) = (d^{-1/5},d^{-1/5},3,d^{-0.5},3, d^{-0.5},\ldots,3, d^{-0.5})$.
For our purpose, 
we set $d=50$ and implement 30 chains each for additive TMCMC and RWM, each chain run for
$10,000$ iterations. For any given iteration, and for both TMCMC and RWM, we then compute the K-S distances based on the
30 chains, which are then compared. 





Note that $\theta^{-2}(d)$ consists of three forms of coordinates, namely, $\theta^{-2}(d)=3$,
$\theta^{-2}(d)=d^{-0.5}$, and $\theta^{-2}(d)=d^{-\frac{1}{5}}$. These are associated with three
distinct marginal target densities. In the figures below, for these three distinct marginals, 
we separately compare TMCMC and RWM using K-S distances.
That is, Figures \ref{fig:bedard3}, \ref{fig:bedard4} and \ref{fig:bedard5} compare TMCMC and RWM when 
$\theta^{-2}(d)=3$, $\theta^{-2}(d)=d^{-0.5}$, and $\theta^{-2}(d)=d^{-\frac{1}{5}}$, respectively. 

All the three instances, Figures \ref{fig:bedard3}, \ref{fig:bedard4} and \ref{fig:bedard5}, clearly demonstrate the superiority
of additive TMCMC over RWM. 

\begin{figure}[h]
\centering
\includegraphics[width=12cm,height=7cm]{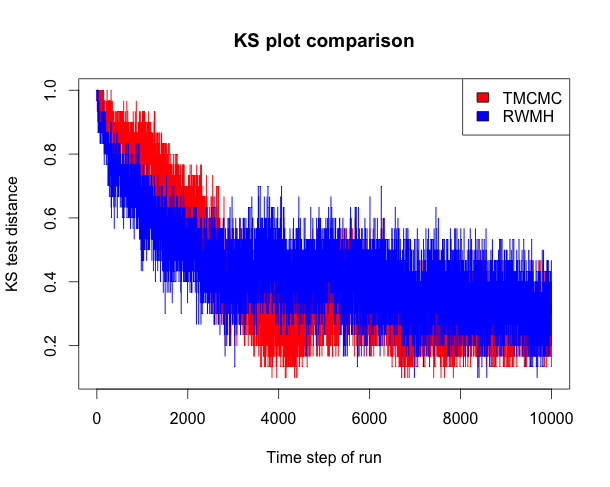}
\caption{Independent but non-identical case: K-S comparisons between additive TMCMC and RWM for $\theta^{-2}(d)=3$.}
\label{fig:bedard3}
\end{figure}

\begin{figure}[h]
\centering
\includegraphics[width=12cm,height=7cm]{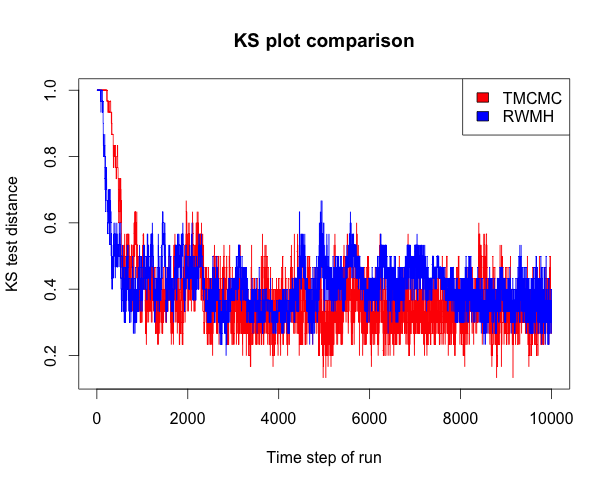}
\caption{Independent but non-identical case: K-S comparisons between additive TMCMC and RWM for $\theta^{-2}(d)=d^{-0.5}$.}
\label{fig:bedard4}
\end{figure}

\begin{figure}[h]
\centering
\includegraphics[width=12cm,height=7cm]{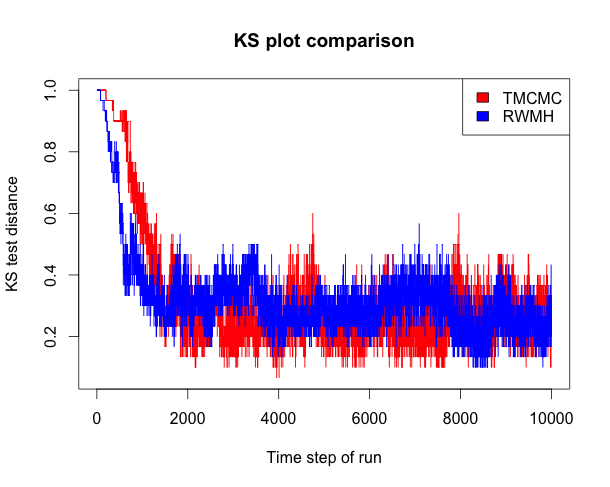}
\caption{Independent but non-identical case: K-S comparisons between additive TMCMC and RWM 
for $\theta^{-2}(d)=d^{-\frac{1}{5}}$.}
\label{fig:bedard5}
\end{figure}

\subsection{Comparisons between additive TMCMC and RWM in the dependent set-up}
\label{subsec:dependent}
We now compare additive TMCMC and RWM in the dependent set-up, as (\ref{eq:dep3}). That is, here we 
consider target densities of the type
$$\pi_d (x_d) = \exp(-x_d^{T} M_d x_d) \prod_{i=1}^{d} \frac{1}{\lambda_i} \phi \left(\frac{x_i}{\lambda_i}\right).$$
For our purpose, we consider the following forms of $\blambda=(\lambda_1,\ldots,\lambda_d)$ and $M_d$:
$$\blambda = \alpha\left (1, \frac{1}{d}, \frac{1}{d^2}, \cdots, \frac{1}{d^d} \right),$$
and
$$M_d = \gamma(1-\rho) \bI_d + \rho \bone_d \bone^{T}_d,$$ where $\bI_d$ is the identity matrix of order $d$ and
$\bone_d$ is the $d$-component vector with all elements $1$.
We report the results of our experiments with three set-ups: (a) $\rho=0.3$, $\alpha=0.1$, $\gamma=100$;
(b) $\rho=0.3$, $\alpha=0.01$, $\gamma=100$, and (c) $\rho=0.3$, $\alpha=0.01$, $\gamma=1000$. The corresponding
K-S based comparisons are provided in Figures \ref{fig:mattingly1}, \ref{fig:mattingly2} and \ref{fig:mattingly3}.
In all the three experiments TMCMC very convincingly outperforms RWM.
With various other choices of $\rho$, $\alpha$ and $\gamma$ we observed similar results (not reported due to lack of space).
\begin{figure}[h]
\centering
\includegraphics[width=12cm,height=7cm]{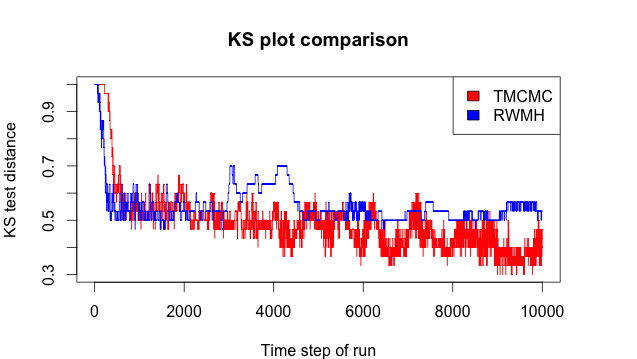}
\caption{Dependent case: K-S comparisons between additive TMCMC and RWM for $\rho=0.3$, $\alpha=0.1$, $\gamma=100$.}
\label{fig:mattingly1}
\end{figure}
\begin{figure}[h]
\centering
\includegraphics[width=12cm,height=7cm]{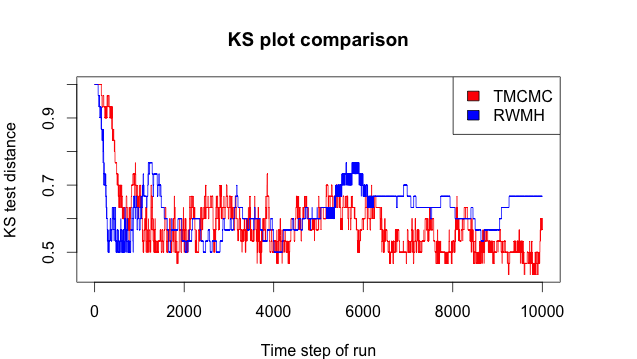}
\caption{Dependent case: K-S comparisons between additive TMCMC and RWM for $\rho=0.3$, $\alpha=0.01$, $\gamma=100$.}
\label{fig:mattingly2}
\end{figure}
\begin{figure}[h]
\centering
\includegraphics[width=12cm,height=7cm]{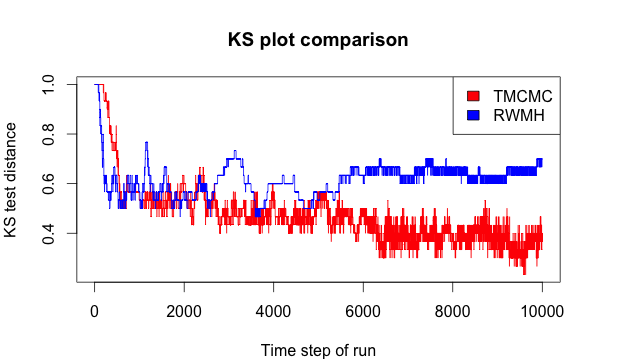}
\caption{Dependent case: K-S comparisons between additive TMCMC and RWM for $\rho=0.3$, $\alpha=0.01$, $\gamma=1000$.}
\label{fig:mattingly3}
\end{figure}

\subsection{Discussion on simulation studies with multivariate Cauchy and multivariate $t$ as target densities}
\label{subsec:multivariate_simstudy}
We have so far restricted ourselves to comparisons between TMCMC and RWM when the target distribution
is Gaussian ($i.i.d.$, independent but non-identical, and dependent).
However, \ctn{Dey15} conduct comparative
studies between the two algorithms when the targets are multivariate Cauchy and multivariate $t$.
Briefly, they compare the algorithms with respect to K-S distance,
when the location vectors and scale matrices are $\bmu=\bzero_d$ and 
$\bSigma=diag\{0.7\bone_d'\}+0.3\bone_d\bone_d'$, respectively, where 
$\bzero_d$ is a $d$-dimensional vector with all elements $0$.
For multivariate $t$, they choose $\nu=10$ degrees of freedom. 
For both RWMH and additive TMCMC they consider the scale of the proposal distribution to be 2.4, and illustrate
the methods for dimension $d=50$.

In fact, they compare the performances of RWM and additive TMCMC by two methods. In one method, since the above-mentioned
target densities are not in the super-exponential family (see \ctn{Dey15} and the references therein),
they transform them to superexponential distributions using a diffeomorphism proposed by \ctn{Johnson12},
obtain samples from the transformed target densities using RWM and TMCMC, and then give inverse transformations
to the simulated values so that they finally represent the original multivariate Cauchy and multivariate $t$.
The other method is direct application of the algorithms to the original targets. 
In other words, \ctn{Dey15} also apply RWM and TMCMC directly to multivariate Cauchy and multivariate $t$,
without resorting to diffeomorphism, and compare their performances. However, they also note that there are
substantial gains with respect to mixing properties in the diffeomorphism based approach.

In either case, \ctn{Dey15} demonstrate that additive TMCMC outperforms RWM quite significantly in the case
of the dependent, high-dimensional target densities. They even compare their performances in the case of
$i.i.d.$ Cauchy and $t$ (with $\nu=10$ degrees of freedom) distributions and reach the same conclusions.

\section{Comparison of additive TMCMC and RWM in the case of a real, spatial data set}
\label{sec:comparison_real}

We now compare additive TMCMC and RWM with respect to a real, spatial dataset on radionuclide count 
data on Rongelap Island, analysed by \ctn{Diggle98} using
a Bayesian hierarchical spatial model. This dataset and the model has been used subsequently by 
\ctn{Chris06} and \ctn{Dutta11}, to evaluate performances of Metropolis-Hastings and TMCMC algorithms, respectively.

\subsection{Model and prior specification}
\label{subsec:model_prior}

For $i=1,\ldots,157$, 
\ctn{Diggle98} model the radionuclide count data as
$$Y_i\sim Poisson(M_i),$$ where $$M_i=t_i\exp\{\beta+S(\bx_i)\};$$ $t_i$ is the duration of observation
at location $\bx_i$, $\beta$ is an unknown parameter and $S(\cdot)$ is a zero-mean Gaussian process
with isotropic covariance function of the form 
$$Cov\left(S(\bz_1),S(\bz_2)\right)=\sigma^2\exp\{-\left(\alpha\parallel\bz_1-\bz_2\parallel\right)^{\delta}\}$$
for any two locations $\bz_1,\bz_2$. In the above, $\parallel\cdot\parallel$ denotes
the Euclidean distance between two locations, and $(\sigma^2, \alpha, \delta)$ are unknown parameters.
Following \ctn{Chris06} we set $\delta=1$, and assume uniform priors on the entire parameter space corresponding to
$(\beta, \log(\sigma^2), \log(\alpha))$. 
Thus, there are $160$ parameters to be updated in each iteration of additive TMCMC and RWM.

\subsection{Optimal scaling}
\label{subsec:optimal_scaling}

Note that the likelihood times the prior in this case can be approximately expressed as (\ref{eq:dep3}), that is,
although the Poisson likelihood is expressible in the form $\exp(-\Psi^d(x^d))$, the Gaussian process prior
for $S(\cdot)$ does not of course admit the form $\prod_{i=1}^{d} \frac{1}{\lambda_i} \phi \left(\frac{x_i}{\lambda_i}\right)$
because of its dependence structure. Hence, this is an instance of a target density which does not fall within
the class of densities for which optimal scaling theory has been developed. As is recommended in general cases,
one may attempt tuning the parameters to approximately achieve the optimal acceptance rate.
But this is a difficult task because of the large dimensionality, as already discussed. A far more important
cause for concern is that, even if one succeeds in approximating the optimal acceptance rate, the corresponding
scales will generally still be sub-optimal, because of the existence of many solutions such that the
optimal acceptance rate holds. 
Indeed, we devise a method for approximately obtaining the optimal acceptance rates, but show that
the corresponding scales lead to really poor performance of RWM, while thanks to the robustness property
of additive TMCMC, the latter yields much reasonable performance. Details follow.

\subsubsection{Pilot TMCMC for facilitating approximate optimal scaling}
\label{subsubsec:pilot_run}
We first consider a pilot TMCMC run consisting of $11\times 10^6$ iterations, with the same TMCMC algorithm
used by \ctn{Dutta11}. In other words, for the pilot run we draw $\epsilon\sim N(0,1)\mathbb I(\epsilon>0)$, 
and consider the following additive transformations: 
\begin{align}
T(\beta,\epsilon)&=\beta\pm 2\epsilon,\notag\\
T(\log(\sigma^2),\epsilon)&=\log(\sigma^2)\pm 5\epsilon,\notag\\
T(\log(\alpha),\epsilon)&=\log(\alpha)\pm 5\epsilon,\notag\\
T(S(\bx_i),\epsilon)&=S(\bx_i)\pm 2\epsilon; \ \ \mbox{for} \ \ i=1,\ldots,157,\notag
\end{align}
where $``+"$ and $``-"$ occur with probability $1/2$ each.

After discarding the first $10^6$ iterations as burn-in, we then store one TMCMC sample in every $100$ iterations
to yield $10^5$ thinned TMCMC realizations. With these stored realizations, we then obtain the empirical variance-covariance
matrix of the $160$ unknowns and store the $160$ eigenvalues of the matrix. 

\subsubsection{Approximately optimal acceptance rates of additive TMCMC and RWM using the stored eigenvalues}
\label{subsubsec:optimal_eigenvalues}

For $i=1,\ldots,160$, letting $\theta_i$ denote the unknowns, and $\lambda_i$, 
the corresponding eigenvalues, we then consider the
following additive transformations for TMCMC: 
$$T(\theta_i,\epsilon)=\theta_i\pm c_{TMCMC}\sqrt{\frac{2\lambda_i\ell^2_{opt,TMCMC}}{d}}\epsilon,$$
where $\epsilon\sim N(0,1)\mathbb I(\epsilon>0)$, $``+"$ and $``-"$ occur with probability $1/2$ each,
$\ell_{opt,TMCMC}=1.715$ (see (\ref{eq:l5})), and $c_{TMCMC}$ is a tuning parameter for adjusting the acceptance rate to about
$44\%$. It turned out, after setting $c_{TMCMC}=0.95$, that the empirical acceptance rate obtained from a TMCMC
run of length $1.01\times 10^8$, after discarding the first $1.7\times 10^7$ iterations as burn-in, 
is very accurately approximated as $0.439$. 
Implementing this TMCMC algorithm, we store one in every $100$ realizations
after the burn-in to obtain $8.5\times 10^5$ thinned additive TMCMC realizations from the posterior distribution.
The total time for the implementation took $46$ hours and $51$ minutes on a 64 bit machine
with CPU MHz $1600$ and about $8$ GB memory.

For RWM, we consider the following proposal:
$$T(\theta_i,\epsilon_i)=\theta_i+c_{RWM}\sqrt{\frac{2\lambda_i\ell^2_{opt,RWM}}{d}}\epsilon_i,$$
where $\epsilon\stackrel{i.i.d.}{\sim} N(0,1)$, $\ell_{opt,RWM}=1.715$, and $c_{RWM}=0.95=c_{TMCMC}$. We obtain the empirical
acceptance rate from a RWM run of length $1.01\times 10^8$, after discarding the first $1.7\times 10^7$ iterations as burn-in,
as approximately $0.228$.  
We implement this RWM algorithm, storing 
one in every $100$ realizations after the burn-in to obtain $8.5\times 10^5$ thinned RWM realizations from the posterior.
The total time for the implementation took $46$ hours and $53$ minutes
on the same machine on which TMCMC was implemented.

\subsection{Results of comparison}
\label{subsection:results_comparison}

Figures \ref{fig:tmcmc_rongelap} and \ref{fig:rwm_rongelap} show the trace plots
of the stored realizations obtained by TMCMC and RWM, respectively, after a further thinning
of size $300$. 
Such substantial further thinning is required to facilitate effortless visual comparison of the autocorrelation
plots for TMCMC and RWM shown in Figure \ref{fig:acf_rongelap}.
In Figures \ref{fig:tmcmc_rongelap} and \ref{fig:rwm_rongelap} it is worth observing that, RWM, composed of
$160$ $\epsilon_i$'s in this example, is prone to require a large number of iterations to return to any given set with 
positive posterior probability, once it leaves it. 
On the other hand, TMCMC marches off to convergence much faster
than RWM, exploiting its more localized move types thanks to a single $\epsilon$. 
This insight is more formalized by the comparison of the associated autocorrelation plots shown in
Figure \ref{fig:acf_rongelap}. It is obvious that TMCMC significantly outperforms RWM in terms of 
autocorrelations in all the cases. 

Since scaling is directly related to autocorrelation (see Sections S-3 and S-4 of the supplement),  
it is clear that poor scaling of RWM in comparison with additive TMCMC is the reason for the relatively poor
performance of the former. 
Indeed, even though we could approximately achieve the desired acceptance rates, 
there are many solutions for the scales, given the same acceptance rate; 
in fact, selecting reasonable scales gets increasing difficult with increasing dimensions. Thus,
it is highly unlikely that the chosen scales are even reasonable in this high-dimensional example, 
for either TMCMC or RWM. As a result it makes sense to conclude with respect to the autocorrelations that, 
in this real data study, sensitivity of RWM with respect to optimal scales is the reason 
for its relatively poor performance, while robustness of TMCMC in this regard is the reason for its 
quite reasonable performance.

Thus, in this real data example, additive TMCMC very clearly and very convincingly outperforms RWM.

\begin{figure}
\centering
\subfigure [Trace plot of $\alpha$.]{ \label{fig:alpha_tmcmc}
\includegraphics[width=4.5cm,height=4.5cm]{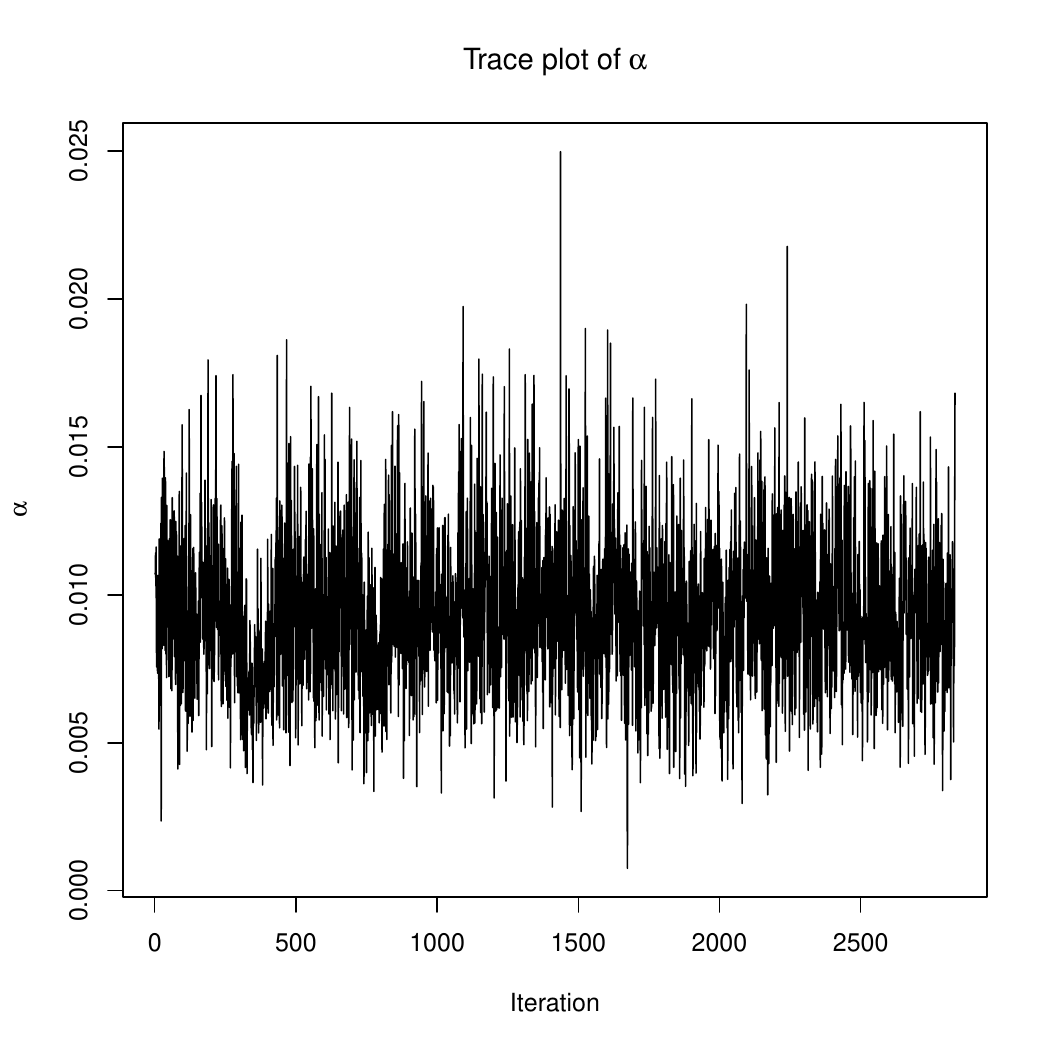}}
\hspace{2mm}
\subfigure [Trace plot of $\beta$.]{ \label{fig:beta_tmcmc}
\includegraphics[width=4.5cm,height=4.5cm]{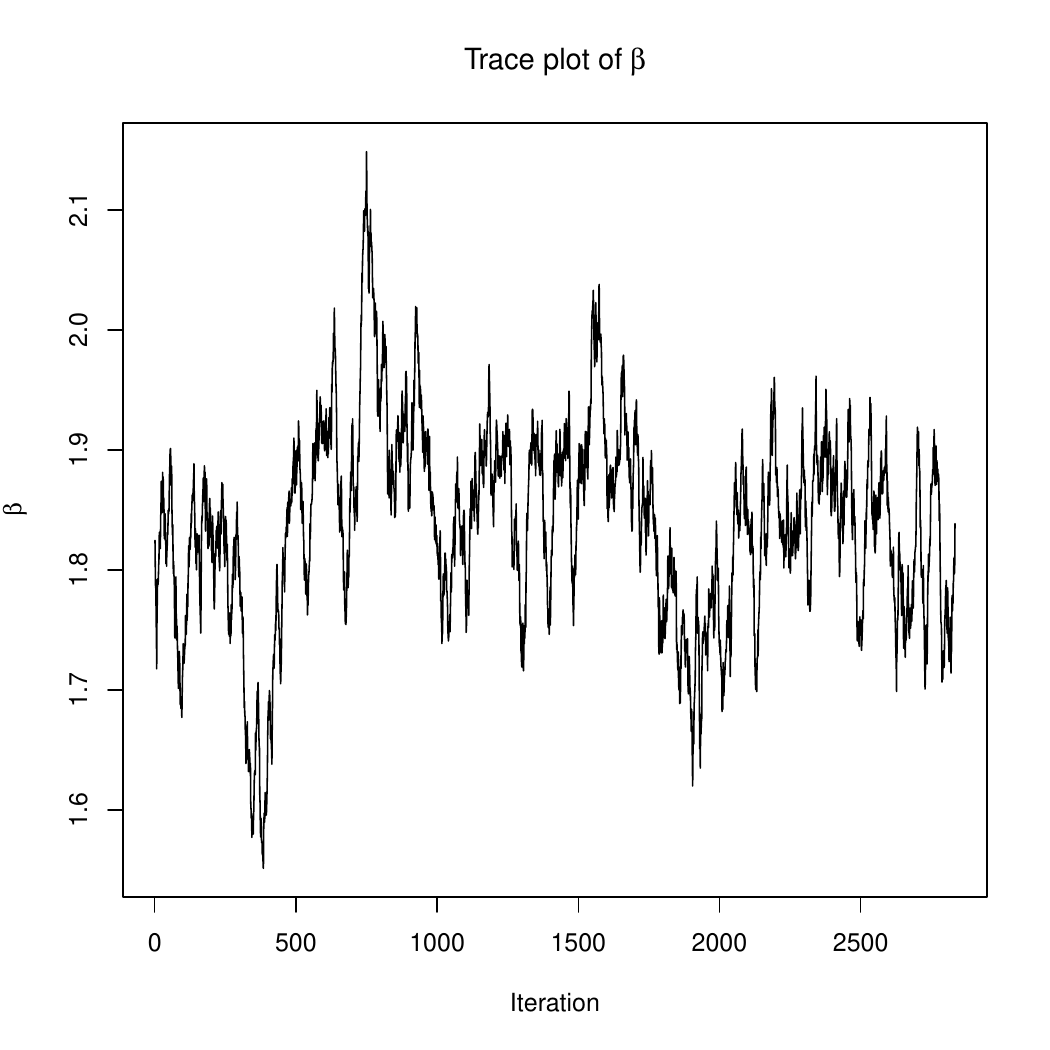}}
\hspace{2mm}
\subfigure [Trace plot of $\sigma^2$.]{ \label{fig:sigmasq_tmcmc}
\includegraphics[width=4.5cm,height=4.5cm]{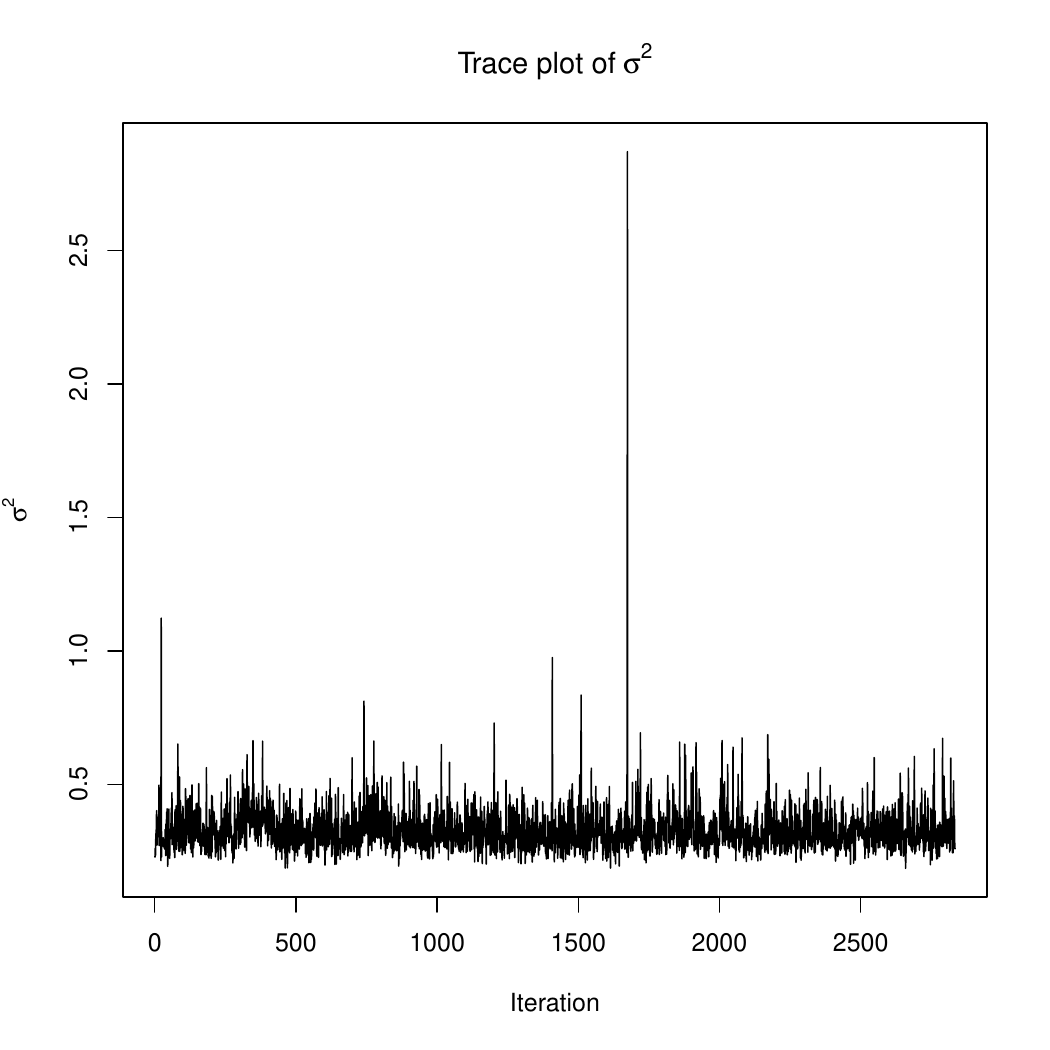}}\\
\vspace{2mm}
\subfigure [Trace plot of $S(\bx_1)$.]{ \label{fig:S1_tmcmc}
\includegraphics[width=4.5cm,height=4.5cm]{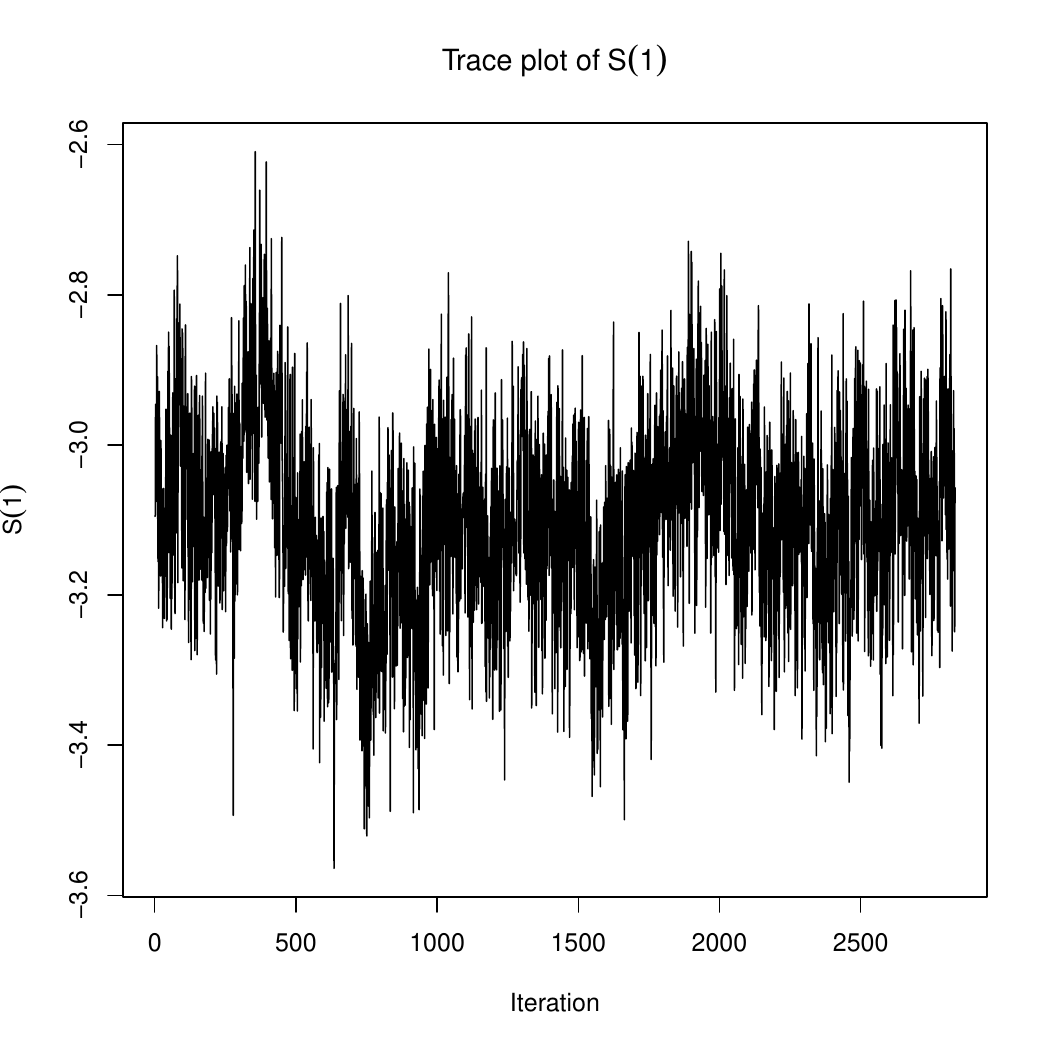}}
\hspace{2mm}
\subfigure [Trace plot of $S(\bx_{10})$.]{ \label{fig:S10_tmcmc}
\includegraphics[width=4.5cm,height=4.5cm]{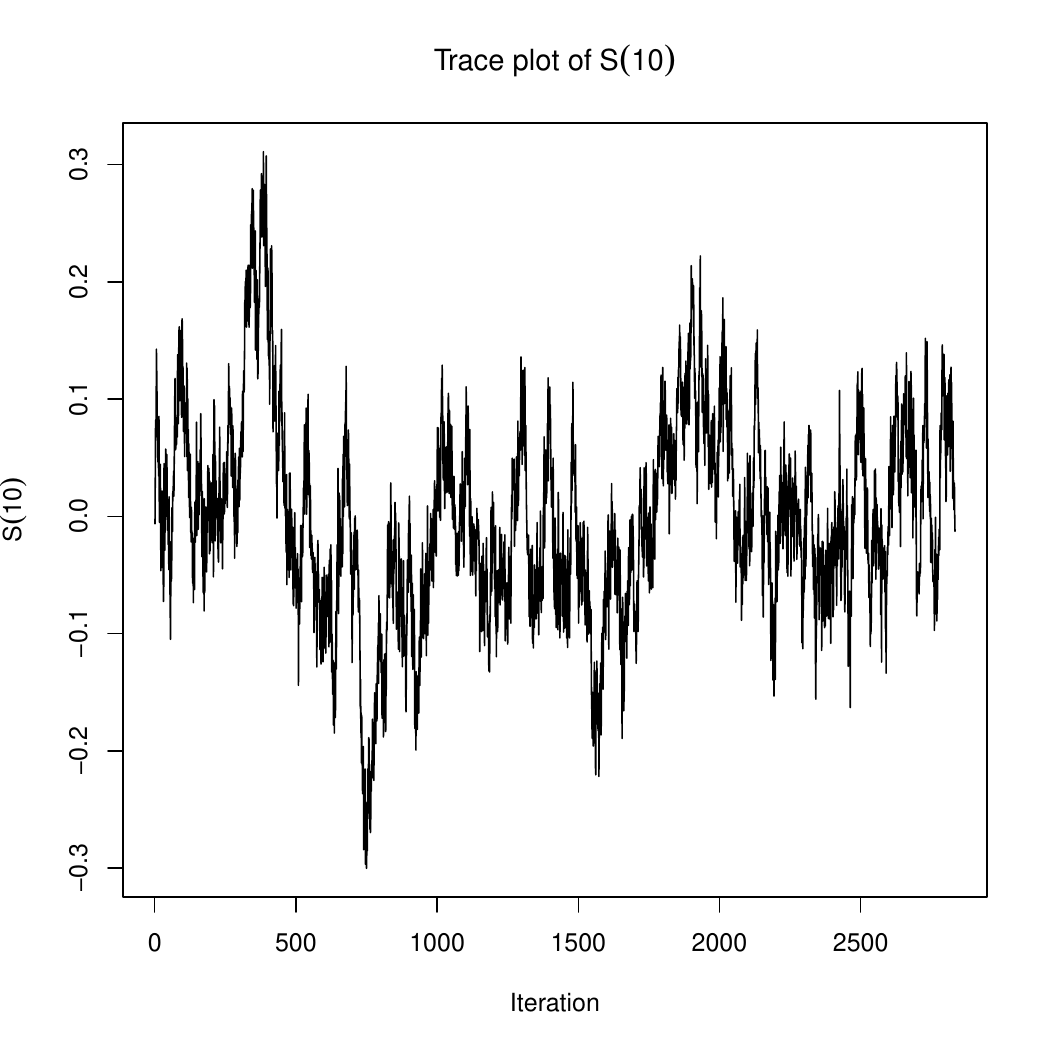}}
\hspace{2mm}
\subfigure [Trace plot of $S(\bx_{50})$.]{ \label{fig:S50_tmcmc}
\includegraphics[width=4.5cm,height=4.5cm]{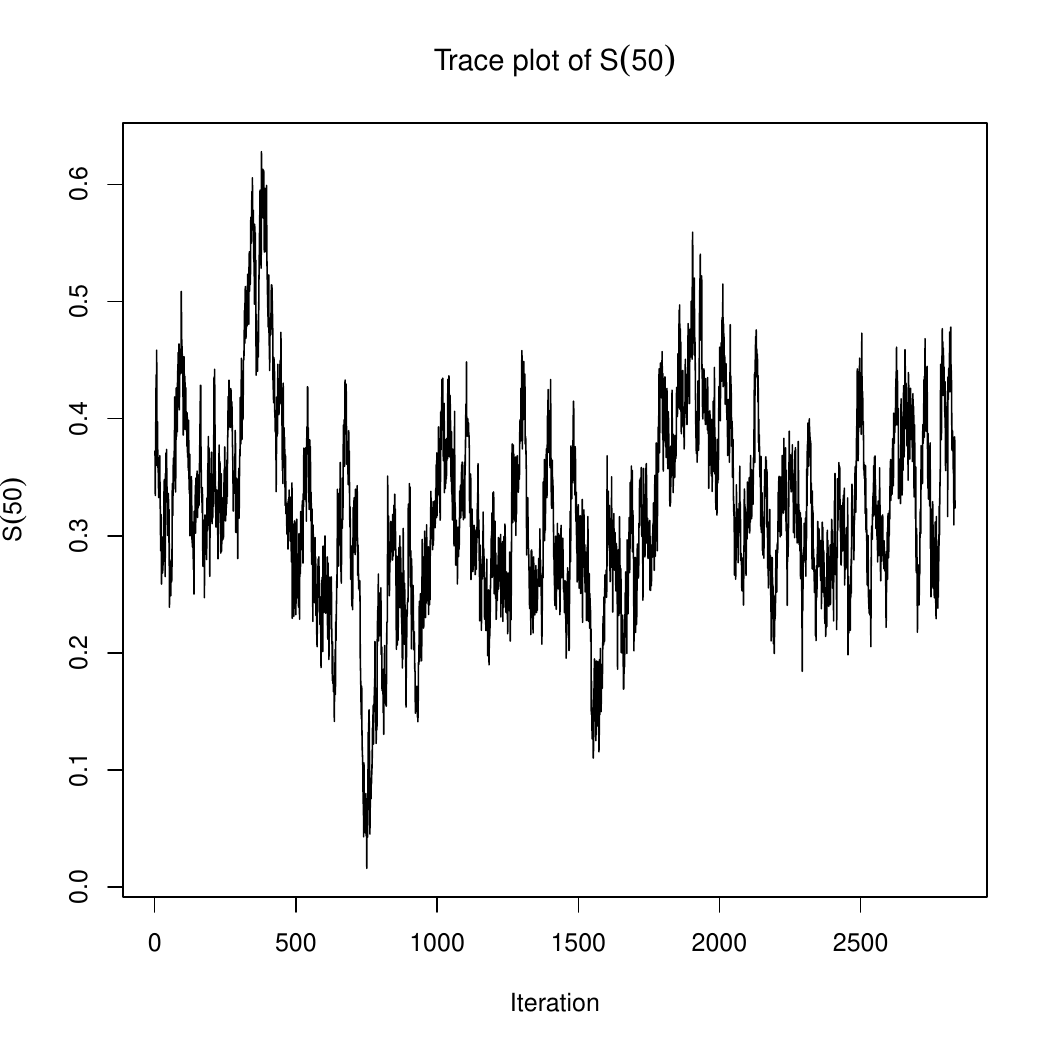}}\\
\vspace{2mm}
\subfigure [Trace plot of $S(\bx_{100})$.]{ \label{fig:S100_tmcmc}
\includegraphics[width=4.5cm,height=4.5cm]{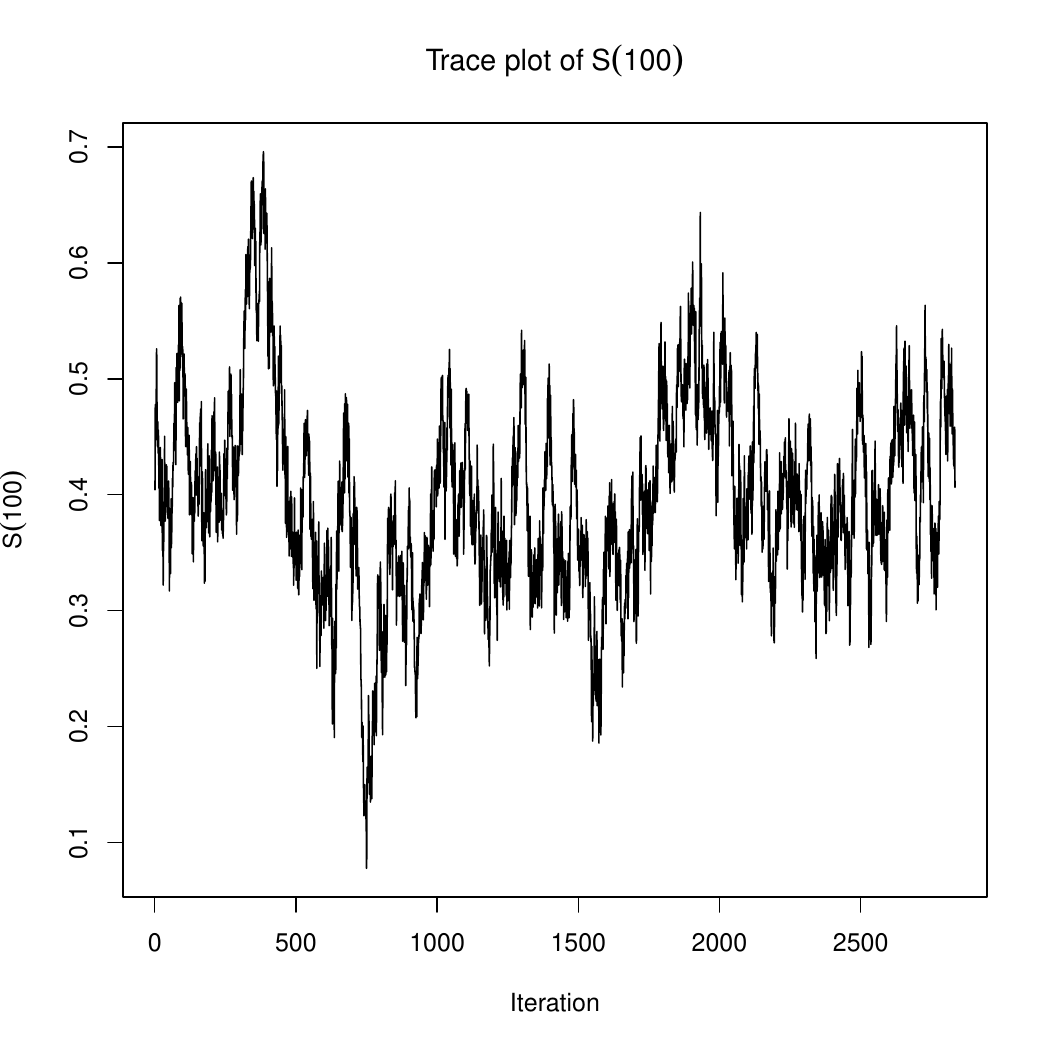}}
\hspace{2mm}
\subfigure [Trace plot of $S(\bx_{150})$.]{ \label{fig:S150_tmcmc}
\includegraphics[width=4.5cm,height=4.5cm]{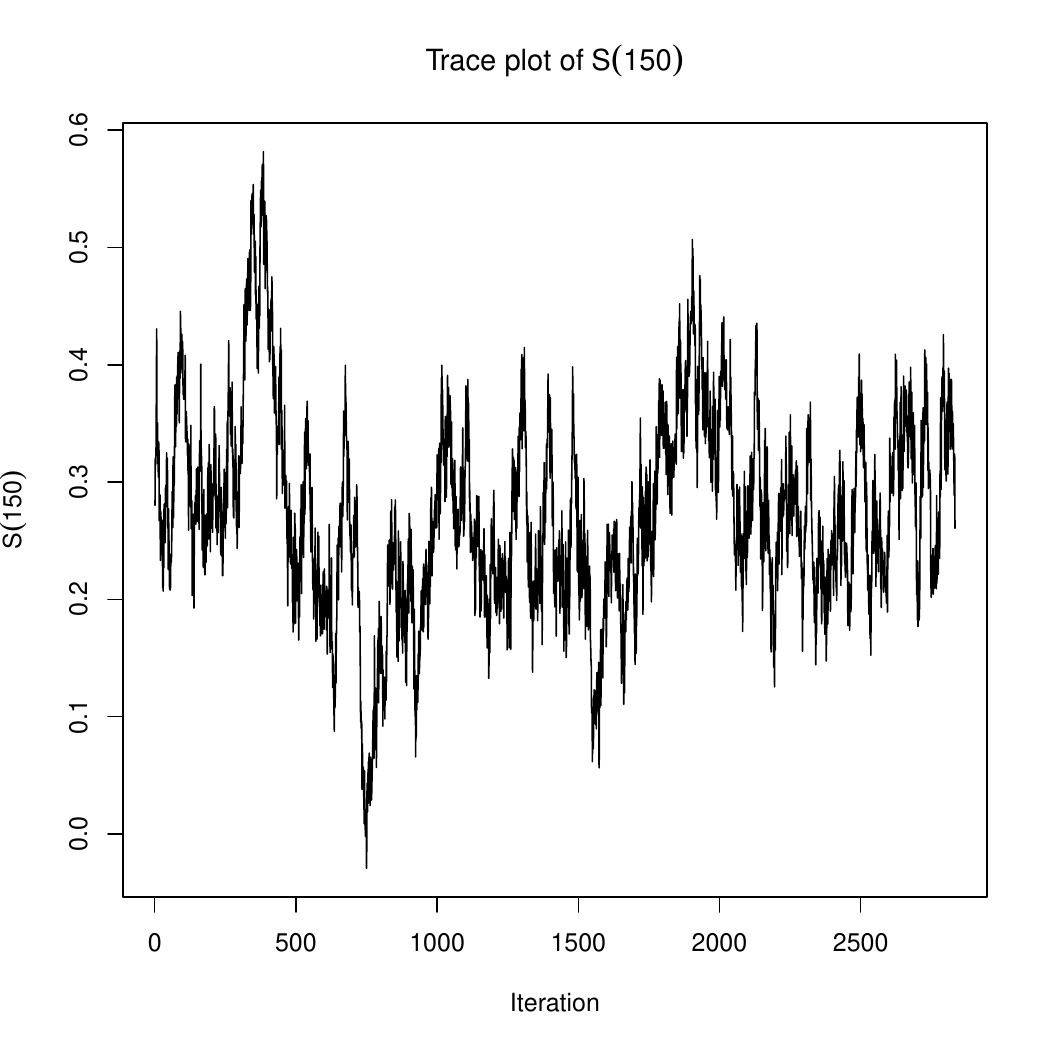}}
\hspace{2mm}
\subfigure [Trace plot of $S(\bx_{157})$.]{ \label{fig:S157_tmcmc}
\includegraphics[width=4.5cm,height=4.5cm]{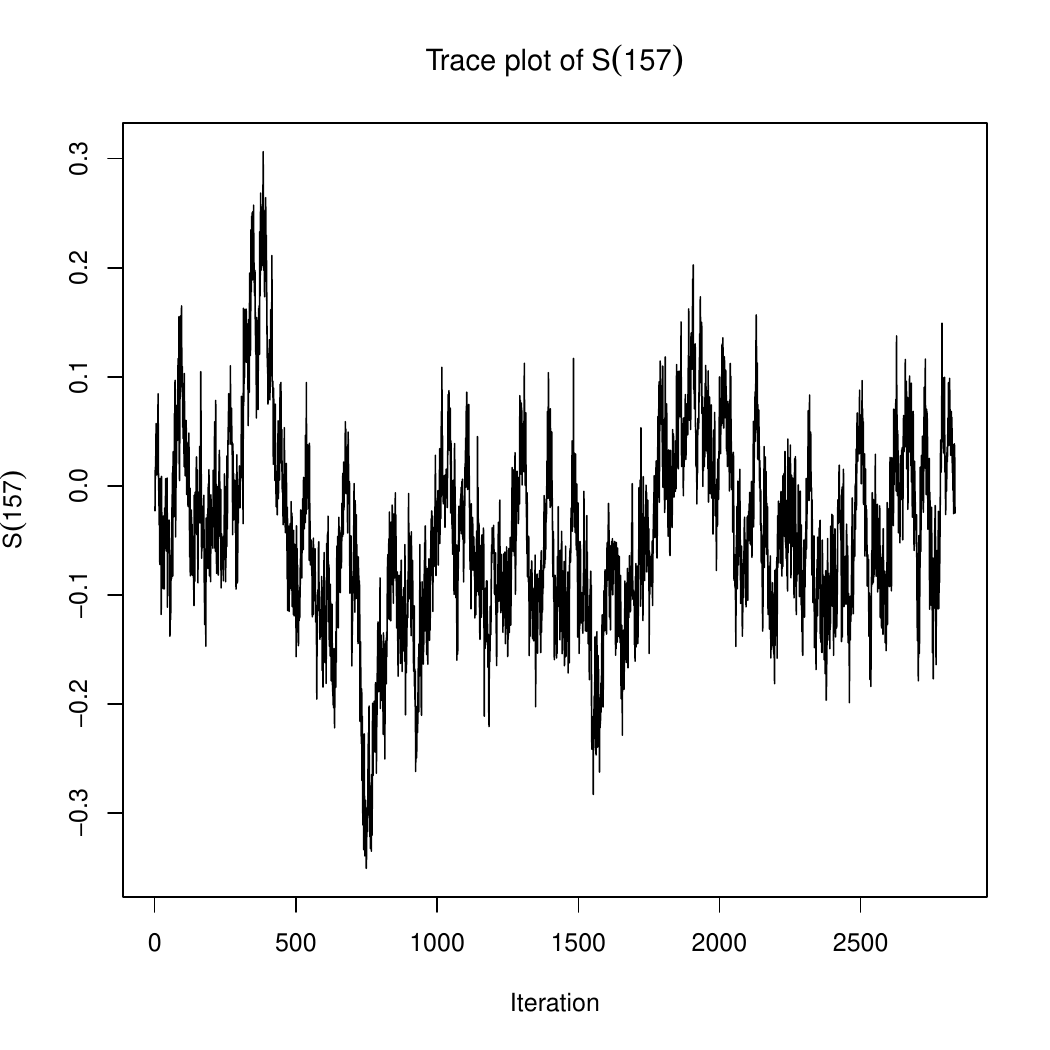}}
\caption{Rongelap island data: TMCMC based trace plots.}
\label{fig:tmcmc_rongelap}
\end{figure}

\begin{figure}
\centering
\subfigure [Trace plot of $\alpha$.]{ \label{fig:alpha_rwm}
\includegraphics[width=4.5cm,height=4.5cm]{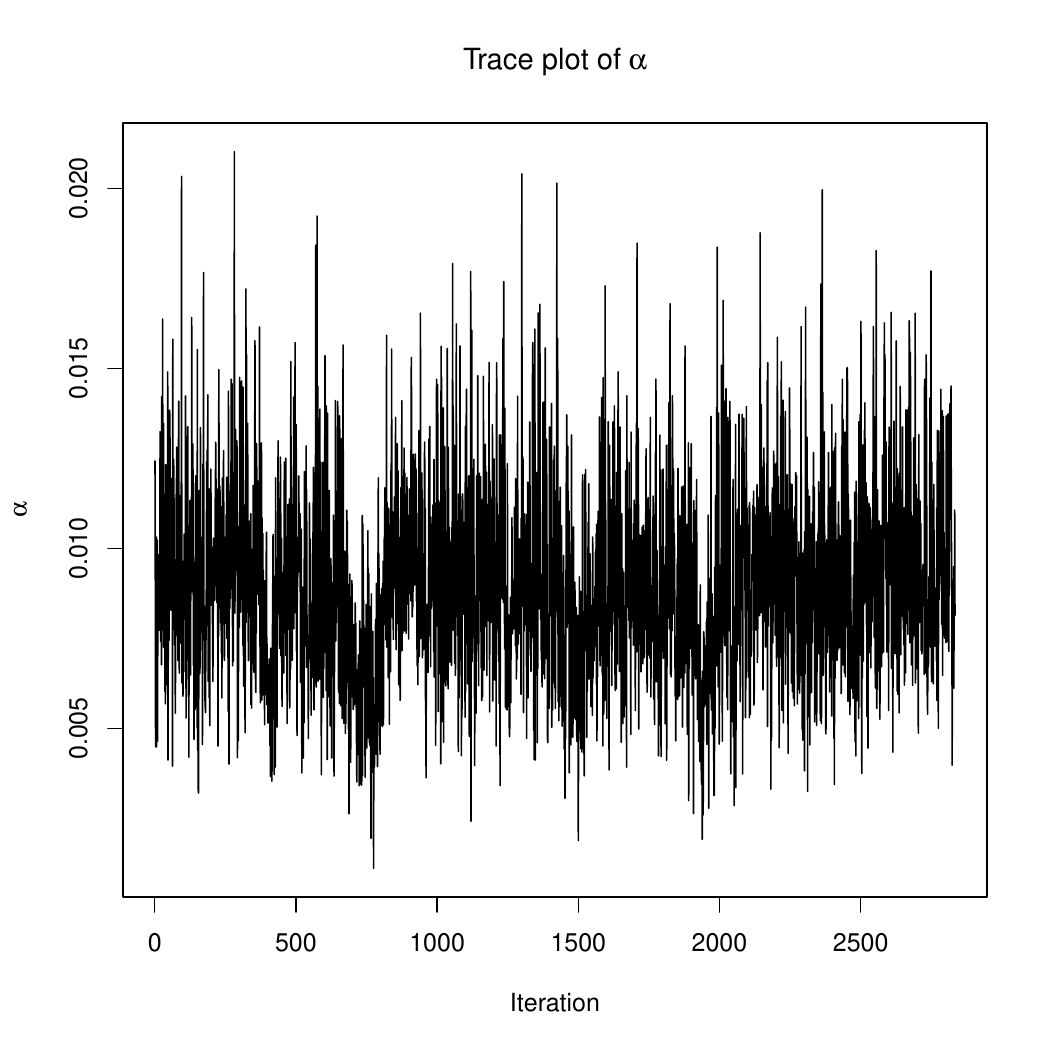}}
\hspace{2mm}
\subfigure [Trace plot of $\beta$.]{ \label{fig:beta_rwm}
\includegraphics[width=4.5cm,height=4.5cm]{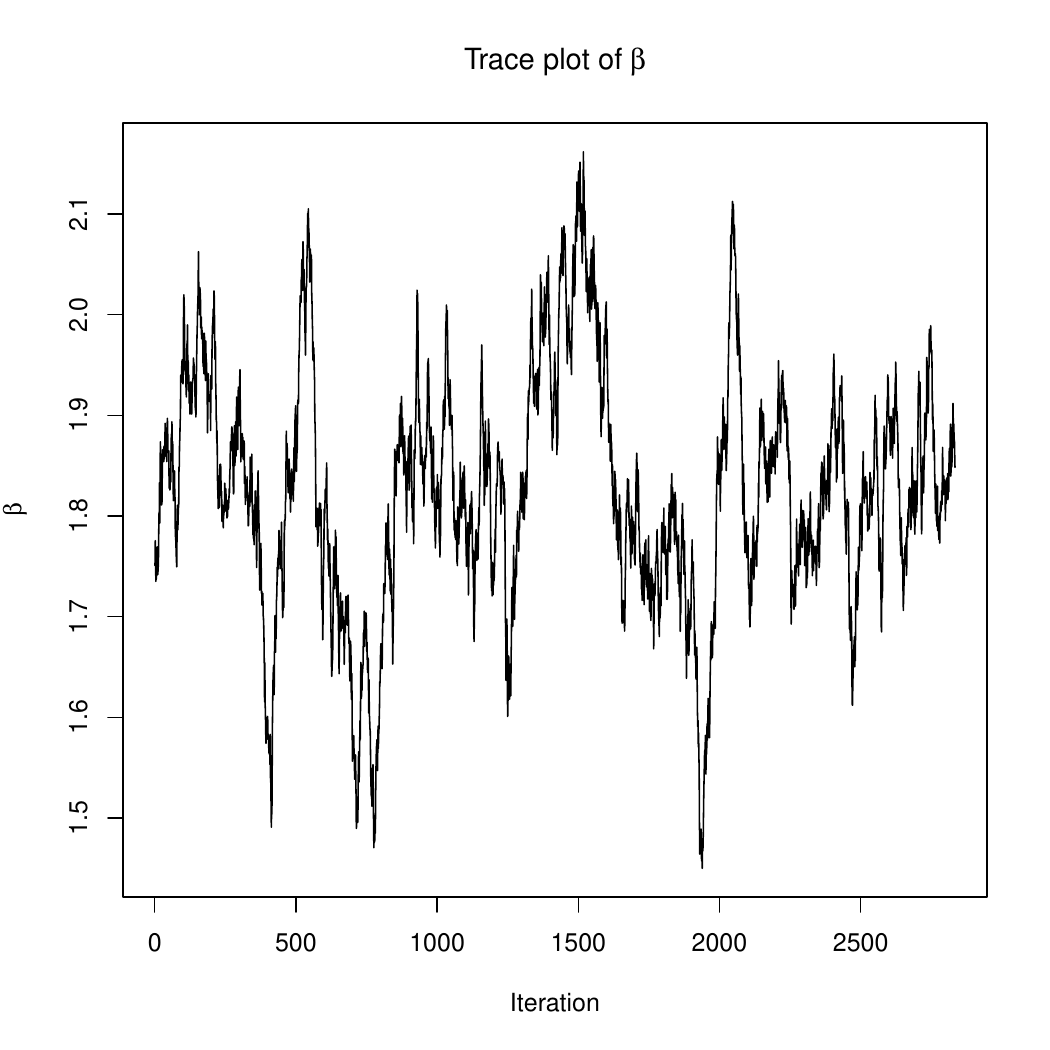}}
\hspace{2mm}
\subfigure [Trace plot of $\sigma^2$.]{ \label{fig:sigmasq_rwm}
\includegraphics[width=4.5cm,height=4.5cm]{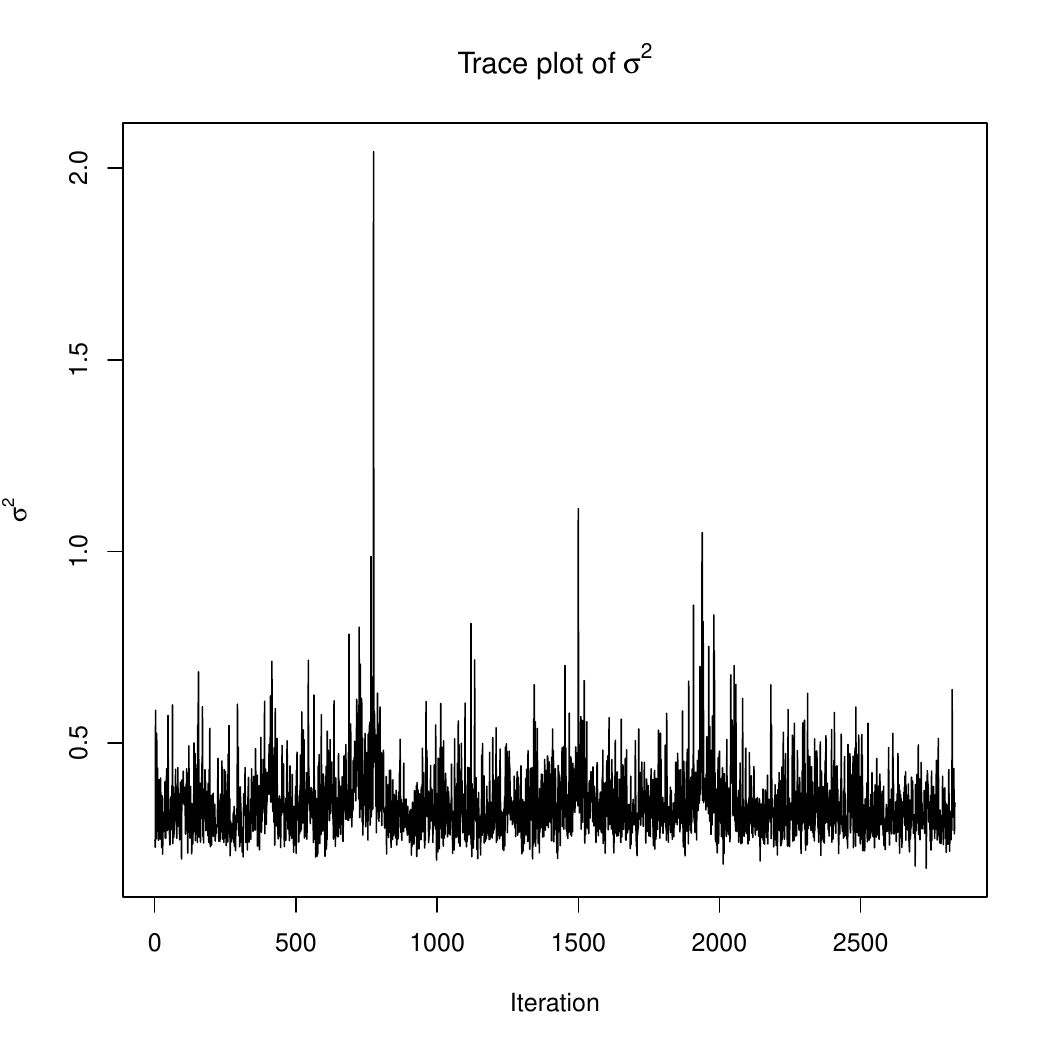}}\\
\vspace{2mm}
\subfigure [Trace plot of $S(\bx_1)$.]{ \label{fig:S1_rwm}
\includegraphics[width=4.5cm,height=4.5cm]{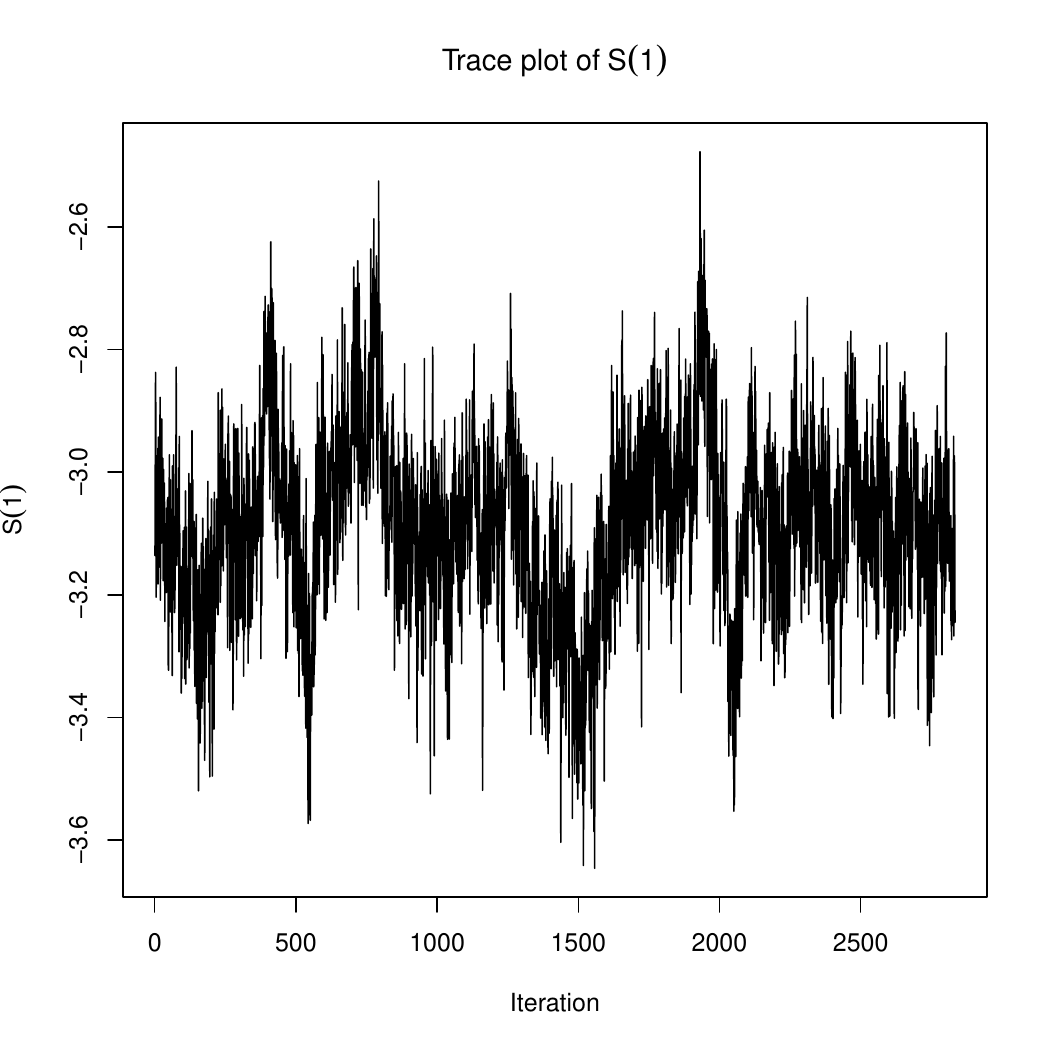}}
\hspace{2mm}
\subfigure [Trace plot of $S(\bx_{10})$.]{ \label{fig:S10_rwm}
\includegraphics[width=4.5cm,height=4.5cm]{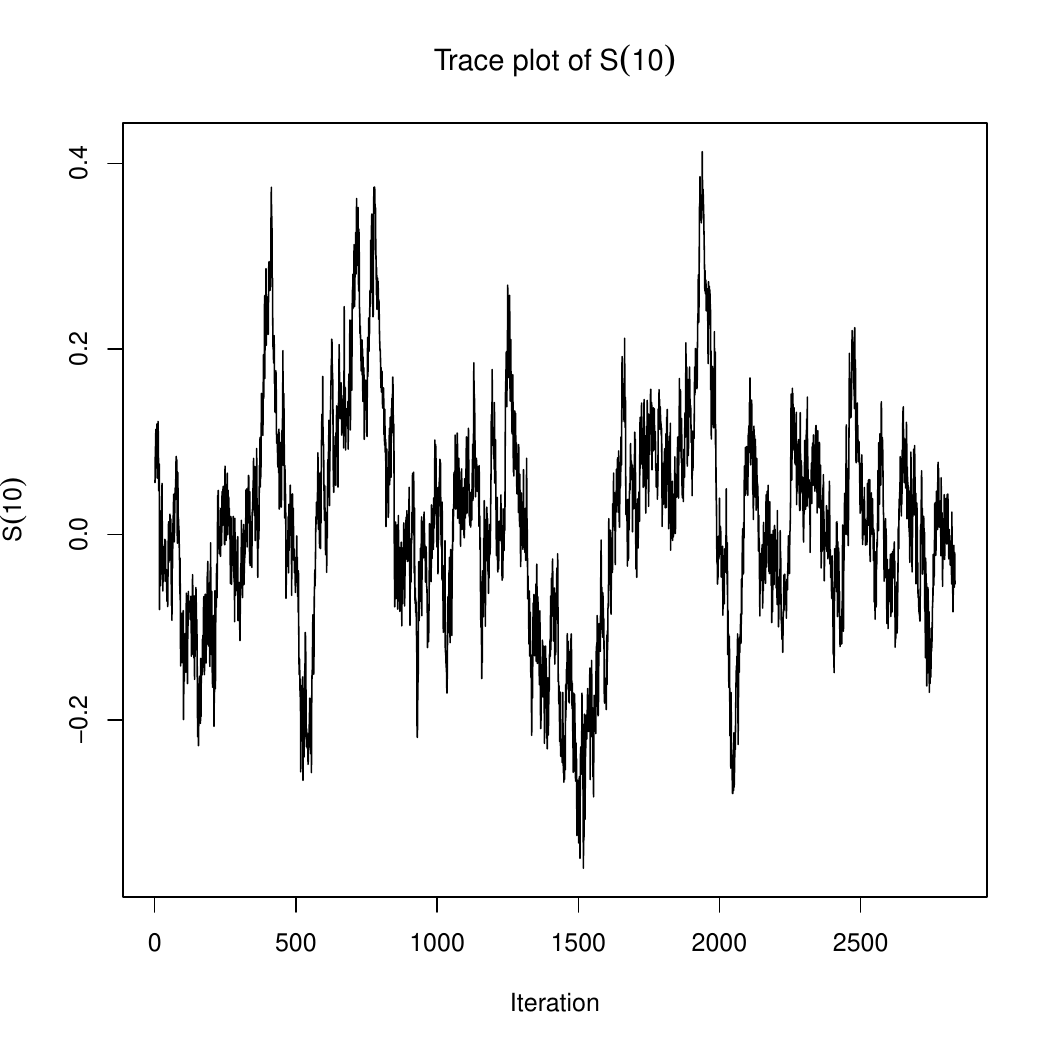}}
\hspace{2mm}
\subfigure [Trace plot of $S(\bx_{50})$.]{ \label{fig:S50_rwm}
\includegraphics[width=4.5cm,height=4.5cm]{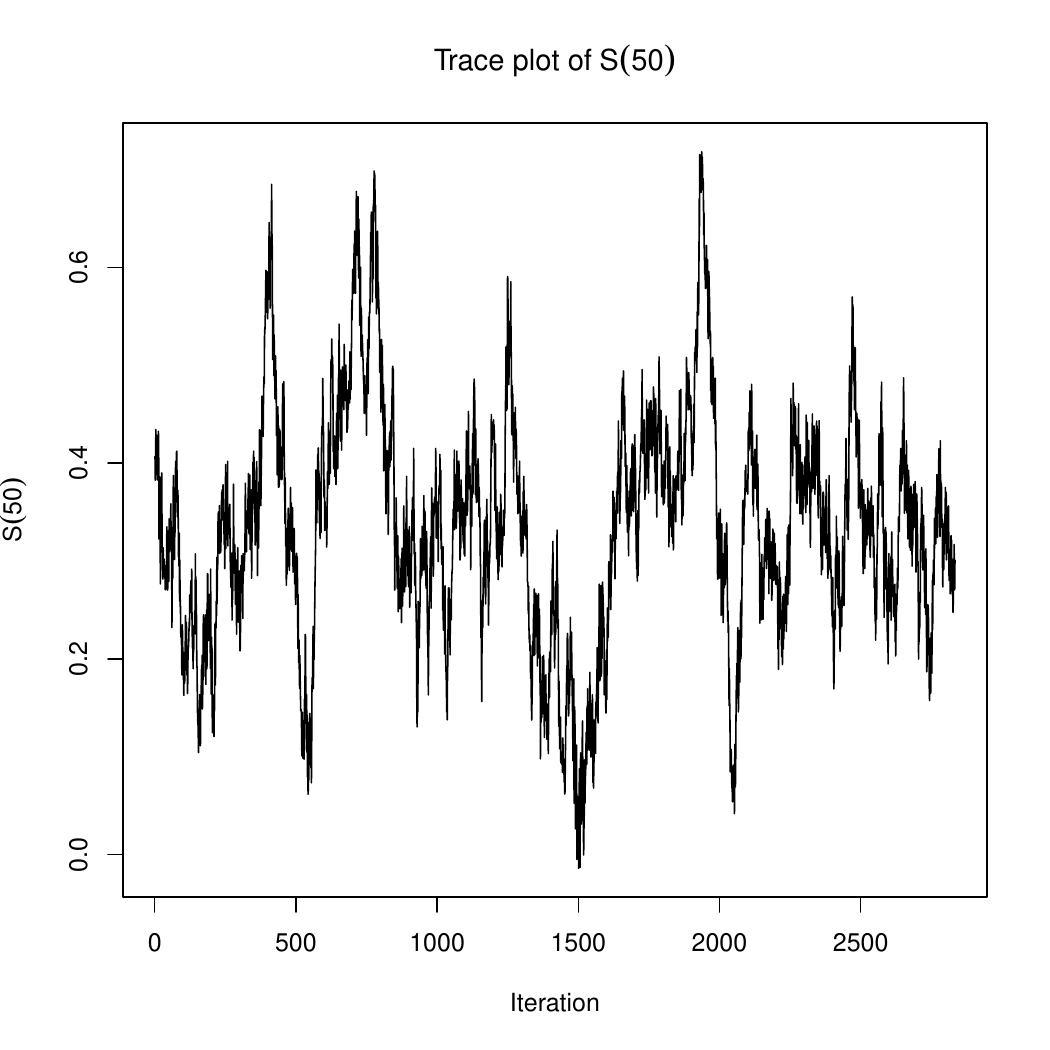}}\\
\vspace{2mm}
\subfigure [Trace plot of $S(\bx_{100})$.]{ \label{fig:S100_rwm}
\includegraphics[width=4.5cm,height=4.5cm]{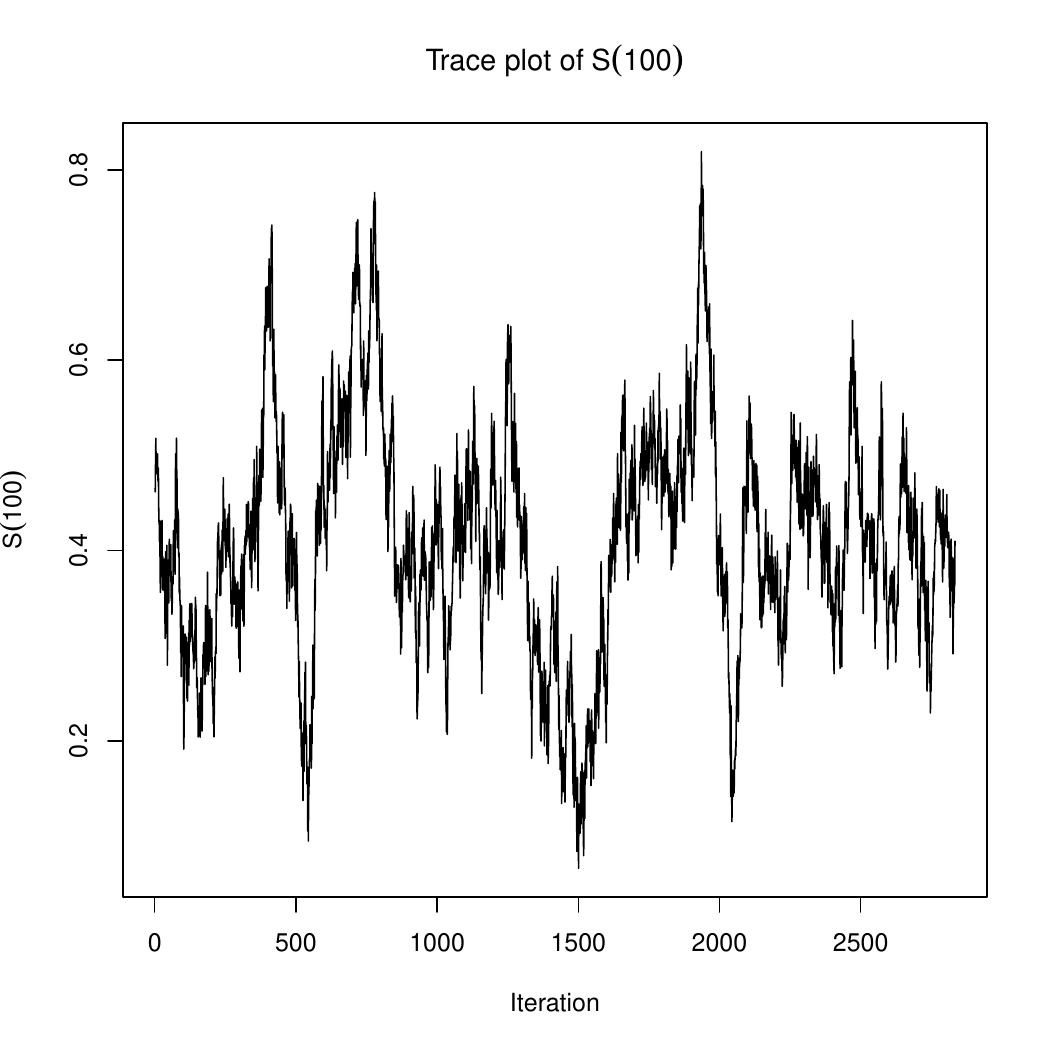}}
\hspace{2mm}
\subfigure [Trace plot of $S(\bx_{150})$.]{ \label{fig:S150_rwm}
\includegraphics[width=4.5cm,height=4.5cm]{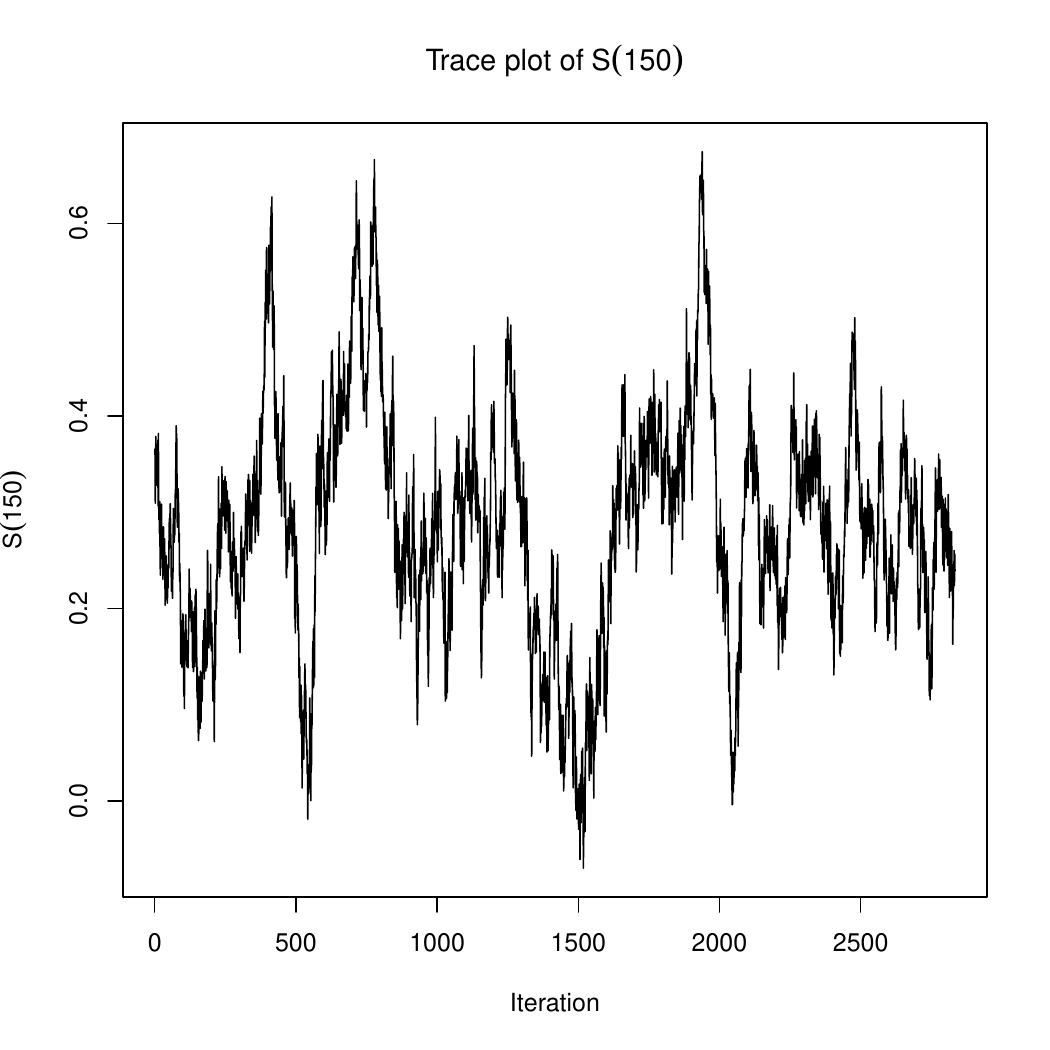}}
\hspace{2mm}
\subfigure [Trace plot of $S(\bx_{157})$.]{ \label{fig:S157_rwm}
\includegraphics[width=4.5cm,height=4.5cm]{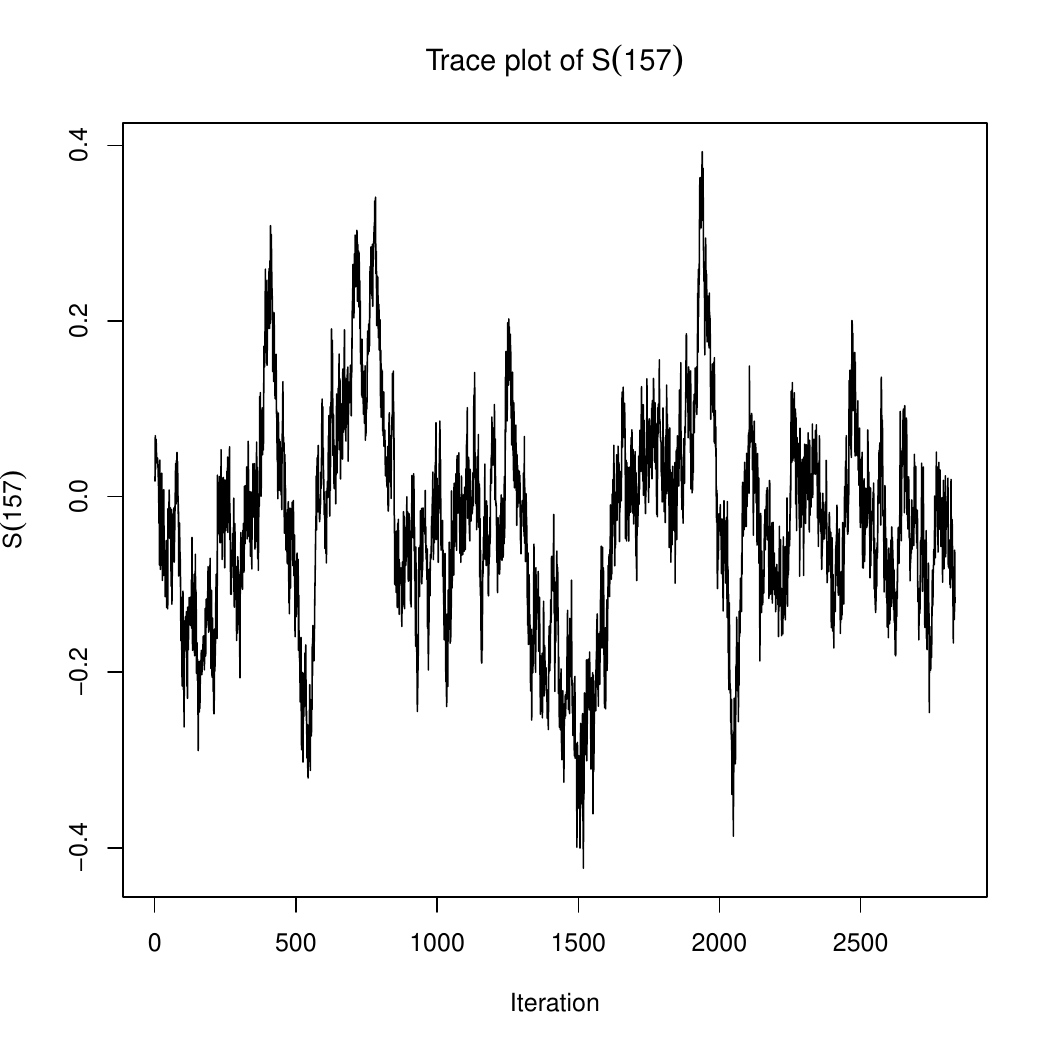}}
\caption{Rongelap island data: RWM based trace plots.}
\label{fig:rwm_rongelap}
\end{figure}

\begin{figure}
\centering
\subfigure [ACF comparison for $\alpha$.]{ \label{fig:acf_alpha}
\includegraphics[width=4.5cm,height=4.5cm]{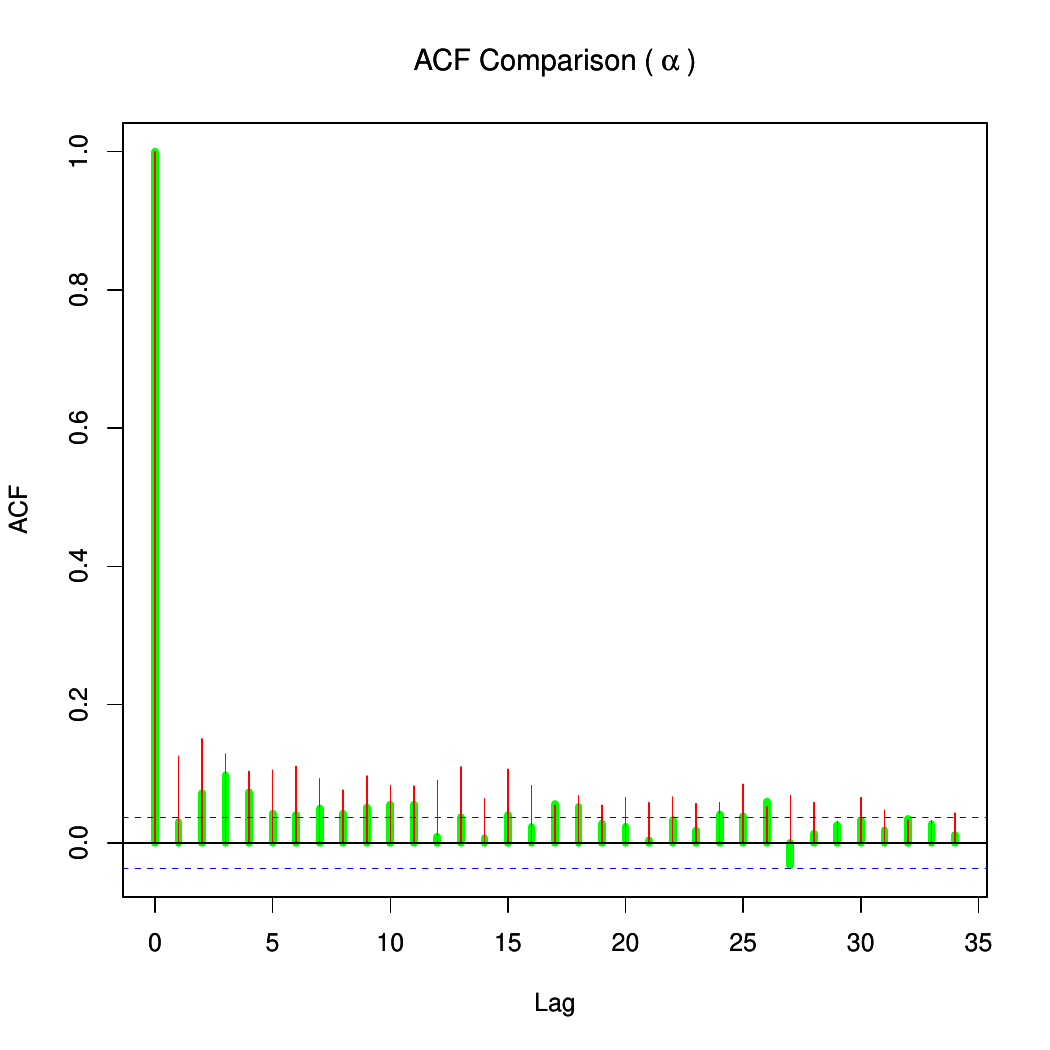}}
\hspace{2mm}
\subfigure [ACF comparison for $\beta$.]{ \label{fig:acf_beta}
\includegraphics[width=4.5cm,height=4.5cm]{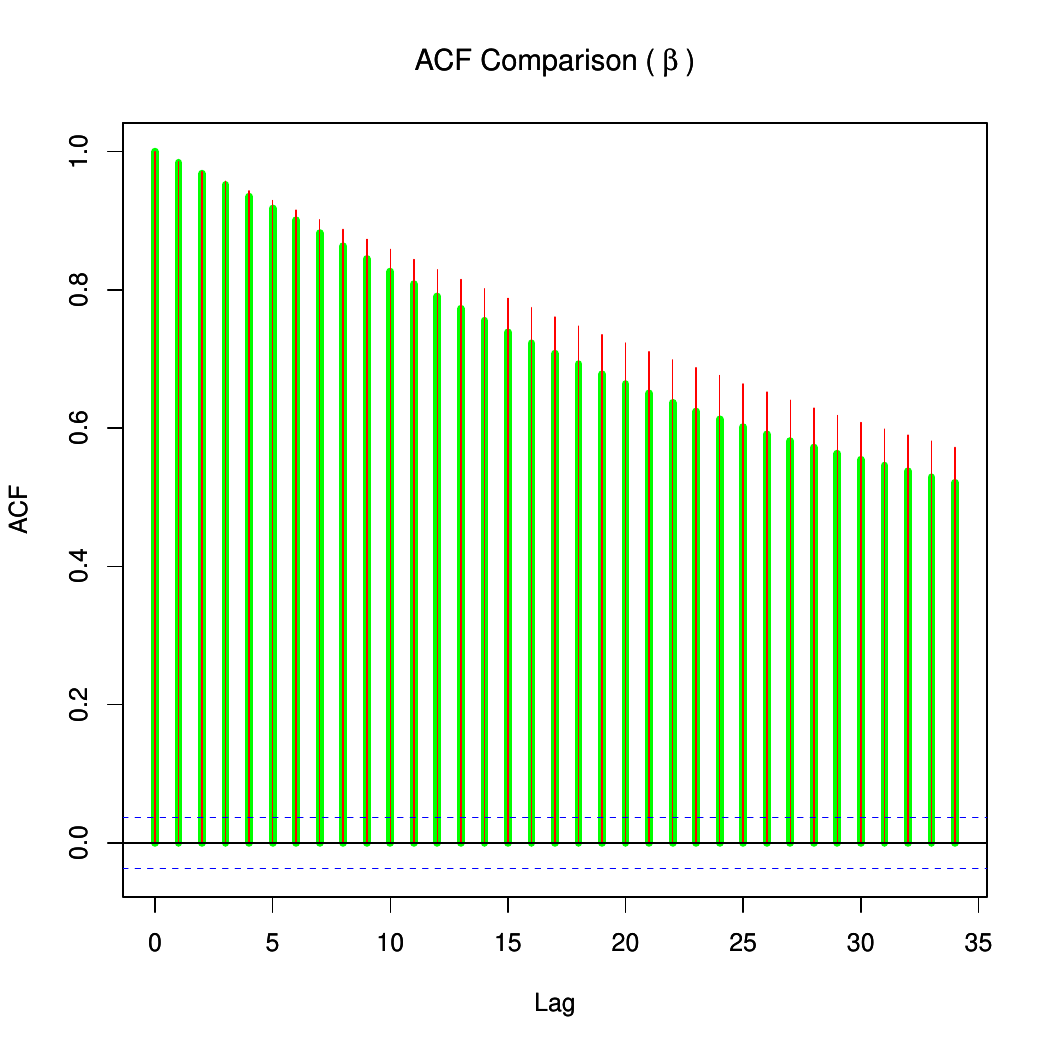}}
\hspace{2mm}
\subfigure [ACF comparison for $\sigma^2$.]{ \label{fig:acf_sigmasq}
\includegraphics[width=4.5cm,height=4.5cm]{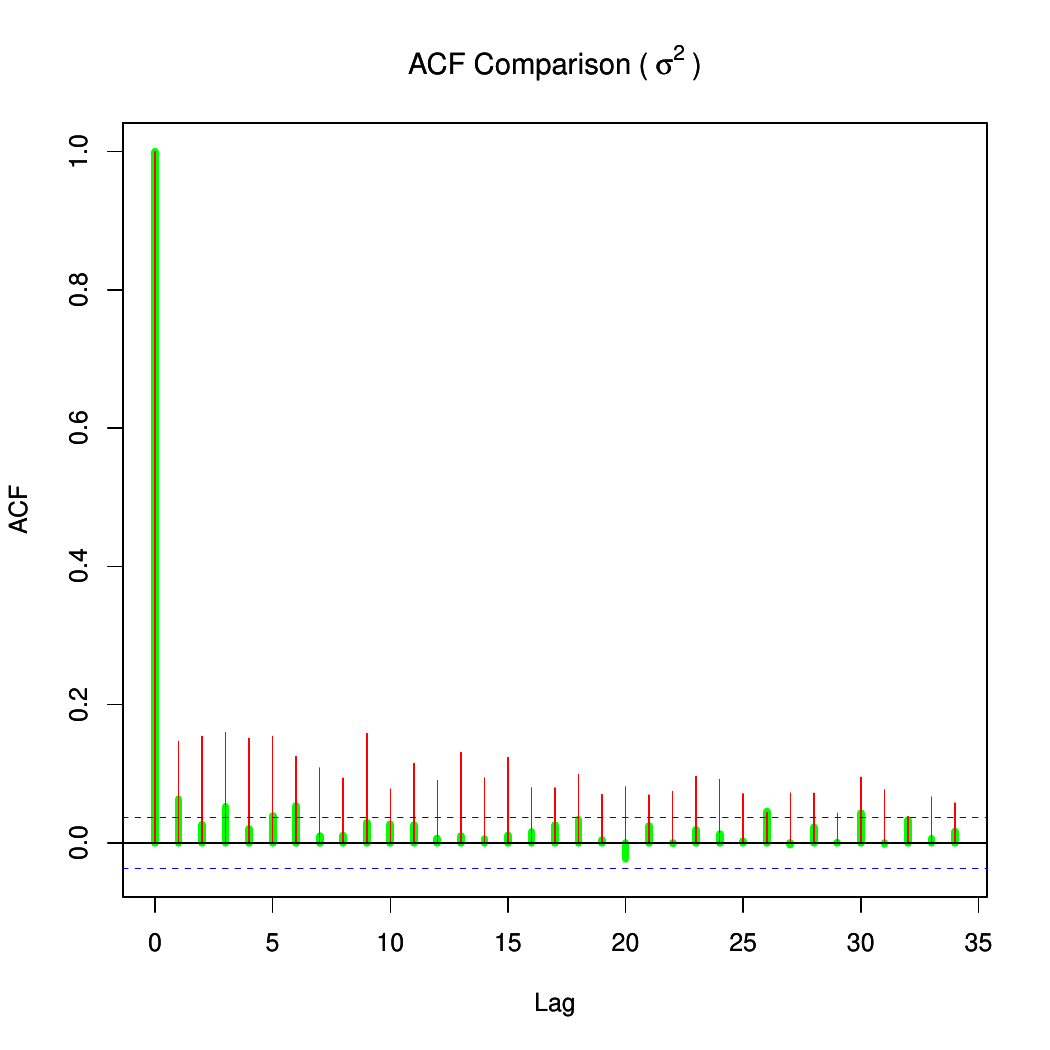}}\\
\vspace{2mm}
\subfigure [ACF comparison for $S(\bx_1)$.]{ \label{fig:acf_S1}
\includegraphics[width=4.5cm,height=4.5cm]{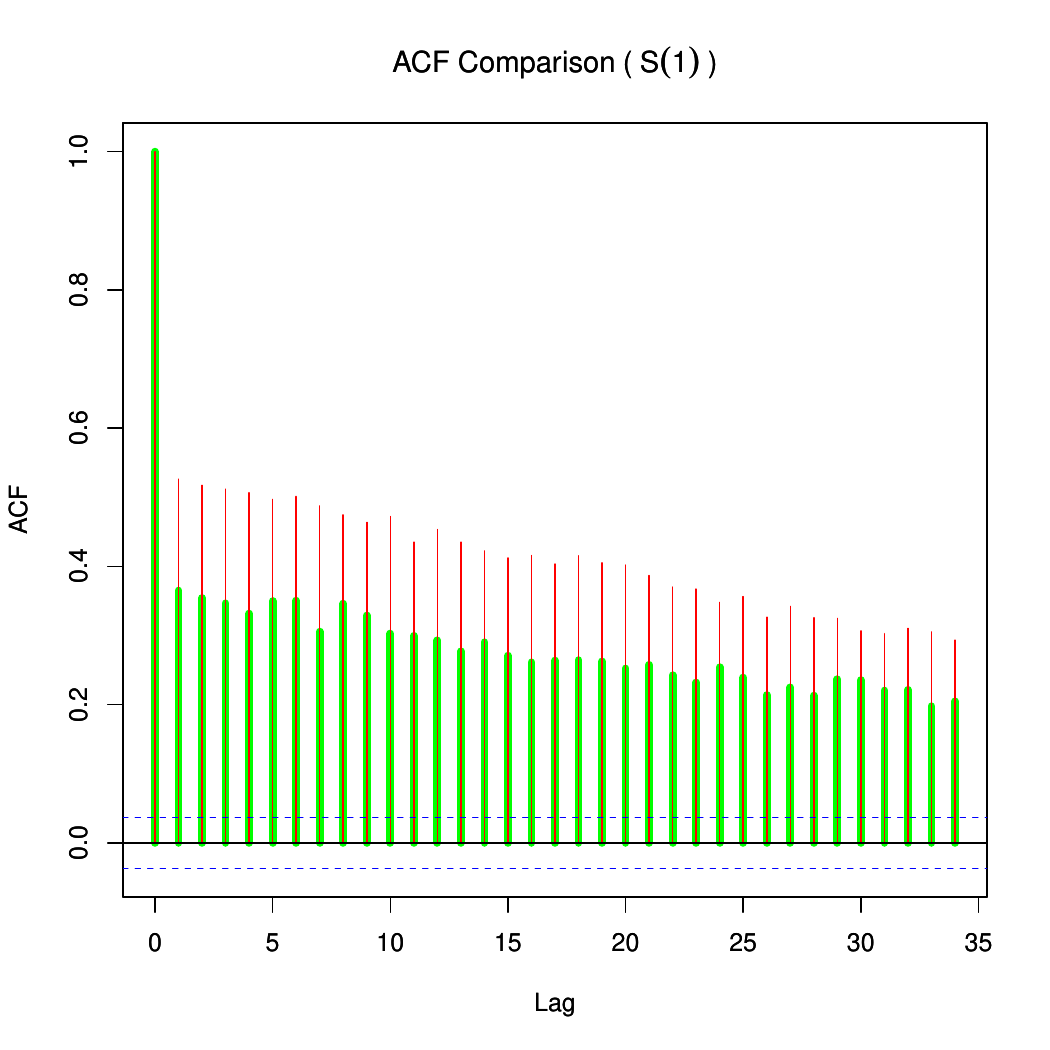}}
\hspace{2mm}
\subfigure [ACF comparison for $S(\bx_{10})$.]{ \label{fig:acf_S10}
\includegraphics[width=4.5cm,height=4.5cm]{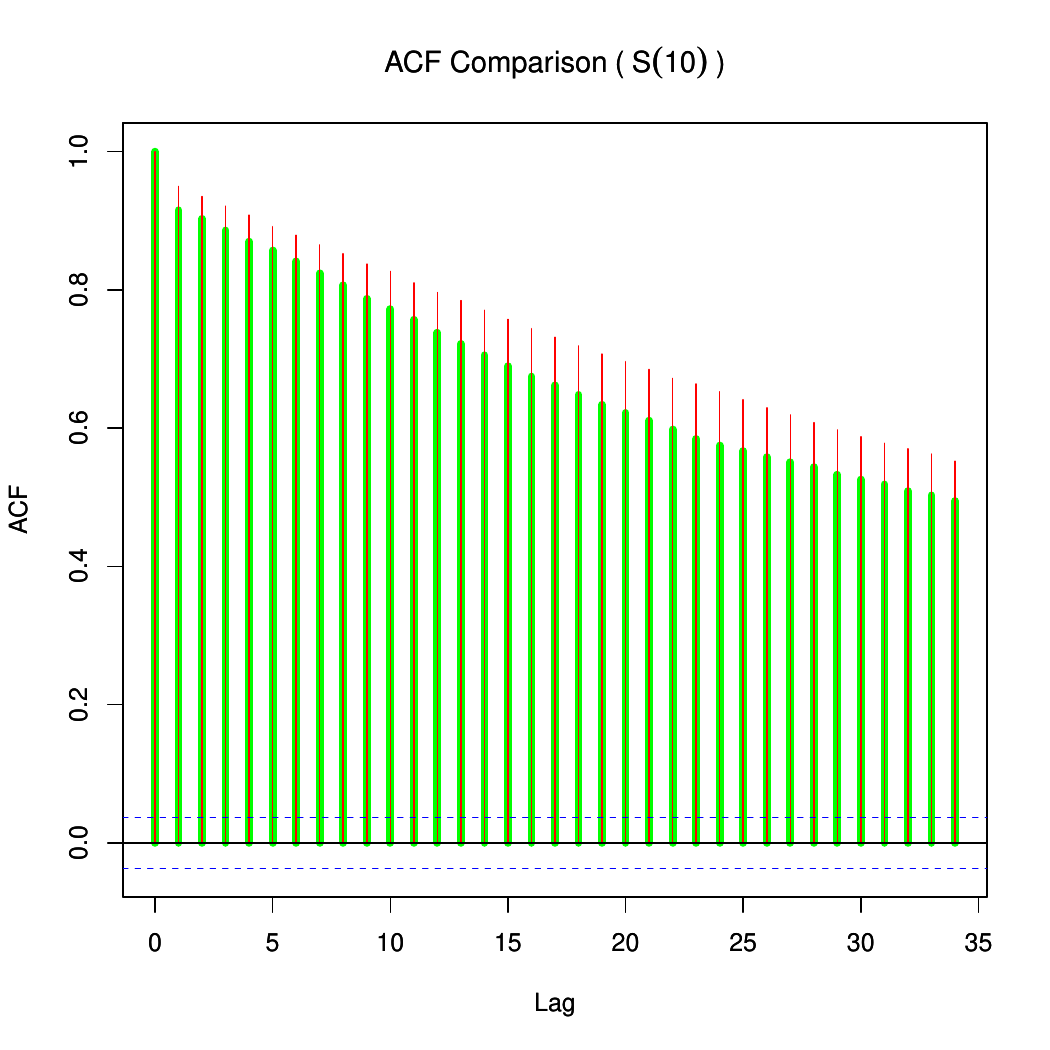}}
\hspace{2mm}
\subfigure [ACF comparison for $S(\bx_{50})$.]{ \label{fig:acf_S50}
\includegraphics[width=4.5cm,height=4.5cm]{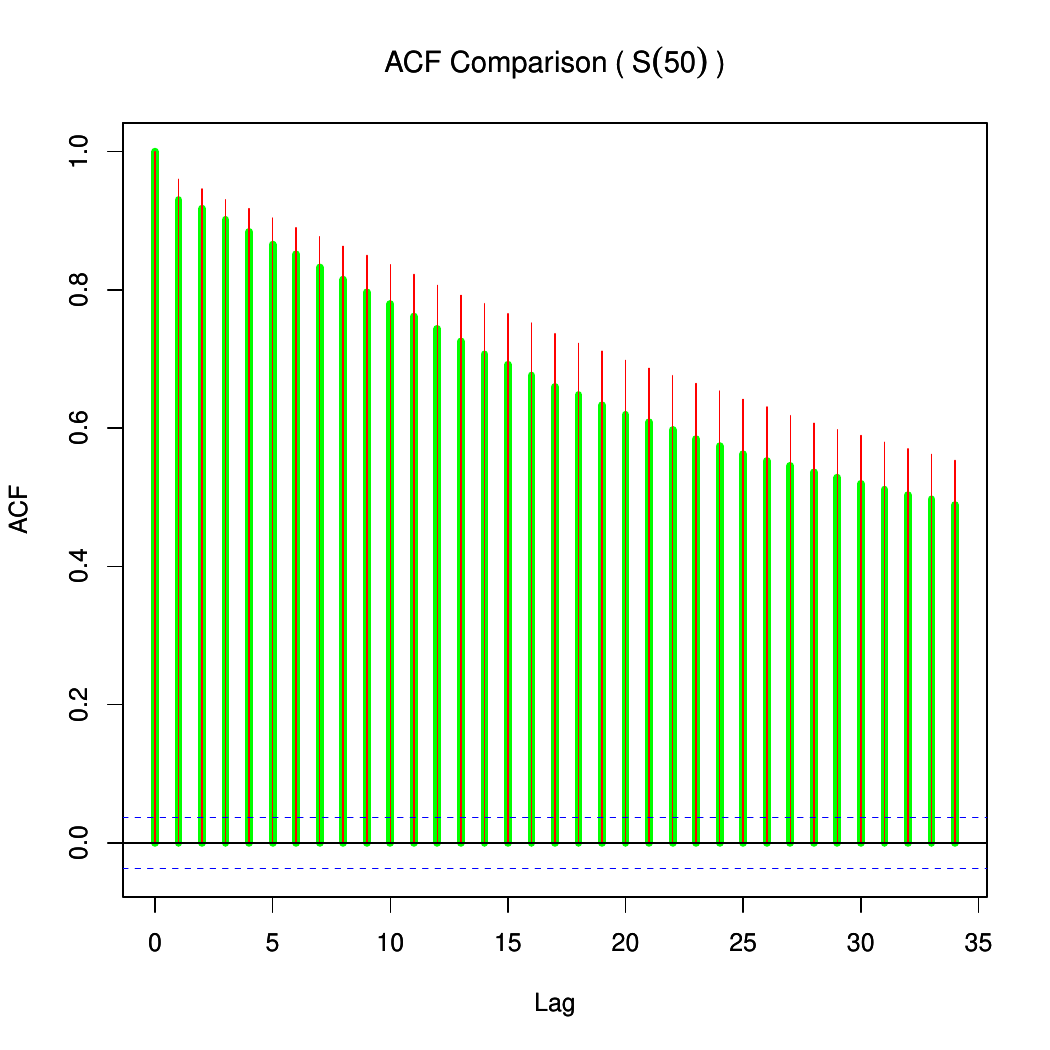}}\\
\vspace{2mm}
\subfigure [ACF comparison for $S(\bx_{100})$.]{ \label{fig:acf_S100}
\includegraphics[width=4.5cm,height=4.5cm]{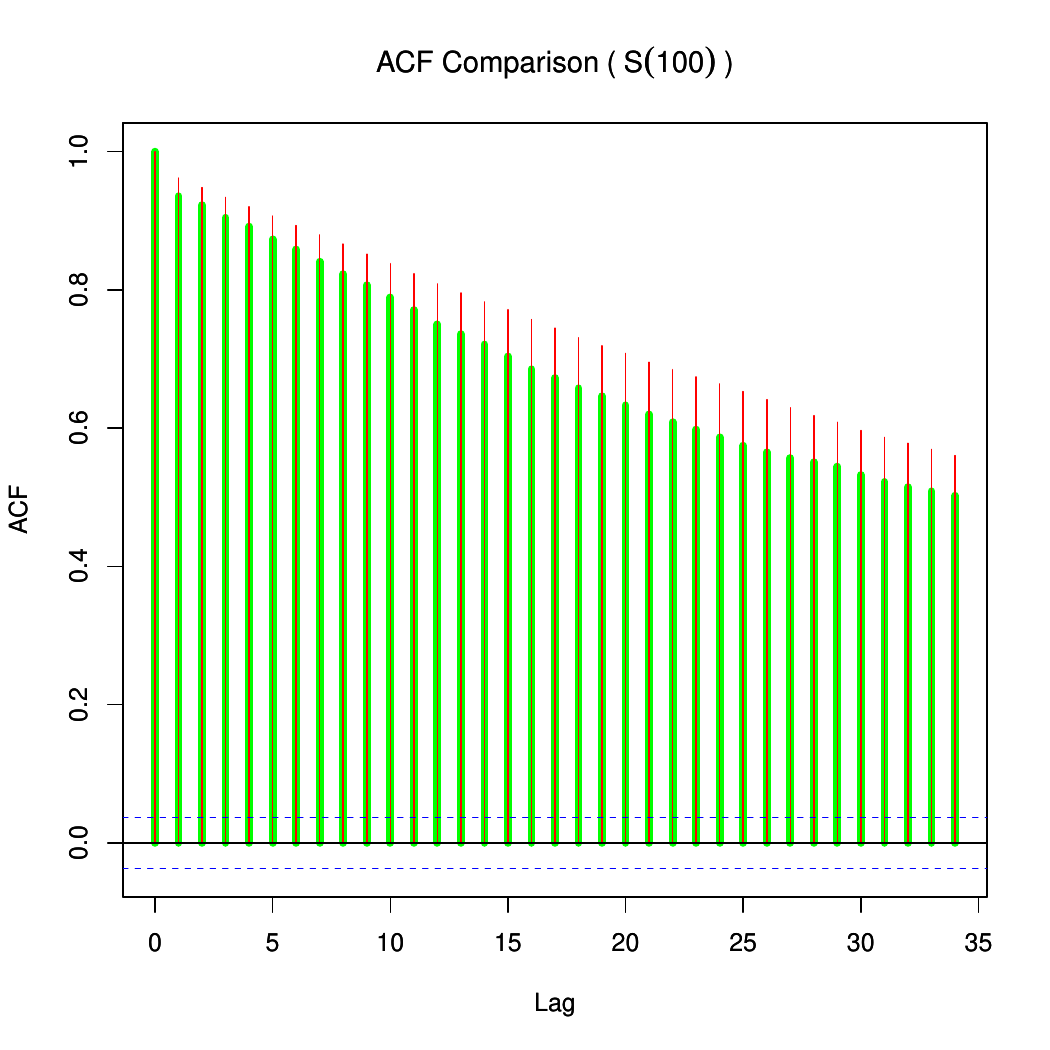}}
\hspace{2mm}
\subfigure [ACF comparison for $S(\bx_{150})$.]{ \label{fig:acf_S150}
\includegraphics[width=4.5cm,height=4.5cm]{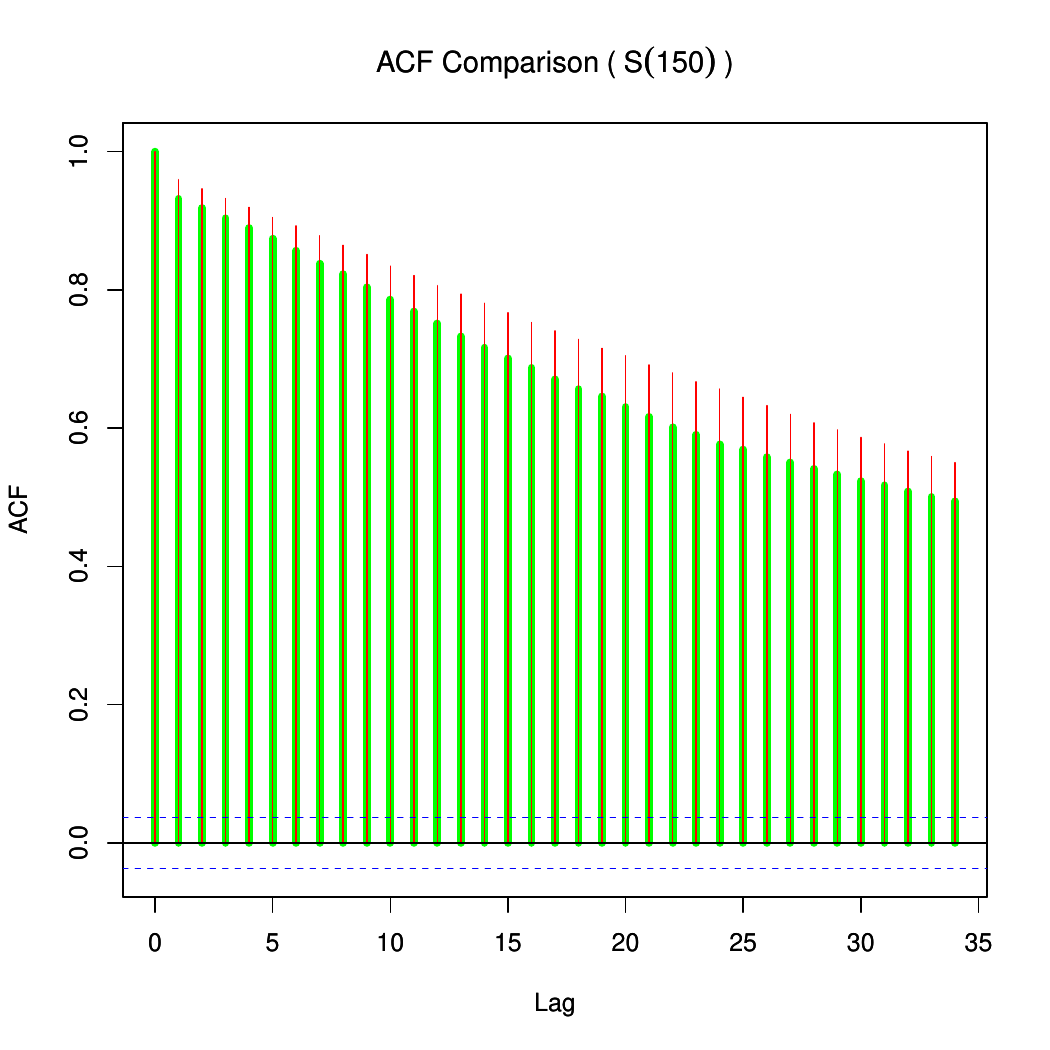}}
\hspace{2mm}
\subfigure [ACF comparison for $S(\bx_{157})$.]{ \label{fig:acf_S157}
\includegraphics[width=4.5cm,height=4.5cm]{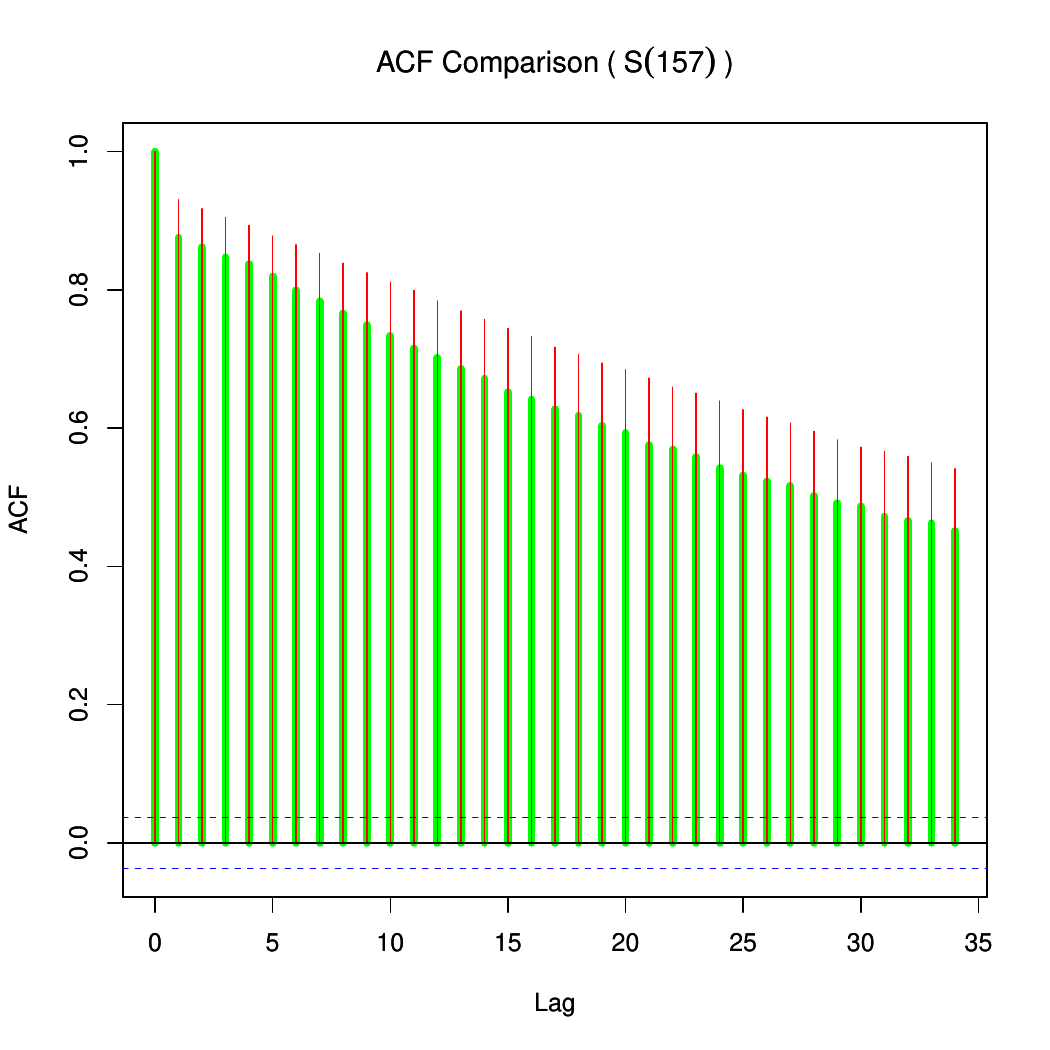}}
\caption{Rongelap island data: Comparisons of the ACF's based on TMCMC and RWM. The TMCMC-based ACF's
are depicted by the green colour and the red colour is associated with RWM-based ACF's.}
\label{fig:acf_rongelap}
\end{figure}

\section{Conclusion}
\label{sec:conclusion}

Overall, our assessment is that TMCMC is clearly advantageous compared to RWM 
from various perspectives. It has significantly less computational complexity and the acceptance rate 
corresponding to the optimal scaling for TMCMC (0.439) is almost twice that of RWM (0.234). 
Although the maximum diffusion speed of RWM is somewhat higher than that of additive TMCMC,
the latter is much more robust with respect to misspecifications of the scales. 
The advantages of such robustness are spelt out in the discussions in Sections S-3 
and S-4 of the supplement. 
Our simulation studies reported in Section \ref{sec:simulation}
and Table \ref{tab:table1}, and the real data analysis in Section \ref{sec:comparison_real},
clearly vindicate these discussions.

Related to the discussions on robustness and the difficulty of choosing proper scalings
in high dimensions is also the issue of increasing computational complexity, particularly
in the Bayesian paradigm.  
Note that complex, high-dimensional posteriors routinely arise in Bayesian applications. 
It is extremely uncommon among MCMC practitioners to use the RWM algorithm for
updating all the parameters in a single block associated with any significantly high-dimensional
posterior arising from any complex Bayesian application. We presume that the extreme 
difficulty of determining proper scalings in practice prevent
the researchers from using the RWM as an algorithm for updating all the parameters in a single block. 
Indeed, as we demonstrated
with our simulation study reported in Table \ref{tab:table1}, misspecification even in the 
case of the simple target distribution being a product of $i.i.d.$ normal densities, leads to
acceptance rates that are almost zero.  
In the context of the real data study in Section \ref{sec:comparison_real} we proposed a method for
approximately obtaining the presumed optimal acceptance rates, and which appears to be generally applicable, 
but as we demonstrated, RWM failed to exhibit adequate mixing properties. 
Adaptive strategies may be thought of as alternative methods, but these are yet to gain 
enough popularity among applied MCMC practitioners; moreover, as we mention in Section S-5 of the supplement, 
extremely long runs may be necessary to reach adequate acceptance rates for adaptive RWM, which may be prohibitive
in very high dimensions, for example, when the acceptance ratio involves high-dimensional matrix
inversions at every iteration, such as in our spatial example. 

The aforementioned difficulties force the researchers to use RWM to {\it sequentially}
update the parameters, either singly, or in small blocks. Since one (or just a few) parameters
are updated at a time by RWM, the acceptance rate can be controlled at each stage of the sequential updating
procedure. However, this sequential procedure also requires computation of the acceptance ratio as many times
as every small block is updated in a sequence. If each parameter is updated singly (that is,
each small block consists of only one element), then the computational complexity increases
$d$-folds compared to the procedure where all the $d$ parameters are updated in a single block.
Thus, when $d$ is large, the computation can become prohibitively slow.

On the other hand, TMCMC is designed to update all the parameters in a single block in such a way
that the acceptance rate remains reasonable in spite of the high dimensionality and complexity
of the target distribution. Our simulation
studies and real data example in Section \ref{sec:comparison_real} 
show that misspecification of the scales do not have drastic effect on the efficiency
of additive TMCMC, thanks to its robustness property. As a result, with much less effort compared to 
that required for RWM, 
we can achieve reasonable scalings that ensure adequate performance of additive TMCMC, so that
resorting to sequential updating will not be necessary.

This also implies that unlike RWM, additive TMCMC can save enormous computational effort 
when the dimension $d$ is large. Finally, adaptive TMCMC may be of much value in very high dimensions because
of its quick convergence to the correct optimal acceptance rate, and for ensuring good performance. 
The details will be covered in \ctn{Dey13}. 

Our empirical findings reported in this article with respect to
the simulation studies pertaining to $i.i.d.$, independent but non-identical, as well as dependent
cases clearly point towards supremacy of TMCMC over RWM. Quite importantly, in the real, spatial
data example, TMCMC outperformed RWM very significantly. Indeed, even though we could tune the scales 
so as to achieve approximately the
respective optimal acceptance rates, the chosen scales need not be actually optimal, for either of TMCMC and RWM.
Here RWM is convincingly outperformed by TMCMC 
thanks to its remarkably robust nature with respect to the choice of scales.
Thus, all our experiments, particularly, the challenging real data example, lead us to clearly recommend
TMCMC in general situations.

Given the importance of the general TMCMC idea, we have decided to create a software for its general usage. In this regard, 
we have now made available an R package {\bf tmcmcR} for implementing TMCMC along with its adaptive versions at the Github 
page $${\bf https://github.com/kkdey/tmcmcR}.$$ The software will be continuously updated in accordance
with further developments of TMCMC; moreover, TTMCMC, the
variable-dimensional version of TMCMC, will also be incorporated, and kept updated.

As part of our future work, we plan to extend TMCMC to multiple-try TMCMC, and investigate the 
corresponding optimal scaling theory. By multiple-try TMCMC we mean the TMCMC algorithm that selects
the next proposal from a set of available, perhaps dependent, proposals. For MH-adapted versions of such
an idea, see, for example, \ctn{Liu2000} and \ctn{Liang10}, \ctn{Martino13}. \ctn{Bedard12} investigated scaling analysis
of such methods in the MH context. The advantages of TMCMC over MH quite reasonably lead us to expect
substantial gains of multiple-try TMCMC over multiple-try MH.

\section*{Acknowledgment}
We are sincerely grateful to the two reviewers for very encouraging and constructive comments
which helped substantially improve the quality and presentation of our manuscript.


\renewcommand\thefigure{S-\arabic{figure}}
\renewcommand\thetable{S-\arabic{table}} 
\renewcommand\thesection{S-\arabic{section}}

\setcounter{section}{0}
\setcounter{figure}{0}
\setcounter{table}{0}

\begin{center}
{\bf \Large Supplementary Material}
\end{center}

Throughout, we refer to our main manuscript \ctn{Dey15a} as DB. 

\section{Computational efficiency of TMCMC}
\label{subsec:computation}

It may seem that TMCMC is 
computationally more expensive because 
we are randomly generating $d+1$ many values $(\epsilon, b_{1},b_{2},\ldots,b_{d})$ whereas in RWM, 
we are generating $d$ many random variables $(\epsilon_{1},\epsilon_{2},\ldots,\epsilon_{d})$ 
where $\epsilon \sim N(0,\frac{\ell^2}{d})I_{\{\epsilon>0\}}$ and $\epsilon_{i} \sim N(0,\frac{\ell^2}{d}) 
~\forall~ i=1,2,\ldots,d$. However generating  $b_i$ is equivalent to simple tosses of a fair coin which 
is a much easier exercise compared to drawing a set of independent normal random variables required by RWM. 
As a vindication of this, 
we obtain the average computation time (in seconds) 
of 100,00,00 iterations with RWM and TMCMC algorithms across dimensions 2 to 100, averaged over 100
replications of 100,00,00 iterations for each dimension, when the $d$-dimensional target
density is a product of $d$ standard normals. 
The codes are written in C and implemented in an ordinary 32-bit
laptop. The results of our experiment are displayed in Figure \ref{fig:comptime_average}. 
Fgure \ref{fig:comptime} displays the computation times
of RWM and TMCMC with respect to a single replication of the 100,00,00 iterations. 
In both the diagrams, TMCMC 
is seen to take significantly less computational time compared 
to the RWM algorithm, particularly in higher dimensions. Much longer runs, particularly in very high dimensions, 
would see TMCMC saving very substantial amount of computational
time in comparison with RWM. For further discussion on computational gains of TMCMC
over RWM, see Section 9 of DB. 

\begin{figure}
\centering
\includegraphics[width=15cm,height=10cm]{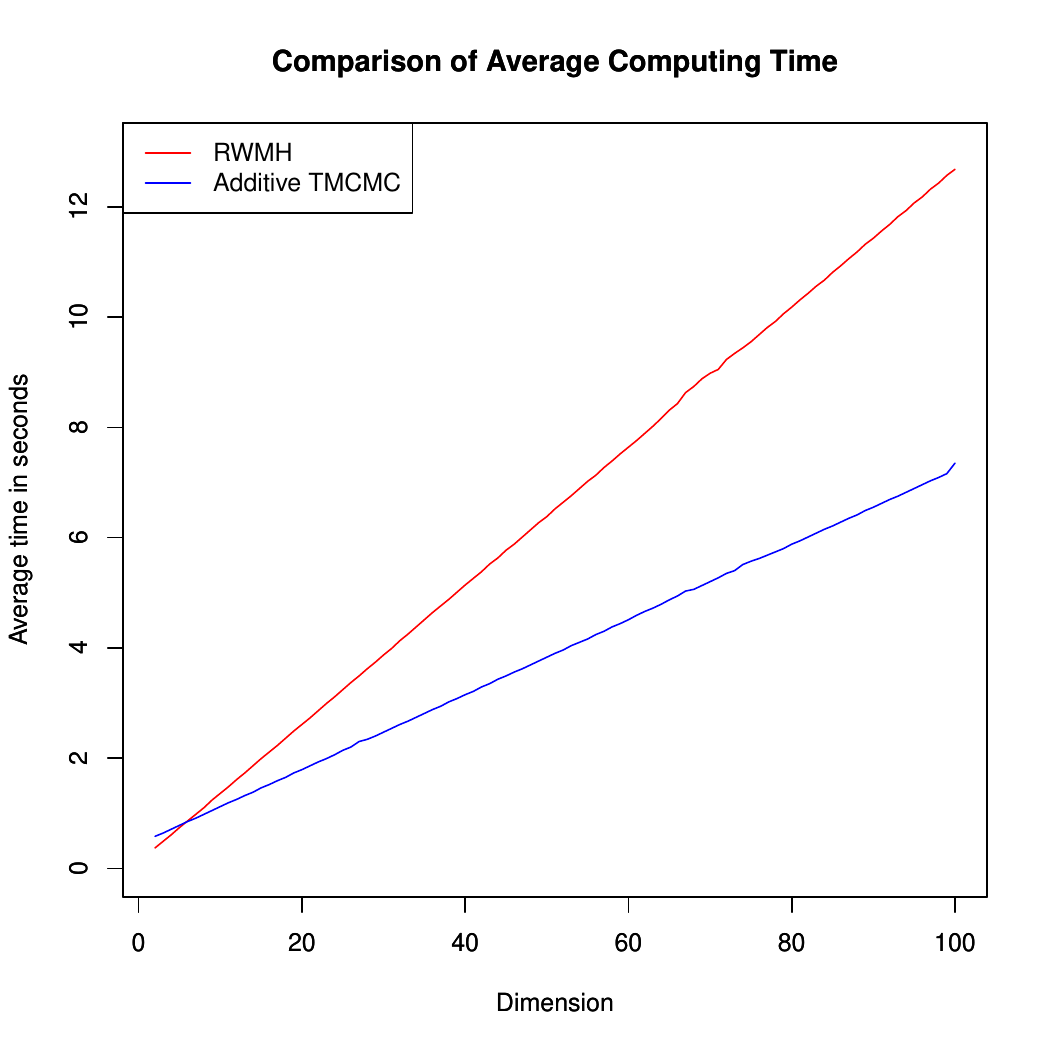}
\caption{Average computation time (in seconds) of 100,00,00 iterations with RWM and TMCMC algorithms corresponding to 
dimensions varying from 2 to 100. TMCMC takes significantly less computation time compared to RWM,
particularly in higher dimensions. The codes are written in C and implemented on an ordinary 32-bit laptop.}
\label{fig:comptime_average}
\end{figure}

\begin{figure}
\centering
\includegraphics[width=15cm,height=10cm]{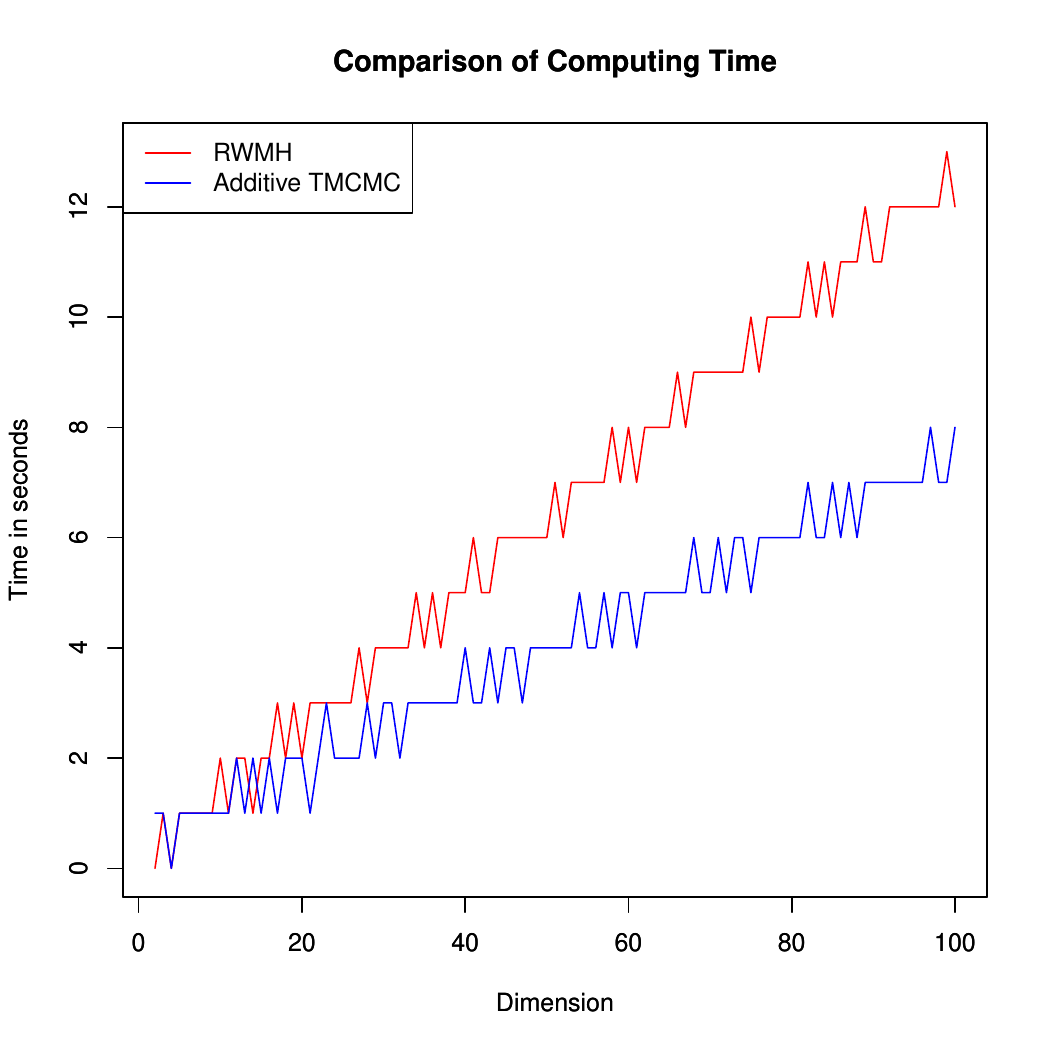}
\caption{Computation time (in seconds) of 100,00,00 iterations with RWM and TMCMC algorithms corresponding to 
dimensions varying from 2 to 100. TMCMC takes significantly less computation time compared to RWM,
particularly in higher dimensions. The codes are written in C and implemented on an ordinary 32-bit laptop.}
\label{fig:comptime}
\end{figure}

It must be emphasized that the proposal density for $\epsilon$ in TMCMC can be any distribution on 
the positive support. Similarly, the RWM algorithm also does not require the proposal to be normal. 
However, the optimal scaling 
results for RWM inherently assume normality and for the sake of comparison, we have also restricted 
our focus on $\epsilon \sim N(0,\frac{\ell^2}{d})I_{\{\epsilon>0\}}$ in the subsequent sections. 

\section{Details on the need for optimal scaling of additive TMCMC}
\label{sec:optimal_scaling}

We first try to impress 
the fact that too small or too large values of $\ell$ can both lead to poor performance of the chain 
and it is this trade-off that draws our interest in finding an optimal value of $\ell$. If the value of $\ell$ 
or equivalently, the proposal variance, is large, then the probability of a move falling in low density regions 
(with respect to the target density) of the space increases as the moves $
(x_{1}+ b_{1}\epsilon, \ldots, x_{d}+b_{d}\epsilon)$ are likely to be quite far apart from 
$(x_{1}, \ldots, x_{d})$. This leads to smaller values of the ratio 
$\frac{\pi(x_{1}+ b_{1}\epsilon, \ldots, x_{d}+b_{d}\epsilon)}{\pi(x_{1}, \ldots, x_{d})}$ and thus 
lower acceptance rates. In fact, for high dimensions, this acceptance rate can be quite low for even
moderately large values of $\ell$. On the other hand, if the value of $\ell$ is too small, then the 
acceptance rate will be higher but we then have to compromise in terms of exploration of the space. 
Much of our moves will lie very close to the initial point and as a result, the chain will move very slowly. 
An instance of the movement of the RWM and additive TMCMC chain for significantly small and large values of 
$\ell$ are depicted in Figure~\ref{fig:fig70} assuming that the target distribution is 
standard normal. For small values of $\ell$, the fact that the chain moves 
slowly gets reflected in the autocorrelation factor (ACF) of the chain, which would be on the higher side 
(Figure~\ref{fig:fig8}). All these motivate us to find an optimal value of $\ell$ that would take care 
of these problems. Our approach would be to derive the diffusion process approximation 
of the additive TMCMC process in the limit as $d \rightarrow \infty$ and then we maximize the 
diffusion speed or the rate of change of variance of the chain in the limit. Intuitively, if the 
scale factor $\ell$ is small, then starting from a point $X_{t}$ at time $t$, the moves corresponding 
to adjacent  time points $X_{t+h}$ are quite close and so the limiting change of variation is quite small 
for the corresponding diffusion process. If on the other hand the scale factor $\ell$ is large, 
the chain hardly moves, 
and hence $X_{t+h}$ for sufficiently small $h$ are often same as $X_{t}$, thereby leading to lower 
value of diffusion speed. On optimizing the diffusion speed for the TMCMC chain, we obtain 
the optimal value of the acceptance rate to be 0.439. Panels (a), (b) and (c) of 
Figure~\ref{fig:fig9} depict the path of the 
TMCMC chain for various choices of proposal variance, ranging from too small through the 
optimal value to quite large. 
Note that the target density is best approximated by the chain at optimal scaling. A better understanding of 
this is achieved by perceiving how well the histogram of observations obtained after running a chain up to 
a certain length of time, approximates the true density (panels (c), (d), (e) of Figure~\ref{fig:fig9}).


\begin{figure}
\centering
\subfigure [Sample paths of RWM and additive TMCMC for small proposal variance.]{ \label{fig:fig71}
\includegraphics[trim= 0cm 10cm 0cm 10cm, clip=true, width=14cm,height=4cm]{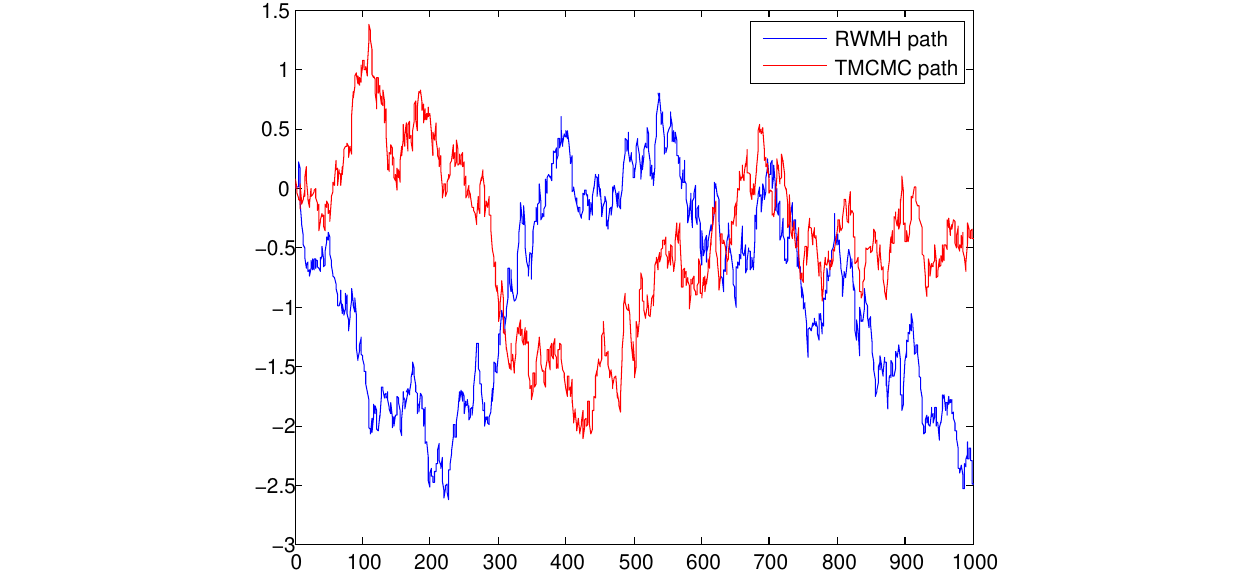}}\\
\subfigure [Sample paths of RWM and additive TMCMC for large proposal variance.]{ \label{fig:fig72}
\includegraphics[trim= 0cm 10cm 0cm 10cm, clip=true, width=14cm,height=4cm]{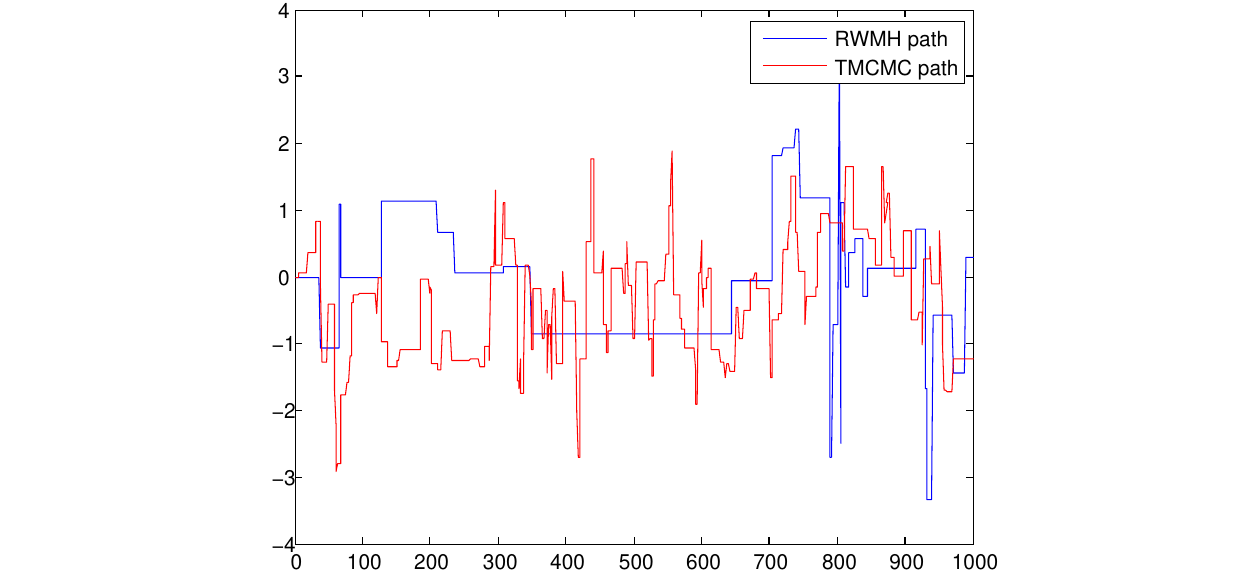}}
\caption{Comparison of RWM and additive TMCMC sample paths for small and large values of the proposal variance.
The target density is $N(0,1)$, the standard normal distribution.}
\label{fig:fig70}
\end{figure}

\begin{figure}
\centering
\includegraphics[width=10cm,height=6cm]{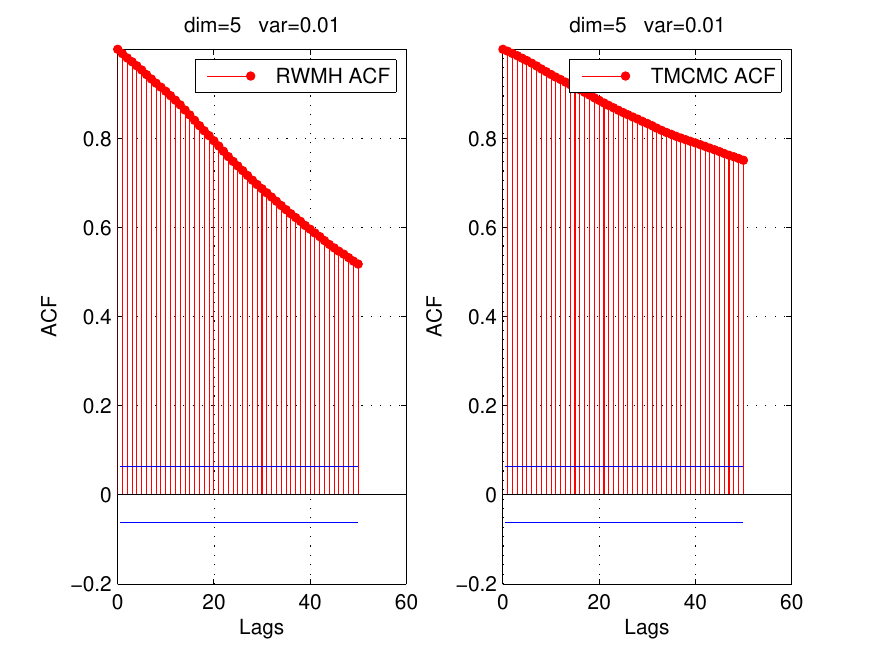}
\caption{ACFs of RWM and additive TMCMC for small proposal variance.
The target density is $N(0,1)$, the standard normal distribution.}
\label{fig:fig8}
\end{figure}

\begin{figure}
\subfigure [$\ell=0.01$]{ \label{fig:fig91}
\includegraphics[width=5cm]{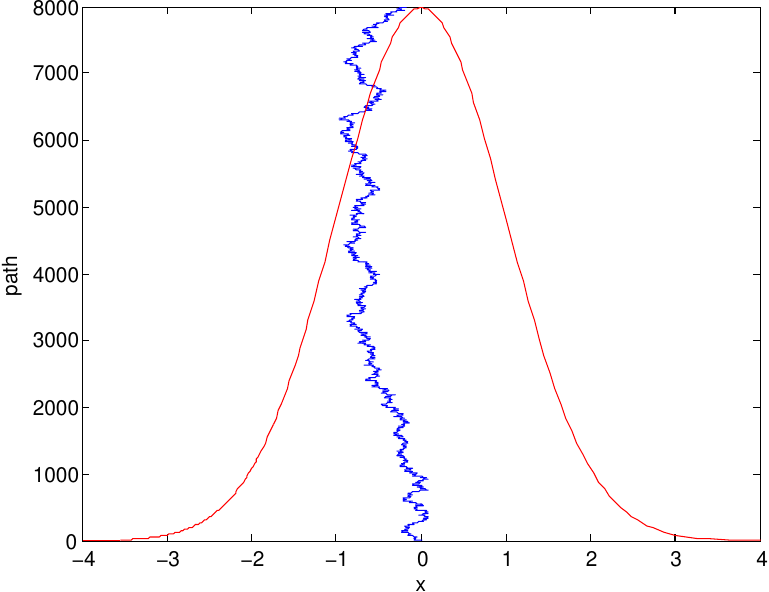}}
\hspace{-2mm}
\subfigure [$\ell=2.4~(\mbox{optimal})$]{ \label{fig:fig92}
\includegraphics[width=5cm]{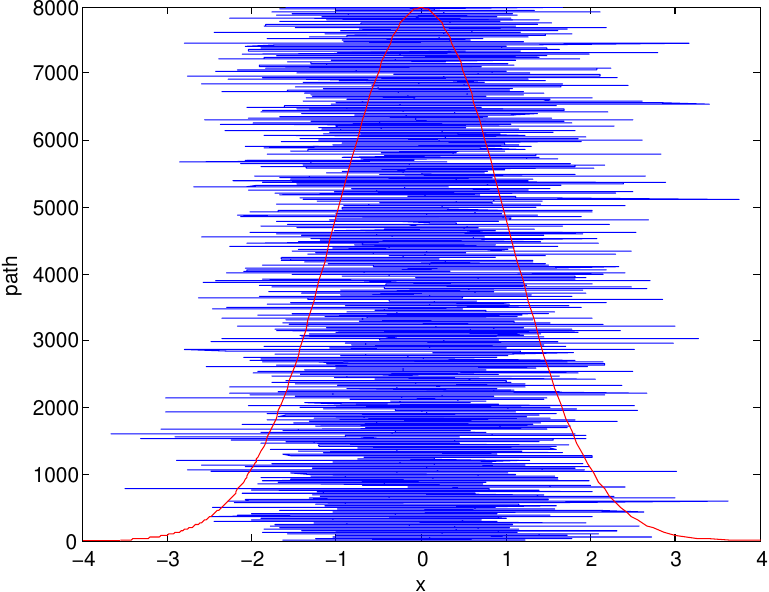}}
\hspace{-2mm}
\subfigure [$\ell=20$]{ \label{fig:fig93}
\includegraphics[width=5cm]{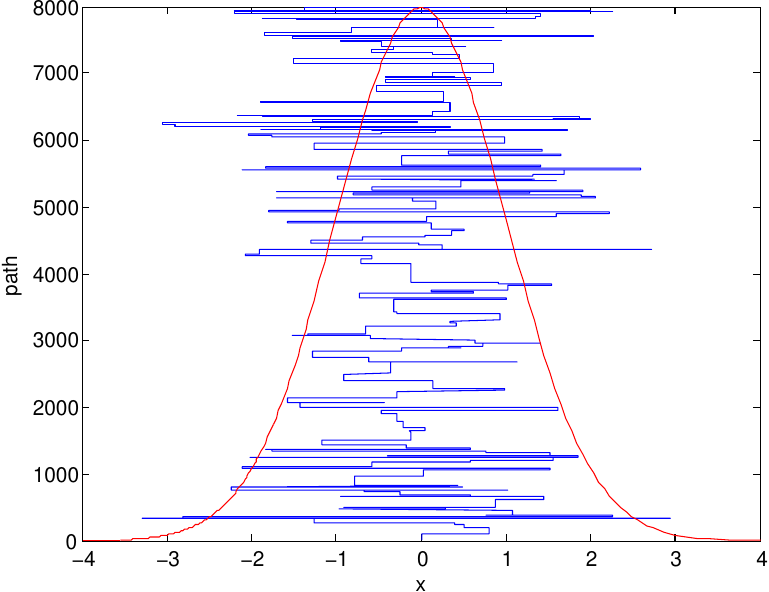}}\\
\subfigure [$\ell=0.01$]{ \label{fig:fig101}
\includegraphics[width=5cm]{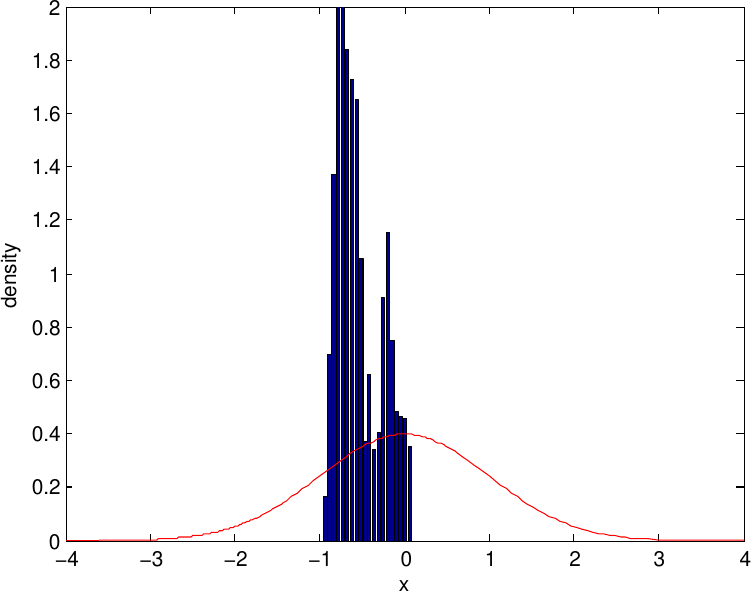}}
\subfigure [$\ell=2.4~ (\mbox{optimal})$]{ \label{fig:fig102}
\includegraphics[width=5cm]{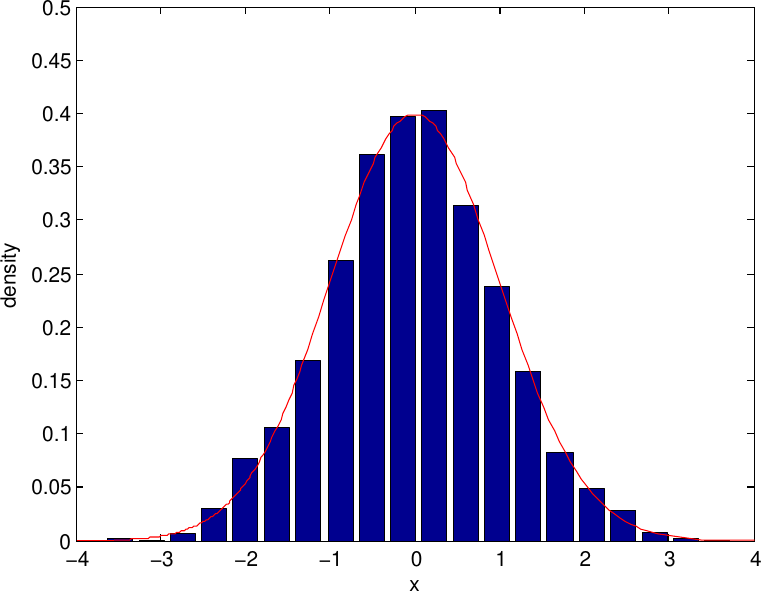}}
\subfigure [$\ell=20$]{ \label{fig:fig103}
\includegraphics[width=5cm]{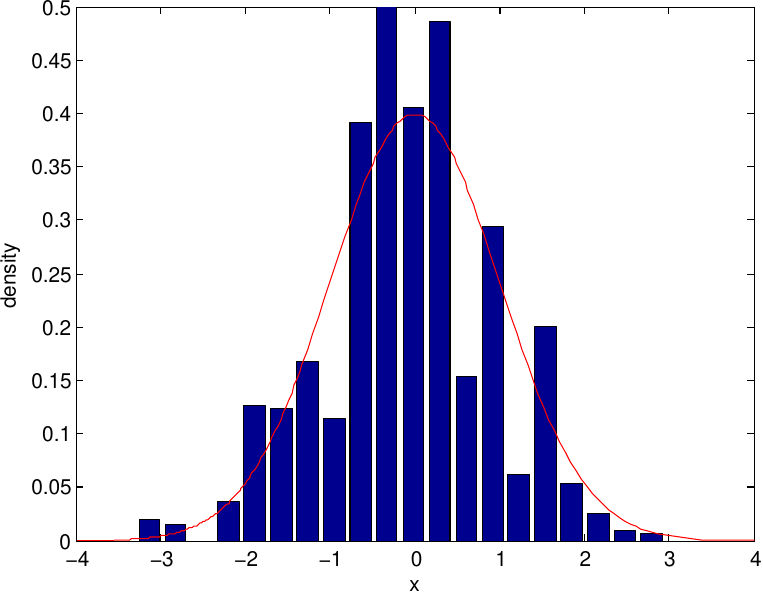}}
\caption{ The upper panels (a), (b) and (c) show the paths of TMCMC chain for three various choices 
of scalings together with the target density $N(0,1)$ plot. These highlight how well the paths explore 
the given target density. The lower panels (c), (d) and (e) display the histograms of the sample observations 
obtained from TMCMC for these choices of scalings together with the target density $N(0,1)$. 
These highlight how well the histograms approximate the target density for the given run of TMCMC.
}
\label{fig:fig9}
\end{figure}

\section{Discussion on consequences of non-robustness of RWM with respect to scale choices}
\label{subsec:non_robustness}

For general, $d$-dimensional target distributions, RWM entails the proposal with 
transitions of the form $(x_1,\ldots,x_d)\rightarrow 
(x_1+\frac{\ell_1}{\sqrt{d}}\epsilon_1,\ldots,x_1+\frac{\ell_d}{\sqrt{d}}\epsilon_d)$,
where, for $i=1,\ldots,d$, $\epsilon_i\sim N(0,1)$, and $\ell_i$ are constants to be chosen appropriately.
Often $\ell_i$ may be of the form $\ell a_i$, where $a_i$ may be needed to determine appropriately
in addition to $\ell$. As instance of this form occurs in the dependent set-up of \ctn{Pillai2011},
but $a_i$ in that set-up are the square roots of the eigenvalues of the Gaussian measure with independent components that 
is assumed to dominate the target density. However, in practice such assumption of independent Gaussian
components will generally not hold, and hence it is far from straightforward
to determine $a_i$ appropriately in realistic cases.

Since all the set-ups considered so far yield the optimal acceptance
rate 0.234 for RWM, it may be anticipated that the result holds quite generally, and applied
MCMC practitioners may be advised to tune $(\ell_1,\ldots,\ell_d)$ such that the acceptance
rate is close to 0.234. In fact, using a measure of efficiency which is the reciprocal of
integrated autocorrelation time, \ctn{Roberts01} demonstrate that the RWM proposal may be tuned
to achieve an acceptance rate between 0.15 to 0.5, which would make the algorithm around 80\%
efficient. However, for large dimension $d$, appropriate tuning of so many scale parameters
seems to be an extremely arduous task. In our simulations presented in Section 8 of DB 
we observe that even in the simple
situation where the target density is an $iid$ product of normal densities, when
the dimension increases, particularly when $d=100$ and $d=200$, departure from the
optimal scale results in drastic fall in acceptance rates, far below what
is prescribed by \ctn{Roberts01}; see Table 1 of DB. 
The diffusion speeds under such mis-specifications tend to be quite low because of
non-robustness with respect to scale choice (see Figures 1 -- 6 of DB). 
Since low diffusion speed is equivalent to high autocorrelation 
(see equation (18) of \ctn{Roberts01}), the efficiency measures of \ctn{Roberts01} 
that use integrated autocorrelation,
are also expected to indicate less efficiency.
Thus, in summary manually tuning the RWM proposals appropriately
in more general and complicated situations and in high dimensions seems to be a very daunting task.
There are methods for automatically and adaptively tuning the scales to 
progress towards the desired optimal acceptance rate as number of iterations increases, but
as we discuss in Section \ref{subsec:adaptive} computationally this can be a very costly exercise
in high dimensions.

\section{Discussion on possible advantages of additive TMCMC for relatively more
robust behaviour with respect to scale choices}
\label{subsec:robustness}

Our results on optimal scaling offers the following general thumb rule to the users of additive TMCMC: tune
the additive TMCMC proposal to achieve approximately 44\% acceptance rate. Note that even though
the optimal acceptance rate of additive TMCMC is significantly higher than that of RWM, both the algorithms
have approximately the same optimal scalings that maximize the diffusion speeds (see Figures 1 -- 6).

The results of our simulation studies reported in Table 1 of DB 
demonstrate that even in 
dimension as low as $d=2$, our optimal acceptance rate 0.439 is remarkably accurate. The table further
demonstrates that even if the scale of additive TMCMC is sub-optimally chosen, the acceptance rates
remain higher than 20\% for all dimensions, whereas for the same sub-optimal scale choice 
the acceptance rate of RWM falls to about 0.33\% in high dimensions. 
Figures 1 -- 6 
show that around the optimal scale, the diffusion speeds 
of additive TMCMC under various set-ups do not change significantly. Using the relationship
between diffusion speed and the measure of efficiency proposed by \ctn{Roberts01} one can conclude
that unlike RWM, the efficiency of additive TMCMC is not substantially affected by sub-optimal scale choices. 
Hence, tuning the additive TMCMC proposal is a far more safe and
easy exercise compared to that of RWM. It seems to us that this is quite an advantage of additive
TMCMC over RWM in general, high-dimensional set-ups.
As we discuss in Section \ref{subsec:adaptive} adaptive methods can be employed to 
approach the exact optimal acceptance rate 0.439 in substantially less number of iterations 
compared to adaptive RWM, facilitating huge computational gain over the latter.

\section{Discussion on adaptive versions of RWM and additive TMCMC for enforcing optimal
acceptance rates in complex, high-dimensional problems}
\label{subsec:adaptive}
Adaptive MCMC methods (see, for example, \ctn{Roberts09} and the references therein) 
are designed to combat the difficulty of determining appropriate proposal scalings.
In the context of RWM, various adaptive strategies are presented in \ctn{Roberts09} to choose
the scalings in an adaptive manner so that the optimal acceptance rate 0.234 is achieved
in the long run. \ctn{Dey13} adopted the strategies in the case of additive TMCMC and made
a detailed comparison with the corresponding adaptive RWM methods. In particular, they found
that even after a very large number of iterations most of the the adaptive methods related to RWM 
yielded acceptance rates which are significantly different from 0.234, while the adaptive TMCMC 
algorithms very quickly yielded acceptance rates reasonably close 0.439, even in dimensions as low as $d=2$. 
This implies quite substantial savings of TMCMC in terms of computation time in comparison with RWM.
Performance wise as well, the results of \ctn{Dey13}  
favour adaptive TMCMC over adaptive RWM in high dimensions, with respect to the various measures
which we also employ in this current work.

\section{Proofs of optimal scaling results of additive TMCMC}
\label{sec:proofs_results}


\subsection{Proof of Theorem 3.1 of DB} 
\label{proof:theorem1}
\begin{proof}
For our purpose, we define the discrete time generator of the TMCMC approach, as 

\begin{eqnarray}\label{eq:generator}
G_{d}{V(x)}&=& \frac{d}{2^{d}} \displaystyle \sum_{ \left \{\begin{array}{l} b_{i}\in \{-1,+1\} \\  
\forall i=1,\ldots,d \end{array}\right \}} \displaystyle \int_{0}^{\infty} 
\left [\left ( \vphantom{min \left \{ 1, \frac{\pi(x_{1}+b_{1}\epsilon, \ldots, x_{d}+b_{d}\epsilon)}{\pi(x_{1},\ldots, x_{d})} \right \}} 
V \left (x_{1}+b_{1}\epsilon, \ldots, x_{d}+b_{d}\epsilon \right) - V \left (x_{1},\ldots, x_{d}\right ) \right) \right. \nonumber \\
&& \qquad \left. \hspace{3 cm} \times \left ( \normalsize \min \left \{ 1, \frac{\pi(x_{1}+b_{1}\epsilon, \ldots, x_{d}+b_{d}\epsilon)}{\pi(x_{1},x_{2}, 
\ldots, x_{d})} \right \} \right ) \right ] q(\epsilon)d\epsilon. \nonumber \\
\end{eqnarray}
Since by our assumption $(\log f)'$ is Lipschitz, in the above equation we may assume 
that $V$ belongs to the space of infinitely differentiable functions with 
compact support (see, for example, \ctn{Bedard2007} for further details).

The Skorohod topology allows us to treat $G_d$ as a continuous time generator 
that has jumps at the rate $d^{-1}$. Given our restricted focus on a one dimensional component of the actual process, we assume 
$V$ to be a function of the first co-ordinate only. Under this assumption, the generator defined in (\ref{eq:generator}) 
is a function of only $\epsilon$ and $b_{1}$, and can be rephrased as 

\begin{eqnarray}\label{eq:realgenerator}
G_{d}{V(x)} &=& \frac{d}{2}\int_{0}^{\infty}\sum_{b_{1}\in \{-1,+1\}} 
\left [\left (\vphantom{\min \left \{ 1, \frac{\pi(x_{1}+b_{1}\epsilon, \ldots, x_{d}+b_{d}\epsilon)}{\pi(x_{1},\ldots, x_{d})} \right \}} 
V(x_{1}+b_{1}\epsilon) - V (x_{1}) \right)  \right. \nonumber \\
&& \qquad \quad \left. \hspace{2 cm} \times E_{b_{2},\ldots, b_{d}}
\left ( \normalsize \min \left \{ 1, \frac{\pi(x_{1}+b_{1}\epsilon, \ldots, x_{d}+b_{d}\epsilon)}{\pi(x_{1},\ldots, x_{d})} \right \} \right ) 
\right ] q(\epsilon)d\epsilon, \nonumber \\
\end{eqnarray}
where $E_{b_{2},\ldots,b_{d}}$ is the expectation taken conditional on $b_{1}$ and $\epsilon$. 

First we show that the quantity $G_{d}V(x)$ is a bounded quantity. 
\begin{eqnarray}\label{eq:bdd}
G_{d}{V(x)} &\leq & d E_{\{b_1,\epsilon\}}\left[ V(x_1+b_1\epsilon) - V(x_1) \right]  \nonumber \\
&=& dV^{'}(x_1)E_{\{b_1,\epsilon\}}(b_1\epsilon) + \frac{d}{2}V^{''}(x^*_1)E_{\{b_1,\epsilon\}}(\epsilon^{2}) \nonumber \\
&\leq&  \ell^{2}K, \nonumber  \\
\end{eqnarray}
where $x^*_1$ lies between $x_1$ and $x_1+b_1\epsilon$ and $K$ is the maximum value of $V^{''}$. 

Note that
\begin{eqnarray}
& E_{b_2,\ldots,b_d} & \left [\min \left \{ 1, \frac{\pi(x_{1}+b_{1}\epsilon, \ldots, x_{d}+b_{d}\epsilon)}{\pi(x_{1},\ldots, x_{d})} \right \} 
\right ]\nonumber  \\
&=&E_{b_2,\ldots,b_d} \left  [  \min \left \{ 1, \exp \left (  \vphantom{\sum_{j=2}^{d} \left \{ b_{j}\epsilon \left \{ \log(f(x_{j}))\right \}^{'} 
+  \frac{\epsilon^{2}}{2!} \left \{ \log(f(x_{j}))\right \}^{''} +  \frac{b_{j}\epsilon^{3}}{3!} \left \{ \log(f(z_{j}))\right \}^{'''} \right \}} 
\log (f(x_{1}+b_{1}\epsilon)) - \log (f(x_{1})) \right. \right. \right. \nonumber \\
&&  \left. \left. \left. + \sum_{j=2}^{d} \left \{ b_{j}\epsilon \left \{ \log(f(x_{j}))\right \}^{'} +  \frac{\epsilon^{2}}{2!} 
\left \{ \log(f(x_{j}))\right \}^{''} +  
\frac{b_{j}\epsilon^{3}}{3!} \left \{ \log(f(x_{j}))\right \}^{'''} +
\frac{\epsilon^{4}}{4!} \left \{ \log(f(z_{j}))\right \}^{''''}\right\}\right ) \right \} \right ], 
\nonumber \\
\label{eq:mineq}
\end{eqnarray}
where 
$E_{b_2,\ldots,b_d}$ denotes expectation with respect to $b_2,\ldots,b_d$, holding
$\epsilon$, $b_1$, $x_1$, $x_j$ and $z_j$ ($j=2,\ldots,d$) fixed;
and for $j=2,\ldots,d$, $z_j$ lies between $x_j$ and $x_j+b_j\epsilon$.
Since $b_{j}; j = 2,\ldots,d$ are $iid$, as $ d \rightarrow \infty $, conditional on $\epsilon$, $b_1$, $x_1$, 
$x_j$ and $z_j$ ($j=2,\ldots,d$) one can apply Lyapunov's central limit theorem. 



Writing $\zeta_j=b_j\left[\epsilon\{\log(f(x_j))\}'+\frac{\epsilon^3}{6}\{\log(f(x_j))\}'''\right]$, we note that
conditional on $\epsilon$ and $x_j$, $E_{b_j}(\zeta_j)=0$ and $Var_{b_j}(\zeta_j)=
\left[\epsilon\{\log(f(x_j))\}'+\frac{\epsilon^3}{6}\{\log(f(x_j))\}'''\right]^2$. 
Viewing $b_j\epsilon$ as $b_j\epsilon^*\frac{\ell}{\sqrt{d}}$, 
where $\epsilon^*\sim N(0,1)I_{\{\epsilon^*>0\}}$,  
we next show that, almost surely with respect to $\pi$, 
\[ \frac{\sum_{j=2}^dE_{b_j}(\left\vert\zeta_j\right\vert^\delta)}{\left\{\sqrt{\sum_{j=2}^dVar_{b_j}(\zeta_j)}\right\}^\delta}
\rightarrow 0,\]
for $\delta=4$.

First note that $\epsilon\equiv\epsilon^*\frac{\ell}{\sqrt{d}}$, where $\epsilon^*\sim N(0,1)I_{\{\epsilon^*>0\}}$,
and so, for any $\zeta>0$, 
\begin{equation}
\sum_{d=1}^{\infty}P\left(\epsilon^*\frac{\ell}{\sqrt{d}}>\zeta\right)
<\left(\frac{\ell}{\zeta}\right)^4E\left({\epsilon^*}^4\right)\sum_{d=1}^{\infty}\frac{1}{d^2}<\infty.
\label{eq:as_conv}
\end{equation}
That is, $\epsilon\equiv\epsilon^*\frac{\ell}{\sqrt{d}}\stackrel{a.s.}{\longrightarrow}0$,
$a.s.$ denoting ``almost surely".
Thus, there exists a null set $\mathcal N_{\epsilon}$ (with respect to the distribution of $\epsilon^*$) 
such that for all $\omega_{\epsilon}\in\mathcal N^c_{\epsilon}$,
$\epsilon\equiv\epsilon^*(\omega_{\epsilon})\frac{\ell}{\sqrt{d}}\rightarrow 0$, as $d\rightarrow\infty$.
Now observe that, for any given $\omega_{\epsilon}\in\mathcal N^c_{\epsilon}$, as $d\rightarrow\infty$, 
$\frac{1}{d-1}\sum_{j=2}^dE_{b_j}(\left\vert\zeta_j\right\vert^\delta)=\frac{1}{d-1}\sum_{j=2}^d
\left[\epsilon\{\log(f(x_j))\}'+\frac{\epsilon^3}{6}\{\log(f(x_j))\}'''\right]^4\stackrel{a.s.}{\longrightarrow}
E_{x_2}\left[\epsilon\{\log(f(x_2))\}'+\frac{\epsilon^3}{6}\{\log(f(x_2))\}'''\right]^4$, by 
the strong law of large numbers (SLLN).
The expectation, which is with respect to $x_2$, is clearly finite, due to the assumptions 
(13), (14), (15) of DB
and the Cauchy-Schwartz inequality. In other words, given $\omega_{\epsilon}\in\mathcal N^c_{\epsilon}$, 
there exists a null set 
$\mathcal N_{1}$ (with respect to $f$)
such that 
for all $\omega\in\mathcal N^c_{1}$, the convergence takes place deterministically.
Also, 
\begin{align}
\frac{1}{d-1}\sum_{j=2}^dVar_{b_j}(\zeta_j)&=
\frac{1}{d-1}\sum_{j=2}^d\left[\epsilon\{\log(f(x_j))\}'+\frac{\epsilon^3}{6}\{\log(f(x_j))\}'''\right]^2\notag\\
&\stackrel{a.s.}{\longrightarrow}E_{x_2}\left[\epsilon\{\log(f(x_2))\}'+\frac{\epsilon^3}{6}\{\log(f(x_2))\}'''\right]^2,\notag
\end{align}
which
is again finite, thanks to 
(13), (14) and (15) of DB, and the Cauchy-Schwartz inequality.
Let $\mathcal N_{2}$ denote the null set (with respect to $f$) such that
deterministic convergence takes place for all $\omega\in\mathcal N^c_{2}$. 
Let $\mathbb N_1=\mathcal N_{\epsilon}\otimes\mathcal N_1\cup\mathcal N_{\epsilon}\otimes\mathcal N_2$, where
$\otimes$ denotes cartesian product.
Then $\mathbb N_1$ is a null set with respect to the distribution of $\epsilon$ and $f$. For $\omega\in\mathbb N^c_1$,
we have, for $\delta=4$, $\sum_{j=2}^dE_{b_j}(\left\vert\zeta_j\right\vert^\delta) = O(d)$, and 
$\sum_{j=2}^dVar_{b_j}(\zeta_j)=O(d)$. Hence, for $\delta=4$, and for $\omega\in\mathbb N^c_1$, we have,
\[ \frac{\sum_{j=2}^dE_{b_j}(\left\vert\zeta_j\right\vert^\delta)}{\left\{\sqrt{\sum_{j=2}^dVar_{b_j}(\zeta_j)}\right\}^\delta}
=O(d^{-1})\rightarrow 0.\]

Thus, Lyapunov's central limit theorem applies, and we have the following: 


\begin{equation}\label{eq:dist}
\frac{\sum_{j=2}^{d}{ b_{j}\left [\epsilon \left \{\log (f(x_{j}))\right \}^{'}+
\frac{\epsilon^{3}}{6} \left \{ \log(f(x_{j}))\right \}^{'''} \right ]}}{\sqrt {\sum_{j=2}^{d} 
\left [  \epsilon \{\log (f(x_{j}))\right \}^{'}+\frac{\epsilon^{3}}{6}\left \{ \log(f(x_{j}))\right \}^{'''}]^{2} }} 
\stackrel{\mathcal L}{\rightarrow} N(0,1), 
\end{equation}
for almost all $\epsilon$ and $x_2,\ldots,x_d$. 
Also note that, the square of the denominator of (\ref{eq:dist}) is given by 
\begin{equation}
\sum_{j=2}^{d} 
\left [  \epsilon \{\log (f(x_{j}))\right \}^{'}+\frac{\epsilon^{3}}{6}\left \{ \log(f(x_{j}))\right \}^{'''}]^{2}
= \epsilon^{2}\sum_{j=2}^{d}{\left[\left \{\log (f(x_{j}))\right \}^{'}\right]^{2} } +\Delta, 
\label{eq:deno}
\end{equation}
where
\begin{equation}\label{eq:delta}
\Delta= \frac{\epsilon^{4}}{6}\sum_{j=2}^{d}{ 2 \left \{\log (f(x_{j}))\right \}^{'}
\left \{\log (f(x_{j}))\right \}^{'''}}+ 
\frac{\epsilon^{6}}{36}\sum_{j=2}^{d}{\left[\left \{\log (f(x_{j}))\right \}^{'''}\right]^{2}}. 
\end{equation}
With the representation $\epsilon\equiv \epsilon^*\frac{\ell}{\sqrt{d}}$, where $\epsilon^*\sim N(0,1)I_{\{\epsilon^*>0\}}$,
for $\omega_{\epsilon}\in\mathcal N^c_{\epsilon}$,
the first term of (\ref{eq:deno}) is given by 
${\epsilon^*}^2\frac{\ell^2}{d}\sum_{j=2}^d\left[\left\{\log (f(x_{j}))\right\}^{'}\right]^2$. As $d\rightarrow\infty$,
by SLLN, there exists a null set $\mathcal N_3$ with respect to $f$ such that
for all $\omega\in\mathcal N^c_3$, 
\[
\frac{1}{d-1}\sum_{j=2}^d\left[\left\{\log (f(x_{j}))\right\}^{'}\right]^2
=-\frac{1}{d-1}\sum_{j=2}^{d} \left \{ \log(f(x_{j}))\right \}^{''}
\rightarrow\mathbb I,
\]
where  $\mathbb{I}$  is the information matrix corresponding to the density $f$. 

Also, 
\begin{align}
\Delta &=\frac{{\epsilon^*}^{4}\ell^4}{6d}\frac{1}{d}\sum_{j=2}^{d}{ 2 \left \{\log (f(x_{j}))\right \}^{'}
\left \{\log (f(x_{j}))\right \}^{'''}}+
\frac{{\epsilon^*}^{6}\ell^6}{36d^2}\frac{1}{d}\sum_{j=2}^{d}{\left[\left \{\log (f(x_{j}))\right \}^{'''}\right]^{2}}\notag\\
&\stackrel{a.s.}{\longrightarrow}0,\notag
\end{align}
since 
\[
\frac{1}{d}\sum_{j=2}^{d}{ 2 \left \{\log (f(x_{j}))\right \}^{'}\left \{\log (f(x_{j}))\right \}^{'''}}\stackrel{a.s.}{\longrightarrow}
E_f\left[2\left\{\log (f(x_{2}))\right \}^{'}\left \{\log (f(x_{2}))\right \}^{'''}\right]<\infty,
\]

\[
\frac{1}{d}\sum_{j=2}^{d}{\left[\left \{\log (f(x_{j}))\right \}^{'''}\right]^{2}}
\stackrel{a.s.}{\longrightarrow}E_f{\left[\left \{\log (f(x_{2}))\right \}^{'''}\right]^{2}}<\infty,
\]
$\frac{{\epsilon^*}^4}{d}\stackrel{a.s.}{\longrightarrow}0$, and $\frac{{\epsilon^*}^6}{d^2}\stackrel{a.s.}{\longrightarrow}0$. 
Hence, there exists a null set $\mathcal N_4$ (with respect to $f$) such that for all $\omega\in\mathcal N^c_4$, 
$\Delta(\omega)\rightarrow 0$ as $d\rightarrow\infty$.

We now deal with the last term of (\ref{eq:mineq}) given by 
$\sum_{j=2}^d\frac{\epsilon^4}{4!}\left\{\log(f(z_j))\right\}^{''''}$.
This tends to zero almost surely as $d\rightarrow\infty$ because 
$\epsilon^4\equiv{\epsilon^*}^4\frac{\ell^4}{d}\stackrel{a.s.}{\longrightarrow} 0$, and
$$\frac{1}{d}\sum_{j=2}^d\left\{\log(f(z_j))\right\}^{''''}\stackrel{a.s.}{\longrightarrow}
E_f\left(\left\{\log(f(z_2))\right\}^{''''}\right)<\infty,$$ by (16)of DB. 
Let $\mathcal N_5$ denote the null set associated with the above almost sure convergence.

Let $\mathbb N_2=\mathcal N_{\epsilon}\otimes\mathcal N_3\cup \mathcal N_{\epsilon}\otimes\mathcal N_4
\cup \mathcal N_{\epsilon}\otimes\mathcal N_5$. Further,
let $\mathbb N=\mathbb N_1\cup\mathbb N_2$. Then $\mathbb N$ is a null set with respect to the distribution of
$\epsilon^*$ and $f$.
Hence, given any $\omega\in\mathbb N^c$, 
\[
\left\vert E_{b_2,\ldots,b_d} \left [\min \left \{ 1, \frac{\pi(x_{1}+b_{1}\epsilon, \ldots, x_{d}+b_{d}\epsilon)}{\pi(x_{1},\ldots, x_{d})} \right \} 
\right ]-E_{b_2,\ldots,b_d}\left[\min\left\{1,e^X\right\}\right]\right\vert\rightarrow 0,\hspace{2mm}\mbox{as}\hspace{2mm}d\rightarrow\infty,
\]
where
\begin{equation}
X \sim N \left( \eta (x_{1}, b_{1}, \epsilon) - \frac{(d-1) 
\epsilon^{2}}{2}\mathbb{I}, \hspace{0.1 cm} (d-1)\epsilon^{2}\mathbb{I} \right),
\end{equation}
with
\begin{equation}\label{eq:eta}
\eta (x_{1}, b_{1}, \epsilon) =  \log (f(x_{1}+b_{1}\epsilon)) - \log (f(x_{1})).
\end{equation}

We now recall the following result (Proposition 2.4 of \ctn{Roberts97a}): 
if $X \sim N(\mu,\sigma^{2})$, then 
\begin{equation*}
E \left [ \min \left \{1, e^{X}\right \} \right] = 
\Phi \left (\frac{\mu}{\sigma}\right ) + e^{\left\{\mu +\frac{\sigma^{2}}{2}\right\}}
\Phi \left ( -\sigma - \frac{\mu}{\sigma} \right ), 
\end{equation*}
where $\Phi$ is the standard Gaussian cumulative distribution function (cdf). 
Applying this result to (\ref{eq:mineq}) we obtain 

\begin{eqnarray}\label{eq:simpexp}
& E_{b_2,\ldots,b_d} & \left 
[\min \left \{ 1, \frac{\pi(x_{1}+b_{1}\epsilon, \ldots, x_{d}+b_{d}\epsilon)}
{\pi(x_{1}, \ldots, x_{d})} \right \} \right ]\nonumber  \\
&=& \Phi \left (\frac{\eta (x_{1}, b_{1}, \epsilon) - \frac{(d-1)\epsilon^{2}}{2}\mathbb{I}}
{\sqrt{(d-1)\epsilon^{2}\mathbb{I}}}\right ) 
+ e^{\eta (x_{1}, b_{1}, \epsilon)}\Phi \left (-\sqrt{(d-1)\epsilon^{2}\mathbb{I}} 
- \frac{\eta (x_{1}, b_{1}, \epsilon) 
- \frac{(d-1)\epsilon^{2}}{2}\mathbb{I}}{\sqrt{(d-1)\epsilon^{2}\mathbb{I}}} \right) \nonumber\\
&=& \mathbb{W}(b_{1},\epsilon, x_{1}). \nonumber \\
\end{eqnarray}

Note that using Taylor series expansion around $x_{1}$, we can write (\ref{eq:eta}) as 
\begin{equation}
\eta (x_{1}, b_{1}, \epsilon) = b_{1}\epsilon \left [\log f(x_{1}) \right ]^{'} + \frac{\epsilon^{2}}{2}
\left [\log f(x_{1}) \right] ^{''} + b_{1}\frac{\epsilon^{3}}{3!}\left [\log f(\xi_{1}) \right] ^{'''},
\end{equation}
where $\xi_1$ lies between $x_1$ and $x_1+b_1\epsilon$. 
Re-writing $b_{1}\epsilon$ as $\frac{\ell}{\sqrt{d}}z^*_1$, 
where $z^*_1$ follows a $N(0,1)$ distribution, $\eta$ and $\mathbb W$ 
can be expressed in terms of $\ell$ and $z^*_1$ as 

\begin{equation}
\eta (x_{1}, z^*_1, d) = \frac{\ell z^*_{1}}{\sqrt{d}} \left [\log f(x_{1}) \right ]^{'} 
+ \frac{\ell^{2}{z^*_{1}}^{2}}{2!d}\left [\log f(x_{1}) \right] ^{''} + \frac{\ell^{3}{z^*_{1}}^{3}}
{3!d^{\frac{3}{2}}}\left [\log f(\xi_{1}) \right] ^{'''}
\label{eq:eta_expand}
\end{equation}
and
\begin{equation}\label{eq:W}
\mathbb{W}(x_1, z^*_{1}, d) =  
\Phi \left (\frac{\eta (x_{1}, z^*_{1}, d) - \frac{{z^*_{1}}^{2}\ell^{2}}{2}\mathbb{I}}
{\sqrt{{z^*_{1}}^{2}\ell^{2}\mathbb{I}}}\right ) + e^{\eta (x_{1}, z^*_{1}, d )}
\Phi \left (\frac{- \eta (x_{1}, z^*_{1}, d)
-\frac{{z^*_{1}}^{2}\ell^{2}\mathbb{I}}{2}}
{\sqrt{{z^*_{1}}^{2}\ell^{2}\mathbb{I}}} \right).
\end{equation}

Now we consider the Taylor series expansion around $x_{1}$ of the term 
\begin{eqnarray}\label{eq:gen1}
&& d E_{z^*_{1}} \left [\vphantom{\frac{1}{2}}\left ( V\left (x_{1}+ \frac{z^*_{1}\ell}{\sqrt{d}}\right ) 
- V \left (x_{1}\right ) \right) \mathbb{W} \left (z^*_{1}, x_{1},d\right ) \right] \nonumber \\
&=& d E_{z^*_{1}} \left [ \left \{ V^{'}(x_{1})\frac{z^*_{1}\ell}{\sqrt{d}} 
+ \frac{1}{2} V^{''}(x_{1})\frac{{z^*_{1}}^{2}\ell^{2}}{d} 
+ \frac{1}{6}V^{'''}(\xi_{1})\frac{{z^*_{1}}^{3}\ell^{3}}{d^{\frac{3}{2}}} \right \} 
\mathbb{W} \left (z^*_{1}, x_{1},d\right) \right ]. \nonumber \\
\end{eqnarray}

From (\ref{eq:W}) it is clear that $\mathbb{W}(z^*_{1}, x_{1}, d)$ is continuous but not differentiable 
at the point $0$. So, this can not be expanded as a Taylor series around $0$. Also, note that 
$\mathbb{W}$ is an almost surely bounded function with respect to $d$. This follows from the fact that 
$\Phi$ is a bounded function and that $\eta(x_{1},z^*_{1},d)\stackrel{a.s.}{\longrightarrow}0$ as 
$d \rightarrow \infty$. The latter is easily proved by showing, as in (\ref{eq:as_conv}), that each term of
$\eta(x_{1},z^*_{1},d)$ tends to zero almost surely; here we need to use the facts that $E_f\left[\{\log(f(x_1))\}'\right]^4<\infty$,
$E_f\left[\{\log(f(x_1))\}''\right]^2<\infty$ and $E_f\left[\left\vert\{\log(f(\xi_1))\}'''\right\vert\right]<\infty$ which follow
from assumptions 
(13), (14) and (15) of DB.
By expanding the individual terms in the expression in (\ref{eq:W}) we obtain, for appropriate $w_1,w_2,\xi_1$, the following: 

\begin{eqnarray}
\Phi \left (\frac{\eta (x_{1}, z^*_{1}, d) - \frac{{z^*_{1}}^{2}\ell^{2}}{2}\mathbb{I}}
{\sqrt{{z^*_{1}}^{2}\ell^{2}\mathbb{I}}}\right )  &=& \Phi \left (-\frac{\sqrt{{z^*_{1}}^{2}\ell^{2}\mathbb{I}}}{2} \right)+ \frac{1}{\sqrt{d\mathbb{I}}}[ \log f(x_{1})]'\phi\left (-\frac{\sqrt{{z^*_{1}}^{2}\ell^{2}\mathbb{I}}}{2} \right) + \frac{1}{2d\mathbb{I}}\phi^{'}(w_{1}), \nonumber \\
\hspace{0.5 cm}\Phi \left (\frac{-\eta (x_{1}, z^*_{1}, d) - \frac{{z^*_{1}}^{2}\ell^{2}}{2}\mathbb{I}}
{\sqrt{{z^*_{1}}^{2}\ell^{2}\mathbb{I}}}\right ) & =&  \Phi \left (-\frac{\sqrt{{z^*_{1}}^{2}\ell^{2}\mathbb{I}}}{2} \right) - \frac{1}{\sqrt{d\mathbb{I}}}[ \log f(x_{1})]'\phi\left (-\frac{\sqrt{{z^*_{1}}^{2}\ell^{2}\mathbb{I}}}{2} \right) + \frac{1}{2d\mathbb{I}}\phi^{'}(w_{2}), \nonumber \\
 e^{\eta (x_{1}, z^*_{1}, d )} &=& 1+ \frac{\ell z^*_{1}}{\sqrt{d}} \left [\log f(x_{1}) \right ]^{'} 
+ \frac{\ell^{2}{z^*_{1}}^{2}}{2!d}\left [\log f(x_{1}) \right] ^{''} + \frac{\ell^{3}{z^*_{1}}^{3}}
{3!d^{\frac{3}{2}}}\left [\log f(\xi_{1}) \right] ^{'''}. \nonumber \\
\end{eqnarray}

Using these expanded forms and then simplifying the expression in (\ref{eq:gen1}), we obtain the following form of $G_dV(x)$: 

\begin{equation*}
G_{d}{V(x)} = 
V^{'}(x_{1})\frac{1}{2}\ell^{2} (\log f(x_1))'E_{z^*_{1}} \left [ {z^*_{1}}^{2}\mathbb{\mathcal{H}} \left (z^*_{1}\right ) 
\right] 
+ \frac{1}{2}V^{''}(x_{1})\ell^{2}E_{z^*_{1}} \left [ {z^*_{1}}^{2}\mathbb{\mathcal{H}} \left (z^*_{1} \right ) 
\right]+O\left(d^{-\frac{1}{2}}\right),
\end{equation*}
where

\begin{equation}\label{eq:h}
\mathbb{\mathcal{H}}(z^*_{1}) = 2 \Phi \left (- \frac{|z^*_{1}|\ell\sqrt{\mathbb{I}}}{2}\right)
=2\left[1-\Phi \left (\frac{|z^*_{1}|\ell\sqrt{\mathbb{I}}}{2}\right)\right].
\end{equation}
%
%

Hence, the limiting form of our generator is Langevin and  is given by
\begin{equation}
GV(x)=\frac{1}{2}g(\ell) (\log f(x_{1}))'V'(x_1)
+\frac{g(\ell)}{2}V''(x_1),
\label{eq:diff_equation_iid}
\end{equation}
where $g(\ell)$ is given by (18) of DB. 
Since $G_dV(x)$ and $V''(x_1)$ are bounded, and $G_dV(x)$ converges pointwise to
$GV(x)$, Dominated Convergence Theorem implies that
\[ \lim_{d\rightarrow\infty}E\left\vert G_dV(x)-GV(x)\right\vert\rightarrow 0.\]

\end{proof}

\subsection{Proof of Theorem 3.2 of DB} 
\label{proof:theorem2}
\begin{proof}
We can write down the generator $G_{d}V(x)$ as follows: 

\begin{eqnarray}
G_{d}{V(x)} &=& \frac{d}{2^d}P (\mathbb{\chi}_{1}=1) 
\int_{0}^{\infty}\sum_{b_{1}\in \{-1,+1\}} \left 
[\left (\vphantom{\min \left \{ 1, \frac{\pi(x_{1}+b_{1}\epsilon, \ldots, x_{d}+b_{d}\epsilon)}{\pi(x_{1},\ldots, x_{d})} \right \}} V(x_{1}+b_{1}\epsilon) - V (x_{1}) \right)  \right. \nonumber \\
&& \qquad \quad \left. \hspace{2 cm} \times E_{\left \{\begin{array}{l} 
b_{2},b_{3},\ldots,b_{d}, \\
\mathbb{\chi}_{2},\mathbb{\chi}_{3},\ldots,\mathbb{\chi}_{d} 
\end{array} \right \}}\left ( \normalsize \min \left \{ 1, \frac{\pi(x_{1}+b_{1}\epsilon,\ldots,x_{d}+ \mathbb{\chi}_{d}b_{d}\epsilon)}{\pi(x_{1},\ldots, x_{d})} \right \} \right ) \right ] q(\epsilon)d\epsilon. \nonumber \\
\label{eq:gibbs_tmcmc}
\end{eqnarray}

Note that since $V$ is a function of $x_{1}$ only, if $\mathbb{\chi}_{1}$ is equal to 0, 
then no transition takes place and $V(x_{1}+\chi_1b_{1}\epsilon) - V (x_{1})=0$, so that 
the value of the generator is 0. 
In other words, the part of the generator associated with $P (\mathbb{\chi}_{1}=0)$ is zero,
and hence does not feature in (\ref{eq:gibbs_tmcmc}).

Since $b_{j}$ and $\mathbb{\chi}_{j}$ always occur as products,
we have
\newcommand{\I}{\mathbb{\chi}}

\begin{equation}
E_{\left \{\begin{array}{l} 
b_{2},b_{3},\ldots,b_{d}, \\
\mathbb{\chi}_{2},\mathbb{\chi}_{3},\ldots,\mathbb{\chi}_{d} 
\end{array} \right \}} = E_{\left \{b_{2}\I_{2}, b_{3}\I_{3},\ldots, b_{d}\I_{d}\right \}}
\end{equation}

Our approach to obtaining the diffusion limit in this problem will be similar to that 
in the previous problem, where all the components of $x$ are updated simultaneously at every iteration
of TMCMC. 
Here we leave $(1-c_{d})(d-1)$ terms unchanged at each step and sum over 
$c_{d}d$ many terms inside the exponential. We make a very vital assumption that 
$c_{d} \rightarrow c$, which forces $c_{d}(d-1)$ to go to $\infty$ as $d \rightarrow \infty$. 
We apply Lyapunov's central limit theorem as before (again the Lyapunov assumption holds good for $\delta=4$), 
to obtain, 
given any $\omega\in\mathbb N^c$, 
\[
\left\vert E_{b_2\chi_2,\ldots,b_d\chi_d} \left [\min \left \{ 1, \frac{\pi(x_{1}+b_{1}\epsilon, \ldots, x_{d}+\chi_db_{d}\epsilon)}{\pi(x_{1},\ldots, x_{d})} \right \} 
\right ]-E_{b_2\chi_d,\ldots,b_d\chi_d}\left[\min\left\{1,e^X\right\}\right]\right\vert\rightarrow 0,\hspace{2mm}\mbox{as}\hspace{2mm}d\rightarrow\infty,
\]
where, using Lyapunov's theorem and the same techniques as before, we obtain
\begin{equation}
X \sim N \left( \eta (x_{1}, b_{1}, \epsilon) - \frac{(c_{d}d-1) \epsilon^{2}}{2}\mathbb{I}, 
\hspace{0.1 cm}(c_{d}d-1)\epsilon^{2}\mathbb{I} \right).
\end{equation}

Analogously, we define $\mathbb{W}(x_1, z^*_{1}, c_{d}, d)$ as the following
\begin{equation}\label{eq:cW}
\mathbb{W}(x_1, z^*_{1}, c_{d}, d) =  \Phi \left (\frac{\eta (x_{1}, z^*_{1}) - 
\frac{{z^*_{1}}^{2}\ell^{2}c_{d}}{2}\mathbb{I}}{\sqrt{{z^*_{1}}^{2}\ell^{2}c_{d}\mathbb{I}}}\right ) 
+ e^{\eta (x_{1}, z^*_{1} )}
\Phi \left (\frac{-\eta (x_{1}, z^*_{1}) -
\frac{{z^*_{1}}^{2}\ell^{2}c_{d}}{2}\mathbb{I}}{\sqrt{{z^*_{1}}^{2}\ell^{2}c_{d}\mathbb{I}}}\right ).
\end{equation}

Proceeding in the same way as in the previous case, we obtain

\begin{equation}
G_{d}{V(x)} = V^{'}(x_{1}) \frac{1}{2}c_{d}\ell^{2} E_{z^*_{1}} \left [ {z^*_{1}}^{2}\mathbb{\mathcal{H}} \left (z^*_{1}, x_{1},c_{d},d \right ) 
\right] 
+ \frac{1}{2}V^{''}(x_{1})c_{d}\ell^{2}E_{z^*_{1}} \left [ {z^*_{1}}^{2}\mathbb{\mathcal{H}} \left (z^*_{1}, x_{1},c_{d},d \right ) 
\right]+O\left(d^{-\frac{1}{2}}\right),
\end{equation}
where
\begin{equation}
\mathbb{\mathcal{H}}(z^*_{1},x_{1},c_{d}) =
2 \Phi \left (- \frac{|z^*_{1}|\ell \sqrt{c_d\mathbb{I}}}{2}\right)
=2 \left[1-\Phi \left (\frac{|z^*_{1}|\ell \sqrt{c_d\mathbb{I}}}{2}\right)\right].
\end{equation}

Finally, the limiting form of the generator in this case of partial updating based additive 
TMCMC turns out to be analogous to the previous case where all the components of $x$ are updated
simultaneously at every step. This is given by 
\begin{equation}
GV(x)=\frac{1}{2}g_{c}(\ell) (\log f(x_{1}))'V'(x_1)
+ \frac{g_{c}(\ell)}{2}V''(x_1),
\label{eq:diff_equation_tmcmc_gibbs}
\end{equation}
where the diffusion speed $g_c(\ell)$ is given by
\begin{equation}
g_c(\ell)=4c\ell^{2}\int_0^{\infty} u^{2}\Phi 
\left (- \frac{u\ell\sqrt{c\mathbb{I}}}{2}\right) \phi(u)du.
\label{eq:diff_speed_tmcmc_gibbs_appendix}
\end{equation}
As before, the Dominated Convergence Theorem implies that
\[ \lim_{d\rightarrow\infty}E\left\vert G_dV(x)-GV(x)\right\vert\rightarrow 0.\]

\end{proof}

\subsection{Proof of Theorem 4.1 of DB} 
\label{proof:theorem4}
\begin{proof}
The generator function of the process can be written as 
\begin{eqnarray}\label{eq:genbedard}
G_{d}{V(x)} &=& \frac{d^{\alpha}}{2^d}\int_{0}^{\infty}\sum_{b_{1}\in \{-1,+1\}} 
\left [\left (\vphantom{\min \left \{ 1, \frac{\pi(x_{1}+b_{1}\epsilon, \ldots, x_{d}+b_{d}\epsilon)}
{\pi(x_{1},\ldots, x_{d})} \right \}} V(x_{1}+b_{1}\epsilon) - V (x_{1}) \right)  \right. \nonumber \\
&& \qquad \quad \left. \hspace{2 cm} \times E_{b_{2}, \ldots, b_{d}}
\left ( \normalsize \min \left \{ 1, \frac{\pi(x_{1}+b_{1}\epsilon, \ldots, x_{d}+b_{d}\epsilon)}
{\pi(x_{1},\ldots, x_{d})} \right \} \right ) \right ] q(\epsilon)d\epsilon, \nonumber \\
\end{eqnarray}
%
where
\begin{eqnarray}\label{eq:mineqBedard}
& E_{b_2,\ldots,b_d} & \left [\min \left \{ 1, \frac{\pi(x_{1}+b_{1}\epsilon, \ldots, x_{d}+b_{d}\epsilon)}
{\pi(x_{1}, \ldots, x_{d})} \right \} \right ]\nonumber  \\
&=&E_{b_2,\ldots,b_d} \left  [\min \left \{ 1, \exp \left (  \vphantom{\sum_{j=2}^{d} 
\left \{ b_{j}\epsilon \left \{ \log(f(x_{j}))\right \}^{'} +  \frac{\epsilon^{2}}{2!} 
\left \{ \log(f(x_{j}))\right \}^{''} +  \frac{b_{j}\epsilon^{3}}{3!} 
\left \{ \log(f(z_{j}))\right \}^{'''} \right \}} 
{\log (f({x_{1}+b_{1}\epsilon})) - \log (f(x_{1})) }\right. \right. \right. \nonumber \\
&&   \left. \left. \left. + \sum_{j=2}^{k} \left \{ b_{j}\epsilon 
\left \{ \log(f(\theta_{j}(d)x_{j}))\right \}^{'} +  \frac{\epsilon^{2}}{2!}
\left \{ \log(f(\theta_{j}(d)x_{j}))\right \}^{''} +  \frac{b_{j}\epsilon^{3}}{3!} 
\left \{ \log(f(\theta_{j}(d)x_{j}))\right \}^{'''}\right.\right.\right.\right.\nonumber\\ 
&&\qquad\left. \left. \left. \left. + \frac{\epsilon^{4}}{4!}\left \{ \log(f(\theta_{j}(d)z_{j}))\right \}^{''''}\right\} \right. \right. \right. \nonumber \\
&&  \left. \left. \left. +\sum_{j=k+1}^{d} \left \{ b_{j}\epsilon 
\left \{ \log(f(\theta_{j}(d)x_{j}))\right \}^{'} +  \frac{\epsilon^{2}}{2!} 
\left \{ \log(f(\theta_{j}(d)x_{j}))\right \}^{''} +  \frac{b_{j}\epsilon^{3}}{3!} 
\left \{ \log(f(\theta_{j}(d)x_{j}))\right \}^{'''}\right.\right.\right.\right.\nonumber\\ 
&&\qquad\left.\left.\left.\left.+\frac{\epsilon^{4}}{4!}\left \{ \log(f(\theta_{j}(d)z_{j}))\right \}^{''''}
\right \} \right ) \right \} \right ] \nonumber \\
\end{eqnarray}

Note that since $\epsilon$ can be represented, as before, as 
$\frac{\ell z^*_{1}}{d^{\frac{\alpha}{2}}}$ (where we assume that $\alpha > 0$), and, 
due to assumptions 
(13), (14), (15)
and (34) of DB, and 
because $k$ is finite, 
it is easy to see that the first sum in the expression in (\ref{eq:mineqBedard}) goes to 0 almost surely. 
Then, we apply Lyapunov's central limit theorem on $b_{j}$ for $j= k+1,\ldots,d$, 
which deals with infinitely many random variables as $d \rightarrow \infty$, and 
we obtain, for every fixed $\omega\in\mathbb N^c$, where $\mathbb N$ is an appropriate null set as before, 
\begin{equation}\label{eq:dist3}
\frac{\sum_{j=k+1}^{d}{ b_{j}\left [\epsilon \left \{\log (f(\theta_j(d)x_{j}))\right \}^{'}+
\frac{\epsilon^{3}}{6} \left \{ \log(f(\theta_j(d)x_{j}))\right \}^{'''} \right ]}}{\sqrt {\sum_{j=k+1}^{d} 
\left [  \epsilon \{\log (f(\theta_j(d)x_{j}))\right \}^{'}+\frac{\epsilon^{3}}{6}\left \{ \log(f(\theta_j(d)x_{j}))\right \}^{'''}]^{2} }} 
\stackrel{\mathcal L}{\rightarrow} N(0,1). 
\end{equation}


The square of the denominator of (\ref{eq:dist3}) can be written as 
$\epsilon^2\sum_{j=k+1}^{d}{\left[\left \{\log (f(\theta_{j}(d)x_{j}))\right \}^{'}\right]^{2} } + \Delta$,
where
\begin{equation}\label{eq:Bedarddelta}
\Delta= \frac{\epsilon^{4}}{6}\sum_{j=k+1}^{d}{2 \left \{\log (f(\theta_{j}(d)x_{j}))\right \}^{'}
\left \{\log (f(\theta_{j}(d)x_{j}))\right \}^{'''}}+ \frac{\epsilon^{6}}{36}\sum_{j=k+1}^{d}
{\left[\left \{\log (f(\theta_{j}(d)x_{j}))\right \}^{'''}\right]^{2}}. 
\end{equation}
Representing $\epsilon$ as $\frac{\ell z^*_{1}}{d^{\frac{\alpha}{2}}}$, it can be seen as before that 
$\Delta\stackrel{a.s.}{\longrightarrow}0$ as $d\rightarrow\infty$.
%
Writing $u_j=\theta_j(d)x_j$, we have
\begin{eqnarray}
\epsilon^{2}\sum_{j=k+1}^{d}{\left[\left \{\log (f(\theta_{j}(d)x_{j}))\right \}^{'}\right]^{2} } 
&=&\sum_{i=1}^m\frac{\ell^{2}{z^*_{1}}^{2}}{d^{\alpha}}\theta^2_j(d)r(i,d)
\left\{\frac{1}{r(i,d)}\sum_{j=1}^{r(i,d)}\left(\frac{f'(u_j)}{f(u_j)}\right)^2\right\}\nonumber\\
&=&\sum_{i=1}^{m}\frac{\ell^{2}{z^*_{1}}^{2}d^{\gamma_{i}}r(i,d)}{K_{k+i}d^{\alpha}}r(i,d)
\left\{\frac{1}{r(i,d)}\sum_{j=1}^{r(i,d)}\left(\frac{f'(u_j)}{f(u_j)}\right)^2\right\}.
\label{eq:simplify_sum}
\end{eqnarray}
As $d\rightarrow\infty$, almost surely, 
\begin{equation}
\frac{1}{r(i,d)}\sum_{j=1}^{r(i,d)}\left(\frac{f'(u_j)}{f(u_j)}\right)^2 
\rightarrow E\left [\left\{\frac{f^{'}(U)}{f(U)}\right \}^{2} \right]=\mathbb{I}.
\label{eq:convergence_information_appendix}
\end{equation}
Hence, as $d\rightarrow\infty$, almost surely,
\begin{eqnarray}
\epsilon^{2}\sum_{j=k+1}^{d}{\left[\left \{\log (f(\theta_{j}(d)x_{j}))\right \}^{'}\right]^{2} } 
&\rightarrow& \ell^{2}{z^*_{1}}^{2}\xi^{2}\mathbb{I},
\end{eqnarray}
where
\[\xi^2=\lim_{d\rightarrow\infty}\sum_{i=1}^m\frac{d^{\gamma_{i}}r(i,d)}{K_{k+i}d^{\alpha}}\]
is finite due to (34) of DB 
and the fact that $m$ is finite.
Also, as before, the fourth order terms involved in (\ref{eq:mineqBedard}) go to zero almost surely.
Hence, given any $\omega\in\mathbb N^c$, 
\[
\left\vert E_{b_2,\ldots,b_d} \left [\min \left \{ 1, \frac{\pi(x_{1}+b_{1}\epsilon, \ldots, x_{d}+b_{d}\epsilon)}{\pi(x_{1},\ldots, x_{d})} \right \} 
\right ]-E_{b_2,\ldots,b_d}\left[\min\left\{1,e^X\right\}\right]\right\vert\rightarrow 0,\hspace{2mm}\mbox{as}\hspace{2mm}d\rightarrow\infty,
\]
where
\begin{equation}
X \sim N \left( \eta (x_{1}, b_{1}, \epsilon) - \frac{(d-1) 
\epsilon^{2}}{2}\xi^2\mathbb{I}, \hspace{0.1 cm} (d-1)\epsilon^{2}\xi^2\mathbb{I} \right).
\end{equation}

We then follow a similar approach as in the previous two cases to obtain 

\begin{equation}
\mathbb{W} \left (z^*_{1}, x_{1},d, \xi \right ) = \Phi \left (\frac{\eta (x_{1}, z^*_{1}, d) - 
\frac{{z^*_{1}}^{2}\ell^{2}\xi^{2}}{2}\mathbb{I}}{\sqrt{{z^*_{1}}^{2}\ell^{2}\xi^{2}\mathbb{I}}}\right ) 
+ e^{\eta (x_{1}, z^*_{1} )}
\Phi \left (\frac{-\eta (x_{1}, z^*_{1}, d) -
\frac{{z^*_{1}}^{2}\ell^{2}\xi^{2}}{2}\mathbb{I}}{\sqrt{{z^*_{1}}^{2}\ell^{2}\xi^{2}\mathbb{I}}}\right ).
\end{equation}

This expression when simplified yields the following expression for the generator term:
\begin{equation}
G_{d}{V(x)} = V^{'}(x_{1}) \frac{1}{2}\ell^{2} E_{z^*_{1}} \left [ {z^*_{1}}^{2}\mathbb{\mathcal{H}} \left (z^*_{1}, x_{1},\xi \right ) 
\right] 
+ \frac{1}{2}V^{''}(x_{1})\ell^{2}E_{z^*_{1}} \left [ {z^*_{1}}^{2}\mathbb{\mathcal{H}} \left (z^*_{1}, x_{1},\xi \right ) 
\right] + O\left(d^{-\frac{1}{2}}\right),
\end{equation}
where 
\begin{equation}
\mathbb{\mathcal{H}}(z^*_{1},x_{1},\xi) =
2 \Phi \left (- \frac{|z^*_{1}|\ell\xi \sqrt{\mathbb{I}}}{2}\right)
=2 \left[1-\Phi \left (\frac{|z^*_{1}|\ell\xi \sqrt{\mathbb{I}}}{2}\right)\right].
\end{equation}

By the same arguments as in the previous cases, we have
\[ \lim_{d\rightarrow\infty}E\left\vert G_dV(x)-GV(x)\right\vert\rightarrow 0,\]
where
\begin{equation}
GV(x)=\frac{1}{2}g_{\xi}(\ell) (\log f(x_{1}))'V'(x_1)
+ \frac{g_{\xi}(\ell)}{2}V''(x_1),
\label{eq:diff_equation_tmcmc_bedard}
\end{equation}
with
\begin{equation}
g_{\xi}(\ell)= 4\ell^{2}\int_0^{\infty} \left \{{u}^{2}\Phi 
\left (- \frac{u\ell\xi\sqrt{\mathbb{I}}}{2}\right) \right \}\phi(u)du.
\label{eq:diff_speed_non_iid_appendix}
\end{equation}

\end{proof}

\section{Calculations related to the dependent set-up}
\label{sec:dependent_calculation}

\subsection{Verification of the conditions of Lyapunov's central limit theorem}
\label{subsec:dep_clt}

To apply Lyapunov's central limit theorem we need to show the following: 
with probability 1 with respect to $\pi$, 
\begin{equation}
\frac{\sum_{j=1}^dE\left(\frac{b_j\eta_j}{\sqrt{d}}\right)^4}
{\left(\sqrt{\sum_{j=1}^d\frac{\eta^2_j}{d}}\right)^4}
=\frac{\sum_{j=1}^d\frac{\eta^4_j}{d^2}}{\left(\frac{\|\eta\|^2}{d}\right)^2}
\rightarrow 0, \ \ \mbox{as} \ \ d\rightarrow\infty.
\label{eq:Lyapunov_general}
\end{equation}
By Lemma 5.2 of \ctn{Pillai2011}, $\frac{\|\eta\|^{2}}{d}\rightarrow 1$ $\pi$-almost surely as 
$d \rightarrow \infty$. 
This implies that
the denominator of the left hand side of (\ref{eq:Lyapunov_general})
goes to 1 $\pi$-almost surely, as $d\rightarrow\infty$. 
Now, $\left(\frac{\|\eta\|^2}{d}\right)^2=\sum_{j=1}^d\frac{\eta^4_j}{d^2}
+\sum_{i=1}^d\frac{\eta^2_i}{d}\left(\underset{j\neq i}\sum\frac{\eta^2_j}{d}\right)$.
Except on a $\pi$-null set $\mathbb N$, where $\sum_{j=1}^d\frac{\eta^2_j}{d}$ need not converge to 1, we have, 
for given $\zeta>0$ and $d_0$ depending upon $\zeta$, 
$1-\zeta<\underset{j\neq i}\sum\frac{\eta^2_j}{d}<1+\zeta$ and 
$1-\zeta<\sum_{i=1}^d\frac{\eta^2_i}{d}<1+\zeta$, for $d\geq d_0$. Hence,
for $d\geq d_0$, 
$-\zeta^2-2\zeta<\zeta^2-2\zeta=(1-\zeta)^2-1<
\sum_{i=1}^d\frac{\eta^2_i}{d}\left(\underset{j\neq i}\sum\frac{\eta^2_j}{d}\right)-1
<(1+\zeta)^2-1=\zeta^2+2\zeta$, so that 
$\left\vert \sum_{i=1}^d\frac{\eta^2_i}{d}\left(\underset{j\neq i}\sum\frac{\eta^2_j}{d}\right)-1\right\vert<\zeta^2+2\zeta$,
showing that $\sum_{i=1}^d\frac{\eta^2_i}{d}\left(\underset{j\neq i}\sum\frac{\eta^2_j}{d}\right)\rightarrow 1$ on $\mathbb N^c$,
the complement of $\mathbb N$. 
Since on $\mathbb N^c$, $\left(\frac{\|\eta\|^2}{d}\right)^2\rightarrow 1$, we must have
$\sum_{j=1}^d\frac{\eta^4_j}{d^2}\rightarrow 0$ on $\mathbb N^c$, showing that Lyapunov's condition (\ref{eq:Lyapunov_general}) holds
almost surely with respect to $\pi$.

Using Lyapunov's central limit theorem on $b_{j}$, and using the result that 
$\frac{\|\eta\|^{2}}{d}\rightarrow 1$ $\pi$-almost surely as
$d \rightarrow \infty$,
we obtain, for sufficiently large $d$,
\begin{equation}
R(x,\xi) \sim AN ( -\ell^{2}\epsilon^{2}, 2\ell^{2}\epsilon^{2}), 
\end{equation}
where $``AN"$ stands for ``asymptotic normal".

Now, (57) of DB 
and the fact that for large $d$, $\mathbb{Q}(x,\xi)\approx R(x,\xi)$, imply 

\begin{equation}
\mathbb{Q}(x,\xi)\approx -\epsilon \sqrt{\frac{2\ell^{2}}{d}} \left (\eta_{i}b_{i} + 
\sum_{j=1, j\neq i}^{d} {\eta_{j}b_{j}} \right) - \ell^{2}\epsilon^{2}, 
\end{equation}
so that
\begin{equation}
\left [\mathbb{Q}(x,\xi) \vert b_{i}, \epsilon \right ] \sim AN \left ( -\ell^{2}\epsilon^{2} 
-\epsilon\sqrt{\frac{2\ell^{2}}{d}}\eta_{i}b_{i}, 2\ell^{2}\epsilon^{2}\right ). 
\label{eq:Q_CLT}
\end{equation}

\subsection{Expected drift}
\label{subsec:expected_drift}

In order to obtain the diffusion approximation, we first obtain the expected drift conditions. In order to do that, 
we first define, as in \ctn{Pillai2011}, $\mathcal F_k$ to be the sigma algebra generated by 
$\{x^n,\xi^n,\gamma^k,n\leq k\}$, and denote the conditional expectations 
$E(\cdot\vert\mathcal F_k)$ by $E_k(\cdot)$. Following \ctn{Pillai2011} we let $x^0=x$ and $\xi^1=\xi$,
and set $\xi^0=0$ and $\gamma^0=0$.
We then note that 
under stationarity, $E_{k} \left ( x^{k+1} -x^{k} \right ) = E_{0} \left ( x^{1} -x \right )$, 
and using (51) of DB 
we can write 
\begin{eqnarray}\label{eq:onestepexpdiff}
d E_{0} \left ( {x_{i}}^{1} -  {x_{i}} \right ) & =& d  {E_{0} \left[ {\gamma}^{1} 
\left ({y_{i}}^{1} - {x_{i}}^{1} \right ) \right] }\nonumber \\
&=& {d} { E_{0} \left [ \alpha(x, \xi) \sqrt{\frac{2\ell^{2}}{d}}\left ({\Sigma}^{\frac{1}{2}}\xi \right )_{i} \right] } \nonumber \\
&=& \frac{1}{\eta_i} { {\lambda_{i}}\sqrt{2\ell^{2}d} E_{0} \left [ \min \left \{ 1, e^{\mathbb{Q}(x,\xi)} \right \} 
\xi_{i} \right] \eta_{i}},  \nonumber \\
\end{eqnarray}
where  $\alpha(x, \xi) = \min \left \{ 1, \frac{\pi({y_{i}}^{1})}{\pi({x_{i}}^{1})} \right \}$. 
The last step follows from (47) of DB, 
noting that $\xi=\sum_{i=1}^d\xi_i\phi_i$. 

Noting that $\lambda_{i}\Sigma^{-\frac{1}{2}}\phi_{i}= \phi_{i}$, 
(55) and self-adjointness of $\Sigma^{-1/2}$ yields 
\begin{eqnarray}\label{eq:lambda}
\lambda_{i}\eta_{i} &=& \lambda_{i} \left \langle \Sigma^{-\frac{1}{2}}(P^{d}x) 
+ \Sigma^{\frac{1}{2}}\nabla\Psi^{d}(x), \phi_{i} \right \rangle \nonumber \\
&=&\lambda_{i} \left \langle \Sigma^{-\frac{1}{2}}(P^{d}x)
+ \Sigma^{-\frac{1}{2}}\Sigma\nabla\Psi^{d}(x), \phi_{i} \right \rangle \nonumber \\
&=& \left  \langle P^{d}x +\Sigma^d\nabla\Psi^{d}(x), \phi_{i} \right \rangle \nonumber \\
&=& \left (P^{d}x + \Sigma^d\nabla\Psi^{d}(x) \right)_{i}. \nonumber \\
\end{eqnarray}


Thus, we can write 
\begin{eqnarray}
d E_{0} \left ( {x_{i}}^{1} -  {x_{i}} \right ) &= & \frac{1}{\eta_{i}} 
\left (P^{d}x + \Sigma^d\nabla\Psi^{d}(x) \right)_{i} \sqrt{2\ell^{2}d} E_{0} 
\left [ \min \left \{ 1, e^{\mathbb{Q}(x,\xi)} \right \} \xi_{i} \right].  \label{eq:drift0}\nonumber \\
\end{eqnarray}

Now, writing $\mu=-\ell^2\epsilon^2-\epsilon\sqrt{\frac{2\ell^2}{d}}\eta_ib_i$, $\sigma=\sqrt{2}\ell\epsilon$, 
using (\ref{eq:Q_CLT}) and Proposition 2.4 of \ctn{Roberts97a}, it follows that \\[1cm]
$\sqrt{d}E_0\left [ \min \left \{ 1, e^{\mathbb{Q}(x,\xi)} \right \} \xi_{i} \right]$
\begin{eqnarray}
&=&\sqrt{d} E_{b_{i}\epsilon} \left [ b_{i}\epsilon E_0 \left \{\min \left \{ 1, e^{\mathbb{Q}(x,\xi)} \right \}\bigg| b_{i}, 
 \epsilon \right \} \right ]\nonumber\\
&\approx& \sqrt{d}E_{b_i\epsilon}\left[b_i\epsilon\left\{\Phi\left(\frac{\mu}{\sigma}\right)
+e^{\mu+\frac{\sigma^2}{2}}\Phi\left(-\sigma-\frac{\mu}{\sigma}\right)\right\}\right]\nonumber\\ 
&=& \sqrt{d}E_{b_i\epsilon}\left[b_i\epsilon\left\{\Phi\left(-\frac{\ell\epsilon}{\sqrt{2}}-\frac{\eta_{i}b_{i}}{\sqrt{d}}\right)
\right.\right.\nonumber\\
&&\quad\quad\quad\left.\left.+e^{-\epsilon\sqrt{\frac{2\ell^2}{d}}\eta_ib_i}
\Phi\left(-\frac{\ell\epsilon}{\sqrt{2}} + \frac{\eta_{i}b_{i}}{\sqrt{d}} \right)\right\}\right].\nonumber\\ 
\label{eq:drift1}
\end{eqnarray} 
Using the following Taylor's series expansions
\begin{eqnarray}
\Phi\left(-\frac{\ell\epsilon}{\sqrt{2}}-\frac{\eta_ib_i}{\sqrt{d}}\right)
&=&\Phi\left(-\frac{\ell\epsilon}{\sqrt{2}}\right)-\frac{\eta_ib_i}{\sqrt{d}}\phi\left(-\frac{\ell\epsilon}{\sqrt{2}}\right)
+\frac{\eta^2_i}{2d}\phi'(w_1),\nonumber\\
\Phi\left(-\frac{\ell\epsilon}{\sqrt{2}}+\frac{\eta_ib_i}{\sqrt{d}}\right)
&=&\Phi\left(-\frac{\ell\epsilon}{\sqrt{2}}\right)+\frac{\eta_ib_i}{\sqrt{d}}\phi\left(-\frac{\ell\epsilon}{\sqrt{2}}\right)
+\frac{\eta^2_i}{2d}\phi'(w_2),\nonumber\\
e^{-\epsilon\sqrt{\frac{2\ell^2}{d}}\eta_ib_i}
&=& 1-\epsilon\sqrt{\frac{2\ell^2}{d}}\eta_ib_i+\frac{\ell^2\epsilon^2\eta^2_i}{d}e^{-w_3},
\label{eq:taylor_series}
\end{eqnarray}
where $w_1$ lies between $-\frac{\ell\epsilon}{\sqrt{2}}$ and $-\frac{\ell\epsilon}{\sqrt{2}}-\frac{\eta_ib_i}{\sqrt{d}}$;
$w_2$ lies between $-\frac{\ell\epsilon}{\sqrt{2}}$ and $-\frac{\ell\epsilon}{\sqrt{2}}+\frac{\eta_ib_i}{\sqrt{d}}$, and
$w_3$ lies between $0$ and $\epsilon\sqrt{\frac{2\ell^2}{d}}\eta_ib_i$, and noting that 
$E_{b_i\epsilon}\left[b_i\epsilon\Phi\left(-\frac{\ell\epsilon}{\sqrt{2}}\right)\right]=0$,
(\ref{eq:drift1}) can be easily seen to be of the form
\begin{eqnarray}
\sqrt{d}E_0\left [ \min \left \{ 1, e^{\mathbb{Q}(x,\xi)} \right \} \xi_{i} \right]
&\approx& \sqrt{d}E_{b_i\epsilon}\left[b_i\epsilon\left\{\Phi\left(-\frac{\ell\epsilon}{\sqrt{2}}-\frac{\eta_{i}b_{i}}{\sqrt{d}}\right)
\right.\right.\nonumber\\
&&\quad\quad\quad\left.\left.+e^{-\epsilon\sqrt{\frac{2\ell^2}{d}}\eta_ib_i}
\Phi\left(-\frac{\ell\epsilon}{\sqrt{2}} + \frac{\eta_{i}b_{i}}{\sqrt{d}} \right)\right\}\right]\nonumber\\ 
&=& -\sqrt{2\ell^2}\eta_i\times 2\int_0^{\infty}u^2\Phi\left(-\frac{\ell u}{\sqrt{2}}\right)\phi(u)du
+O\left(d^{-\frac{1}{2}}\right)\nonumber\\
&\approx&-\sqrt{\frac{\ell^2}{2}}\eta_i\beta,
\label{eq:drift_2}
\end{eqnarray}
where
\begin{equation}
\beta=4\int_0^{\infty}u^2\Phi\left(-\frac{\ell u}{\sqrt{2}}\right)\phi(u)du.
\label{eq:beta_star}
\end{equation}

Hence, we can re-write (\ref{eq:drift0}) as
\begin{eqnarray}
d E_{0} \left ( {x_{i}}^{1} -  {x_{i}} \right ) &= & \frac{1}{\eta_{i}} 
\left (P^{d}x + \nabla\Psi^{d}(x) \right)_{i} \sqrt{2\ell^{2}d} E_{0} 
\left [ \min \left \{ 1, e^{\mathbb{Q}(x,\xi)} \right \} \xi_{i} \right]\nonumber \\
&=&-\ell^2\beta\left (P^{d}x + \nabla\Psi^{d}(x) \right)_{i}.\label{eq:drift3} 
\end{eqnarray}

\subsection{Expected diffusion coefficient}
\label{subsec:expected_diffusion_coefficient}

Now we evaluate the expected diffusion coefficients involving the cross product terms. For $ 1 \leq i \neq j \leq d$,  we have 

\begin{equation}
d E_{0} \left [ \left ({x_{i}}^{1} -  {x_{i}} \right) \left ({x_{j}}^{1} -  {x_{j}} \right) \right ] 
= d E_{0} \left[ \left \{ { {\gamma}^{1} \left ({y_{i}}^{1} - {x_{i}} \right ) }\right \} 
\left \{ { {\gamma}^{1} \left ({y_{j}}^{1} - {x_{j}} \right )} \right\} \right]\nonumber \\
\end{equation}

Check that if $i \neq j$, then the above expectation is $0$ using the fact that $b_{i}b_{j}\epsilon$ has 0 mean for $i \neq j$. 
However for $i=j$, using  (\ref{eq:lambda}) again, we can reduce the above expectation to 

\begin{eqnarray}\label{eq:cross}
d E_{0} \left [ \left ({x_{i}}^{1} -  {x_{i}} \right) \left ({x_{j}}^{1} -  {x_{j}} \right) \right ] 
& =& d E_{0} \left [ \left ({x_{i}}^{1} -  {x_{i}} \right)^{2} \right ] \nonumber \\
&=& d E_{0} \left [ {\alpha(x, \xi) \left ({y_{i}}^{1} - {x_{i}} \right )^{2}} \right ] \nonumber \\
&=& 2\ell^{2} { {\lambda_{i}}^{2} E_{0} \left [{\xi_{i}}^{2} \min \left \{ 1, e^{\mathbb{Q}(x,\xi)} \right \} \right ] }.\nonumber \\
\end{eqnarray}
Using the same Taylor's series expansions (\ref{eq:taylor_series}) it is easily seen that
\begin{eqnarray}
E_{0} \left [{\xi_{i}}^{2} \min \left \{ 1, e^{\mathbb{Q}(x,\xi)} \right \} \right ] 
&\approx& 4\int_0^{\infty}u^2\Phi\left(-\frac{\ell u}{\sqrt{2}}\right)\phi(u)du\nonumber\\
&=&\beta.
\label{eq:diffusion_coefficient2}
\end{eqnarray}
Hence,
\begin{eqnarray}\label{eq:cross2}
d E_{0} \left [ \left ({x_{i}}^{1} -  {x_{i}} \right) \left ({x_{j}}^{1} -  {x_{j}} \right) \right ] 
&=& 2\ell^{2} { {\lambda_{i}}^{2} E_{0} \left [{\xi_{i}}^{2} \min \left \{ 1, e^{\mathbb{Q}(x,\xi)} \right \} \right ] }\nonumber \\
&\approx & 2\ell^{2}{{\lambda_{i}}^{2} \beta}\nonumber\\  
&=& 2 \ell^{2}\beta \langle \phi_{i}, \Sigma \phi_{i} \rangle. \nonumber \\
\end{eqnarray}
It follows that
\begin{eqnarray}\label{eq:cross3}
d E_{0} \left [ \left ({x}^{1} -  {x} \right) \otimes\left ({x}^{1} -  {x} \right) \right ] 
&\approx& 2 \ell^{2}\beta \Sigma^d. 
\end{eqnarray}


Note that, by definition, 
\[
x^{k+1}=x^k+E_k(x^{k+1}-x^k)+\sqrt{\frac{2\ell^2\beta}{d}}\Gamma^{k+1,d},
\]
where, for $k\geq 0$,
\begin{equation*}
\Gamma^{k+1,d}=\sqrt{\frac{d}{2\ell^2\beta}}\left(x^{k+1}-x^k-E_k\left(x^{k+1}-x^k\right)\right).
\end{equation*}
From (\ref{eq:drift3}), we have, for $d$ large enough,
\begin{equation}
x^{k+1}\approx x^k-\frac{\ell^2\beta}{d}m^d(x^k)+\sqrt{\frac{2\ell^2\beta}{d}}\Gamma^{k+1,d},
\label{eq:approx1}
\end{equation}
where
\begin{equation}
m^d(x)=P^dx+\Sigma^d\nabla\Psi^d(x).
\label{eq:m1}
\end{equation}
From the definition of $\Gamma^{k,d}$ and (\ref{eq:cross3}) we have, as in \ctn{Pillai2011},
\begin{equation}
E_k\left(\Gamma^{k+1,d}\right)=0\quad\mbox{and}\quad 
E_k\left(\Gamma^{k+1,d}\otimes \Gamma^{k+1,d}\right)\approx\Sigma^d.
\label{eq:approx2}
\end{equation}
Thus, for large enough $d$, (\ref{eq:approx1}) can be viewed as the Euler scheme for simulating
the finite dimensional 
approximation
\begin{equation}
x^{k+1}\approx x^k-g(\ell)m^d(x^k)\Delta t+\sqrt{2g(\ell)\Delta t}~\Gamma^{k+1,d}\hspace{2mm}\mbox{where}
\hspace{2mm}\Delta t=\frac{1}{d},
\label{eq:euler}
\end{equation}
(with drift function $m^d$ and covariance operator $\Sigma^d$)
of the SDE 
\begin{equation}
\frac{dz}{dt} = - g(\ell) \left (z + \Sigma\nabla \Psi(z) \right ) + \sqrt{2g(\ell)}\frac{dW}{dt}, \ \ z(0)=z^0,
\label{eq:diffusion_equation_appendix}
\end{equation}
where $z^0\sim\pi$, $W$ is a Brownian motion in a relevant Hilbert space with covariance operator $\Sigma$, and
\begin{equation}
g(\ell)=\ell^2\beta,
\label{eq:general_hl_appendix}
\end{equation}
is the diffusion speed. 





\bibliographystyle{natbib}
\bibliography{irmcmc}

\newpage

\end{document}